\documentclass[11pt]{article}
\usepackage[a4paper,margin=1in]{geometry}
\usepackage{tikz}
\usepackage[T1]{fontenc}
\usepackage{lmodern}
\usepackage{amsmath,amssymb,amsthm,mathtools,microtype,enumitem,booktabs}
\usepackage[colorlinks=true,linkcolor=blue,citecolor=blue,urlcolor=blue]{hyperref}
\newcommand{\Z}{\mathbb Z}
\newcommand{\R}{\mathbb R}
\DeclareMathOperator{\vol}{vol}

\DeclareMathOperator{\conv}{conv}
\DeclareMathOperator{\polylog}{polylog}
\DeclareMathOperator{\supp}{supp}
\newcommand{\softO}[1]{\widetilde O\!\left(#1\right)}
\newcommand{\cT}{\mathcal T}

\newtheorem{theorem}{Theorem}[section]
\newtheorem{lemma}[theorem]{Lemma}

\newtheorem{corollary}[theorem]{Corollary}
\newtheorem{importedtheorem}{Theorem}[section]

\newtheorem{importedlemma}[importedtheorem]{Lemma}
\theoremstyle{definition}
\newtheorem{definition}[theorem]{Definition}
\theoremstyle{remark}
\newtheorem{remark}[theorem]{Remark}
\newtheorem*{remark*}{Remark}
\setlist{nosep}

\title{Subset Sum and Integer Programming Feasibility Test in Near-Optimal Time}
\author{Lin Chen, Yuchen Mao, Guochuan Zhang\footnote{\{chenlin198662,maoyc,zgc\}@zju.edu.cn. Zhejiang University.}}
\date{}

\begin{document}
\maketitle
\begin{abstract}
We give an
$O_{d,\varepsilon}(n+N^{d+\varepsilon})$-time algorithm for multidimensional
subset sum on an explicitly listed multiset of $n$ vectors in
$[-N,N]^d$, for every fixed $d$ and $\varepsilon>0$. This directly implies an $O_{d,\varepsilon}(n+N^{d+\varepsilon})$-time algorithm for feasibility test of
integer programming. Prior to our work, the best-known algorithm has a running time of
$O_{d}(n+N^{O(d^2)})$ [Eisenbrand, Weismantel'19]. Specifically, if $d=1$, our algorithm
runs in $O_{\varepsilon}(n+N^{1+\varepsilon})$-time for the standard subset sum problem, which is near-linear in the largest input integer.
Prior to our work, subset sum can be solved in time near-linear in the target $t$ [Bringmann'17], or near-linear in the mixed parameter $\sqrt{Nt}$ [Chen, Lian, Mao, Zhang'24]. It is a fundamental open problem whether the algorithm can be made near-linear in $N$, which is settled affirmatively in this paper up to an arbitrary small $N^{\varepsilon}$ term.
\end{abstract}

\section*{AI Disclosure} The main proof route for 1D subset sum was designed by the authors, including extracting the additive combinatorics structure of the input integers, lifting 1D subset sum to a suitable constant dimension, decomposing it into searching subproblems in the interior and boundary of a zonotope. The overall framework builds upon a recent paper \cite{CHMZDense} on the multidimensional subset sum problem in the dense regime, coauthored by the authors of this paper. GPT Astra Ultra was used to turn the insights of the authors into complete mathematical proofs.  GPT and Claude was used for drafting the paper and also verified this paper’s main results using the Lean 4 proof. The authors take full responsibility for
the entire content, technical claims, and integrity of the paper.

\tableofcontents

\section{Introduction}\label{sec:introduction}

This paper studies \emph{multidimensional subset sum}: given an explicitly
listed multiset $A=\{\vec a_1,\ldots,\vec a_n\}\subseteq\Z^d$ and a
target $\vec t\in\Z^d$, decide whether
\[
 \sum_{i\in I}\vec a_i=\vec t
 \qquad\text{for some }I\subseteq\{1,\ldots,n\}.
\]
Writing the input vectors as the columns of a matrix $\mathsf A$ identifies
this problem with feasibility of the binary integer program
$\mathsf A\vec x=\vec t$, $\vec x\in\{0,1\}^n$. Finite lower and upper
bounds on the variables, encoded in binary, give bounded integer
programming; we treat it as a corollary, using proximity to cap
multiplicities. The one-dimensional problem is \emph{Subset Sum}.

For a set or multiset $A$, write $\Sigma(A)$ for its total sum and
$\mathcal S(A)=\{\Sigma(B):B\subseteq A\}$ for its set of subset sums.
The size $|A|$ counts all copies, whereas $|\supp(A)|$ counts distinct
values; each input copy may be selected at most once. We study inputs whose
coordinates have absolute value at most $N$, in fixed dimension $d$, and
seek pseudopolynomial algorithms with a small exponent of $N$.
$\widetilde O(\cdot)$ suppresses polylogarithmic factors.

\subsection{Subset sum}
For positive inputs, a nontrivial target can be as large as
$\Sigma(A)\le nN$, so near-linear dependence on $t$ does not give
near-linear dependence on $n+N$. Unless stated otherwise, the prior bounds
below concern positive inputs.

Bellman's dynamic program runs in $O(nt)$ time~\cite{Bellman} and
Pisinger's algorithm in $O(nN)$ time~\cite{Pisinger}. Bringmann~\cite{Bringmann}
obtained a randomized $\softO{n+t}$ algorithm, derandomized by
Chan~\cite{ChanDet}. Bounds independent of the target were obtained by
Polak, Rohwedder, and W\k{e}grzycki~\cite{PRW}, $\softO{n+N^{5/3}}$, and
improved to $\softO{n+N^{3/2}}$ by Chen, Lian, Mao, and
Zhang~\cite{CLMZProximity} and independently by Jin~\cite{JinSubsetSum}.
Chen, Lian, Mao, and Zhang~\cite{CLMZImproved} gave a randomized
$\softO{n+\sqrt{Nt}}$ algorithm, which together with the dense subset sum
algorithm of Bringmann and Wellnitz~\cite{BW} gives $\softO{n+N^{5/4}}$
for distinct integers. Table~\ref{tab:subset-sum-bounds} summarizes these
bounds.

\begin{table}[t]
\centering
\small
\begin{tabular}{@{}p{.51\linewidth}p{.15\linewidth}p{.27\linewidth}@{}}
\toprule
Reference & Running time & Input restriction \\
\midrule
Bellman (1957)~\cite{Bellman} & $O(nt)$ & Arbitrary multiset \\
Pisinger (1999)~\cite{Pisinger} & $O(nN)$ & Arbitrary multiset \\
Bringmann (2017)~\cite{Bringmann} & $\softO{n+t}$ & Arbitrary multiset\\
Polak--Rohwedder--W\k{e}grzycki (2021)~\cite{PRW}
  & $\softO{n+N^{5/3}}$ & Arbitrary multiset \\
Chen--Lian--Mao--Zhang (2024)~\cite{CLMZProximity}; Jin (2023)~\cite{JinSubsetSum}
  & $\softO{n+N^{3/2}}$ & Arbitrary multiset \\
Chen--Lian--Mao--Zhang (2024)~\cite{CLMZImproved}
  & $\softO{n+\sqrt{Nt}}$ & Arbitrary multiset \\
Chen--Lian--Mao--Zhang (2024)~\cite{CLMZImproved} with~\cite{BW}
  & $\softO{n+N^{5/4}}$ & Distinct integers \\
Chan (2026)~\cite{ChanDet}
  & $\softO{n+t}$ & Arbitrary multiset \\
This paper & $O_\varepsilon(n+N^{1+\varepsilon})$ & Arbitrary signed multiset \\
\bottomrule
\end{tabular}
\caption{Selected pseudopolynomial running times for Subset Sum.  Here $n$
is the number of explicitly listed copies, $N$ is the maximum input integer
(maximum absolute value in our result), and $\varepsilon>0$ is fixed.
The multiset algorithms in~\cite{PRW,CLMZProximity} also support
compressed multiplicities; their bounds are stated here for explicitly
listed input to use a common convention. The listed prior algorithms
other than Bellman's, Pisinger's, and Chan's are randomized. Our algorithm is
deterministic and its bound is in the word-RAM model.}
\label{tab:subset-sum-bounds}
\end{table}

\paragraph{A central open problem.}
Obtaining near-linear dependence on the maximum input integer has remained
a major open problem:
\begin{quote}
\emph{Can Subset Sum be solved in $\widetilde O(n+N)$ time?}
\end{quote}
Bringmann and Wellnitz~\cite{BW} identify it as the most crucial open
problem in this line of research, Polak, Rohwedder, and
W\k{e}grzycki~\cite{PRW} describe it as a major open problem, and
Chen, Lian, Mao, and Zhang~\cite{CLMZImproved} and Jin~\cite{JinKnapsack}
also list it. Chen, Hu, Mao, and Zhang~\cite{CHMZDense} argue that
removing the nondegeneracy condition from density-based methods is as hard
as obtaining an $\widetilde O(N^{1+o(1)})$-time algorithm for arbitrary
one-dimensional sets. We obtain $O_\varepsilon(n+N^{1+\varepsilon})$ for
every fixed $\varepsilon>0$.

\subsection{Multidimensional subset sum and integer programming}
Bounded integer programming asks whether
\begin{equation}\label{eq:intro-bounded-ip}
 \mathsf A\vec x=\vec t,\qquad
 \vec\ell\le\vec x\le\vec u,\qquad \vec x\in\Z^n
\end{equation}
is feasible, where $d$ counts the equations (not the variable bounds),
$N$ bounds the absolute coefficients, and $\ell_{\rm in}$ is the binary
input length.

Eisenbrand and Weismantel~\cite{EW} use the Steinitz lemma to bound the
distance from an LP vertex to an integer optimum, which yields coefficient
dependence $N^{O(d^2)}$ for bounded variables. Without upper bounds,
Jansen and Rohwedder~\cite{JR} obtain $N^{O(d)}$ through discrepancy and
convolution, and Gribanov et al.~\cite{GKCKMM} give, for a full-row-rank
matrix, a deterministic feasibility bound
$O(\kappa_d)^d\delta_{\det}\operatorname{poly}(\ell_{\rm in})$, where
$\delta_{\det}\le d^{d/2}N^d$ is the largest full-rank minor. For bounded
variables, Rohwedder and W\k{e}grzycki~\cite[Theorem 6]{RWILP} give
$n^{d+1}O(dN)^d\log(2+\|\vec u-\vec\ell\|_\infty)$, improved by Grage,
Jansen, and Schumacher~\cite[Theorem 8]{GJS} to
$n^{d+1}O(N)^d\log(2+\|\vec u-\vec\ell\|_\infty)$. Randolph and
W\k{e}grzycki~\cite{RW} connect subset sum on sets of small doubling with
integer programming, and Chen, Hu, Mao, and Zhang~\cite{CHMZDense} solve
dense multidimensional subset sum using filled lattice regions inside the
zonotope of the input. Our vector theorem imposes no density or
nondegeneracy hypotheses.

\paragraph{The bounded-variable question.}
Jansen and Rohwedder~\cite[Introduction]{JR} ask whether the linear
dependence of the coefficient exponent on $d$ survives upper bounds on
the variables; Dadush, L\'eonard, Rohwedder, and
Verschae~\cite[Conclusion]{DLRV} and Randolph and
W\k{e}grzycki~\cite[Introduction]{RW} discuss the same gap. The
feasibility part asks:
\begin{quote}
\emph{Can bounded integer-programming feasibility be decided in
$f(d)N^{O(d)}\operatorname{poly}(\ell_{\rm in})$ time?}
\end{quote}
Here the degree of the polynomial in $\ell_{\rm in}$ must be absolute,
which distinguishes the question from the $n^{d+1}N^d$ bounds above.
Corollary~\ref{cor:bounded-ip} answers it with
$O_{d,\varepsilon}(N^{d+\varepsilon}\operatorname{poly}(\ell_{\rm in}))$;
the dependence on $d$ alone is not optimized.

\subsection{Our results}
\begin{theorem}[Deterministic vector multiset subset sum]
\label{thm:main-vector}
Fix $d\ge1$ and $\varepsilon>0$. Let
$A=\{\vec a_1,\ldots,\vec a_n\}$ be an explicitly listed multiset in
\[
 \prod_{j=1}^d[-N_j,N_j]\cap\Z^d,
 \qquad N_j\in\Z_{\ge0},\qquad \nu =\max\{2,N_1,\ldots,N_d\}.
\]
There is a deterministic algorithm that, given $\vec t\in\Z^d$, decides
whether $\vec t\in\mathcal{S}(A)$ in
\[
 O_{d,\varepsilon}\left(n+\nu ^\varepsilon
                    \prod_{j=1}^d(2N_j+1)\right)
\]
word-RAM time. In particular, for $A\subseteq[-N,N]^d$ with $N\ge2$,
the bound is $O_{d,\varepsilon}(n+N^{d+\varepsilon})$.
There is no restriction on the number of distinct vectors or on
their multiplicities.
\end{theorem}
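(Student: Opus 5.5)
We reduce to a structured core by a sequence of reductions and then combine two regimes: a dense regime, handled by the filled-lattice/zonotope machinery of \cite{CHMZDense}, and a sparse regime, handled by direct dynamic programming over a box of volume $\prod_j(2N_j+1)$.

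\textbf{Step 1: preprocessing.} Cap multiplicities. By a Steinitz/proximity argument in the style of \cite{EW}, any feasible selection can be taken to agree with a greedy or LP-vertex selection except on $O_d(\nu^{d})$-sized blocks. We replace each distinct vector of multiplicity $m$ by the standard binary splitting $1,2,4,\dots$ only where this is cheap. Otherwise we keep the first $O_d(\nu^{O(1)})$ copies and fold the rest into a shift of the target. This is done in $O(n)$ time by bucketing the coordinates. Next, normalize signs so that the target lies near $\Sigma(A)/2$. This uses the symmetry $\vec t\in\mathcal S(A)\iff\Sigma(A)-\vec t\in\mathcal S(A)$ and the fact that negative vectors can be complemented.

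\textbf{Step 2: structural dichotomy.} For a parameter $\delta=\delta(d,\varepsilon)$, examine the multiset $A$. Either the distinct support is small, i.e.\ $|\supp(A)|\le \nu^{\delta}$ times the size of a generalized arithmetic progression covering it, or an additive-combinatorics lemma (Freiman/Bogolyubov type) finds a sub-multiset $A'$ whose subset sums contain a dense lattice region. Concretely, this region is a lattice intersected with a shrunk zonotope $\lambda Z(A')$, where the lattice has bounded index and its covolume is controlled.

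In the dense case, invoke the dense-regime theorem of \cite{CHMZDense}. After lifting, the lattice structure makes this a subproblem in a constant dimension $d'=d'(d,\varepsilon)$. The target then splits into two parts: the part in the interior of the zonotope is answered by a lattice-membership test, and the part near the boundary involves only a few vectors. This boundary part is recursed on with a smaller effective $N$ (by a factor $\nu^{\delta}$) or with fewer remaining vectors.

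\textbf{Step 3: recursion and accounting.} Run a divide-and-conquer over the structure. At each level, subset-sum sets of the pieces are computed as sets inside boxes of side $O(\text{remaining }N_j\cdot\text{size})$. They are combined by sumset computation via FFT, restricted to a box of total volume $O(\nu^{\delta}\prod_j(2N_j+1))$ near the target. The proximity argument guarantees that only this window matters. The recursion depth is $O(1/\delta)$, so the total overhead is $\nu^{O(\delta)}$; choosing $\delta\approx\varepsilon/C_d$ gives the bound. Derandomization is free because all steps are deterministic: FFT, deterministic structure extraction, and the deterministic sumset computation of \cite{ChanDet}-style methods.

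\textbf{Main obstacle.} The hardest part is the boundary of the zonotope in Step 2. Near a face, the filled-region guarantee fails, and the effective problem is a lower-dimensional subset sum whose coefficients need not shrink. My first attempt would be to project onto the face normal and induct on $d$ via the facial structure. The difficulty is to show that the number of faces times the cost of each face problem stays within $\nu^{\varepsilon}\prod_j(2N_j+1)$ and does not pick up an extra factor of $N$. This is exactly where the $N^{O(d^2)}$ of \cite{EW} arises. I would aim to prove that the vectors parallel to a face have total multiplicity at most $\nu^{O(\delta)}$ beyond the proximity window, which would make each face problem small.
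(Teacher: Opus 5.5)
There is a genuine gap: the two steps that carry the argument are either missing or would fail.

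\textbf{Interior case.} In general the dense lattice region $\Lambda\cap\lambda\cdot Z(A')$ you posit does not exist in the \emph{original} coordinates. The filled region exists only in GAP coordinates of a rank $r$ that can exceed $d$, and its projection back to $\Z^d$ can have holes even in the middle of the zonotope (Figure~\ref{fig:holes-clusters}). So an interior target $\vec y$ is not settled by a lattice-membership test. One must decide whether the fiber $J\cap F^{-1}(\vec y)$ contains a lattice point, and it may not (Figure~\ref{fig:overview-core-fibers}(b)). Calling \cite{CHMZDense} as a black box does not help, because its nondegeneracy hypothesis is exactly what fails for general input; removing it is as hard as the problem itself.

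The missing idea is the paper's dichotomy. Either a fiber lattice point certifies the target through the reserve-filled core, or flatness yields an integer functional $\psi$ of width $O_r(1)$ on the fiber. A concavity and reflection argument extends $\psi$ to a constant-width strip over the whole target patch. After removing the reserve and the $h$ items of largest price, $\psi$ is appended as a new coordinate with all its permitted values kept.

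You also give no reason why such lifts stop, which your ``depth $O(1/\delta)$'' silently assumes. In the paper, every GAP is proper on a scale exceeding $2k$, so each representation preserves signed $k$-sums of the original vectors. Then $\binom{k+d'}{d'}\le|kS_I|\le(2k\nu+1)^d$ bounds the current rank $d'$, and a polynomial-scale doubling estimate bounds the total rank loss along a path.

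\textbf{Boundary case.} Your plan here would also fail. A face can carry arbitrarily many copies, so the hoped-for bound $\nu^{O(\delta)}$ on vectors parallel to a face is false. The number of faces is unbounded, and no step shrinks $N$ by $\nu^{\delta}$. The paper never enumerates faces and never reduces $N$. Instead:
\begin{enumerate}
\item It groups boundary targets into patches around normal centres.
\item Per patch, it picks reference bits minimizing the combined losses along the supporting normals.
\item The identity $g_i^0(\vec a)+g_i^1(\vec a)=|\vec a\cdot\vec v_i|$ shows that a relevant solution changes only $O(\tau)$ heavy items.
\item Those changes are enumerated by absolute-budget sums, all heavy items are deleted, and the light items form a child at the \emph{same} scale $\tau$.
\item A Macbeath-region packing in a boundary shell of relative thickness $O(\vartheta)$ shows the children have total allocated size at most $w/2$.
\end{enumerate}
This halving, together with $O_\delta(1)$ lifts per branch, is the whole accounting.

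\textbf{Step~3 window.} Your window claim is unjustified. Proximity \cite{EW} only gives $\ell_1$-distance $d(2dN+1)^d$ from an LP vertex, whose sum window has volume $N^{O(d^2)}$ — exactly the barrier to be beaten.

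\textbf{Derandomization.} This is not free. The constructive Freiman step and the budgeted-sum computation are randomized. The paper replaces them by a sieve-chosen modelling multiplier and by Newton's identities over a truncated polynomial ring for selections of $O(\tau)$ copies, which costs the extra factor $\tau^3$.
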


The initial words have $O_d(\log(n\nu +2))$ bits; an arbitrarily long
target encoding adds its input-reading cost. Theorem~\ref{thm:main-vector}
is proved in Section~\ref{sec:deterministic}.

\begin{corollary}[Deterministic subset sum with bounded integers]
\label{thm:main-multiset}
For every fixed $\varepsilon>0$, subset sum for an explicitly listed
multiset $A=\{a_1,\ldots,a_n\}\subseteq[-N,N]\cap\Z$, $N\ge2$, can be
decided deterministically in $O_\varepsilon(n+N^{1+\varepsilon})$
word-RAM time, with unrestricted multiplicities.
\end{corollary}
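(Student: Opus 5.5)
This corollary is the case $d=1$ of Theorem~\ref{thm:main-vector}, and we will prove it simply by instantiating that theorem. Take $d=1$ and $N_1=N$. Since $N\ge2$, we have $\nu=\max\{2,N\}=N$. The input $A=\{a_1,\ldots,a_n\}\subseteq[-N,N]\cap\Z$ is an explicitly listed multiset in $\prod_{j=1}^1[-N_j,N_j]\cap\Z$, exactly as the theorem requires. Each integer $a_i$ is viewed as a vector in $\Z^1$, and the target $t\in\Z$ as $\vec t\in\Z^1$. The set of subset sums $\mathcal S(A)$ is literally the same object in both readings, so deciding $t\in\mathcal S(A)$ is an instance of the vector problem.

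Theorem~\ref{thm:main-vector} then gives a deterministic word-RAM algorithm with running time
\[
 O_{1,\varepsilon}\bigl(n+\nu^{\varepsilon}(2N+1)\bigr)=O_\varepsilon\bigl(n+N^{\varepsilon}(2N+1)\bigr)=O_\varepsilon\bigl(n+N^{1+\varepsilon}\bigr),
\]
using $2N+1\le 3N$. Because $d=1$ is fixed, the dependence on $d$ hidden in $O_{d,\varepsilon}$ becomes an absolute constant depending only on $\varepsilon$.

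Two side conditions carry over unchanged:
\begin{itemize}
\item Multiplicities are unrestricted, because the theorem explicitly places no restriction on the number of distinct vectors or on their multiplicities.
\item The word-size assumption is the same, namely $O(\log(nN+2))$-bit words, with reading an arbitrarily long target accounted for as input cost.
\end{itemize}
There is no real obstacle. The only points to check are that $\nu=N$ under the hypothesis $N\ge2$, which is why that hypothesis appears, and that signed integers are allowed. The latter holds because the theorem's box $[-N,N]$ is symmetric.
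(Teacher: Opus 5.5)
Your proposal is correct and matches the paper's own derivation: the paper states the corollary at the end of the proof of Theorem~\ref{thm:main-vector} by taking $d=1$. With $N\ge2$ one has $\nu=N$, so the bound $O_\varepsilon(n+N^{\varepsilon}(2N+1))=O_\varepsilon(n+N^{1+\varepsilon})$ follows directly, and multiplicities and signs are already allowed by that theorem.
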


\begin{corollary}[Bounded integer-programming feasibility]
\label{cor:bounded-ip}
Fix $d\ge1$ and $\varepsilon>0$. Let
$\mathsf A\in\Z^{d\times n}$ have entries of absolute value at most
$N\ge2$. Let $\vec t\in\Z^d$ be a target and
$\vec\ell,\vec u\in\Z^n$ be finite variable bounds, all encoded
in binary, with $\vec\ell\le\vec u$. Feasibility of
\eqref{eq:intro-bounded-ip} can be decided deterministically, and a
feasible vector returned when one exists, in
\[
 O_{d,\varepsilon}\!\left(
     N^{d+\varepsilon}\operatorname{poly}(\ell_{\rm in})\right)
\]
bit operations, where $\ell_{\rm in}$ is the total binary input length.
The degree of the polynomial in $\ell_{\rm in}$ is absolute.
\end{corollary}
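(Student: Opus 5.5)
We reduce Corollary~\ref{cor:bounded-ip} to Theorem~\ref{thm:main-vector} in three steps. First we solve the LP relaxation. Then proximity fixes all but a small residual. Finally the residual is a multiset subset-sum instance of size polynomial in $N$ and $\ell_{\rm in}$.

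\textbf{Step 1 (normalization).} We shift $\vec x\mapsto\vec x-\vec\ell$, so the bounds become $\vec 0\le\vec x\le\vec u'$ with $\vec u'=\vec u-\vec\ell$, and the target becomes $\vec t-\mathsf A\vec\ell$. Columns with $u'_i=0$ are dropped. Identical columns are merged, adding their upper bounds; this is valid because a bounded sum of several bounded variables with the same column ranges over exactly the integers between $0$ and the sum of their bounds. Afterwards there are at most $(2N+1)^d$ distinct columns, and all of this costs $\operatorname{poly}(\ell_{\rm in})$ bit operations.

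\textbf{Step 2 (proximity).} We compute a vertex optimum $\vec x^*$ of the LP relaxation $\{\mathsf A\vec x=\vec t,\ \vec 0\le\vec x\le\vec u'\}$ in polynomial time with an absolute degree, using any objective such as $0$. If the LP is infeasible, so is the IP. Otherwise we invoke the Eisenbrand--Weismantel proximity theorem~\cite{EW}: if the IP is feasible, some integer solution $\vec z$ satisfies $\|\vec z-\vec x^*\|_1\le d(2dN+1)^d=:\Delta$. We set $\vec y=\lfloor\vec x^*\rfloor$. Then any such $\vec z$ satisfies $z_i\in[y_i-\Delta,\,y_i+\Delta+1]\cap[0,u'_i]$. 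Moreover $z_i=y_i$ for all but $O(\Delta)$ coordinates, but we do not need that fact. We substitute $z_i=y_i-\lambda_i+\mu_i$ with $0\le\lambda_i\le\min(\Delta,y_i)$ and $0\le\mu_i\le\min(\Delta+1,u'_i-y_i)$. In the merged instance each index $i$ therefore contributes at most $\Delta$ copies of $-\vec a_i$ and $\Delta+1$ copies of $\vec a_i$. This gives an explicit multiset $B\subseteq[-N,N]^d$ and a residual target $\vec t-\mathsf A\vec y$. Because $\lambda$ and $\mu$ are independent, the representable values form a superset of the true choices, so we must check that it is not a strict superset. It is not: $-\lambda_i+\mu_i$ ranges over an interval inside $[-y_i,u'_i-y_i]$, so every subset-sum solution maps back to a feasible $\vec z$. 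The size satisfies $|B|\le(2N+1)^d(2\Delta+1)=O_d(N^{2d})$.

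\textbf{Step 3 (apply the theorem and reconstruct).} If the residual target $\vec t-\mathsf A\vec y$ lies outside $[-N|B|,N|B|]^d$, we answer infeasible. Otherwise we run Theorem~\ref{thm:main-vector} on $(B,\vec t-\mathsf A\vec y)$. This takes $O_{d,\varepsilon}(|B|+N^{d+\varepsilon})$ word operations, which is too slow, since $|B|\approx N^{2d}$.

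This size bound is the main obstacle. We need $|B|=O(N^{d+\varepsilon})$ rather than $N^{2d}$. The first approach to try is to cap multiplicities more cleverly. A copy count of $\vec a$ beyond roughly $2dN$ can be compressed by a Steinitz/exchange argument. Any solution using more than $M$ copies of some $\vec a$ alongside others can be rebalanced, and a standard pigeonhole over $[-N,N]^d$ sums shows that among $\ge(2N+1)^d$ chosen vectors some nonempty sub-multiset sums to zero. This means only $O_d(N^d)$ total non-$\vec y$ changes matter, i.e.\ proximity in $\ell_1$ of order $N^d$ rather than $N^{O(d)}\cdot(\#\text{columns})$. Using the sharper proximity bound $\|\vec z-\vec x^*\|_1=O_d(N^d)$ from~\cite{EW}, we keep for each distinct column only $\min(\text{bound},O_d(N^d))$ copies. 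The truly needed saving is that every solution can be taken to differ from $\vec y$ in total $\ell_1$ at most $\Delta=O_d(N^d)$. So any valid solution uses at most $\Delta$ elements of $B$, and this caps the useful copies per column at $\Delta$. It does not cap the count across columns, however, so we still need the cross-column bound. For that we would argue that the instance with multiplicities can be handed to Theorem~\ref{thm:main-vector} after a preprocessing that keeps, among the at most $(2N+1)^d$ distinct vectors, multiplicities reduced below $O_d(N^{\varepsilon})$ by binary splitting into multiples $2^k\vec a$. That splitting leaves the box, though, so we would instead rely on the theorem's own $O(n)$ input handling applied to a list of length $O_d(N^{d})\cdot\log$. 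Making this reduction rigorous is the crux.

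Finally, a returned subset yields $\lambda,\mu$, hence $\vec z$ and $\vec x=\vec z+\vec\ell$. Word sizes are $O(\log(N\ell_{\rm in}))$, so converting word-RAM time to bit operations costs a $\operatorname{poly}(\ell_{\rm in})$ factor.
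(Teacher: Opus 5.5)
Your Steps~1--2 are sound. Merging equal columns, rounding to $\vec y=\lfloor\vec x^*\rfloor$, and the signed window $-\lambda_i+\mu_i$ are all valid. The signed window is equivalent to the paper's one-sided window $[\max\{\ell_i,\lceil x_i^*-r\rceil\},\min\{u_i,\lfloor x_i^*+r\rfloor\}]$, which uses $u_i'-\ell_i'\le 2r$ copies of each column.

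\textbf{The gap is in Step~3.} There you declare the bound $|B|=O_d(N^{2d})$ fatal. The repair you then sketch (Steinitz rebalancing, binary splitting into multiples $2^k\vec a$, ``the theorem's own $O(n)$ input handling'') is not an argument, and you leave it unresolved yourself. So the proposal, as written, does not prove the corollary.

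The obstacle is illusory. After merging, the number of distinct columns is at most $\min\{n,(2N+1)^d\}\le n\le\ell_{\rm in}$. Hence
\[
 |B|\le n(2\Delta+1)=O_d(nN^d),
\]
and Theorem~\ref{thm:main-vector} runs in $O_{d,\varepsilon}(nN^d+N^{d+\varepsilon})$ word operations on $O_d(\log(nN+2))$-bit words. That is $O_{d,\varepsilon}(N^{d+\varepsilon}\ell_{\rm in}^{O(1)})$ bit operations. The corollary asks only for $N^{d+\varepsilon}$ times a polynomial in $\ell_{\rm in}$, not for a bound independent of $n$. A linear-in-$n$ term of size $N^d$ per column is therefore harmless. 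This is exactly the paper's proof: it does not even merge columns and simply accepts $|B|\le 2nr$ copies.

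A second, smaller gap is in returning a feasible vector. Theorem~\ref{thm:main-vector} is stated as a decision procedure, so ``a returned subset yields $\lambda,\mu$'' is not supported by it. The paper instead uses self-reduction: it bisects the (already proximity-shrunk) range of one variable at a time, calling the decision procedure $O(n\log(r+2))$ times. This stays within the claimed bound.
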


Section~\ref{sec:ip-consequences} proves Corollary~\ref{cor:bounded-ip}
and derives extensions to inequalities, prescribed cardinalities, vector
partition and discrepancy, bounded-coefficient objectives, Partition, and
two-machine scheduling $P2\Vert C_{\max}$.

\paragraph{A constructive John-type theorem.}
As a by-product we obtain a deterministic description of the subset sums
of an arbitrary signed multiset of integers, up to a controlled boundary
region and a small exceptional multiset (Theorem~\ref{thm:john} in
Section~\ref{sec:john}). After removing $\widetilde O(h)$ exceptional items
$E$, the remaining items $B$ are images $\rho(U)$ of vectors in a
bounded-rank box $C$, and a reserve $R\subseteq U$ of $O(h)$ items makes
every lattice point of the zonotope $\vec z_0+Z(U\setminus R)$ a subset
sum, while $\mathcal S(U)\subseteq Z(U)\subseteq\vec z_0+Z(U\setminus R)+|R|\cdot C$.
This builds on the common-GAP framework of Conlon, Fox, and
Pham~\cite[Theorem 1.5]{CFP} and the rounding-and-reserve principle of
Chen, Hu, Mao, and Zhang~\cite[Lemmas 30--31]{CHMZDense}; related
constructive results of Chen, Mao, and Zhang~\cite{CMZSparse} find long
arithmetic progressions in sparse subset sums.

\paragraph*{Independent and Concurrent Work} Independently and concurrently to our work, Jin and Mao~\cite{JM26} also obtained an $O(n + N^{1 + o(1)})$-time algorithm for Subset Sum. Both papers use the Conlon--Fox--Pham theorem and Macbeath regions in the design of the algorithms.

\subsection{Technical overview}\label{sec:set-overview}
The heart of the paper is the following randomized warm-up result for
sets of integers. Sections~\ref{sec:carrier}--\ref{sec:recursion} prove
it in full; Section~\ref{sec:deterministic} then lists the changes
needed for multiplicities, vector input, and determinism.

\begin{theorem}[Randomized set case]\label{thm:main}
For every fixed $\varepsilon,c>0$, there is a randomized algorithm that,
given a set $A\subseteq[-N,N]\cap\Z$ and an integer target $t$, decides
whether $t\in\mathcal{S}(A)$ in $O_{\varepsilon,c}(N^{1+\varepsilon})$ time,
with error probability at most $N^{-c}$ and no false positives.
\end{theorem}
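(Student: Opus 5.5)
We first reduce to the case where the target sits in the interior of the range of $\mathcal S(A)$. Split $A$ into positives $A^+$ and negatives $A^-$ and replace each negative $a$ by $-a$. The target then becomes $t+\Sigma(|A^-|)$ over a set of positive integers in $[1,N]$. By the symmetry $s\mapsto \Sigma(A)-s$ we may further assume $t\le \Sigma(A)/2$.

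\textbf{Case $t\le N^{1+\varepsilon}$.} Bringmann's $\softO{n+t}$ algorithm finishes within budget, and $n\le 2N+1$ because $A$ is a set.

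\textbf{Case $t$ large.} Here we use the additive structure of $A$, via the constructive John-type theorem (Theorem~\ref{thm:john}) with a parameter $h=N^{\varepsilon/2}$. It yields three things:
\begin{itemize}
\item an exceptional set $E$ with $|E|=\widetilde O(h)$;
\item a lift of the remaining items $B=\rho(U)$ to vectors $U$ in a bounded-rank box $C$ of dimension $k=O_\varepsilon(1)$;
\item a reserve $R\subseteq U$ with $|R|=O(h)$.
\end{itemize}
Every lattice point of $\vec z_0+Z(U\setminus R)$ is a subset sum, and $\mathcal S(U)\subseteq \vec z_0+Z(U\setminus R)+|R|\cdot C$.

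The subset sums of $E$ form a sumset we can afford to enumerate only modulo structure. Instead, we compute $\mathcal S(E)$ restricted to a window of width $\widetilde O(hN)$ around the relevant offsets, using Bringmann's algorithm, at cost $\widetilde O(hN)$. We then ask whether some $s\in\mathcal S(E)$ and some lattice vector $\vec y$ with $\rho(\vec y)=t-s$ satisfy $\vec y\in\mathcal S(U)$.

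The preimage $\rho^{-1}(t-s)$ intersected with $Z(U)$ is a slice of a zonotope in constant dimension. Points deep in the interior, meaning inside $\vec z_0+Z(U\setminus R)$, are certified for free. So the question reduces to two subproblems:
\begin{enumerate}
\item An \emph{interior test}: does the slice meet the inner zonotope in a lattice point? This is a fixed-dimension lattice-point-in-polytope problem, solvable in $\polylog$ time per $s$ by Lenstra-type methods, or collectively by sweeping.
\item A \emph{boundary search}: if the slice only touches the shell of width $|R|\cdot C$ near $\partial Z(U)$, we must decide membership there.
\end{enumerate}

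The boundary case is the main obstacle, and we plan to handle it by recursion. Cover the shell by $O_k(1)$-many Macbeath-type regions, in number $N^{o(1)}$, attached to faces of $Z(U)$. Near a face with outer normal $\vec w$, any subset sum achieving the point must take essentially all items with $\langle \vec w,\vec u\rangle>0$ and none with $\langle\vec w,\vec u\rangle<0$, except for a bounded deficiency. By proximity, the deficiency involves few items of small $|\langle\vec w,\vec u\rangle|$. Fixing the forced items leaves a subinstance on the items near the hyperplane $\langle\vec w,\cdot\rangle=0$. These items live in lower rank, or have smaller effective $N$ after projecting. We recurse on that subinstance with a target shifted by a bounded amount.

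The plan is to arrange the parameters so that each recursion level shrinks either the dimension $k$ or the scale by a factor $N^{\Omega(\varepsilon)}$. This gives depth $O_\varepsilon(1)$, and total cost $N^{1+O(\varepsilon)}$ after rescaling $\varepsilon$. Randomization enters through the use of randomized sumset/FFT subroutines, such as Bringmann's color-coding, for the $E$ part. We amplify by $O(c\log N)$ repetitions to get error probability $N^{-c}$. Every acceptance is backed by an explicit certificate, which gives one-sided error with no false positives. The delicate point we expect to spend most effort on is bounding the number and the sizes of the boundary subproblems so that the recursion's total work telescopes.
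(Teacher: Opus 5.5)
The sign normalization and your use of Theorem~\ref{thm:john} (proved independently of Theorem~\ref{thm:main}) are fine. But the proposal stops exactly where the proof begins, and the mechanisms you sketch for the rest do not work as stated.

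Lifting once to the full GAP rank $r$ (your $k$) leaves you searching lattice points of $Z(U)$ on up to $\widetilde O(hN)$ fibers of $\rho$. Since $\rho$ is injective only on $\gamma h\cdot C$, not on $Z(U)$, these fibers can contain far more than $N^{1+\varepsilon}$ lattice points, even restricted to the shell $Z(U)\setminus J$. Nothing in your plan compresses them. Moreover, a fiber through the middle of $J$ may miss every lattice point of $J$ (Figure~\ref{fig:overview-core-fibers}(b)), so the lifted ``boundary'' case is neither small nor local. The per-shift interior test is not polylogarithmic either: with the natural support-function oracle, each call scans the $\Theta(n)$ generators of $J$, so $\widetilde O(hN)$ shifts cost $\widetilde{\Omega}(hNn)$.

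The paper never lifts wholesale. It keeps all targets in one Boolean array of allocated size $w$ in the current coordinates, and exceptional items are transferred into it by one convolution. It tests a single target $\vec y_0$ per interior patch. If that fiber has no lattice point of $J$, it takes the thin functional $\psi$ and uses concavity of $\psi_+(\vec y)-\vec\beta\cdot\vec y$ plus reflection through $\vec y_0$. This shows that every solution, for every target of the patch and every reserve choice, satisfies $0\le\psi(\vec u)-\alpha-\vec\beta\cdot F\vec u\le2\kappa$. Deleting the $h$ items of largest price $|q_i|$ then forces $|q_i|\le\kappa/h$ on the survivors. Appending only $\psi$ gives a child of dimension $d+1$, allocated size $O(w)$, and scale $\Theta(h)$.

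The accounting you defer is the core of the argument, and it needs ideas your plan lacks. Near a face, it is the items of \emph{large} $|\langle\vec w,\vec u\rangle|$ that can change only $O(\tau)$ times, not the small ones as you wrote. These items are also not forced: all their admissible flip sums must be enumerated (by signed color coding) and subtracted from the targets, after which every heavy item is deleted. The remaining light items are numerous and generally of full rank, so neither the dimension nor the scale drops by $N^{\Omega(\varepsilon)}$. The paper's progress measure is instead the allocated size, which the Macbeath-region packing in the shell $K\setminus(K\ominus a_0\tau\cdot P)$ halves: $\sum w_{\mathrm{child}}\le w/2$.

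Interior patches of children require fresh GAPs and fresh lifts, so you also need a bound on how often the dimension can rise. The paper gets it from the short-relation invariant. Every GAP used in a lift is proper on a box of scale exceeding $2\lfloor N^{\delta/3}\rfloor$, so equalities between signed sums of at most $\lfloor N^{\delta/3}\rfloor$ original integers are preserved in both directions. Counting these sums (at most $2N^{1+\delta/3}+1$) bounds the rank and the total rank loss along a path. This gives $O_\delta(1)$ exceptional transitions per branch and $\sum_\omega w_\omega=\widetilde O(w_0)$. Without the one-coordinate strip lift, the Macbeath halving, and this invariant, your outline does not yield the $N^{1+\varepsilon}$ bound.
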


\paragraph{Additive structure becomes geometry.}
Dense subset sum algorithms~\cite{GM,BW} use that sufficiently dense
inputs have an interval of subset sums in the middle of their range.
Sparse inputs need not, but long sumsets still contain generalized
arithmetic progressions (GAPs) of bounded rank~\cite{SV,CFP}. A rank-$r$
GAP is the image of an integer box under a linear map $\rho:\Z^r\to\Z$, so
holes in one dimension may be the shadow of a simple filled object in
higher dimension. For example, for the set $B=\{11,12,22,24\}$ and
$\rho(x,y)=11x+12y$, the subset sums are exactly $\rho([0,3]^2\cap\Z^2)$:
a filled square that projects to seven clusters separated by holes
(Figure~\ref{fig:holes-clusters}). The work of Chen, Hu, Mao, and
Zhang~\cite{CHMZDense} shows that the right analogue of a filled interval
is a filled lattice region inside the zonotope $Z(U)=\sum_{\vec u\in U}[\vec 0,\vec u]$.
Our algorithm lifts a scalar instance to such coordinates, after
removing a few exceptional items $E$ whose subset sums are transferred to
the target side:
\[
 \mathcal{S}(A)=\mathcal{S}(E)+\rho\bigl(\mathcal{S}(U)\bigr).
\]

\begin{figure}[t]
\centering
\begin{tikzpicture}[font=\small]
  \begin{scope}[shift={(1.2,2.4)},x=0.85cm,y=0.85cm]
    \fill[blue!6] (0,0) rectangle (3,3);
    \draw[step=1,gray!35,thin] (0,0) grid (3,3);
    \draw[black!60] (0,0) rectangle (3,3);
    \draw[->] (-0.3,0) -- (3.5,0) node[right] {$x$};
    \draw[->] (0,-0.3) -- (0,3.5) node[above] {$y$};
    \draw[orange!85!black,very thick] (0,3) -- (3,0);
    \foreach \x in {0,...,3} {
      \foreach \y in {0,...,3} {
        \pgfmathtruncatemacro{\k}{\x+\y}
        \ifnum\k=3
          \fill[orange!85!black] (\x,\y) circle (2.4pt);
        \else
          \fill[blue!75!black] (\x,\y) circle (2.1pt);
        \fi
      }
    }
    \node[below left] at (0,0) {$0$};
    \node[below] at (3,0) {$3$};
    \node[left] at (0,3) {$3$};
    \node[above] at (1.5,4.2) {(a) Filled lattice square};
  \end{scope}
  \node[align=center] at (8.7,4.7)
    {$B=\{11,12,22,24\}$\\[3pt]
     $\rho(\vec{z})=11x+12y$};
  \node[align=center,text width=6cm] at (8.7,3.45)
    {Every lattice point of $[0,3]^2$ is attainable.\\
     The orange diagonal $x+y=3$ maps to\\
     the central cluster $\{33,34,35,36\}$.};
  \draw[->,thick] (8.7,2.65) -- (8.7,1.65)
    node[midway,right] {linear projection};
  \begin{scope}[x=0.185cm,y=1cm]
    \draw[->,black!60] (-1,0.8) -- (71,0.8);
    \foreach \s in {0,...,69} {
      \draw[gray!45,thin] (\s,0.76) -- (\s,0.84);
    }
    \foreach \s in {0,11,12,22,23,24,45,46,47,57,58,69} {
      \fill[blue!75!black] (\s,0.8) circle (2.1pt);
    }
    \foreach \s in {33,34,35,36} {
      \fill[orange!85!black] (\s,0.8) circle (2.4pt);
    }
    \foreach \s in {0,11,22,33,45,57,69} {
      \node[below] at (\s,0.66) {\scriptsize $\s$};
    }
    \draw[<->,black!70] (25,1.12) -- (32,1.12)
      node[midway,above] {\scriptsize hole};
    \draw[<->,black!70] (37,1.12) -- (44,1.12)
      node[midway,above] {\scriptsize hole};
    \node[below] at (34.5,0.12)
      {(b) Clusters and holes in $\mathcal{S}(B)\subseteq[0,69]$};
  \end{scope}
\end{tikzpicture}
\caption{A filled higher-dimensional region can project to clusters separated
by holes, even in the middle of the scalar range. The four distinct items
$11,12,22,24$ generate all $16$ lattice points of $[0,3]^2$ in the lifted
coordinates, but their images under
$\rho(\vec{z})=11x+12y$ form seven clusters. Large dots mark attainable sums;
faint ticks mark all integers. The orange diagonal and orange cluster
correspond to the coordinate sum $x+y=3$. This example is exact and
illustrates projection-induced gaps; the general construction also retains
an uncertified boundary and exceptional-item shifts.}
\label{fig:holes-clusters}
\end{figure}

\paragraph{The recursive problem.}
An instance consists of integer vectors $V=\{\vec v_i:i\in I\}$ in a
symmetric box $P=\prod_{j=1}^d[-p_j,p_j]$ and an exact target set
$\mathcal T$ contained in a translate of $\tau\cdot P$; we ask whether
$\mathcal S(V)\cap\mathcal T\neq\varnothing$. The targets are stored as a
Boolean array whose \emph{allocated size}
$w$ counts the whole enclosing box. Every item is an
original integer, possibly with its selection complemented, and a linear
map recovers its oriented value. We start with $d=1$, $P=[-N,N]$,
$\mathcal T=\{t\}$, and $\tau_0=\lfloor N^\delta\rfloor$ for a small
$\delta=\delta(\varepsilon)$, so $w_0=O(N^{1+\delta})$. Throughout,
$N^{\delta}/\polylog N\le\tau\le N^\delta$.

Two exact tools act on items (Section~\ref{sec:boolean-operations}).
Removing $O(\tau)$ items $E$ and replacing $\mathcal T$ by
$\mathcal T-\mathcal S(E)$ preserves the answer at cost $\widetilde O(w)$.
Signed color coding computes, for signed vectors and coordinate budgets
$\Delta_j$, a set of actual subset sums containing every sum that has a
witness of total \emph{absolute} contribution at most $\Delta_j$ in every
coordinate, in time $\widetilde O(n+\prod_j(1+\Delta_j))$. Absolute
budgets bound every partial sum, so cancellations cause no difficulty.

Geometric queries split the target window into a constant number of cells:
cells outside the zonotope $Z(V)$ are discarded; cells with a buffer of
order $\tau\cdot P$ inside $Z(V)$ are \emph{interior}; the rest are near
the \emph{boundary}. Each recursive step either halves the total allocated
size (boundary) or adds one coordinate (interior), and the second kind of
step happens only $O_\delta(1)$ times along any branch.

\paragraph{A common GAP and a filling reserve (Section~\ref{sec:carrier}).}
For items bounded by $L$ and a parameter $h\ge L^{\eta}$, we construct, after removing
$O_\eta(h\log^2 L)$ exceptional items, coordinates
$\vec u_i\in C\cap\Z^r$ for all other items in a common GAP of rank
$r=O_\eta(1)$, and a reserve $R$ of $O_\eta(h)$ items whose subset sums
fill a translated box $\vec z_0+(ch\cdot C\cap\Z^r)$. The reserve corrects
rounding errors: an extreme fractional selection of the unreserved items
has at most $r$ fractional coefficients, and rounding them leaves an error
in $r\cdot C$, which a subset of $R$ supplies. Hence every lattice point of
\[
 J=\vec z_0+\sum_{i\notin R}[\vec 0,\vec u_i]
\]
is an actual subset sum (Lemma~\ref{lem:filled-zonotope}). For vector items
we encode them injectively as integers, and a linear map $F$ with
$F\vec u_i=\vec v_i$ recovers every old coordinate
(Lemma~\ref{lem:carrier-old-map}).

\paragraph{Boundary targets yield smaller instances (Section~\ref{sec:normal}).}
For a supporting direction $\vec a$ of the zonotope, the endpoint selection
maximizing $\vec a$ gives a reference subset. A target whose deficit
$h_Z(\vec a)-\vec a\cdot\vec t$ is small can be reached only by changing a
bounded total absolute amount in direction $\vec a$, because each changed
item lowers the functional by $|\vec a\cdot\vec v_i|$. With several
directions we choose, item by item, the reference bit minimizing the
combined normalized deficit. An item is \emph{heavy} if one of its
contributions exceeds the corresponding budget divided by $\tau$; a relevant
solution changes only $O_d(\tau)$ heavy items. Signed color coding computes
all relevant heavy-change sums $S_H$, we replace the targets by
$(\mathcal T-\vec b)-S_H$, where $\vec b$ is the reference sum, and we
delete \emph{all} heavy items. The light items then fit, at scale $\tau$,
in a box of the same shape as the residual targets.

The total size of the children is controlled by a packing argument. Around
each boundary center we use a Macbeath region, which adapts to the local
boundary; small disjoint copies lie in a boundary shell of relative
thickness $O_d(\vartheta)$, and fixed enlargements cover all boundary
candidates. Since lattice boxing loses only a dimension-dependent factor,
choosing the fixed constant $\vartheta$ small enough gives
$\sum w_{\mathrm{child}}\le w/2$.

Figure~\ref{fig:walkthrough-boundary} shows a small example:
$A=\{3,4,8,10,14,20,28\}$ with outlier $E=\{3\}$ and the remaining items
represented by $U=\{(1,0),(2,0),(0,1),(1,1),(0,2),(2,2)\}$ under
$\rho(x,y)=4x+10y$. Every lattice point of
$J=(1,1)+Z(\{(2,0),(0,2),(2,2)\})$ is a subset sum of $U$. For $t=57$ the
residual lines $\rho=57$ and $\rho=54$ contain no lattice point of $J$;
the only lattice candidate in $Z(U)$ is $(6,3)$, on the face $x=6$. The
endpoint selection maximizing $x$ chooses $4,8,14,28$, which sum to $54$;
relative to it the candidate has zero deficit, so only the items $10,20$
remain free, and $57=3+4+8+14+28$. For $t=45$, by contrast, the line
$\rho=42$ passes through the interior lattice point $(3,3)$ of $J$, giving
$45=3+14+28$.

\begin{figure}[tbp]
\centering
\begin{tikzpicture}[font=\small]
  \node[align=center] at (2.35,7.35)
    {Input after preprocessing\\distinct lifted items};
  \begin{scope}[shift={(1.1,5.1)},x=0.65cm,y=0.65cm]
    \draw[step=1,gray!25,thin] (0,0) grid (2,2);
    \draw[->,black!60] (-0.25,0) -- (2.65,0) node[right] {$x$};
    \draw[->,black!60] (0,-0.25) -- (0,2.45) node[left] {$y$};
    \foreach \x/\y in {1/0,2/0,0/1,0/2,1/1,2/2}
      \fill[blue!75!black] (\x,\y) circle (2.4pt);
  \end{scope}
  \node[align=center] at (2.35,4.5)
    {$\rho(U)=A\setminus E$\\$\rho(x,y)=4x+10y$};
  \node[draw=orange!80!black,fill=orange!8,rounded corners,
        align=center,inner sep=5pt] at (2.35,3.6)
    {Scalar outlier: $E=\{3\}$\\$3\notin\rho(\Z^2)$};
  \node[align=left,text width=4.5cm] at (2.35,2.3)
    {Target $t=57$ gives two lines:\\[3pt]
     \textcolor{orange!85!black}{solid: $\rho=54$} (use $3$),\\
     \textcolor{orange!85!black}{dashed: $\rho=57$} (use $0$).};
  \node[align=left,text width=4.5cm] at (2.35,0.9)
    {Boundary patch: nearby candidates.\\[3pt]
     Dashed box: allocated array.};

  \begin{scope}[shift={(5.7,1.45)},x=0.87cm,y=0.87cm]
    \fill[blue!4] (0,0)--(3,0)--(6,3)--(6,6)--(3,6)--(0,3)--cycle;
    \draw[step=1,gray!20,thin] (0,0) grid (6,6);
    \fill[green!14] (1,1)--(3,1)--(5,3)--(5,5)--(3,5)--(1,3)--cycle;
    \draw[green!45!black,thick]
      (1,1)--(3,1)--(5,3)--(5,5)--(3,5)--(1,3)--cycle;
    \draw[blue!65!black,very thick]
      (0,0)--(3,0)--(6,3)--(6,6)--(3,6)--(0,3)--cycle;
    \foreach \x in {0,...,6} {
      \foreach \y in {0,...,6} {
        \pgfmathtruncatemacro{\inside}{(\y>=\x-3)&&(\y<=\x+3)}
        \ifnum\inside=1 \fill[gray!65] (\x,\y) circle (1.6pt);\fi
      }
    }
    \draw[->,black!60] (-0.3,0) -- (6.55,0) node[right] {$x$};
    \draw[->,black!60] (0,-0.3) -- (0,6.55) node[above] {$y$};
    \node[below left] at (0,0) {$0$};
    \node[below] at (3,0) {$3$};
    \node[below] at (6,0) {$6$};
    \node[left] at (0,3) {$3$};
    \node[left] at (0,6) {$6$};
    \node[above,align=center] at (3,6.8)
      {Boundary case: $t=57$};
    \node[blue!65!black,fill=white,inner sep=1pt] at (5.65,5.72) {$Z$};
    \node[green!45!black,fill=green!14,inner sep=1pt] at (3.8,4.65) {$J$};
    \draw[orange!85!black,very thick] (0.2,5.32) -- (6.85,2.66);
    \draw[orange!85!black,thick,dashed] (0.2,5.62) -- (6.85,2.96);
    \draw[violet!80!black,thick] (6,3) ellipse [x radius=0.48,y radius=0.52];
    \draw[violet!80!black,dashed,thick] (5.45,2.42) rectangle (6.55,3.58);
    \draw[blue!65!black,densely dotted] (6,3.8) -- (6,6.4);
    \draw[->,blue!65!black,thick] (6,4.75) -- (7,4.75)
      node[above,midway] {$\vec n=(1,0)$};
    \fill[white] (6,3) circle (4pt);
    \fill[orange!85!black] (6,3) circle (3pt);
    \draw[violet!80!black] (6.55,3.58) -- (6.95,3.88);
    \node[violet!80!black,anchor=west,align=left] at (6.98,3.9)
      {enclosing\\box};
    \draw[violet!80!black] (6.28,2.6) -- (6.95,2.2);
    \node[violet!80!black,anchor=west] at (6.98,2.18) {patch};
    \node[below,align=center,text width=6cm] at (3,-0.55)
      {Only integer candidate: $(6,3)$.};
  \end{scope}
\end{tikzpicture}
\caption{A boundary target in the two-dimensional walkthrough. The six blue
points at left are the distinct input vectors; the gray points at right are
candidate subset-sum lattice points in the outer zonotope $Z$.
The green zonotope $J$ is a filled inner region in this exact toy example.
The target lines have real intersections with $J$, but neither contains an
integer point of $J$. The sole integer candidate is $(6,3)$, on the boundary
of $Z$. Section~\ref{sec:normal} groups boundary candidates into patches,
boxes each patch for its array computation, and uses a supporting normal
(here $\vec n=(1,0)$) to bound changes from an endpoint selection.}
\label{fig:walkthrough-boundary}
\end{figure}

\paragraph{Interior targets are resolved or expose a new coordinate (Section~\ref{sec:deep}).}
Interior is relative to the \emph{current} $d$ coordinates, but the common
GAP may have rank $r>d$ and $F$ forgets the extra directions. Pick a target
$\vec y_0$ that passes the lattice filter. If the fiber
$J\cap F^{-1}(\vec y_0)$ contains a lattice point, the target is a subset
sum. Otherwise fixed-dimensional integer geometry provides an integer
functional $\psi$ of width $O_r(1)$ on the fiber
(Figure~\ref{fig:overview-core-fibers}). A concavity and reflection
argument extends this thin slice to a constant-width strip
$0\le\psi(\vec u)-\alpha-\vec\beta\cdot F\vec u\le O_r(1)$ over the whole
target patch, for every reserve choice. We remove the reserve and the $h$
items with largest prices $|\psi(\vec u_i)-\vec\beta\cdot\vec v_i|$, which
forces all surviving prices to be $O_r(1/h)$, and append $\psi$ as a new
coordinate, keeping \emph{every} permitted value of it
(Figure~\ref{fig:overview-fiber-lift}). The new instance has dimension
$d+1$, allocated size $O(w)$, and scale $\Theta(h)$.

\begin{figure}[tbp]
\centering
\begin{tikzpicture}[font=\small,x=1.5cm,y=1.15cm]
 \begin{scope}
  \node[font=\small\bfseries,anchor=west] at (-.3,3.9)
    {(a) A core fiber with a lattice point};
  \fill[blue!12] (0,0)--(1,1)--(2,3)--(1,2)--cycle;
  \draw[blue!65!black,thick] (0,0)--(1,1)--(2,3)--(1,2)--cycle;
  \foreach \x in {0,1,2} {
   \foreach \y in {0,1,2,3} {
    \fill[black!65] (\x,\y) circle (1.3pt);
   }
  }
  \draw[->] (-.25,0)--(2.6,0) node[right] {$u_1$};
  \draw[->] (0,-.25)--(0,3.5) node[above] {$u_2$};
  \draw[green!45!black,dashed] (-.2,1.6)--(2.4,.3);
  \draw[green!45!black,line width=2pt] (.6,1.2)--(1,1);
  \draw[green!45!black,line width=1pt] (1,1) circle (3.4pt);
  \node[anchor=north west,fill=white,inner sep=1pt] at (1.08,.88)
    {$(1,1)$};
  \node[green!35!black,anchor=north west,fill=white,inner sep=1pt]
    at (1.3,.62) {$F\vec u=3$};
  \node[blue!65!black,anchor=west] at (1.35,2.55) {$J$};
  \foreach \x in {1,2} {\node[below] at (\x,0) {\scriptsize $\x$};}
  \foreach \y in {1,2,3} {\node[left] at (0,\y) {\scriptsize $\y$};}
  \node[anchor=east] at (.05,-.6) {$FJ$};
  \draw[blue!65!black,line width=1.4pt] (.2,-.6)--(2.4,-.6);
  \draw (.2,-.66)--(.2,-.54) node[below=5pt] {$0$};
  \draw (2.4,-.66)--(2.4,-.54) node[below=5pt] {$8$};
  \fill[green!45!black] (1.025,-.6) circle (2.2pt);
  \node[above,green!35!black] at (1.025,-.57) {$3$};
 \end{scope}
 \begin{scope}[xshift=8cm]
  \node[font=\small\bfseries,anchor=west] at (-.3,3.9)
    {(b) A core fiber with no lattice point};
  \fill[blue!12] (0,0)--(1,1)--(2,3)--(1,2)--cycle;
  \draw[blue!65!black,thick] (0,0)--(1,1)--(2,3)--(1,2)--cycle;
  \foreach \x in {0,1,2} {
   \foreach \y in {0,1,2,3} {
    \fill[black!65] (\x,\y) circle (1.3pt);
   }
  }
  \draw[->] (-.25,0)--(2.6,0) node[right] {$u_1$};
  \draw[->] (0,-.25)--(0,3.5) node[above] {$u_2$};
  \draw[orange!80!black,dashed] (-.2,2.1)--(2.4,.8);
  \draw[orange!80!black,line width=2pt] (.8,1.6)--(1.2,1.4);
  \fill[orange!80!black] (.8,1.6) circle (1.8pt);
  \fill[orange!80!black] (1.2,1.4) circle (1.8pt);
  \node[orange!80!black,anchor=north west,fill=white,inner sep=1pt]
    at (1.3,1.35) {$F\vec u=4$};
  \node[blue!65!black,anchor=west] at (1.35,2.55) {$J$};
  \foreach \x in {1,2} {\node[below] at (\x,0) {\scriptsize $\x$};}
  \foreach \y in {1,2,3} {\node[left] at (0,\y) {\scriptsize $\y$};}
  \node[anchor=east] at (.05,-.6) {$FJ$};
  \draw[blue!65!black,line width=1.4pt] (.2,-.6)--(2.4,-.6);
  \draw (.2,-.66)--(.2,-.54) node[below=5pt] {$0$};
  \draw (2.4,-.66)--(2.4,-.54) node[below=5pt] {$8$};
  \fill[orange!80!black] (1.3,-.6) circle (2.2pt);
  \node[above,orange!80!black] at (1.3,-.57) {$4$};
 \end{scope}
\end{tikzpicture}
\caption{The two outcomes of a core-fiber test, illustrated by the exact
lattice zonotope $J=[\vec{0},(1,1)]+[\vec{0},(1,2)]$ in $\Lambda=\Z^2$,
with $F(u_1,u_2)=u_1+2u_2$. Both targets lie strictly inside $FJ=[0,8]$.
In (a), $J\cap F^{-1}(3)$ contains $(1,1)$; when $J$ is a certified
core, such a point gives an actual subset sum. In (b),
$J\cap F^{-1}(4)$ joins $(4/5,8/5)$ to $(6/5,7/5)$ and contains no
lattice point. The small example illustrates the geometric alternatives,
rather than the quantitative reserve and buffer assumptions of the algorithm.}
\label{fig:overview-core-fibers}
\end{figure}
\begin{figure}[tbp]
\centering
\begin{tikzpicture}[font=\small,x=.85cm,y=1.2cm]
 \fill[orange!10]
  (-.35,{3*(-.35)/8-1/8})--(8.35,{3*8.35/8-1/8})
  --(8.35,{3*8.35/8+1/8})--(-.35,{3*(-.35)/8+1/8})--cycle;
 \draw[orange!80!black,dashed]
  (-.35,{3*(-.35)/8-1/8})--(8.35,{3*8.35/8-1/8});
 \draw[orange!80!black,dashed]
  (-.35,{3*(-.35)/8+1/8})--(8.35,{3*8.35/8+1/8});
 \fill[blue!18] (0,0)--(3,1)--(8,3)--(5,2)--cycle;
 \draw[blue!65!black,thick] (0,0)--(3,1)--(8,3)--(5,2)--cycle;
 \foreach \x in {0,1,2,3,4,5,6,7,8} {
  \foreach \y in {0,1,2,3} {
   \fill[black!65] (\x,\y) circle (1.2pt);
  }
 }
 \draw[->] (-.5,0)--(8.8,0) node[right] {$y=F\vec u$};
 \draw[->] (0,-.35)--(0,3.5) node[above] {$z=\psi(\vec u)$};
 \foreach \x in {2,4,6,8} {\node[below] at (\x,0) {\scriptsize $\x$};}
 \foreach \y in {1,2,3} {\node[left] at (0,\y) {\scriptsize $\y$};}
 \draw[orange!80!black,dashed] (4,.2)--(4,2.75);
 \draw[orange!80!black,line width=2pt] (4,1.4)--(4,1.6);
 \fill[orange!80!black] (4,1.4) circle (1.6pt);
 \fill[orange!80!black] (4,1.6) circle (1.6pt);
 \node[orange!80!black,above] at (4,2.75) {$y=4$};
 \node[blue!65!black,anchor=south east] at (7.9,2.9) {$(F,\psi)J$};
 \node[orange!80!black,anchor=north west] at (5.3,.6)
   {$|z-\tfrac38y|\le\tfrac18$};
\end{tikzpicture}
\caption{Recording the previously hidden direction in the same toy
example. With $\psi(u_1,u_2)=u_2$, the map $(F,\psi)$ is an invertible
integer change of coordinates. The old target equation $y=4$ remains;
the extra coordinate records position along its fiber. The image of
$J$ is contained in the slanted band $|z-(3/8)y|\le1/8$, and its slice
at $y=4$ has $7/5\le z\le8/5$. Thus the obstruction is invisible in
the old coordinate but visible in the new one. In the algorithm, a
dimension-dependent constant-width band retains every permitted integer
value of the added coordinate, followed by a child-lattice filter.}
\label{fig:overview-fiber-lift}
\end{figure}

\paragraph{Why dimensions stay bounded (Section~\ref{sec:invariant}).}
Let $k=\lfloor N^{\delta/3}\rfloor$. All GAPs are proper on a box of scale
$\Omega(h)>2k$, so every representation preserves equalities between
signed sums of at most $k$ original integers, in both directions. Hence the
number $\mu(I)$ of distinct signed $k$-sums of the active original integers,
which is at most $2kN+1$, also counts the distinct short vector sums. If
$d$ current vectors are independent, then $\binom{k+d}{d}\le\mu(I)$, which
bounds $d$. If a deletion lowers the rank by $q$, the lost directions give
$\Omega(k^q)$ disjoint translates of the surviving signed half-sumset, and
the polynomial-scale doubling estimate~\eqref{eq:carrier-doubling} gives
$\mu(I')\le k^{-q/2}\mu(I)$; so the total rank loss is also $O_\delta(1)$.
Since each unresolved interior step adds a genuinely independent
coordinate, only $O_\delta(1)$ such steps occur on a branch.

\paragraph{Cost.}
Boundary children have total size at most $w/2$, and the $O_\delta(1)$
exceptional transitions per branch enlarge sizes by polylogarithmic
factors, so $\sum_\omega w_\omega=\widetilde O(w_0)$. The GAP construction
costs $\widetilde O(hw)$, giving total time
$\widetilde O(\tau_0w_0)=\widetilde O(N^{1+2\delta})$.

\paragraph{Multiplicities, vectors, and determinism (Section~\ref{sec:deterministic}).}
All steps act on distinct types with integer counts; only $O(\tau)$ copies
of a type can be changed at a boundary patch, so only that many are
instantiated. Vector input starts the same recursion in dimension $d$, with
the invariant stated for original vectors. The two randomized subroutines are
replaced by exact computations: boundary corrections involve $O(\tau)$
copies, so Newton's identities over a truncated polynomial ring compute
them exactly; and every constructive Freiman call lies in an interval of
length $O(hL)$, which permits a deterministic sieve for a
modelling multiplier.

\paragraph{Organization.}
Section~\ref{sec:preliminaries} collects notation, imported results, the
instance representation, and the exact Boolean operations.
Sections~\ref{sec:carrier}, \ref{sec:normal}, and~\ref{sec:deep} give the
common GAP, the boundary reduction, and the interior reduction.
Section~\ref{sec:invariant} bounds dimension changes and
Section~\ref{sec:recursion} assembles the recursion and proves
Theorem~\ref{thm:main}. Section~\ref{sec:deterministic} proves
Theorem~\ref{thm:main-vector}, Section~\ref{sec:john} proves the John-type
theorem, and Section~\ref{sec:ip-consequences} derives the consequences.
Standard Boolean, lattice, and bit-complexity details are in the appendices.

\section{Preliminaries}\label{sec:preliminaries}

\subsection{Notation}\label{sec:imported-results}

Vectors carry an arrow, as in $\vec v$, with scalar coordinates $v_j$;
$0$ also denotes the zero vector. Write $[a,b]_{\Z}=[a,b]\cap\Z$. For sets
$X,Y$ in an abelian group, $X\pm Y=\{x\pm y:x\in X,y\in Y\}$. For a positive
integer $k$, the \emph{repeated sumset} $kX=X+\cdots+X$ may reuse elements,
whereas $\lambda\cdot X=\{\lambda\vec x:\vec x\in X\}$ is geometric scaling
(for $X=\{0,1\}$, $2X=\{0,1,2\}$ but $2\cdot X=\{0,2\}$; for convex $X$ the
two agree). For a finite multiset $A$ of integers or vectors,
\[
 \Sigma(A)=\sum_{a\in A}a,
 \qquad \mathcal{S}(A)=\{\Sigma(B):B\subseteq A\},
 \qquad Z(A)=\sum_{a\in A}[0,a].
\]
Each occurrence is an \emph{item} (or \emph{copy}) and may be selected at most
once; $|A|$ counts items and $|\supp(A)|$ distinct values. An indexed family
in braces keeps every indexed occurrence. $A\uplus B$ adds multiplicities and
multiset difference removes the indicated copies. A box
$P=\prod_j[-p_j,p_j]$ means $|x_j|\le p_j$ in coordinate $j$, and for a box or convex set $B$ we write $|B|=\#(B\cap\Z^d)$ for its number
of integer points (for multisets, $|A|$ still counts items).
For a convex body $K$, its support function, width, and Macbeath regions are
\[
 h_K(\vec{u})=\max_{\vec{x}\in K}\vec{u}\cdot \vec{x},
 \qquad \operatorname{width}_{\vec{u}}(K)=h_K(\vec{u})+h_K(-\vec{u}),
 \qquad M_K^\lambda(\vec{x})=\vec{x}+\lambda\cdot\bigl((K-\vec{x})\cap(\vec{x}-K)\bigr).
\]

\paragraph{Progressions and lattices.}
A generalized arithmetic progression (GAP) of rank $r$ is a described set
\[
 Q=q_0+\rho(B\cap\Z^r),\qquad
 B=\prod_{j=1}^r[\ell_j,u_j],\qquad
 \rho(\vec{z})=\sum_{j=1}^r z_jq_j,
\]
in a torsion-free abelian group. Its \emph{coefficient volume} is
$|B\cap\Z^r|$. It is \emph{proper} if $\rho$ is injective on $B\cap\Z^r$;
we then also say that $B$ is proper under $\rho$. It is \emph{centred} if
$q_0=0\in B$ and \emph{symmetric} if moreover $B=-B$. For a centred
description put $Q_\lambda=\rho(\lambda\cdot B\cap\Z^r)$; the GAP is
\emph{$\lambda$-proper} if $Q_\lambda$ is proper. With integer endpoints,
$Q_k=kQ$. A \emph{containing GAP} for $X$ is a GAP $Q\supseteq X$.
A full-rank lattice $\Lambda\subset\R^r$ is given by a basis matrix $D$,
with $\det\Lambda=|\det D|$; lattices of smaller rank are considered within
their real span, and $\operatorname{span}_{\Z}X$ is the lattice generated by $X$.
Subscripts on $O(\cdot)$ list the parameters on which constants may depend;
likewise $\gamma$ with subscripts (e.g.\ $\gamma_d$, $\gamma_{d,{a_{\mathrm{buf}}}}$)
denotes a positive constant depending only on its subscripts, whose value may
change between occurrences. Numbered constants such as $\gamma_1,\gamma_2$ in
Section~\ref{sec:carrier} are fixed once defined.
\paragraph{Main symbols.}
Table~\ref{tab:notation} lists the symbols used in more than one section;
all other symbols are local to a section.

\begin{table}[ht]
\centering\small
\begin{tabular}{@{}lp{.7\linewidth}@{}}
\toprule
Symbol & Meaning \\
\midrule
$n$, $m$ & number of items (copies), number of distinct types\\
$N$, $\nu$ & input coordinate bound; $\nu=\max\{2,N_1,\ldots,N_d\}$ for vectors\\
$\varepsilon,\delta,\eta$ & target exponent slack; $\delta=\delta(\varepsilon)$; exponent of the GAP construction\\
$d$ & current dimension (in Section~\ref{sec:deterministic}: input dimension, current one $\bar d$)\\
$V=\{\vec v_i\}$, $c_i$ & current items, and their counts when stored as types\\
$a_i$ & oriented original value (vector) of item $i$; $\varphi(\vec v_i)=a_i$ (Sections~\ref{sec:representation}, \ref{sec:invariant}, \ref{sec:deterministic})\\
$P=\prod_j[-p_j,p_j]$ & item box; $|B|$ is the number of integer points of a box $B$\\
$\mathcal T$, $\vec o$, $\tau$, $w$ & target set $\subseteq\vec o+\tau\cdot P$; scale; allocated size $w=|\vec o+\tau\cdot P|$\\
$Z$, $h_K$ & zonotope of the current items; support function of a convex body $K$\\
$b_i$, $\vec b$, $\vec v_i'$ & reference bits, reference sum, signed items $(1-2b_i)\vec v_i$ (eq.~\eqref{eq:reanchor})\\
$h$, $L$ & GAP parameter $h=\lfloor\tau/\log^{c_{\mathrm{log}}}N\rfloor$; bound on the scalar inputs of the GAP construction\\
$r$, $C$, $\rho$, $\vec u_i$ & GAP rank, coefficient box, GAP map, coefficient vectors\\
$E$, $R$, $\vec z_0$ & exceptional items; reserve; centre of the filled box\\
$F$, $J$ & old-vector map ($F\vec u_i=\vec v_i$); certified core $\vec z_0+Z(U\setminus R)$\\
$k$, $\mu(I)$ & short-relation scale $\lfloor N^{\delta/3}\rfloor$ (or $\lfloor\nu^{\delta/3}\rfloor$); number of signed $k$-sums\\
$\vartheta$, $\varpi$, $\sigma$ & boundary threshold; boundary buffer $c_d\vartheta$; radius of an interior target patch\\
$c_{\mathrm{log}}$, $d_*$, $r_*$ & exponent in $h$; bounds on the rank and on exceptional transitions (Section~\ref{sec:invariant})\\
$\xi$ & failure allowance of randomized subroutines\\
\bottomrule
\end{tabular}
\caption{Main notation.}\label{tab:notation}
\end{table}

\subsection{Imported results}

\begin{importedtheorem}[Long sumsets; Conlon--Fox--Pham
  {\cite[Lemma 2.22 and Corollary 2.24]{CFP}}]
\label{thm:import-long-sumsets}
For every $\alpha>0$, there are constants $r_\alpha$ and
$c_\alpha>0$ such that the following holds for sufficiently large $L$.
If $X\subseteq[0,L]_{\Z}$ contains $0$ and $s$ is an integer with
$s\ge L^{\alpha}$, then there is a centred GAP $Q$ of rank at most
$r_\alpha$ containing $X$ such that $Q_{c_\alpha s}$ is proper and
\[
 z+Q_{c_\alpha s}\subseteq sX
 \qquad\text{for some }z\in\Z.
\]
\end{importedtheorem}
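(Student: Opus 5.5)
This imported theorem is taken from Conlon--Fox--Pham, so the natural plan is to assemble it from their two cited statements rather than reprove the additive combinatorics.

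\textbf{Step 1: long sumsets contain large GAPs.} Apply the structural lemma \cite[Lemma 2.22]{CFP}. For $X\subseteq[0,L]_{\Z}$ with $0\in X$ and $s\ge L^{\alpha}$, it produces a GAP $P$ of bounded rank $r\le r_\alpha$ with $P\subseteq sX$ up to translation. Since $0\in X$, the sets $kX$ are nested in $k$. This gives room to pass from $sX$ to a constant fraction $c_\alpha s$ by pigeonholing over a dyadic range of scales: the sizes $|kX|\le kL+1$ grow at most polynomially, so some scale $k\in[s^{1/2},s]$ has small doubling $|2kX|\le O_\alpha(1)|kX|$. Freiman--Bilu then gives a proper GAP of bounded rank on which $kX$ is dense.

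\textbf{Step 2: a containing GAP $Q$ for $X$ whose dilate fills $sX$.} I would follow \cite[Corollary 2.24]{CFP}.
\begin{enumerate}[label=(\roman*)]
\item Take $Q$ as a centred GAP containing $X$, with generators chosen so that $kX$ occupies a positive proportion of $Q_k$.
\item Use the iterated-sumset filling argument: dense subsets of a proper GAP $Q_k$ have $\ell$-fold sums that contain a translate of $Q_{ck\ell}$ once $\ell=O_\alpha(1)$. This is a Bogolyubov/Sárközy-type filling, which works because $0\in X$ lets us absorb the lower-order terms.
\item Set $c_\alpha s=ck\ell$ after adjusting $k$, and use $sX\supseteq (k\ell)X$ together with a translate to obtain $z+Q_{c_\alpha s}\subseteq sX$.
\end{enumerate}

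\textbf{Step 3: properness at scale $c_\alpha s$.} Enforce this by shrinking. If $Q_{c_\alpha s}$ is not proper, a short relation among the generators lets us reduce the rank, by a standard proper-GAP refinement lemma. This costs only constant factors, so it decreases $c_\alpha$ and the rank bound boundedly often.

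\textbf{Main obstacle.} The hard part is reconciling two requirements on the same GAP $Q$: it must \emph{contain} all of $X$, yet its large dilate must sit \emph{inside} $sX$. Containment pushes $Q$ to be big, while filling pushes it to be small. This is exactly the content of \cite[Corollary 2.24]{CFP}, and here I would defer to that reference for the precise constants rather than rederive it.
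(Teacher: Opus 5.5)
Your fallback is the same as the paper's. The paper justifies this imported statement only by noting that it is the polynomial-scale form of \cite[Corollary 2.24]{CFP}: take the parameter $\beta>\max\{1,1/\alpha\}$ there, read the rank bound $r_\alpha$ off the preceding lemma, and observe that no density hypothesis on $X$ is needed. If you rely on the citation, this instantiation is what you need to state. Note also that the GAP you extract from Lemma 2.22 in Step 1 is never used afterwards.

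The self-contained sketch in Steps 1--3, however, does not work as written.

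First, the filling claim in (ii) is false without a non-degeneracy hypothesis. If $X$ consists of even integers and $Q=[-L,L]$ has generator $1$, then $kX$ has density about $1/4$ in $Q_k$. Yet every $\ell$-fold sum is even, so none contains a translate of $Q_{ck\ell}$; having $0\in X$ does not help. Lemma~\ref{lem:import-filling} needs each set to contain $0$ and to generate the lattice. So you must first re-coordinatize $Q$ in the lattice generated by the coordinate vectors of $X$.

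Second, the pigeonhole only gives some $k\in[s^{1/2},s]$. With $\ell=O_\alpha(1)$ you cannot ``adjust $k$'' to reach $ck\ell=c_\alpha s$. You need $\ell=\lfloor s/k\rfloor$ copies instead, restricting the pigeonhole to $k\le s/\ell_0$; this is fine because the filling constant is uniform in $\ell\ge\ell_0$.

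Third, the Freiman GAP then gives injectivity only up to scale about $k$, which may be as small as $\sqrt s$, so Step 3 carries real weight. A generic properization $Q\subseteq Q'$ as in Theorem~\ref{thm:import-properization} keeps $X\subseteq Q'$, but it may push $Q'_{c_\alpha s}$ outside the filled translate. What works is the following.
\begin{itemize}
\item Write $Q=\rho(C\cap\Z^r)$ with $z+\rho(Fs\cdot C\cap\Z^r)\subseteq sX$, and test injectivity of $\rho$ on $\tfrac{Fs}{2}\cdot C$.
\item A failure yields a primitive $\vec v\in Fs\cdot C$ with $\rho(\vec v)=0$. Quotient by $\Z\vec v$ and rebox $\pi(C)$ by Lemma~\ref{lem:carrier-full-lattice-box}.
\item Every lattice point of the new box dilated by $\lambda\le Fs/(2\gamma_r)$ lifts into $\lambda\gamma_r\cdot C+[-\tfrac12,\tfrac12]\vec v\subseteq Fs\cdot C$.
\item Hence the new GAP still contains $X$, its $\lambda$-dilate still lies in $sX-z$, and the rank drops by one.
\end{itemize}
After at most $r$ rounds, each costing a constant factor, properness at scale $c_\alpha s$ holds. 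This is exactly the containment-versus-filling tension you flag as the main obstacle. It is resolved by this argument (or by the citation), not by the sketch as written.
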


This is the polynomial-scale form of the cited result (take the parameter
$\beta>\max\{1,1/\alpha\}$ there and use the rank bound of its preceding
lemma); no density assumption on $X$ is needed.

\begin{importedtheorem}[Constructive Freiman theorem; Randolph--W\k{e}grzycki
  {\cite[Theorem 3.2, full version]{RW}}]
\label{thm:import-freiman}
Fix $K\ge1$ and $\gamma>0$. Given a set $X\subset\Z$ of $m\ge2$
elements satisfying $|X+X|\le K|X|$, a randomized algorithm uses
$\widetilde O_{K,\gamma}(m)$ arithmetic operations and, with probability
at least $1-m^{-\gamma}$, returns an explicit GAP $Q\supseteq X$ of rank
$O_K(1)$ and coefficient volume $O_K(m)$.
\end{importedtheorem}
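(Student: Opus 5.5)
The plan is to follow the classical Green--Ruzsa proof of Freiman's theorem and make each step algorithmic within $\widetilde O_{K,\gamma}(m)$ arithmetic operations. Since the statement is imported from~\cite{RW}, it only needs to be reconstructed, not reproved in full. Throughout, $X\subset\Z$ with $|X|=m$ and $|X+X|\le K|X|$. By Pl\"unnecke--Ruzsa, $|sX-tX|\le K^{s+t}m$ for all fixed $s,t$.

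\textbf{Step 1: modelling.} First compress $X$ into a cyclic group. Pick a prime $p=\Theta_K(m)$ with $p>|8X-8X|$, and a random multiplier $\lambda\in[1,q-1]$ modulo a large prime $q>2\max|X|$. Map $x\mapsto(\lambda x\bmod q)\bmod p$ restricted to a suitable residue window. Ruzsa's modelling lemma, in its randomized form, says that with constant probability some subset $X'\subseteq X$ with $|X'|\ge m/8$ is Freiman $8$-isomorphic to its image $Y\subseteq\Z_p$, and $Y$ has density $\Omega_K(1)$. Correctness of a trial is checked at the end, so we repeat $O(\gamma\log m)$ times.

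\textbf{Step 2: Bogolyubov and geometry of numbers.} In $\Z_p$, compute the full Fourier transform of $1_Y$ by FFT in $O(p\log p)=\widetilde O_K(m)$ time. Let the spectrum $\Gamma$ be the $O_K(1)$ frequencies with $|\widehat{1_Y}(\xi)|\ge\frac12(|Y|/p)^{3/2}$. Bogolyubov's argument shows that the Bohr set $B(\Gamma,1/4)$ is contained in $2Y-2Y$. A Bohr set of rank $|\Gamma|=O_K(1)$ contains a proper GAP of rank at most $|\Gamma|$ and size $\Omega_K(p)$. To find it explicitly, take the lattice $\{x\in\Z^{|\Gamma|+1}:\ldots\}$ defining the Bohr set, compute a reduced basis (LLL, or exact successive minima in fixed dimension, at cost $\mathrm{poly}(\log p)$), and read off the progression from Minkowski's second theorem. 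Then pull this GAP back through the Freiman $8$-isomorphism, which preserves GAP structure on $2X'-2X'$. This gives a proper GAP $P\subseteq 2X-2X$ of rank $O_K(1)$ with $|P|=\Omega_K(m)$.

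\textbf{Step 3: covering and output.} Apply Ruzsa covering: greedily choose a maximal set $T\subseteq X$ whose translates $t+P$ are pairwise disjoint. Then $|T|\le|X+P|/|P|\le|3X-2X|/|P|=O_K(1)$ and $X\subseteq T+P-P$. The greedy step is implemented with hashing in $\widetilde O_K(m)$ operations by marking the points of $t+P-P$. The output GAP is $Q=P-P+\{\sum_{t\in T}\epsilon_t t:\epsilon_t\in\{0,1\}\}$, with the elements of $T$ as extra generators of length $1$. It has rank $O_K(1)$, coefficient volume at most $2^{|T|}\,|P-P|=O_K(m)$, and contains $X$. Containment is checked directly and deterministically, so a failed trial is detected and repeated, which gives failure probability $m^{-\gamma}$.

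\textbf{Main obstacle.} I expect Step 1 to be the delicate part: obtaining a modelling map computable in near-linear time on integers of unbounded size, with a certificate that it is a Freiman $8$-isomorphism on a large subset. I would first try the random-dilation argument, bounding the probability that a nonzero element of $8X-8X$ (only $K^{16}m$ of them) lands in the kernel window. The pullback of the Bohr-set GAP through the isomorphism in Step 2 also needs care.
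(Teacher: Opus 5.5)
Your plan follows the same pipeline the paper relies on. The statement is imported from \cite{RW}, and the paper's own deterministic variant, Lemma~\ref{lem:det-freiman}, runs exactly your steps:
\begin{itemize}
\item a multiplier modulo a prime $q$, then reduction modulo $p$ on the most popular of eight residue windows;
\item Bogolyubov and the lattice construction in $\Z/p\Z$;
\item a lift into $2X'-2X'$ through the tabulated inverse of the model;
\item a maximal family of disjoint translates, encoded by one $\{0,1\}$ coefficient each.
\end{itemize}
The only difference is that the paper sieves out the forbidden multipliers deterministically, in time $\widetilde O(L)$ for $X\subseteq[0,L]$, whereas you sample $\lambda$ at random.

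Three details need repair, none of which changes the approach.

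\emph{First}, $q$ must exceed the diameter of $8X-8X$, i.e.\ $q>16\max|X|$. With only $q>2\max|X|$, a nonzero element of $8X-8X$ can be a multiple of $q$, and then no $\lambda$ gives a Freiman isomorphism. For example, if $X=[0,M]_{\Z}$, every prime in $(2M,8M]$ lies in $8X-8X$.

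\emph{Second}, $p$ should exceed $|8X-8X|$ by a constant factor; the paper takes $p>8(|8X-8X|+1)$. Each pair $\pm e$ of nonzero elements of $8X-8X$ forbids about $2q/p$ multipliers. So with $p$ barely above $|8X-8X|$, the union bound no longer guarantees a constant fraction of good $\lambda$.

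\emph{Third}, maximality of $T$ gives $X\subseteq T+P-P$ for every $P$ whatsoever, so a containment check never detects a bad trial. The test that drives the amplification to $1-m^{-\gamma}$ must be on size instead: abort once $|T|>K^5m/|P|$, or once the rank or coefficient volume of $Q$ exceeds the target bound.
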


\begin{importedtheorem}[Properization; Tao--Vu
  {\cite[Corollary 1.11]{TV}}]
\label{thm:import-properization}
Fix $r$ and $t\ge1$. Every symmetric GAP $P$ of rank at most $r$
in a torsion-free abelian group is contained in a symmetric $t$-proper
GAP $Q$ of rank at most $r$. If $P$ has coefficient box $B$, then
$Q$ has coefficient volume $O_{r,t}(|B|)$.
\end{importedtheorem}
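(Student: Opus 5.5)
We would prove this by induction on the rank $r$, repeatedly trading a failure of properness for a drop in rank. Each step loses only a factor $O_{r,t}(1)$ in coefficient volume, and there are at most $r$ steps. First reduce to a concrete group. The generators $q_1,\ldots,q_r$ of $P=\rho(B\cap\Z^r)$ generate a finitely generated torsion-free subgroup, which is free. So $\rho$ factors as $\Z^r\to\Z^r/\Gamma\cong\Z^{r'}$, where $\Gamma=\ker\rho\cap\Z^r$. The quotient is torsion-free because the target group is. Thus $t$-properness of a centred GAP with box $B$ means exactly $\Gamma\cap(2t\cdot B)=\{0\}$.

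\textbf{The inductive step.} If $\Gamma\cap 2t\cdot B=\{0\}$, take $Q=P$. Otherwise pick a nonzero $\vec z\in\Gamma\cap 2t\cdot B$, and replace it by the primitive vector $\vec z'$ on its ray; then $\vec z'\in 2t\cdot B$ still and $\rho(\vec z')=0$ by torsion-freeness. Let $\pi:\Z^r\to\Z^r/\Z\vec z'\cong\Z^{r-1}$. Then $\rho=\bar\rho\circ\pi$ for a homomorphism $\bar\rho$, so $P\subseteq\bar\rho\bigl(\pi(B\cap\Z^r)\bigr)$. The set $\pi(B\cap\Z^r)$ lies in $K\cap\Z^{r-1}$, where $K$ is the image of $B$ under the real projection along $\vec z'$, written in a basis of the quotient lattice. $K$ is symmetric and convex.

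\textbf{Counting and regaining a box.} We must show $\#(K\cap\Z^{r-1})=O_{r,t}(|B|)$. For each lattice point $\vec y\in K$, the fibre $\pi^{-1}(\vec y)$ is a coset of $\Z\vec z'$ meeting $B$ in real terms. Since $\vec z'\in 2t\cdot B$, this fibre contains a lattice point of a bounded dilate $c_{r,t}\cdot B$: move along $\vec z'$ by integer steps from a real preimage, and then absorb the bounded fractional offset. Distinct $\vec y$ give distinct such points, so $\#(K\cap\Z^{r-1})\le|c_{r,t}B\cap\Z^r|=O_{r,t}(|B|)$; the trivial case of boxes with zero side lengths needs a separate routine check.

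Next, cover $K\cap\Z^{r-1}$ by a symmetric GAP of rank at most $r-1$ and coefficient volume $O_r(\#(K\cap\Z^{r-1})+1)$. For this we would use a reduced basis of $\Z^{r-1}$ adapted to $K$, via Minkowski's second theorem and successive minima: in that basis, $K$ sits inside a box of the form $\prod[-O_r(1/\lambda_j),O_r(1/\lambda_j)]$. Minkowski's theorem bounds the lattice-point count of this box by $O_r$ of that of $K$ when every $\lambda_j\le1$. When some $\lambda_j>1$, we truncate that side to length $0$, which also lowers the rank. Pushing this box forward under $\bar\rho$ yields a symmetric GAP of rank at most $r-1$ containing $P$ with volume $O_{r,t}(|B|)$. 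Then apply induction.

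The main obstacle is this convex-body-to-box step, including the degenerate directions. It must be done so that the constants depend only on $r$ and $t$ and do not compound beyond $r$ iterations. We would settle it using the standard lattice-point estimates for symmetric convex bodies in terms of successive minima (Betke--Henk--Wills type bounds).
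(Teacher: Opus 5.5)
Your proof is correct and follows essentially the same route as the paper's proof of Lemma~\ref{lem:carrier-properization}, which implements this statement: quotient by a primitive kernel vector in $2t\cdot B$, bound the quotient lattice points of the projected box via representatives in $B+[-\tfrac12,\tfrac12]\vec z\subseteq(1+t)\cdot B$, rebox by a discrete John lemma, and iterate at most $r$ times. The only differences are that the paper reboxes with its LLL-based Lemma~\ref{lem:carrier-full-lattice-box} instead of Minkowski's second theorem, and it never meets your degenerate $\lambda_j>1$ case: after zero-width coordinates are discarded, the images of $\pm\vec e_i$ lie in $K$ and generate the quotient lattice.
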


Lemma~\ref{lem:carrier-properization} below implements this statement with
polynomial bit complexity, following the quotient argument
of~\cite[Sections 3--4]{TV}.

\begin{importedlemma}[Dense-box filling; Conlon--Fox--Pham
  {\cite[Lemma 2.15]{CFP}}]
\label{lem:import-filling}
For every fixed $r\ge1$ and $0<c\le1$, there are constants
$\gamma>0$, $\ell_0$, and $w_0$ with the following property.
Let $B=\prod_{j=1}^r[a_j,b_j]$ have integer endpoints and
$b_j-a_j+1\ge w_0$ for every $j$. Suppose that $\ell\ge\ell_0$ and
$A_1,\ldots,A_\ell\subseteq B\cap\Z^r$ satisfy, for every $i$,
\[
 |A_i|\ge c|B\cap\Z^r|,\qquad 0\in A_i,
 \qquad \operatorname{span}_{\Z}A_i=\Z^r.
\]
Then, for some $\vec{z}\in\Z^r$,
\[
 \vec{z}+\bigl(\gamma\ell\cdot B\cap\Z^r\bigr)
 \subseteq A_1+\cdots+A_\ell.
\]
\end{importedlemma}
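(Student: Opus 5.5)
The plan is to prove the lemma in two stages. First, a bounded number $k=k(r,c)$ of the sets already has a sumset containing a translate of a full lattice box $\gamma'\cdot B\cap\Z^r$ with $\gamma'=\gamma'(r,c)>0$. Second, we split $A_1,\ldots,A_\ell$ into $\lfloor \ell/k\rfloor$ consecutive blocks of $k$ sets each. Each block sum contains a translate $\vec z_b+(\gamma'\cdot B\cap\Z^r)$. Since $0\in A_i$ for every $i$, any leftover sets (fewer than $k$) can be absorbed by choosing $0$ from them. Summing the blocks, the sum of $m=\lfloor\ell/k\rfloor$ translated integer boxes of the same shape is a translated integer box with $m$ times the side lengths, up to rounding of endpoints. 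Because $b_j-a_j+1\ge w_0$ with $w_0$ large, this rounding costs only a constant factor. This yields $\vec z+(\gamma\ell\cdot B\cap\Z^r)$ with $\gamma\approx\gamma'/(2k)$, provided $\ell\ge\ell_0\ge 2k$. The whole weight therefore lies in the bounded-$k$ statement.

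\textbf{Bounded-$k$ stage.} First we pass to a finite group without wraparound. Embed $B\cap\Z^r$ into $G=\prod_j \Z_{M_j}$ with $M_j=4k(b_j-a_j+1)$. On sums of at most $k$ elements of $B-\vec a$ this is a Freiman isomorphism, so a sumset computed in $G$ faithfully reflects the sumset in $\Z^r$. In $G$ each $A_i$ has density at least $c/(4k)^r$. Next we apply Kneser's theorem to $A_1+\cdots+A_k$. The sumset has size at least $\min\{|G|,\ \sum_i|A_i|-(k-1)|H|\}$, where $H$ is its stabiliser, and it is a union of $H$-cosets. Iterating with $k\gg (4k)^r/c$ is circular, so instead we argue by a density increment: choose $k$ such that the densities cannot keep growing by $c/2$ per summand. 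Then Kneser forces a stabiliser $H$ with $|G/H|=O_{r,c}(1)$. A bounded-index subgroup $H$ of $G$ pulls back to a sublattice $\Lambda\subseteq\Z^r$ of index $O_{r,c}(1)$ that contains $\prod_j M_j\Z$. The sumset is a union of $H$-cosets and has density $\Omega_{r,c}(1)$ in $G$. Translated back to $\Z^r$ via the Freiman isomorphism, this shows that $A_1+\cdots+A_k$ contains a translate of $\Lambda\cap(\gamma''\cdot B)$ for some $\gamma''=\gamma''(r,c)>0$. Here $w_0$ is needed so that a central subbox of $kB$ contains whole fundamental domains of $\Lambda$; otherwise "a union of cosets" would not imply "a full box of the sublattice".

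\textbf{Removing the sublattice.} At this point we use $\operatorname{span}_{\Z}A_i=\Z^r$ together with $0\in A_i$. Each $A_i$ meets every coset of $\Lambda$ that is generated by its elements. Summing $[\Z^r:\Lambda]=O_{r,c}(1)$ further sets $A_i$ therefore reaches every coset of $\Lambda$, using elements of norm at most that of $B$. This is the standard fact that the subsemigroup generated by a set which contains $0$ and spans a finite group equals the whole group after $|G|-1$ steps. Adding the box $\Lambda\cap(\gamma''\cdot B)$ to one representative from each coset then covers $\vec z+(\gamma'\cdot B\cap\Z^r)$ for a slightly smaller $\gamma'$; this is again valid because the sides are at least $w_0$. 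We redefine $k$ to include these extra summands.

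The main obstacle I expect is the Kneser step. It is the place where the constant $k$ must be chosen so that the density increment terminates, and where we must check that a dense union of cosets of a bounded-index subgroup of the torus really yields a full sublattice box inside the unwrapped range. If Kneser becomes awkward, my fallback is a Bogolyubov--Ruzsa argument in $G$: it gives a large Bohr set, hence a GAP of bounded rank, in $2A-2A$. But that version produces differences rather than sums, and then $0\in A_i$ and the translation freedom must be used to convert it to sums.
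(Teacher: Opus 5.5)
The paper does not prove this lemma. It imports Conlon--Fox--Pham's Lemma~2.15 and only checks that the source's reducedness hypothesis follows from $0\in A_i$ and $\operatorname{span}_{\Z}A_i=\Z^r$: a coset of a proper subgroup that contains $0$ is that subgroup, and it would then contain all generators. Your self-contained route therefore has to carry the whole argument. Your block decomposition, which reduces the lemma to a bounded number $k$ of summands, is fine. The bounded-$k$ step, which is the real content of the lemma, has a genuine gap.

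\textbf{The Kneser step cannot work in your embedding.} In $G=\prod_j\Z_{M_j}$ with $M_j=4k(b_j-a_j+1)$, the image of $A_1+\cdots+A_k$ (after translating by $-k\vec a$) lies in $\prod_j\{0,\ldots,k(b_j-a_j)\}$. That is less than a quarter of each cyclic factor. Suppose $H\ne\{0\}$ stabilised the sumset $S$.
\begin{itemize}
\item Some coordinate projection of $H$ is then a nontrivial subgroup of $\Z_{M_j}$.
\item Its cosets are equally spaced points, so any arc containing one has length at least $M_j/2$.
\item But $x+H\subseteq S$ for $x\in S$ forces such a coset into an interval of length below $M_j/4$, a contradiction.
\end{itemize}
So the stabiliser is always trivial, and Kneser degenerates to $|A_1+\cdots+A_k|\ge\sum_i|A_i|-(k-1)$. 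No choice of $k$ produces a stabiliser of bounded index.

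The density increment cannot force one either. Each summand gains about $c|B\cap\Z^r|$. The available room $|kB\cap\Z^r|\approx k^r|B\cap\Z^r|$ always exceeds $kc|B\cap\Z^r|$, so there is no contradiction to exploit. Even when every $A_i$ is the same subbox of density $c$, where the lemma's conclusion is trivial, your intermediate claim of a bounded-index stabiliser is false.

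The underlying issue is that size lower bounds (Kneser, Cauchy--Davenport, Pl\"unnecke) never by themselves place a full box inside a sumset. A genuine filling mechanism is needed, and that is exactly what the cited lemma supplies. Your Bogolyubov fallback is not developed far enough to replace it.

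\textbf{A smaller, fixable point in removing the sublattice.} The coset representatives come from about $[\Z^r:\Lambda]$ further summands. They therefore lie in a bounded multiple of $B$, not in a small fraction of $\gamma''\cdot B$. Translating one $\Lambda$-box of size $\gamma''\cdot B$ by these representatives does not cover a box. You must first enlarge the $\Lambda$-box, by summing many blocks, beyond the spread of the representatives. After that change, the coset-covering argument goes through.
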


The source assumes that each $A_i$ is \emph{reduced}: its image modulo every
subgroup $t_1\Z\times\cdots\times t_r\Z$ lies in no coset of a proper
subgroup. This follows from $0\in A_i$ and generation: such a coset would
contain $0$, hence be a proper subgroup containing all generators.

\begin{importedlemma}[Greedy growth; Conlon--Fox--Pham
  {\cite[Lemma 2.41]{CFP}}]
\label{lem:import-greedy}
Let $S,U\subset\Z^r$ be finite nonempty sets and let $k\ge1$ be an
integer. If $|kU|\ge2|S|$, then some $\vec{u}\in U$ satisfies
\[
 |(S+\vec{u})\setminus S|\ge\frac{|S|}{2k}.
\]
\end{importedlemma}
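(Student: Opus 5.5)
We argue by contraposition: assume that every $\vec u\in U$ satisfies $|(S+\vec u)\setminus S|<|S|/(2k)$, and derive $|kU|<2|S|$. The idea is that small boundary growth under each single translate propagates subadditively to sums of $k$ translates. A translate that moves $S$ only a little must overlap $S$ substantially, and $S$ can overlap substantially with only about $2|S|$ of its own translates.

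\emph{Step 1 (subadditivity).} Put $f(\vec x)=|(S+\vec x)\setminus S|$. We use
\[
 (S+\vec x+\vec y)\setminus S\subseteq\bigl((S+\vec x+\vec y)\setminus(S+\vec y)\bigr)\cup\bigl((S+\vec y)\setminus S\bigr).
\]
The first set on the right is a translate of $(S+\vec x)\setminus S$, so translation invariance of cardinality gives $f(\vec x+\vec y)\le f(\vec x)+f(\vec y)$. By induction, every $\vec w=\vec u_1+\cdots+\vec u_k\in kU$ satisfies
\[
 f(\vec w)\le\sum_{i=1}^k f(\vec u_i)<k\cdot\frac{|S|}{2k}=\frac{|S|}{2}.
\]
Since $|S+\vec w|=|S|$, this is equivalent to $|(S+\vec w)\cap S|>|S|/2$ for every $\vec w\in kU$.

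\emph{Step 2 (double counting).} We count the pairs $(\vec w,\vec s)$ with $\vec w\in kU$, $\vec s\in S$ and $\vec s+\vec w\in S$. For fixed $\vec w$ their number is $|(S+\vec w)\cap S|>|S|/2$, so summing over $\vec w$ gives more than $|kU|\cdot|S|/2$ pairs; here we use that $kU$ is nonempty, so the inequality stays strict. For fixed $\vec s$, the vector $\vec w$ is determined by the element $\vec s+\vec w\in S$, so there are at most $|S|$ choices. Hence the number of pairs is at most $|S|^2$. Combining,
\[
 |kU|\cdot\frac{|S|}{2}<|S|^2,
\]
so $|kU|<2|S|$. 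This contradicts the hypothesis $|kU|\ge 2|S|$, and therefore some $\vec u\in U$ has $|(S+\vec u)\setminus S|\ge |S|/(2k)$.

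The argument is elementary, and the only point needing care is Step 1. There one must measure growth by the one-sided difference $|(S+\vec x)\setminus S|$ and use translation invariance to chain the $k$ summands. No commutativity or size assumption on $U$ is needed beyond the ambient group being $\Z^r$.
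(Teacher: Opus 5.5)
Your proof is correct and follows the route the paper indicates for this imported lemma: subadditivity of $\vec x\mapsto|(S+\vec x)\setminus S|$ chained over $k$ summands (repetitions allowed), then a count of the pairs $(\vec w,\vec s)$ with $\vec s+\vec w\in S$. That count is the paper's ``counting representations in $S-S$,'' whose total is at most $|S|^2$. As the paper remarks, both steps transfer verbatim from $\Z$ to $\Z^r$.
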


The source states it for integers; the proof (subadditivity of
$\vec a\mapsto|(S+\vec a)\setminus S|$ and counting representations in
$S-S$) applies verbatim in $\Z^r$, and repetitions in $kU$ are allowed.

\paragraph{Geometric and lattice tools.}
We use the following oracle algorithms only in fixed dimension. A rational
polytope has encoding bound $b$ if its vertex coordinates have numerators
and denominators of at most $b$ bits. A strong separation oracle either
certifies membership of a rational point or returns a strictly separating
rational hyperplane.

\begin{importedtheorem}[Convex optimization and integer feasibility;
  Gr\"otschel--Lov\'asz--Schrijver
  {\cite[Section 1 and Theorems 2.13--2.14]{GLS}}]
\label{thm:import-geometry}
Fix $r$. Given a bounded rational polytope $K\subset\R^r$ by a
strong separation oracle and an encoding bound, the following tasks
require polynomially many bit operations and oracle calls in the
encoding bound and the rational input lengths.
\begin{enumerate}[label=(\roman*)]
\item Decide emptiness, determine the affine hull, and optimize a
given rational linear objective. Strong separation and optimization
are polynomial-time equivalent under this representation.
\item If $K$ is full-dimensional and nonempty, find a rational point
$\vec{x}$ and an origin-centred ellipsoid $E$ such that
$\vec{x}+E\subseteq K\subseteq \vec{x}+\gamma _r\cdot E$, where $\gamma _r$ depends only on $r$.
\item If $K$ is full-dimensional and nonempty, either find a point of
$K\cap\Z^r$ or find a nonzero $\vec{u}\in\Z^r$ with
$\operatorname{width}_{\vec{u}}(K)\le f_r$, where $f_r$ depends only on $r$.
\end{enumerate}
All returned rational data have polynomial encoding length. The same
statements hold relative to a full-rank rational lattice supplied by a
basis, after changing to lattice coordinates.
\end{importedtheorem}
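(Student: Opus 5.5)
The plan is to assemble the statement from the standard ellipsoid-method toolkit, working throughout in fixed dimension $r$ so that every exponential-in-$r$ factor becomes a constant. The lattice version reduces to the standard one at the outset. Given a basis $D$ of $\Lambda$, replace $K$ by $D^{-1}K$. The separation oracle transforms by $\vec a\mapsto D^{\top}\vec a$, and encoding lengths grow only polynomially. So it suffices to treat $\Lambda=\Z^r$.

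For (i), I would run the central-cut ellipsoid method with the strong separation oracle, starting from a ball of radius $2^{\mathrm{poly}(b)}$, which contains $K$ by the encoding bound. If $K$ is full-dimensional, its volume is at least $2^{-\mathrm{poly}(b)}$, since it contains a simplex on $r+1$ affinely independent vertices of bounded encoding. Hence polynomially many iterations either find a point of $K$ or certify that $K$ is not full-dimensional. In the latter case the final thin ellipsoid exposes a direction $\vec c$ of small width. Simultaneous Diophantine approximation (LLL) rounds $\vec c$ to a rational hyperplane of bounded encoding, and the vertex encoding bound forces this hyperplane to contain $K$ exactly. We then restrict to it and recurse, at most $r$ times, obtaining the affine hull and emptiness. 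Optimization follows by the same method combined with a sliding objective cut, with the optimal value and vertex recovered exactly by rounding via continued fractions. The converse direction, from optimization to separation, is the polar argument of Gr\"otschel--Lov\'asz--Schrijver, which I would cite rather than redo.

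For (ii), once $K$ is known to be full-dimensional, I would use the shallow-cut ellipsoid method. Each iteration optimizes over $K$ in the axis directions of the current ellipsoid, which part (i) allows. This yields a weak L\"owner--John pair: an ellipsoid $\vec x+E$ with $\vec x+E\subseteq K\subseteq \vec x+r(r+1)^{1/2}\cdot E$, up to a slight shrinking needed to keep the data rational. Taking $\gamma_r=O(r^{3/2})$ covers this shrinking, and rounding $\vec x$ and the matrix of $E$ keeps the encoding polynomial.

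For (iii), I would follow Lenstra's flatness argument. Apply the linear map $T$ sending $E$ to the unit ball, and LLL-reduce the basis of $T\Z^r$. Let $\vec b_1^*,\ldots,\vec b_r^*$ be the Gram--Schmidt vectors of the reduced basis.
\begin{itemize}
\item If $\max_j\|\vec b_j^*\|$ is small relative to the inradius, Babai rounding of $T\vec x$ gives a lattice point inside $T(\vec x+E)\subseteq TK$. Mapping back yields a point of $K\cap\Z^r$.
\item Otherwise the last basis vector yields a dual lattice vector $\vec u\in\Z^r$ with $\operatorname{width}_{\vec u}(\vec x+E)\le 2^{O(r^2)}$. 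The sandwich from (ii) then gives $\operatorname{width}_{\vec u}(K)\le f_r:=\gamma_r 2^{O(r^2)}$.
\end{itemize}

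The main obstacle is exactness rather than any geometric step: the ellipsoid method works with approximations, while the statement requires exact affine hulls, exact optima, and lattice points in the genuine polytope. I would handle this uniformly through the encoding bound $b$, which separates distinct candidate rational values by $2^{-\mathrm{poly}(b)}$, so that each approximate output can be rounded to its exact value by continued fractions or simultaneous Diophantine approximation.
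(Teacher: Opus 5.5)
The paper gives no proof of this statement. It imports it from Gr\"otschel--Lov\'asz--Schrijver, with Lenstra's flatness argument behind (iii), and your outline reconstructs that standard route: lattice coordinates via $D^{-1}K$ with separators $\vec a\mapsto D^{\top}\vec a$, the ellipsoid method plus simultaneous Diophantine approximation for (i), a shallow-cut weak L\"owner--John pair for (ii), and LLL with nearest-plane rounding for (iii). The lattice reduction and the flatness step of (iii) are fine as written. Run LLL and the nearest-plane step in the rational quadratic form of $E$, rather than through the generally irrational map $T$.

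One step in (i) fails as written, and later parts depend on it. The central-cut method with a strong separation oracle stops as soon as a centre lies in $K$. Your dichotomy ``find a point of $K$ or certify that $K$ is not full-dimensional'' therefore decides emptiness but not dimension. If $K$ is a segment in $\R^2$ through the first centre, you stop at once knowing nothing about its affine hull. So your recursion does not compute the affine hull. Part (ii), which starts ``once $K$ is known to be full-dimensional'', rests on a decision you never make, and so does the volume lower bound used by the sliding objective.

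Relatedly, the test you describe for (ii) does not give an inscribed ellipsoid. Optimizing over $K$ along the axes of the current ellipsoid only yields the extents of $K$ along those axes. In normalized coordinates, $\conv\{(\pm0.5,0.8),(0.8,\pm0.5)\}$ lies in the unit disc and has extent at least $0.5>\frac13$ in all four axis directions, yet it misses the origin.

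Both problems are repaired by the standard shallow-cut test. Let the current ellipsoid $\vec x+E$ contain $K$, and query the separation oracle at the $2r$ points $\vec x\pm\frac1{r+1}\vec e$, with $\vec e$ ranging over the semi-axes of $E$.
\begin{itemize}
\item If all $2r$ points lie in $K$, then $K$ contains their cross-polytope and hence $\vec x+\frac{1}{(r+1)\sqrt r}\cdot E$. This certifies full-dimensionality and gives (ii) with $\gamma_r=(r+1)\sqrt r$.
\item Otherwise the returned hyperplane cuts only $\frac1{r+1}<\frac1r$ beyond the centre. Such a cut still shrinks the volume by a factor depending only on $r$. After polynomially many steps the ellipsoid is thin, and your Diophantine rounding yields a hyperplane containing $K$.
\end{itemize}
With full-dimensionality decided this way, your recursion computes the affine hull, and the sliding objective can then be run inside it. Alternatively, for (i) alone you can obtain exact optimization by binary search on the value, applying your emptiness test to $K\cap\{\vec c\cdot\vec y\ge\gamma\}$. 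You then get the affine hull by maximizing and minimizing over directions orthogonal to the affine span of the points found so far.
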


The thin-direction outcome of (iii) does not certify the absence of an
integer point. Radius bounds and oracle costs in our applications are
established in Lemma~\ref{lem:bits}.

\begin{importedtheorem}[Lattice basis reduction;
  Gr\"otschel--Lov\'asz--Schrijver
  {\cite[Theorems 2.9--2.10]{GLS}}]
\label{thm:import-lattice-reduction}
Given a rational basis of a full-rank lattice $\Lambda\subset\R^r$,
one can compute in polynomial bit time a basis $\vec{b}_1,\ldots,\vec{b}_r$ of
the same lattice satisfying
\[
 \prod_{j=1}^r\|\vec{b}_j\|_2
 \le 2^{r(r-1)/4}\det\Lambda.
\]
\end{importedtheorem}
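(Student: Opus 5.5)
This is the classical Lenstra--Lenstra--Lov\'asz (LLL) basis reduction, and I would prove it by running LLL with parameter $3/4$ and checking two things: the output inequality and polynomial running time.

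\textbf{Setup and reduction conditions.} First I would clear denominators: multiply the given rational basis by the common denominator $M$. This scales $\det\Lambda$ by $M^r$ and each $\|\vec b_j\|_2$ by $M$, so it leaves the inequality invariant, and I may assume $\Lambda\subseteq\Z^r$. For the current basis $\vec b_1,\ldots,\vec b_r$ let $\vec b_j^*$ be its Gram--Schmidt vectors and $\mu_{jk}=\vec b_j\cdot\vec b_k^*/\|\vec b_k^*\|^2$. The algorithm alternates two steps until both conditions hold:
\begin{itemize}
\item \emph{size reduction}: for $k<j$, replace $\vec b_j$ by $\vec b_j-\lfloor\mu_{jk}\rceil\vec b_k$, so that $|\mu_{jk}|\le1/2$;
\item \emph{Lov\'asz swap}: if $\|\vec b_{j+1}^*+\mu_{j+1,j}\vec b_j^*\|^2<\tfrac34\|\vec b_j^*\|^2$, swap $\vec b_j$ and $\vec b_{j+1}$.
\end{itemize}
Both operations are unimodular, so the lattice never changes.

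\textbf{Output bound.} At termination we have $|\mu_{jk}|\le1/2$ and
\[
\|\vec b_{j+1}^*\|^2\ge\bigl(\tfrac34-\tfrac14\bigr)\|\vec b_j^*\|^2=\tfrac12\|\vec b_j^*\|^2,
\]
hence $\|\vec b_k^*\|^2\le 2^{j-k}\|\vec b_j^*\|^2$ for $k\le j$. Writing $\vec b_j=\vec b_j^*+\sum_{k<j}\mu_{jk}\vec b_k^*$ with orthogonal summands gives
\[
\|\vec b_j\|^2\le\|\vec b_j^*\|^2\Bigl(1+\tfrac14\sum_{k<j}2^{j-k}\Bigr)\le 2^{j-1}\|\vec b_j^*\|^2.
\]
Taking square roots and multiplying over $j$ yields
\[
\prod_j\|\vec b_j\|_2\le 2^{\sum_j (j-1)/2}\prod_j\|\vec b_j^*\|_2=2^{r(r-1)/4}\det\Lambda,
\]
because the Gram--Schmidt vectors have product of norms equal to $|\det D|$.

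\textbf{Termination and polynomial bit cost (the main obstacle).} This step needs more care than the output bound. Let $d_i=\prod_{k\le i}\|\vec b_k^*\|^2$ be the Gram determinant of the first $i$ vectors.
\begin{itemize}
\item Each $d_i$ is a positive integer for an integral lattice.
\item Size reduction leaves every $d_i$ unchanged.
\item A swap at position $j$ multiplies $d_j$ by less than $3/4$ and leaves the other $d_i$ unchanged.
\end{itemize}
So the potential $D=\prod_i d_i\ge1$ starts at most $(\max_j\|\vec b_j\|^2)^{r(r+1)/2}$, and the number of swaps is $O(r^2\log\max_j\|\vec b_j\|)$, which is polynomial in the input length.

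It remains to bound the bit sizes of the intermediate numbers. The quantities $d_{j-1}\vec b_j^*$ and $d_k\mu_{jk}$ are integers, and $\|\vec b_j^*\|$ never exceeds the initial maximum norm. Standard bookkeeping then bounds $\|\vec b_j\|$ during size reduction by $\sqrt r\,$times the initial maximum norm times a single-exponential factor in $r$. Hence all entries have polynomially many bits, and the total bit complexity is polynomial, as claimed. In fixed dimension $r$ all factors depending only on $r$ are constants.
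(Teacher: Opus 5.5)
Your route is the standard LLL argument with parameter $3/4$, which is exactly what the paper imports from GLS without giving a proof of its own. Your derivation of $\prod_j\|\vec b_j\|_2\le 2^{r(r-1)/4}\det\Lambda$ is correct, and so is the potential argument bounding the number of swaps.

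The step that fails as written is the bit-size claim in your last paragraph. For the algorithm you describe (a full size reduction of the whole basis, then a swap), an intermediate $\vec b_j$ during size reduction is \emph{not} bounded by $\sqrt r$ times the initial maximum norm times a factor depending on $r$ alone. Take $\vec b_1=(M,1,0)$, $\vec b_2=(1,0,0)$, $\vec b_3=(M,-M,1)$ with $M\ge2$, a basis of $\Z^3$. Then $|\mu_{21}|=M/(M^2+1)<1/2$, $\|\vec b_2^*\|^2=1/(M^2+1)$, and $\mu_{32}=M^2+M$. So reducing $\vec b_3$ against $\vec b_2$ turns $\vec b_3$ into $(-M^2,-M,1)$, of norm about $M^2$, while every input vector has norm $O(M)$. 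Your variant, unlike the textbook LLL loop, does not keep the earlier vectors Lov\'asz-reduced. Hence $\|\vec b_k^*\|$ can be much smaller than an earlier Gram--Schmidt norm, and that is what drives this growth.

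The conclusion is still true, but it needs the actual bookkeeping. Let $B=\max_j\|\vec b_j\|^2$ for the integral input.
\begin{itemize}
\item Since $d_k\ge1$ is an integer, $\|\vec b_k^*\|^2=d_k/d_{k-1}\ge 1/d_{k-1}\ge B^{-(k-1)}$, hence $|\mu_{jk}|\le\|\vec b_j\|/\|\vec b_k^*\|\le\|\vec b_j\|\,B^{(k-1)/2}$.
\item At the start of every size-reduction round, $\|\vec b_j\|^2\le rB$, because the basis is either the input or a size-reduced basis after one swap.
\item Reduce $j=2,\dots,r$ in increasing order and, for each $j$, $k=j-1,\dots,1$ in decreasing order; you need this order anyway to obtain all $|\mu_{jk}|\le 1/2$ at once. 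Then $\vec b_k$ is already reduced, so $|\mu_{kl}|\le1/2$, and each subtraction at most doubles $\max_{l<k}|\mu_{jl}|$.
\item Therefore, throughout, $\|\vec b_j\|^2\le\|\vec b_j^*\|^2+\sum_{l<j}\mu_{jl}^2\|\vec b_l^*\|^2\le r^2 4^r B^{r}$.
\end{itemize}
Together with the integrality of $d_{j-1}\vec b_j^*$ and $d_k\mu_{jk}$, every stored number then has $O(r(1+\log B))$ bits. With this replacement the proof of polynomial bit time is complete. Without it, the paragraph you yourself flagged as the main obstacle rests on a false bound.
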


\begin{importedlemma}[Macbeath overlap;
  {\cite[Lemma 2.2]{AFM}}]
\label{lem:import-macbeath}
Let $K\subset\R^r$ be a convex body and let $\vec{x},\vec{y}\in K$. If
$M_K^{1/5}(\vec{x})\cap M_K^{1/5}(\vec{y})\ne\varnothing$,
then $M_K^{1/5}(\vec{y})\subseteq M_K^1(\vec{x})$.
\end{importedlemma}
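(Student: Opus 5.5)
We would argue directly from the definition $M_K^\lambda(\vec x)=\vec x+\lambda C_{\vec x}$, where $C_{\vec x}=(K-\vec x)\cap(\vec x-K)$ is a centrally symmetric convex body, and membership $\vec c\in C_{\vec x}$ means exactly $\vec x\pm\vec c\in K$. Pick $\vec z$ in the intersection, so that $\vec z=\vec x+\tfrac15\vec c_1=\vec y+\tfrac15\vec c_2$ with $\vec c_1\in C_{\vec x}$ and $\vec c_2\in C_{\vec y}$. The target inclusion $M_K^{1/5}(\vec y)\subseteq M_K^1(\vec x)$ amounts to showing that $\vec y-\vec x+\tfrac15\vec c\in C_{\vec x}$ for every $\vec c\in C_{\vec y}$. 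Equivalently, both points $\vec y+\tfrac15\vec c$ and $2\vec x-\vec y-\tfrac15\vec c$ must lie in $K$.

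The point $\vec y+\tfrac15\vec c$ lies in $K$ because $C_{\vec y}\subseteq K-\vec y$ is convex and contains $0$. So the whole content is the reflected point $2\vec x-\vec y-\tfrac15\vec c$. Since $\vec x-\vec y=\tfrac15(\vec c_2-\vec c_1)$, this point equals
\[
 \vec x+\tfrac15(\vec c_2-\vec c_1-\vec c),
\]
and it suffices to prove $\vec c_2-\vec c_1-\vec c\in 5C_{\vec x}$. Since $\vec c_1\in C_{\vec x}$, it is enough to prove a \emph{comparison inclusion}
\[
 C_{\vec y}\subseteq \alpha\, C_{\vec x}\qquad\text{for some }\alpha\le 2 .
\]
Indeed, $\vec c_2$ and $-\vec c$ both lie in $C_{\vec y}$, so $\vec c_2-\vec c\in 2C_{\vec y}\subseteq 2\alpha C_{\vec x}$, and then $\vec c_2-\vec c_1-\vec c\in(2\alpha+1)C_{\vec x}\subseteq 5C_{\vec x}$.

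For the comparison inclusion we take $\vec c\in C_{\vec y}$ and exhibit $\vec x\pm\mu\vec c$ as convex combinations of points known to lie in $K$, aiming for $\mu=1/\alpha$. The available points are $\vec y\pm\vec c$, $\vec y\pm\vec c_2$, and $\vec x\pm\vec c_1$. Concretely, we write
\[
 \vec x+\mu\vec c=\mu(\vec y+\vec c)+(1-\mu)\vec q,
\]
which forces $\vec q=\vec x+\tfrac{\mu}{1-\mu}(\vec x-\vec y)$. It then remains to check $\vec q\in K$. Since $\vec x-\vec y=\tfrac15(\vec c_2-\vec c_1)$, this point is $\vec x$ shifted by a small multiple of $\vec c_2-\vec c_1$. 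Its $\vec c_1$-part is controlled by $\vec x\pm\vec c_1\in K$, but its $\vec c_2$-part is only controlled relative to $\vec y$. This is circular, and we would break it in one of two ways.

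The first option is a bootstrap: first prove a crude inclusion $C_{\vec y}\subseteq\beta C_{\vec x}$ (with $\beta$ possibly large, e.g.\ from boundedness), substitute it into the $\vec q$ computation, and iterate the resulting affine recursion $\beta\mapsto f(\beta)$ to its fixed point. The second option, which we would try first, avoids circularity: express $\vec q$ using the points $\vec z\pm\ldots$, noting that $\vec z$ is central in both bodies, so both $\vec z-\tfrac15\vec c_1+t\vec c_1$ and $\vec z-\tfrac15\vec c_2+s\vec c_2$ stay in $K$ for $|t|,|s|\le1$. The expected outcome is the standard constant $\alpha=(1+\lambda)/(1-\lambda)=3/2$ at $\lambda=1/5$. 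That gives $2\alpha+1=4\le5$, with slack consistent with the classical bound $\lambda(3+\lambda)/(1-\lambda)=4/5\le1$.

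The main obstacle is this comparison step: choosing the convex combinations so that the constant actually comes out at most $2$ rather than merely finite. The remaining steps are bookkeeping with symmetric convex bodies.
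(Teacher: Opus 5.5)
Your reduction is sound, and it is the skeleton of the classical Ewald--Larman--Rogers argument, which the paper imports from \cite{AFM} without proof. You pick $\vec z$ in the overlap, write the reflected point as $\vec x+\tfrac15(\vec c_2-\vec c_1-\vec c)$, and reduce everything to a comparison $C_{\vec y}\subseteq\alpha C_{\vec x}$ with $2\alpha+1\le5$. \textbf{The gap} is that this comparison is the entire content of the lemma, and you do not prove it: both ``ways out'' are only sketched. The bootstrap would also need extra work when $\vec x\in\partial K$. There $C_{\vec x}$ can be lower-dimensional (even $\{\vec 0\}$ at a vertex), so boundedness alone gives no finite starting $\beta$.

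Moreover, the circularity you diagnose is not there. Take $\mu=2/3$, so your forced point is $\vec q=3\vec x-2\vec y$. Since $\vec x-\vec y=\tfrac15(\vec c_2-\vec c_1)$,
\[
 \vec q=\tfrac12(\vec x+\vec y)+\tfrac52(\vec x-\vec y)=\tfrac12(\vec x-\vec c_1)+\tfrac12(\vec y+\vec c_2)\in K .
\]
Hence $\vec x\pm\tfrac23\vec c=\tfrac23(\vec y\pm\vec c)+\tfrac13\vec q\in K$ for every $\vec c\in C_{\vec y}$, that is, $C_{\vec y}\subseteq\tfrac32C_{\vec x}$. The $\vec c_1$-part is covered by $\vec x-\vec c_1$ and the $\vec c_2$-part by $\vec y+\vec c_2$, simultaneously. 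This is exactly your second option with $t=-1$ and $s=1$, carried out.

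The standard proof routes the same comparison through $\vec z$ in two one-sided steps. First, $\vec z\pm\tfrac45\vec c=\tfrac45(\vec y\pm\vec c)+\tfrac15(\vec y+\vec c_2)$ gives $\tfrac45C_{\vec y}\subseteq C_{\vec z}$. Second, $\vec x\pm\tfrac56\vec v=\tfrac56(\vec z\pm\vec v)+\tfrac16(\vec x-\vec c_1)$ gives $C_{\vec z}\subseteq\tfrac65C_{\vec x}$. Together these give $C_{\vec y}\subseteq\tfrac32C_{\vec x}$, and composing the two combinations yields precisely the one above. With $\alpha=3/2$ your bookkeeping gives $\vec c_2-\vec c_1-\vec c\in4C_{\vec x}$, so $M_K^{1/5}(\vec y)\subseteq M_K^{4/5}(\vec x)\subseteq M_K^1(\vec x)$. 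Neither combination needs $\vec x$ to be interior, so no separate boundary argument is required.
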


This lemma is due to Ewald, Larman, and Rogers. Consequently a maximal
family of pairwise disjoint small Macbeath regions covers all others after
a fixed enlargement.

\subsection{Recursive instances}\label{sec:representation}
Every recursive call asks whether $\mathcal{S}(V)\cap\mathcal T\ne\varnothing$
for a multiset of integer vectors $V=\{\vec v_i:i\in I\}\subseteq\Z^d$ and
an exact target set $\mathcal T$ stored as a Boolean array:
\[
 P=\prod_{j=1}^d[-p_j,p_j],\qquad p_j\in\Z_{\ge1},\qquad
 \vec v_i\in P,\qquad \mathcal T\subseteq\vec o+\tau\cdot P,
 \qquad \tau\ge1.
\]
The array has one entry for every integer point of the enclosing box, so its
\emph{allocated size} is
\[
 w=|\vec o+\tau\cdot P|=\Theta_d\Bigl(\prod_{j=1}^d(1+\tau p_j)\Bigr).
\]
Thus $w$ counts zero entries too; the one entries specify the decision problem.

\paragraph{Reanchoring.}
Fix reference bits $b_i\in\{0,1\}$ and write a selection as
$x_i=b_i+(1-2b_i)y_i$ with flip bits $y_i\in\{0,1\}$. Then
\begin{equation}\label{eq:reanchor}
 \sum_i x_i\vec v_i=\sum_i b_i\vec v_i+\sum_i y_i(1-2b_i)\vec v_i.
\end{equation}
With the \emph{reference sum} $\vec b=\sum_ib_i\vec v_i$ and the \emph{signed
items} $\vec v_i'=(1-2b_i)\vec v_i$, replacing each item by $\vec v_i'$ and
translating $\mathcal T$ by $-\vec b$ is therefore an exact change of variables: a flip adds
item $i$ if $b_i=0$ and removes it if $b_i=1$. The reference is an actual
subset, but its sum need not be a target.

\paragraph{Original values.}
In the set recursion each active item keeps its \emph{oriented original
value} $a_i\in[-N,N]$: the input integer, with its sign flipped each time the
item is reanchored with $b_i=1$. The reductions preserve a linear map
$\varphi(\vec z)=\vec\alpha\cdot\vec z$ with $\varphi(\vec v_i)=a_i$.
It is recomputed from a basis of current items, never as a product of
earlier coordinate changes; it need not be injective, so we always keep
every current coordinate. If two current vectors coincide, their oriented
values coincide, so they come from $a$ and $-a$; hence vector multiplicity
is at most two and
\begin{equation}\label{eq:nscan}
 |I|\le2|P|=O_d(w).
\end{equation}

\paragraph{Span restriction.}
We always restrict to the real span of the active vectors: targets outside it
are discarded, and projection onto a suitable set of integer coordinates is
injective on the span and preserves the exact problem. Before projecting we
must discard off-span targets, since an off-span target can project onto an
attainable point. When we pass to a basis of a sublattice, convolutions
already performed are charged in the lattice where they were performed.

\paragraph{Parameters.}
The set recursion starts with $d=1$, $P=[-N,N]$, $\tau_0=\lfloor N^\delta\rfloor$,
and uses $h=\lfloor\tau/\log^{c_{\mathrm{log}}}(N+2)\rfloor$.
Lemma~\ref{lem:bootstrap} proves that, for sufficiently large $N$, every
nonterminal call has
\[
 d=O_\delta(1),\qquad
 \frac{N^\delta}{\log^{O_\delta(1)}N}\le h\le\tau\le N^\delta,
 \qquad w\le N^{1+2\delta}.
\]
We reserve $\vartheta$ for the small fixed threshold of the boundary
reduction.

\subsection{Exact Boolean operations and signed color coding}
\label{sec:boolean-operations}

Fix a multiset $W=\{\vec w_1,\ldots,\vec w_n\}\subseteq\Z^d$ and budgets
$\Delta_j\ge0$. A subset $X\subseteq[n]$ is \emph{absolute-bounded} if
\begin{equation}\label{eq:absolutequota}
 \sum_{i\in X}|w_{ij}|\le\Delta_j\qquad(j=1,\ldots,d).
\end{equation}
Then every subset of $X$ has its sum in $B=\prod_j[-\Delta_j,\Delta_j]$, which
permits truncating intermediate arrays. Let $\mathcal S_{\mathrm{abs}}$ be the
set of sums of absolute-bounded subsets.

\begin{lemma}[Signed absolute-budget sums]\label{lem:quota}
For $0<\xi<1$, a randomized algorithm computes a set
$S\subseteq\mathcal{S}(W)\cap B$ in
\[
 \widetilde O_d(n+|B|)
\]
time, with polylogarithmic dependence on $1/\xi$, such that
$\Pr[\mathcal S_{\mathrm{abs}}\subseteq S]\ge1-\xi$.
\end{lemma}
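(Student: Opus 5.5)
We follow Bringmann's color-coding scheme, adapted to signed items and absolute budgets. Call an item \emph{heavy} in coordinate $j$ if $|w_{ij}|>\Delta_j/\lambda$ for a parameter $\lambda=\Theta(\log(n/\xi))$, and \emph{light} otherwise. An absolute-bounded $X$ then contains at most $d\lambda$ heavy items. The light items of $X$ have no per-coordinate bound on their number, so we also need a random partition of the light items.

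\textbf{Step 1 (random partition).} Split $W$ into $g$ groups uniformly at random. For the heavy part of $X$ we take $g=\Theta((d\lambda)^2/\xi')$ groups, so that with probability $1-\xi'$ no two heavy items of $X$ share a group. This is the standard birthday argument, which we would amplify by $O(\log(1/\xi))$ independent repetitions and take the union of their outputs. In each group we compute the singleton sums $\{0\}\cup\{\vec w_i\}$ truncated to $B$. Each output is a genuine subset sum, because we only ever combine disjoint groups. So $S\subseteq\mathcal S(W)\cap B$ holds deterministically and the errors are one-sided.

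\textbf{Step 2 (light items, recursive halving).} The light items we handle recursively, as in Bringmann's algorithm. Split them randomly into two halves, recurse on each half with budgets $\Delta_j/2+O(\sqrt{\Delta_j\cdot\max|w_{ij}|\log})$, and combine the two results by a $d$-dimensional Boolean convolution (FFT) truncated to the box. The key concentration fact is that for an absolute-bounded $X$, the absolute mass $\sum_{i\in X\cap \text{half}}|w_{ij}|$ of its light items in a random half is at most $\Delta_j/2$ plus a Bernstein deviation term, with high probability. Summing the geometric sequence of boxes, a level with box size $|B'|$ spawns two children whose boxes are roughly $2^{-d}$ times smaller, so the total convolution cost is $\widetilde O_d(|B|)$. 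At the leaf level the budgets fall below the light threshold, and each leaf contains $O(\log)$ items of $X$, which we enumerate by color coding as in Step 1.

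\textbf{Step 3 (combine).} Convolve the heavy-group results with the light result, always truncating to $B$. Truncation is harmless: absolute-boundedness gives that every partial sum of $X$ stays in $B$, and in the recursion every partial sum stays in the enlarged sub-box.

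\textbf{Main obstacle.} The hard part is the Bernstein accounting in Step 2. Signed items mean we must control $\sum|w_{ij}|$ rather than the signed sums, and the deviation terms must stay lower order across all $O(\log n)$ levels so that the boxes still shrink geometrically. I would first try splitting by magnitude classes $|w_{ij}|\in(2^{-k-1}\Delta_j,2^{-k}\Delta_j]$, one layer per $k$, as Bringmann does. A class-$k$ layer of $X$ contains at most $2^{k+1}$ items per coordinate. Color-coding that layer into $\Theta(2^k)$ groups then yields $O(\log)$ items of $X$ per group with high probability, so each group can be handled by sum-trees with boxes of size $O(2^{-k}\log\cdot\Delta_j)$. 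Convolving the $O(\log)$ layers costs $\widetilde O_d(n+|B|)$. A union bound over layers and repetitions gives failure probability at most $\xi$.
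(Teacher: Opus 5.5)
Your Steps~1 and~3 are fine up to the union-bound issue below; Step~1 is just color coding of the items with a large coordinate. Step~2, however, fails with the parameters you chose, and it carries the whole difficulty.

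A light item only satisfies $|w_{ij}|\le\Delta_j/\lambda$ with $\lambda=\Theta(\log(n/\xi))$. So the deviation you add to a child's budget, $\sqrt{\Delta_j\cdot(\Delta_j/\lambda)\log(1/\xi')}=\Delta_j\sqrt{\log(1/\xi')/\lambda}$, is $\Omega(\Delta_j)$ as soon as $\xi'$ is small enough to survive a union bound. The linear Bernstein term $(\Delta_j/\lambda)\log(1/\xi')$ is $\Omega(\Delta_j)$ as well. Hence the children's budgets are not close to $\Delta_j/2$. Deeper in the recursion it only gets worse: the budgets halve but the bound $\Delta_j/\lambda$ on item sizes does not, so the deviation overtakes the budget itself and the boxes never shrink geometrically.

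The leaf claim fails independently. Light items have no lower bound on their size, so a leaf whose budget is still of order $\max_i|w_{ij}|\log$ can contain arbitrarily many items of $X$, for instance many tiny items mixed with a few of size near $\Delta_j/\lambda$. Halving could be rescued by splitting off, at \emph{every} node, the items that are heavy relative to the current budget (say above $\Delta'_j/\lambda^3$) and color coding them there. That is not what you describe.

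The fallback in your last paragraph is the paper's actual proof, but your sketch omits three ingredients.

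First, classify each item by the single quantity $\theta_i=\max_j|w_{ij}|/\Delta_j$ rather than coordinate by coordinate. Then every coordinate of a class-$s$ item is at most $2^{-s}\Delta_j$, and $\sum_{i\in X}\theta_i\le d$ gives $|X\cap I_s|\le\lceil d2^{s+1}\rceil$ in total. A per-coordinate classification does not partition the items.

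Second, after hashing a class into buckets holding $O(\lambda)$ witness items each, you need a second color-coding stage inside every bucket. Use $O(\lambda^2)$ colors, form the factors $\{0\}\cup\{\text{items of one color}\}$, multiply them by Lemma~\ref{lem:boolean-product}, and repeat $O(\log(b/\xi))$ times. Without this stage, a truncated product tree over all items of a bucket costs about the number of its items times the size of its box, up to logarithmic factors. That is far beyond $\widetilde O_d(|I_s|+|B|)$ for the large classes, and an $O(\lambda)$-fold sumset of the bucket would reuse items. With colors, each factor lies in $2^{-s}\cdot B$ and each bucket support in $O(\lambda^2)2^{-s}\cdot B$, which is what lets \eqref{eq:productbound} give $\widetilde O_d(|I_s|+|B|)$ per class.

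Third, the lemma requires $\mathcal S_{\mathrm{abs}}\subseteq S$ simultaneously. Fix one witness for each of the at most $|B|$ required sums before drawing randomness. Give each pair (sum, class) failure allowance $\xi/(g|B|)$, with $g=O(\log(2+|B|))$ classes. Hence $\lambda$ must be $\Theta(\log((n+|B|)/\xi))$, and Step~1 needs $O(\log(|B|/\xi))$ repetitions. A union bound only over layers and repetitions does not give this.

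With these ingredients, your separate heavy part simply becomes the top classes.
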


Every reported sum has an actual witness; $S$ may contain other subset sums
in $B$. The proof (Appendix~\ref{app:boolean-operations}) combines a balanced
Boolean product tree (Lemma~\ref{lem:boolean-product}) with the two-stage
color coding of~\cite{Bringmann}.

\begin{lemma}[Removing a subset of items]\label{lem:transfer}
Let $E\subseteq I$ contain $e$ items in
$P$, and let $\cT\subseteq \vec{o}+\tau\cdot P$.
The exact target set $\cT-\mathcal{S}(\{\vec{v}_i:i\in E\})$ can be computed in
$\widetilde O_d(e+|(\tau+e)\cdot P|)$
time, which is $\widetilde O_d(e+w)$ if $e=O(\tau)$. Replacing the target set
by this difference and removing $E$ preserves the answer.
\end{lemma}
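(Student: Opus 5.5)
We prove the two parts of Lemma~\ref{lem:transfer} separately: first that the replacement preserves the answer, then that the new target set can be computed within the stated time. Correctness is immediate from the definition of subset sums. Every subset of $I$ splits uniquely as $X\cup Y$ with $X\subseteq E$ and $Y\subseteq I\setminus E$. Hence
\[
 \mathcal S(V)=\mathcal S(\{\vec v_i:i\in E\})+\mathcal S(\{\vec v_i:i\in I\setminus E\}).
\]
Consequently $\mathcal S(V)\cap\cT\ne\varnothing$ holds if and only if some $\vec y\in\mathcal S(V_{I\setminus E})$ satisfies $\vec y+\vec s\in\cT$ for some $\vec s\in\mathcal S(V_E)$. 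That is exactly the condition $\vec y\in\cT-\mathcal S(V_E)$. The new target set is exact, not a superset, so there are no false positives.

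For the computation, we build $S_E=\mathcal S(V_E)$ explicitly as a Boolean array and then form the difference by a single convolution. All items lie in $P$, so every partial sum over $E$ lies in $e\cdot P$.

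\emph{Step 1.} Compute $S_E$ with the balanced Boolean product tree of Lemma~\ref{lem:boolean-product}. The leaves are the arrays $\{0,\vec v_i\}$ for $i\in E$. A node covering $m$ items holds an array supported in $m\cdot P$. Each level of the tree has total support $O_d(|e\cdot P|)$, since the sizes $|m\cdot P|$ are superadditive up to $d$-dependent constants. Each level therefore costs $\widetilde O_d(e+|e\cdot P|)$ via FFT-based Boolean convolution (nonzero-count rounding), and there are $O(\log e)$ levels.

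\emph{Step 2.} Reflect $S_E$ to obtain $-S_E$. Convolve the indicator array of $\cT\subseteq\vec o+\tau\cdot P$ with the indicator of $-S_E\subseteq e\cdot P$, and threshold the result to nonzero entries. Both supports lie in boxes of the same shape, so the convolution lives in $\vec o+(\tau+e)\cdot P$. A $d$-dimensional FFT on that box costs $\widetilde O_d(|(\tau+e)\cdot P|)$. Exactness of a Boolean convolution via FFT is standard: integer counts are at most the array size, so rounding a floating-point result or using number-theoretic transforms is exact in word RAM. This is the step I would cite from the appendix.

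\emph{Step 3.} Reading $E$ costs $O(e)$. If $e=O(\tau)$, then $|(\tau+e)\cdot P|=O_d(|\tau\cdot P|)=O_d(w)$, because scaling each side length by a constant changes the point count by a factor $O(1)^d$.

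The only mildly delicate point is the cost accounting of the product tree in Step 1. Naively, a box $m\cdot P$ with some $p_j$ small has $\prod_j(1+mp_j)$ points, and summing this over the nodes of one level must stay $O_d(|e\cdot P|)$. This follows from $\sum_k\prod_j(1+m_kp_j)\le\prod_j(1+\sum_k m_kp_j)$, which holds by expanding both products into monomials in the $m_kp_j$ and comparing them termwise. Even without this inequality, the whole computation could be done directly in the ambient box $(\tau+e)\cdot P$ at every level: that costs $\widetilde O_d(|(\tau+e)\cdot P|)$ per level with $O(\log e)$ levels, which is still within the stated polylogarithmic bound. I would present that simpler version.
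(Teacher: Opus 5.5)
Your plan is the same as the paper's. You compute $\mathcal S(\{\vec v_i:i\in E\})\subseteq e\cdot P$ with the balanced product tree of Lemma~\ref{lem:boolean-product} on the factors $\{0,\vec v_i\}$, take one Boolean convolution with the target indicator inside $\vec o+(\tau+e)\cdot P$, and get the equivalence from splitting each subset of $I$ into its parts in $E$ and $I\setminus E$. Your correctness half is fine.

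The cost accounting, however, has a step that fails, and it is the version you say you would present. Doing every level ``directly in the ambient box $(\tau+e)\cdot P$'' does not cost $\widetilde O_d(|(\tau+e)\cdot P|)$ per level. A level with $b$ nodes performs $b$ separate convolutions, so ambient-size arrays cost $b\,|(\tau+e)\cdot P|$. Just above the leaves $b=\Theta(e)$, which gives $\Theta(e\,|(\tau+e)\cdot P|)$, a polynomial rather than polylogarithmic overhead. Each node's array must be sized to its own range $m\cdot P$, and then you genuinely need the level estimate you wanted to bypass.

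Your displayed inequality for that estimate is also false as written. After expanding, the constant terms give $b$ on the left and $1$ on the right, and the termwise comparison only handles the nonempty monomials. For $d=1$, $b=2$, $m_1=m_2=p_1=1$ it reads $4\le 3$. The correct form is the paper's \eqref{eq:productbound}, $b\prod_j(1+e_j/b)\le b+\prod_j(1+e_j)-1$, where the surplus $b\le 2e$ is absorbed by the $e$ term. Alternatively, use $p_j\ge1$: each node with $m\ge1$ has $\prod_j(1+2mp_j)\le 3^d m^d\prod_j p_j$, so a level totals at most $3^d(\sum_k m_k)^d\prod_jp_j\le(3/2)^d|e\cdot P|$. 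That justifies your ``superadditive up to constants'' claim. With this correction in place of the fallback, the proof is complete.
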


\begin{proof}
Lemma~\ref{lem:boolean-product} on the factors $\{0,\vec v_i\}$, $i\in E$,
computes $\mathcal S(\{\vec v_i:i\in E\})\subseteq e\cdot P$; one convolution
with the target indicator gives the difference inside
$\vec o+(\tau+e)\cdot P$. Disjointness of $E$ and $I\setminus E$ gives the
equivalence.
\end{proof}

\begin{lemma}[Small residual instance]\label{lem:terminal}
An instance with $n\le\tau$ items can be solved exactly in
$\widetilde O_d(n+w)$ time.
\end{lemma}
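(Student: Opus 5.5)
The plan is to remove all items at once with Lemma~\ref{lem:transfer}, taking $E=I$, and then read off the answer. Since $n\le\tau$, we have $e=n=O(\tau)$, and Lemma~\ref{lem:transfer} computes the exact set
\[
 \mathcal T'=\mathcal T-\mathcal S(\{\vec v_i:i\in I\})
\]
in $\widetilde O_d(n+|(\tau+n)\cdot P|)$ time. The residual instance has no items. Its only subset sum is $0$, so the original instance is a yes-instance exactly when $0\in\mathcal T'$, which is one array lookup.

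In order, the steps are as follows. First, build $\mathcal S(V)$ with the balanced Boolean product tree of Lemma~\ref{lem:boolean-product} on the factors $\{0,\vec v_i\}$. The result lies in $n\cdot P\subseteq\tau\cdot P$, and the tree costs $\widetilde O_d(n+|n\cdot P|)$. Second, perform one convolution, or correlation, of this indicator with the target indicator on $\vec o+\tau\cdot P$, reading off only the entry at $0$. Equivalently, we test whether $\mathcal T\cap\mathcal S(V)\neq\varnothing$ directly. Since $\mathcal S(V)\subseteq n\cdot P$, a simpler alternative is to scan the overlap of $\mathcal T$ with the box $n\cdot P$ and check membership in the computed array, at cost $O(w+|n\cdot P|)$.

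The step that needs care is the size bound, namely that $|(\tau+n)\cdot P|=O_d(|\tau\cdot P|)=O_d(w)$. For each coordinate $j$ we have $1+2(\tau+n)p_j\le 1+4\tau p_j\le 2(1+2\tau p_j)$, so the product grows by at most a factor $2^d$. This uses $n\le\tau$ and $p_j\ge1$. With the direct membership test, only $|n\cdot P|\le|\tau\cdot P|=w$ is needed anyway. Exactness holds because the product tree computes the Boolean OR-convolution exactly, deterministically via FFT over the integers with nonzero testing, and it involves no color coding. The total is therefore $\widetilde O_d(n+w)$.
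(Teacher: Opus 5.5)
Your proposal is correct and matches the paper's proof. The paper also computes $\mathcal S(V)\subseteq n\cdot P\subseteq\tau\cdot P$ with the balanced Boolean product of Lemma~\ref{lem:boolean-product} and intersects it with $\mathcal T$, which is your ``simpler alternative''; the route through Lemma~\ref{lem:transfer} with $E=I$ is the same computation plus one extra convolution, so the bound $|(\tau+n)\cdot P|\le 2^d|\tau\cdot P|$ is valid but not needed.
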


\begin{proof}
All subset sums lie in $n\cdot P\subseteq\tau\cdot P$; compute them by
Lemma~\ref{lem:boolean-product} and intersect with $\mathcal T$.
\end{proof}

\section{A common GAP and a filling reserve}\label{sec:carrier}

This section extracts additive structure from the items. After removing a
small exceptional multiset $E$, all remaining items receive coordinates in
one bounded-rank GAP, and a small reserve $R$ is chosen whose subset sums
fill a box in those coordinates. We state the construction directly for
multisets stored by counts: the set recursion produces multiplicities up to
two anyway, and the John-type theorem of Section~\ref{sec:john} needs
arbitrary multiplicities.

\subsection{Growth of repeated sumsets}
Let $X\subseteq[0,L]$ contain $0$, and let $s\ge L^{\alpha}$.
Theorem~\ref{thm:import-long-sumsets} gives a GAP $Q\supseteq X$ of rank
$r\le r_\alpha$ with coefficient box $B$, such that $Q_{c_\alpha s}$ is proper
and a translate of it lies in $sX$. For $t\ge s$, $tX\subseteq Q_t$ and, after
discarding zero-width coordinates, $|t\cdot B\cap\Z^r|=O_\alpha((t/s)^{r})\,|c_\alpha s\cdot B\cap\Z^r|$.
Hence, for fixed $b\ge1$ and with ${a_{\mathrm{gr}}}=r_\alpha$,
\begin{equation}\label{eq:carrier-doubling}
 |bsX|=O_{\alpha,b}(|sX|),
\end{equation}
\begin{equation}\label{eq:carrier-polynomial-growth}
 |sX|\ge c(s/t)^{a_{\mathrm{gr}}}|tX|\qquad(L^{\alpha}\le s\le t),
\end{equation}
for a constant $c=c(\alpha)>0$. No lower bound on $|X|$ is needed.

\subsection{Statement}
For multiplicities $\vec y=(y_a)$ on distinct integers $a$, write
$X(\vec y)=\{0\}\cup\{a:y_a>0\}$ for the support and
$\mathcal S(\vec y)=\{\sum_a j_aa:0\le j_a\le y_a\}$ for the subset sums.
Repeated sumsets $sX(\vec y)$ ignore multiplicities; they serve only as a
progress measure. Every actual selection respects the counts.

\begin{theorem}[Constructing a common GAP]\label{thm:carrier}
Fix $\eta>0$. The input consists of $m$ distinct positive integers
$a\le L$ with multiplicities $c_a\in\Z_{\ge1}$, and an integer
$h\ge L^{\eta}$. In
\begin{equation}\label{ms:time}
 \softO{mh+h^2L}
\end{equation}
time, an algorithm whose only randomized steps are calls of
Theorem~\ref{thm:import-freiman} computes exception multiplicities
$0\le e_a\le c_a$ with
\begin{equation}\label{ms:exceptions}
 \sum_a e_a=O_{\eta}\bigl(h\log^2(hL+2)\bigr),
\end{equation}
and, unless all residual multiplicities $c_a-e_a$ vanish:
\begin{enumerate}[label=(\arabic*)]
\item \textbf{Coordinates.} A rank $r=O_\eta(1)$, a box
$C=\prod_{j=1}^r[-b_j,b_j]$ with $b_j\in\Z_{\ge1}$, an integer homomorphism
$\rho:\Z^r\to\Z$, and $\vec{u}_a\in C\cap\Z^r$ with $\rho(\vec{u}_a)=a$ for
every $a$ with $c_a-e_a>0$.
\item \textbf{Reserve.} Multiplicities $0\le r_a\le c_a-e_a$ with
$\sum_a r_a=O_\eta(h)$, a point $\vec z_0\in\Z^r$, and a constant
$\gamma=\gamma(\eta)>0$ such that $\rho$ is injective on $\gamma h\cdot C\cap\Z^r$ and
\begin{equation}\label{ms:filling}
 \vec{z}_0+(\gamma h\cdot C\cap\Z^r)
 \ \subseteq\
 \Bigl\{\sum_a j_a \vec{u}_a:0\le j_a\le r_a\Bigr\},
\end{equation}
together with an explicit reserve witness for every point of the filled box.
The reserve vectors generate $\Z^r$.
\item \textbf{Remaining items.} The vectors $\vec u_a$ with
$c_a-e_a-r_a>0$ generate $\Z^r$, and they still span $\R^r$ after the
deletion of any further $h$ copies.
\end{enumerate}
For any fixed $b>0$ the output is correct with probability at least
$1-(hL+2)^{-b}$, within the same time bound.
\end{theorem}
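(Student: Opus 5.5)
The plan is a greedy peeling process driven by Lemma~\ref{lem:import-greedy}, followed by one application of Theorem~\ref{thm:import-long-sumsets} to get the common GAP, and then Lemma~\ref{lem:import-filling} to get the filled box.

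\textbf{Step 1: choose the working support.} Set $s=h$, or a constant fraction of it, and $\alpha=\eta$, so that $s\ge L^{\alpha}$. Let $X$ be $\{0\}$ together with the support of the current residual multiset. Repeatedly ask whether some element $a\in X$ is ``expanding'', meaning that adding $a$ to a partial sumset $S$ built from previously chosen elements increases $|S|$ by a factor $1+\Omega(1/s)$. By Lemma~\ref{lem:import-greedy} applied with $U=X$ and $k=s$, such an $a$ exists as long as $|sX|\ge2|S|$.

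\textbf{Step 2: bound the number of exceptions.} Whenever the algorithm picks an element, one copy of it goes into $E$. Since $|S|\le s L+1$ throughout, and each pick multiplies $|S|$ by $1+\Omega(1/s)$, at most $O(s\log(sL))$ picks happen per phase. Phases are indexed by dyadic scales of $|sX|$, giving $O(\log(hL))$ of them, so the total is $O(h\log^2(hL))$, matching \eqref{ms:exceptions}.

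\textbf{Step 3: extract the common GAP.} When peeling stops, $|S|>|sX|/2$ for the current residual support $X$, and $S$ is contained in a sumset of $O(h\log^2)$ removed elements. Apply Theorem~\ref{thm:import-long-sumsets} to $X$. This gives a centred GAP $Q\supseteq X$ of rank $r\le r_\eta$ with $Q_{c_\alpha s}$ proper. Writing $Q=\rho(C\cap\Z^r)$, which can be made symmetric via Theorem~\ref{thm:import-properization} at the cost of a constant-factor loss in the properness scale, we obtain $\vec u_a$ for each residual support element. Properness of $Q_{\gamma h}$ gives injectivity of $\rho$ on $\gamma h\cdot C\cap\Z^r$.

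Computationally, $Q$ is obtained from Theorem~\ref{thm:import-freiman} applied to a set of small doubling. By \eqref{eq:carrier-doubling}, the set $sX$ has $|2sX|=O(|sX|)$, and $sX$ can be computed by iterated convolution in $\widetilde O(h\cdot hL)$ time. From a GAP covering $sX$ we recover one covering $X$ by scaling the box. The proof lifts the coordinates $\vec u_a$ from the Freiman output; uniqueness follows from properness.

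\textbf{Step 4: the reserve.} Split the residual multiset into $\ell=O(1)$, or more precisely $\Theta(h)/\text{(box width)}$, groups. Each group $A_i$ is a copy of the support coordinates $\{\vec u_a\}\cup\{0\}$, which has density $\Omega(1)$ in $c'\cdot C$, since $|X|$ is comparable to the volume by the growth bound \eqref{eq:carrier-polynomial-growth}.

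Lemma~\ref{lem:import-filling} needs each group to span $\Z^r$. We ensure this by replacing $\Z^r$ with the lattice generated by the residual coordinates and rechoosing the basis so that this lattice becomes $\Z^r$. Coordinates of width below $w_0$ are handled by merging.

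Taking $r_a$ copies from each group gives a sum of $O(h)$ elements that contains a translate $\vec z_0+(\gamma h\cdot C\cap\Z^r)$. Explicit witnesses come from the Boolean product tree with back-pointers, in $\widetilde O(h^2L)$ time.

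\textbf{Step 5: robustness of the remaining items (item (3)).} We need the remaining vectors to span $\R^r$ even after any further $h$ deletions. To achieve this, also move into $E$ every support element whose multiplicity is small and that lies in a direction spanned by few elements. Concretely: iteratively, if some hyperplane contains all but at most $2h$ remaining copies, delete those copies and reduce the rank by projecting. This happens at most $r$ times, costing $O(rh)$ extra exceptions.

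\textbf{Main obstacle.} The hardest step is reconciling Lemma~\ref{lem:import-filling}'s hypotheses (density, generation, minimum width) with the GAP box from Theorem~\ref{thm:import-long-sumsets}, while keeping the rank bounded. My first attempt would be to pass to a lattice-reduced basis of the generated lattice (Theorem~\ref{thm:import-lattice-reduction}) and to collapse degenerate thin coordinates into the exceptional set.
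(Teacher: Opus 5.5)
\textbf{Step 4 has a genuine gap, and it is the heart of the theorem.} Lemma~\ref{lem:import-filling} must be fed $\ell$ sets with three properties. Each must be the subset-sum set of a \emph{disjoint} group of copies. Each must have density $\Omega(1)$ in a box of scale $\Theta(h)\cdot C$. And the groups together may use only $O(h)$ copies.

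Taking each group to be ``a copy of the support $\{\vec u_a\}\cup\{0\}$'' fails on every count:
\begin{itemize}
\item For a set input ($c_a=1$) there are not even two disjoint copies of the support.
\item With large multiplicities the reserve would have $\ell|X|$ copies, far more than $O(h)$ when $|X|$ is near $L$.
\item The support need not be dense in $C$. If $X$ is a random subset of $[0,L]$ of size $\sqrt L$, then $sX$ is essentially an interval, $r=1$, $C\approx[-L,L]$, and $X$ has density about $L^{-1/2}$. The growth bound \eqref{eq:carrier-polynomial-growth} cannot rescue this, since it only compares scales at least $L^{\alpha}$.
\item Filling with sets lying in $C$ gives only a translate of $\gamma\ell\cdot C$, so scale $\gamma h$ would need $\ell=\Theta(h)$ groups.
\end{itemize}

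\textbf{The missing idea: the dense sets are the subset sums of the greedily grown blocks.} Your Steps 1--2 build exactly such a large sumset $S$, but then throw its copies into $E$ instead of making them the reserve. In the paper, each block has only $O(s)$ copies, and its subset-sum set $S_i$ has at least $|sX|/4$ points. The Freiman and properization volume bound $|s\cdot C_0\cap\Z^r|=O(|sX|)$ then makes $S_i$ dense in $b_*s\cdot C$. So $\ell=O(1)$ blocks fill a translate of $\Theta(s)\cdot C$ using $O(h)$ reserve copies.

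Keeping blocks at $O(s)$ copies needs the polynomial growth bound \eqref{ms:growth} with varying $q\in[\lceil\sqrt s\rceil,s]$. The fixed factor $1+1/(2s)$ from Lemma~\ref{lem:import-greedy} with $k=s$ gives $O(s\log(sL))$ copies per block. That would break both $|R|=O(h)$ and the density constant.

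You also need all $\ell$ blocks to generate one common lattice. Changing basis so that the residual lattice becomes $\Z^r$ does not make each group generate it. The paper handles this in two moves:
\begin{itemize}
\item each block starts with a seed that generates the current live lattice;
\item the live lattices form a descending chain with $O(\log(sL))$ strict changes, so among $O(\log(sL))$ blocks some $\ell$ consecutive ones share a lattice.
\end{itemize}

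\textbf{Step 5 is also incomplete.} Deleting the copies off a hyperplane invalidates the coordinates and reserve already built, so the construction must restart. You then need two things you do not supply:
\begin{itemize}
\item A bound on the number of restarts. In the paper, confinement to a hyperplane halves $|sX|$, because every coordinate width of $s\cdot C_0$ exceeds $2\gamma_1$.
\item An efficient test, rather than a search over hyperplanes. The paper greedily packs $k_{\mathrm{pack}}+1$ disjoint bases; deleting any $k_{\mathrm{pack}}$ copies then still leaves a basis.
\end{itemize}
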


The time bound counts word operations; $m$ is the number of distinct
values, so copies are never expanded. Condition~(3) is a quantitative
non-degeneracy condition, of the same form as in~\cite[Definitions 18
and 49]{CHMZDense}: for every hyperplane $\mathcal H\ni0$ more than $h$ remaining
copies lie outside $\mathcal H$.

The reserve is used as follows (Lemma~\ref{lem:filled-zonotope}). Take a
lattice point $\vec z$ of the zonotope $\vec z_0+Z(U\setminus R)$ and an
extreme fractional representation of $\vec z-\vec z_0$ by the unreserved
items. At most $r$ coefficients are fractional, and rounding them leaves an
error $\vec e\in r\cdot C\cap\Z^r$; once $\gamma h\ge r$ a reserve subset
sums to $\vec z_0+\vec e$, and the two disjoint selections sum to $\vec z$.

\subsection{Lattice tools}

\begin{lemma}[Constructive full-lattice boxing; cf.\ Tao--Vu and GLS]
\label{lem:carrier-full-lattice-box}
Fix $r\ge1$ and a constant bound on polytope complexity. Let
$K\subseteq\R^r$ be an origin-symmetric rational polytope, and let
$\Lambda$ be a full-rank rational lattice supplied by a basis, such that
$K$ contains $r$ linearly independent points of $\Lambda$. In polynomial
time in the description bit length one can compute a basis
$\vec{b}_1,\ldots,\vec{b}_r$ of $\Lambda$ and integers $n_j\ge1$ such that
$C=\{\sum_j x_j\vec{b}_j:|x_j|\le n_j\}$ satisfies
\[
 K\subseteq C\subseteq \gamma_r\cdot K,
 \qquad |a\cdot C\cap\Lambda|\le\gamma_r|a\cdot K\cap\Lambda|
 \quad\text{for every real }a>0,
\]
where $\gamma_r$ depends only on $r$.
\end{lemma}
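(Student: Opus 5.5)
The plan is to reduce to a statement about successive minima and then realize it constructively with basis reduction. First, by Theorem~\ref{thm:import-geometry}(ii) applied to $K$ (full-dimensional because it contains $r$ independent lattice points and is symmetric), compute an origin-centred ellipsoid $E$ with $E\subseteq K\subseteq\gamma_r\cdot E$. Symmetry of $K$ lets us take the centre at the origin, possibly after shrinking $E$ by a factor $2$. After a linear change of variables mapping $E$ to the Euclidean unit ball, $K$ is sandwiched between $B_2$ and $\gamma_r B_2$, and $\Lambda$ becomes some rational lattice $\Lambda'$. Everything from here on is about $\Lambda'$ and the ball, and each step loses at most a factor depending only on $r$.

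Next, apply Theorem~\ref{thm:import-lattice-reduction} to $\Lambda'$ to obtain a reduced basis $\vec b_1,\ldots,\vec b_r$ with $\prod_j\|\vec b_j\|_2\le2^{r(r-1)/4}\det\Lambda'$. Combined with Minkowski's second theorem, this gives $\|\vec b_j\|\le\gamma_r\lambda_j$ for the successive minima $\lambda_j$ of $\Lambda'$ with respect to $B_2$, after reordering by length. The standard companion fact for LLL-type bases is that the coefficients are controlled: if $\vec x=\sum_jx_j\vec b_j\in B_2$, then $|x_j|\le\gamma_r/\|\vec b_j\|$. This follows from the product bound via Cramer's rule, since it bounds the orthogonality defect. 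Set $n_j=\max\{1,\lceil\gamma_r/\|\vec b_j\|\rceil\}$, scaled by the factor needed to cover $\gamma_r B_2\supseteq K$. Then $K\subseteq C$. The hypothesis that $K$ contains $r$ independent lattice points gives $\lambda_r\le\gamma_r$, so $\|\vec b_j\|\le\gamma_r$. Hence each generator $n_j\vec b_j$ has norm $O_r(1)$ whether $n_j=1$ or $n_j$ is the ceiling. Therefore $C\subseteq\gamma_r B_2\subseteq\gamma_r\cdot K$.

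The point-count inequality $|a\cdot C\cap\Lambda|\le\gamma_r|a\cdot K\cap\Lambda|$ for all $a>0$ is the step I expect to be the main obstacle. The reason is that it must hold at small scales $a$, where volume heuristics fail. I would compare both sides with the product $\prod_j(1+\lfloor a/\lambda_j\rfloor)$, using that $\|\vec b_j\|\asymp_r\lambda_j$ and $n_j\asymp_r 1/\lambda_j$ whenever $\lambda_j\le1$.

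For the upper bound, $a\cdot C\cap\Lambda$ consists of the points $\sum x_j\vec b_j$ with $|x_j|\le an_j$. Since $C$ is a box in the lattice basis, the count is exactly $\prod_j(2\lfloor an_j\rfloor+1)$. This is $O_r\bigl(\prod_j(1+a/\lambda_j)\bigr)$ for $a\le O(1/\lambda_j)$ regimes, and it equals $1$ in coordinate $j$ when $a n_j<1$.

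For the lower bound, $a\cdot K\supseteq a B_2$ contains all points $\sum_j y_j\vec b_j'$ with $|y_j|\le a/(r\gamma_r\lambda_j)$. Here $\vec b_j'$ are independent lattice vectors of length $\le\gamma_r\lambda_j$; the reduced basis itself serves. These points are distinct, which yields $\gamma_r^{-1}\prod_j(1+\lfloor a/(\gamma_r\lambda_j)\rfloor)$. Comparing the two products factor by factor uses $\lfloor x\rfloor+1\ge$ const$\cdot(\lfloor cx\rfloor+1)$ for fixed $c$, together with the observation that coordinates with $an_j<1$ contribute $1$ on the left. This gives the inequality with a constant depending only on $r$. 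Finally, I would undo the change of variables; all counts are invariant because we transform $K$, $C$ and $\Lambda$ together. Polynomial bit complexity follows from Theorems~\ref{thm:import-geometry} and~\ref{thm:import-lattice-reduction} and from the fixed bound on polytope complexity.
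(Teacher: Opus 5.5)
Your argument is correct, but it takes a different route from the paper in both halves.

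\textbf{Normalization.} The paper exploits the bounded polytope complexity. It picks $r$ vertices of $K$ of maximal $|\det|$, so that after the rational change of variables $A^{-1}$ one has $[-1/r,1/r]^r\subseteq K'\subseteq[-1,1]^r$ in exact rational arithmetic. It then takes an LLL basis of $A^{-1}\Lambda$ and sets $n_j=\lceil\max_{\vec x\in K'}|\vec\ell_j\cdot\vec x|\rceil$. The independent lattice points are used only to show that this maximum is at least $1$, so rounding costs a factor $2$.

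Your ellipsoid rounding followed by reduction, with $\lambda_r\le\gamma_r$ bounding $\|\vec b_j\|$, is essentially the paper's proof of Lemma~\ref{lem:boxing}. It works for oracle-given bodies. In an implementation, though, you should run the reduction in the rational quadratic form of $E$ rather than literally mapping $E$ to $B_2$, and compute the ceilings by rational comparisons. Also, no shrinking of $E$ is needed: $\vec x+E\subseteq K$ together with $K=-K$ and convexity already gives $E\subseteq K$, and likewise $K\subseteq\gamma_r\cdot E$.

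\textbf{Counting inequality.} This is the bigger difference, and it is the step you single out as the main obstacle. In the paper it is immediate once $C\subseteq\gamma_r\cdot K$ is known. Then $a\cdot C\cap\Lambda\subseteq\gamma_r a\cdot K\cap\Lambda$. For an integer $p\ge2\gamma_r$, each residue class of the latter set modulo $p\Lambda$ injects into $a\cdot K\cap\Lambda$ via $\vec x\mapsto(\vec x-\vec x_0)/p$, by symmetry and convexity. This gives the factor $p^r$ uniformly in $a$, with no special treatment of small scales.

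Your successive-minima comparison is also valid, checked step by step:
\begin{enumerate}
\item After sorting, the LLL product bound and Minkowski's second theorem give $\lambda_j\le\|\vec b_{(j)}\|\le c_r\lambda_j$.
\item Since $\|\vec b_j\|=O_r(1)$, you get $n_{(j)}=O_r(1/\lambda_j)$.
\item The inequality $1+x\le(1+m)(1+\lfloor x/m\rfloor)$ closes the factorwise comparison.
\end{enumerate}
It even yields more than the paper needs: the two-sided discrete John estimate $|a\cdot K\cap\Lambda|\asymp_r\prod_j(1+\lfloor a/\lambda_j\rfloor)$. The price is Minkowski's second theorem plus index bookkeeping. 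The paper's doubling argument, by contrast, is basis-free, and as the paper notes it applies to any fixed dilate of any origin-symmetric convex body.
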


\begin{lemma}[Fixed-rank constructive properization]\label{lem:carrier-properization}
Fix $r\ge1$ and $t\ge1$. From a symmetric integer GAP $P$ of rank at most $r$
and coefficient box $B$, one can construct a
symmetric $t$-proper GAP $Q\supseteq P$ of rank at most $r$ and
coefficient volume $O_{r,t}(|B|)$, in time polynomial in the bit
length of the description of $P$.
\end{lemma}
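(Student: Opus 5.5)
The plan is to follow the quotient argument of Tao--Vu \cite[Sections 3--4]{TV} and make each step effective with the fixed-dimension lattice tools, Theorem~\ref{thm:import-lattice-reduction} and Lemma~\ref{lem:carrier-full-lattice-box}.

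Write $P=\rho(B\cap\Z^r)$ with $\rho(\vec z)=\sum_j z_jq_j$, and put $\Gamma=\ker\rho\subseteq\Z^r$. Since $\rho$ maps into a torsion-free group, $\Gamma$ is a lattice of rank $r-\operatorname{rank}\operatorname{span}_{\Q}\{q_j\}$. A basis of $\Gamma$ can be computed by Hermite normal form in polynomial time.

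Properness at scale $t$ fails exactly when $\Gamma$ contains a nonzero point of $2t\cdot B$. I would therefore iterate the following step.

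\textbf{Case 1: $\Gamma\cap 2t\cdot B=\{0\}$.} Then $P$ is already $t$-proper and we stop.

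\textbf{Case 2: $\Gamma$ has a nonzero point in $2t\cdot B$.} Let $\Gamma'=\operatorname{span}_{\R}(\Gamma\cap M\cdot B)\cap\Z^r$ for a suitably chosen constant $M=M_{r,t}$. This is a primitive sublattice of rank $k\ge1$. Pass to the quotient $\Z^r/\Gamma'\cong\Z^{r-k}$, using an explicit unimodular completion of a basis of $\Gamma'$. The map $\rho$ factors through this quotient because $\Gamma'\subseteq\ker\rho$.

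Let $K$ be the projection of the symmetric box $B$ into $\R^r/\operatorname{span}\Gamma'$. Apply Lemma~\ref{lem:carrier-full-lattice-box} to $K$ and the quotient lattice. This gives a box $C'$ in a new basis with $K\subseteq C'\subseteq\gamma_{r}\cdot K$, and the counting bound transfers to $C'$. The new GAP $\rho'(C'\cap\Z^{r-k})$ contains $P$ and has rank $r-k<r$.

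For the volume, note that the fibres of the projection restricted to $M\cdot B$ contain a fundamental domain's worth of $\Gamma'$-points. So
\[
|C'\cap\Z^{r-k}|\le\gamma_r|K\cap\text{lattice}|=O_{r,t}(|B|).
\]
The rank drops by at least one each round, so there are at most $r$ rounds. This gives coefficient volume $O_{r,t}(|B|)$ and rank at most $r$.

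The real obstacle is choosing the threshold $M$ so that the process actually terminates in a $t$-proper GAP rather than cycling. After quotienting, a new short kernel vector could appear at scale $2t$ relative to $C'$. This happens because $C'$ is up to $\gamma_r$ times larger than $K$, and $K$ itself can pick up kernel elements.

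Tao--Vu handle this with a descending sequence of scales $M_r\gg M_{r-1}\gg\cdots$. I would do the same: use scale $M_{r'}$ at rank $r'$, with $M_{r'-1}\ge 2t\gamma_r M_{r'}$-type separation, and take $\Gamma'$ to be the full lattice spanned by kernel vectors in $M_{r'}\cdot B$.

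To keep the constructions polynomial-time, I would find $\Gamma\cap M\cdot B$'s span by enumerating successive minima of $\Gamma$ in the norm of $B$. In fixed dimension this uses the reduced basis from Theorem~\ref{thm:import-lattice-reduction}, which approximates the successive minima within $2^{O(r^2)}$; the constants $M_{r'}$ absorb this loss. All data (bases, $n_j$) have polynomial bit length because they come from reductions of rational lattices of bounded dimension.

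Finally, symmetry is preserved because every box produced is origin-symmetric, which completes the proof of Lemma~\ref{lem:carrier-properization}.
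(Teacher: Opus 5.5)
Your proposal is correct and is essentially the paper's argument: find a short relation in $\ker\rho$, pass to the torsion-free quotient, rebox the image of $B$ with Lemma~\ref{lem:carrier-full-lattice-box}, bound its quotient lattice points by lifting them into a fixed dilate of $B$, and stop after at most $r$ rounds because the rank drops each time. The paper simply quotients by one primitive kernel vector $\vec z\in 2t\cdot B$ found by fixed-dimensional integer programming, so lifts lie in $B+[-\tfrac12,\tfrac12]\vec z\subseteq(1+t)\cdot B$; this also shows your multi-scale schedule is unnecessary, since a new short relation after reboxing just triggers another round of smaller rank, the loop halts only on an exact certificate of $t$-properness, and because only coefficient volume $O_{r,t}(|B|)$ is required, one fixed scale $M\ge 2t$ already suffices.
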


The first lemma is a constructive form of the discrete John theorem
\cite[Theorem 1.6]{TV} using lattice reduction~\cite[Theorems 2.9--2.10]{GLS};
the second implements Theorem~\ref{thm:import-properization}. Both proofs are
in Appendix~\ref{app:lattice-tools}.

\subsection{Proof of Theorem~\ref{thm:carrier}}
All constants depend on $\eta$ (and on $b$ for the error probability).
Put $s=a_\eta h$, where $a_\eta$ is a sufficiently large integer chosen after
all other constants; we may assume $sL\ge2$, and since $h\ge L^\eta$,
$\log(sL)=O_\eta(\log s)$. The inflation
$s=a_\eta h$ makes every numerical threshold below hold for all values of
$h,L$, so we never discard the whole (possibly huge) multiset.
We maintain live multiplicities $y_a$, initially $c_a$, and exception
multiplicities $e_a$, initially $0$. The construction runs in phases; each
unsuccessful phase halves the quantity $|sX(\vec y)|$ and puts
$O(s\log(sL))$ copies into $E$.

\paragraph{Growth.}
Fix $\alpha<\min(1/2,\eta/2)$ and enlarge $a_\eta$ so that
$(L+1)^\alpha\le\lceil\sqrt s\rceil$. Then
\eqref{eq:carrier-doubling}--\eqref{eq:carrier-polynomial-growth} with
this $\alpha$ give constants ${a_{\mathrm{gr}}},c_*>0$ such that for every
$Y\subseteq[0,L]$ with $0\in Y$,
\begin{equation}\label{ms:growth}
 |qY|\ge c_*(q/s)^{a_{\mathrm{gr}}}|sY|\quad(\lceil\sqrt{s}\rceil\le q\le s),
 \qquad |20sY|=O_\eta(|sY|).
\end{equation}

\paragraph{The box of a phase.}
If all $y_a=0$, stop with the empty outcome. Otherwise put $X=X(\vec y)$ and
${m_{\mathrm{ph}}}=|sX|\in[s+1,sL+1]$. Compute $10sX$ by Boolean repeated
squaring; it has bounded doubling and size $O_\eta({m_{\mathrm{ph}}})$ by
\eqref{ms:growth}. Theorem~\ref{thm:import-freiman} gives a containing GAP
of bounded rank and volume $O_\eta({m_{\mathrm{ph}}})$; symmetrize it (difference with
itself) and make it $2$-proper by Lemma~\ref{lem:carrier-properization}.
Write it as $\rho_0(B\cap\Z^r)$ with $B=\prod_j[-\beta_j,\beta_j]$.

Let $\vec u_a$ be the coordinate vector of a live value $a$ (found by enumerating
the $O_\eta({m_{\mathrm{ph}}})$ points of the $2$-proper GAP once and looking up each
value). Since $ja\in10sX$
for $0\le j\le10s$, the identities $(j-1)a+a=ja+0$ and $2$-properness give
inductively that $ja$ has coordinates $j\vec u_a$. Hence every $\vec u_a$
lies in
\[
 C_0=\prod_j\bigl[-\lfloor \beta _j/(10s)\rfloor,\lfloor \beta _j/(10s)\rfloor\bigr],
\]
and we delete zero-width coordinates, so $C_0$ has integer half-widths
$b^{(0)}_j\ge1$. As $10s\cdot C_0\subseteq B$, the map $\rho_0$ is injective
on $10s\cdot C_0\cap\Z^r$; in particular repeated sums of at most $10s$
live vectors are determined by their scalar values. Moreover
$s\cdot C_0\subseteq10s\cdot C_0\subseteq B$, so
\begin{equation}\label{ms:phase-volume}
 |s\cdot C_0\cap\Z^r|\le|B\cap\Z^r|\le \gamma_1 {m_{\mathrm{ph}}},
\end{equation}
where $\gamma_1=\gamma_1(\eta)$ is the product of the volume factor of
Theorem~\ref{thm:import-freiman} (applied to $10sX$, whose size and doubling are
$O_\eta({m_{\mathrm{ph}}})$ and $O_\eta(1)$ by \eqref{ms:growth}), the factor $2^r$ lost in
symmetrizing, and the factor of Lemma~\ref{lem:carrier-properization} with $t=2$;
it depends only on $\eta$ because $r=O_\eta(1)$. The constant $a_\eta$ is chosen
after $\gamma_1$, so $2s+1>2\gamma_1$. Every coordinate width of $s\cdot C_0$ is at
least $2s+1$, so a lattice hyperplane through $0$ meets $s\cdot C_0\cap\Z^r$ in at
most $\gamma_1{m_{\mathrm{ph}}}/(2s+1)<{m_{\mathrm{ph}}}/2$ points. Consequently the live vectors
span $\R^r$ (otherwise $sX$ would correspond to such a hyperplane section),
and $\prod_jb^{(0)}_j\le s^{-r}|s\cdot C_0\cap\Z^r|\le\gamma_1{m_{\mathrm{ph}}}$.

\paragraph{Protecting the rank.}
Let ${k_{\mathrm{pack}}}=\lceil \gamma_{\mathrm{pack}}s\log(sL)\rceil$, where
\[
 \gamma_{\mathrm{pack}}=b_*\ell\gamma_2+2(2b_*\ell+1)
\]
and $b_*$, $\ell$, $\gamma_2$ are the constants introduced in the next three
paragraphs (the block-size bound, the number of blocks fed to
Lemma~\ref{lem:import-filling}, and the bound on lattice changes). They depend only
on $\eta$ and not on $\gamma_{\mathrm{pack}}$, so this is not circular. On temporary counters, greedily pack
${k_{\mathrm{pack}}}+1$ disjoint real bases of $r$ copies each (scan the types with
positive temporary count, adding a vector when it enlarges the span, and
debit one copy). If only $j\le{k_{\mathrm{pack}}}$ complete bases are found, move their
$rj$ copies into $E$. The types that remain live are exactly those with
positive temporary count after the $j$ complete bases; the failed scan found
a maximal independent subset of them of size less than $r$, so they span a
proper subspace $\mathcal H$, and
\begin{equation}\label{ms:rank-restart}
 |sX(\vec{y}_{\rm new})|
 \le |s\cdot C_0\cap \mathcal H\cap\Z^r|<{m_{\mathrm{ph}}}/2,
\end{equation}
and we restart. Otherwise restore the counters: deleting any ${k_{\mathrm{pack}}}$
copies now leaves a complete basis.

\paragraph{One block.}
For the rest of the phase, after every removal of one copy we test whether
$|sX(\vec y)|<{m_{\mathrm{ph}}}/2$; if so, all copies removed in the phase go to $E$
and we restart. (The support changes only when a type dies, so
$sX(\vec y)$ is recomputed only then.)
At the start of a block let $\Lambda$ be the lattice generated by the live
vectors; it has full rank while at most ${k_{\mathrm{pack}}}$ copies have been removed.
A \emph{seed} consists of one copy of each of $r$ independent live types,
followed by further live types outside the current generated lattice until
it equals $\Lambda$. The first determinant is at most
$r!\prod_jb_j^{(0)}=O_\eta({m_{\mathrm{ph}}})$ and each addition halves the index,
so the seed has $O_\eta(\log(sL))$ copies. Remove the seed and let $S$
be its set of subset sums in GAP coordinates. While $|S|<{m_{\mathrm{ph}}}/4$, remove
one copy of a live type maximizing $|(S+\vec u_a)\setminus S|$ and update
$S\leftarrow S\cup(S+\vec u_a)$.

We claim the block has at most $b_*s$ copies, where
\[
 b_*=\frac{2\ln2\cdot2^{-1/{a_{\mathrm{gr}}}}}{1-2^{-1/{a_{\mathrm{gr}}}}}+2\ln\frac{2}{c_*}+1
\]
depends only on $\eta$ (through the constants ${a_{\mathrm{gr}}},c_*$ of \eqref{ms:growth}). Apply
Lemma~\ref{lem:import-greedy} with $U=\{0\}\cup\{\vec u_a:y_a>0\}$; a positive
increment is attained only by a live type. Since $s\cdot C_0$ is proper,
$|qU|=|qX(\vec y)|$, and by \eqref{ms:growth} and the restart rule,
\begin{equation}\label{ms:residual-growth}
 |qU|\ge c_*(q/s)^{a_{\mathrm{gr}}}{m_{\mathrm{ph}}}/2\qquad(q_0=\lceil\sqrt{s}\rceil\le q\le s).
\end{equation}
Write $z=|S|$. While $z\le(c_*/4){m_{\mathrm{ph}}}(q_0/s)^{a_{\mathrm{gr}}}$, use $q=q_0$: each
step multiplies $z$ by at least $1+1/(2q_0)$, so this range takes
$O(q_0\log{m_{\mathrm{ph}}})=O(\sqrt s\,\log(sL))=O_\eta(s)$ steps. While
$z\le c_*{m_{\mathrm{ph}}}/8$, choose $q=O_\eta(s(z/{m_{\mathrm{ph}}})^{1/{a_{\mathrm{gr}}}})$ in $[q_0,s]$
with right side of \eqref{ms:residual-growth} at least $2z$; doubling $z$
from $z_*$ takes $O_\eta(s(z_*/{m_{\mathrm{ph}}})^{1/{a_{\mathrm{gr}}}})$ steps, a geometric series
of total $O_\eta(s)$. The final constant-factor range up to ${m_{\mathrm{ph}}}/4$ uses
$q=s$ and $|sU|\ge{m_{\mathrm{ph}}}/2$, taking at most $2s\ln(2/c_*)+1$ steps. The
middle range takes at most $2\ln2\cdot2^{-1/{a_{\mathrm{gr}}}}s/(1-2^{-1/{a_{\mathrm{gr}}}})$ steps plus
$O(\log(sL))$, since its largest $q$ is at most $2^{-1/{a_{\mathrm{gr}}}}s$; the first range and
the seed take $O(\sqrt s\log(sL))$ copies. As $\log(sL)=O_\eta(\log s)$ and
$a_\eta$ is chosen after $b_*$, these lower-order terms are at most $s$, which
proves the claim. The completed block
support contains $0$, generates $\Lambda$, has at least ${m_{\mathrm{ph}}}/4$ points, and
lies in $b_*s\cdot C_0$; its sums use distinct copies.

\paragraph{Blocks with a common lattice.}
The lattices $\Lambda_j$ remaining after $j$ blocks form a descending chain
of full-rank lattices of covolume at most $r!\prod_jb^{(0)}_j\le r!\,\gamma_1{m_{\mathrm{ph}}}$
(each contains an independent set from $C_0$), and a strict inclusion at least
doubles the covolume. So there are at most
$\log_2(r!\,\gamma_1(sL+1))\le\gamma_2\log(sL)$ strict changes, for a constant
$\gamma_2=\gamma_2(\eta)$, and among
${j_{\mathrm{tot}}}=\lceil\ell(\gamma_2\log(sL)+1)\rceil$ blocks some $\ell$
consecutive ones have equal starting and ending lattice $\Lambda$. Here $\ell$
and a width threshold $w_*$ are the constants $\ell_0$ and $w_0$ of
Lemma~\ref{lem:import-filling} for the density
\[
 c_{\rm dens}=\frac{1}{4\gamma_1\gamma_r(b_*+1)^r}
\]
(where $\gamma_r$ is the constant of Lemma~\ref{lem:carrier-full-lattice-box}),
maximized over the finitely many possible ranks $r=O_\eta(1)$ (the rank bound of
Theorem~\ref{thm:import-freiman} for the doubling constant given by
\eqref{ms:growth}); thus $\ell$ and $w_*$ depend only on $\eta$. Stop at the first such run, let the
reserve $R$ be these $\ell$ blocks, and move all earlier blocks into $E$.
By the choice of $\gamma_{\mathrm{pack}}$, ${k_{\mathrm{pack}}}\ge b_*s{j_{\mathrm{tot}}}+h$: indeed
$b_*s{j_{\mathrm{tot}}}+h\le s\bigl(b_*\ell\gamma_2\log(sL)+2b_*\ell+1\bigr)$ since $h\le s$,
and $1\le2\log(sL)$ since $sL\ge2$.
Then every rank assertion above is covered by the packed bases, and so is the
deletion of $R$, the earlier blocks, and $h$ further copies; this gives
the robustness in~(3). The remaining live vectors generate the final
lattice $\Lambda$ of the run, and the reserve generates $\Lambda$ because it
contains seeds.

\paragraph{Filling.}
Apply Lemma~\ref{lem:carrier-full-lattice-box} to $C_0$ and $\Lambda$. It
returns a basis of $\Lambda$, which identifies $\Lambda$ with $\Z^r$, and an
integer box $C$ with $C_0\subseteq C\subseteq \gamma_r\cdot C_0$. By the dilation
bound of that lemma, $\Lambda\subseteq\Z^r$, a coordinatewise count, and
\eqref{ms:phase-volume},
\[
 |b_*s\cdot C\cap\Lambda|\le\gamma_r|b_*s\cdot C_0\cap\Z^r|
 \le\gamma_r(b_*+1)^r|s\cdot C_0\cap\Z^r|\le\gamma_1\gamma_r(b_*+1)^r{m_{\mathrm{ph}}}.
\]
Each block support lies in $b_*s\cdot C$, contains $0$, generates $\Z^r$, and has
at least ${m_{\mathrm{ph}}}/4$ points, hence density at least $c_{\rm dens}$ in this box.
The box has integer half-widths $b_*sb_j$, which exceed $w_*$ because $a_\eta$ is
chosen after $w_*$. Lemma~\ref{lem:import-filling} with these $\ell$ supports
gives a translate of $\gamma_{\rm fill}\ell b_*s\cdot C$ inside $S_1+\cdots+S_\ell$,
where $\gamma_{\rm fill}$ is the constant $\gamma$ of that lemma. With
$\gamma_0=\min(\gamma_{\rm fill}\ell b_*,\,10/\gamma_r)$ this gives
\[
 \vec{z}_0+(\gamma_0s\cdot C\cap\Z^r)\subseteq S_1+\cdots+S_\ell,
 \qquad \gamma_0s\cdot C\subseteq\gamma_0\gamma_r s\cdot C_0\subseteq 10s\cdot C_0.
\] Thus $\rho$, which is $\rho_0$ written in the new
basis, is injective on this box. Since the blocks use disjoint copies,
$S_1+\cdots+S_\ell$ is exactly the set of reserve subset sums in GAP
coordinates; this proves \eqref{ms:filling} with $\gamma=\gamma_0a_\eta$.
To make the filling explicit, compute the reserve sums with predecessor
pointers in a box of size $O_\eta({m_{\mathrm{ph}}})$, find a fully covered translate of
the box by one Boolean correlation, and read off witnesses.

\paragraph{Counting and time.}
A failed packing or failed greedy phase moves $O(s\log(sL))$ copies into
$E$ and halves $|sX(\vec y)|$, which starts at most $sL+1$ and never
increases. Hence there are $O(\log(sL))$ failed phases; the successful
phase adds $O(s\log(sL))$ copies to $E$ and $O(s)$ to $R$. This proves
\eqref{ms:exceptions} and the reserve bound.

Only $\widetilde O(s)$ copies are removed overall. Each recomputation of a
repeated sumset costs $\widetilde O(sL)$; each greedy step uses one FFT
correlation in a fixed dilate of $s\cdot C_0$, of size
$O_\eta({m_{\mathrm{ph}}})=O(sL)$, followed by an $O(m)$ scan. Basis packing uses
$\widetilde O(ms)$ counter operations, seeds and lattice chains use
fixed-rank Hermite normal forms, and there are $O(\log(sL))$ Freiman and
properization calls. The final witness computation costs
$\widetilde O(s^2L)$. Altogether this is
$\widetilde O(mh+h^2L)$. The output (containment, properness on the
box, and the filled translate) is verified explicitly, and the Freiman calls
are repeated on failure, which gives the error bound.
\qed

\subsection{Vector items and the old-vector map}\label{sec:vector-preprocessing}
The GAP construction takes positive scalars. To apply it to vector items in
$P=\prod_{j=1}^d[-p_j,p_j]$, first delete zero items and coordinates with
$p_j=0$ (after checking the corresponding target coordinates). Fix
$\gamma_3$ larger than four times one plus the reserve-size constant, and use
the mixed-radix encoding
\[
 t_j=2\lceil\gamma _3hp_j\rceil+1\quad(1\le j<d),\qquad
 \operatorname{enc}(\vec v)=v_1+t_1v_2+\cdots+(t_1\cdots t_{d-1})v_d,
\]
which is injective on $\gamma _3h\cdot P\cap\Z^d$ by successive reduction
modulo the radices. Replace every item with $\operatorname{enc}(\vec v_i)<0$ by
$-\vec v_i$ and complement its selection; this translates the targets, their
window, and the zonotope by the same vector $-\vec s$, where $\vec s$ is the
sum of the replaced items, and is an exact change of variables. Equal
positive encodings then mean equal reoriented vectors; merge them into types
with counts, recording constituents for witness recovery. In the set
recursion each count is at most two, because equal oriented vectors have
equal oriented original values. The scalar inputs are the values
$\operatorname{enc}(\vec v_i)$.

\begin{lemma}[Recovering the full old-vector map]\label{lem:carrier-old-map}
For a preprocessed vector input, the nonempty output of
Theorem~\ref{thm:carrier} admits an explicitly computable integer matrix $F$
with $F\vec{u}_a=\vec{v}_a$ for every live type and $\operatorname{enc}\circ F=\rho$.
\end{lemma}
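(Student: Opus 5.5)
We construct $F$ one coordinate at a time by decoding $\rho$ through the mixed-radix encoding. We rely on three facts from Theorem~\ref{thm:carrier}. First, $\rho$ is injective on $\gamma h\cdot C\cap\Z^r$. Second, the reserve vectors, and also the remaining vectors, generate $\Z^r$. Third, every live type satisfies $\rho(\vec u_a)=\operatorname{enc}(\vec v_a)$.

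\textbf{Step 1: a relation-preserving homomorphism.} We first show that the assignment $\vec u_a\mapsto\vec v_a$ extends to a homomorphism $F:\Z^r\to\Z^d$. Since the $\vec u_a$ generate $\Z^r$, it suffices to check that every integer relation $\sum_a\lambda_a\vec u_a=0$ implies $\sum_a\lambda_a\vec v_a=0$.

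\begin{itemize}
\item It is enough to check this for a generating set of the relation module. We choose one that is computable and consists of short relations.
\item Fix a basis $\vec u_{a_1},\dots,\vec u_{a_r}$ of a full-rank sublattice taken from the seed. Its determinant is $O_\eta(m_{\rm ph})$, and the remaining seed vectors extend it to generate $\Z^r$.
\item Every other $\vec u_a$ then satisfies a relation obtained from a Hermite normal form expression in the seed vectors. Its coefficients come with a bound, but that bound is polynomial rather than $O(h)$.
\end{itemize}

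This is where the main obstacle lies. A relation with large coefficients cannot be transported through the injectivity window directly.

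\textbf{Step 2: define $F$ on a basis instead of checking relations.} We therefore change strategy. Take the output basis $\vec e_1,\dots,\vec e_r$ of $\Z^r$ and use the filled box. For each $j$, both $\vec z_0$ and $\vec z_0+\vec e_j$ lie in $\vec z_0+(\gamma h\cdot C\cap\Z^r)$, because every half-width satisfies $b_j\ge1$ and $\gamma h\ge1$. Each therefore has an explicit reserve witness. Call these witnesses $W_j^+$ and $W_j^0$. Define
\[
 F\vec e_j=\sum_{a}(W_j^+-W_j^0)_a\,\vec v_a .
\]
By construction $\operatorname{enc}\circ F=\rho$ holds on the basis, provided $\operatorname{enc}$ is additive here. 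Additivity holds because $\operatorname{enc}$ is a linear map $\Z^d\to\Z$ given by the radix weights.

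\textbf{Step 3: verify $F\vec u_a=\vec v_a$ for every live type.} Both sides of the check are expressed in terms of the $\vec u_a$ and $\vec v_a$.

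\begin{itemize}
\item Write $\vec u_a=\sum_j u_{a,j}\vec e_j$ with $|u_{a,j}|\le b_j$. Then $F\vec u_a-\vec v_a$ is a signed combination of original vectors, namely $\vec v_a$ together with the reserve witnesses weighted by $u_{a,j}$.
\item The total number of copies involved is at most $1+\sum_j b_j\cdot O_\eta(h)$, which is not small. We must instead bound the coordinates of $F\vec u_a-\vec v_a$ directly.
\item The pieces $\sum_j u_{a,j}(\text{witness sums})$ correspond to the GAP point $\vec u_a$ itself. The encoding applied to this difference gives $\rho(\vec u_a)-\operatorname{enc}(\vec v_a)=0$.
\item So $F\vec u_a-\vec v_a$ lies in $\ker\operatorname{enc}$. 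It suffices to show it lies in the box $\gamma_3h\cdot P$, where $\operatorname{enc}$ is injective by successive reduction modulo the radices $t_j$. That forces the difference to be $0$.
\end{itemize}

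\textbf{Step 4: the size bound.} To get the bound, we replace the naive per-basis witnesses by witnesses chosen for the specific point. For each live $a$, both $\vec z_0$ and $\vec z_0+\vec u_a$ lie in the filled box, since $\vec u_a\in C\subseteq\gamma h\cdot C$ once $\gamma h\ge1$. So there are reserve subsets $W^0$ and $W^a$ with coordinate sums $\vec z_0$ and $\vec z_0+\vec u_a$. The vector $\vec w=\Sigma\vec v(W^a)-\Sigma\vec v(W^0)-\vec v_a$ uses at most $|R|+1=O_\eta(h)$ copies, each in $P$. Hence $\vec w\in(2|R|+1)\cdot P\subseteq\gamma_3h\cdot P$, by the choice of $\gamma_3$ exceeding four times one plus the reserve constant.

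Its encoding is $\rho(\vec z_0+\vec u_a)-\rho(\vec z_0)-\rho(\vec u_a)=0$, so $\vec w=0$. The same argument shows that the map $\vec z\mapsto\Sigma\vec v(\text{witness of }\vec z_0+\vec z)-\Sigma\vec v(\text{witness of }\vec z_0)$ is well defined and additive on small points: the difference of any two representations has encoding $0$ and lies in the injective window. Consequently this map agrees with the linear $F$ from Step 2 on every $\vec u_a$, after applying it to $\vec e_j$ and to sums of two small points.

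Computing $F$ requires only $r$ witness lookups and $O(hd)$ arithmetic operations. The delicate point is keeping every compared difference inside $\gamma_3h\cdot P$, which is exactly what the size bound of Step 4 provides.
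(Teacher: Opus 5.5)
Your final argument (Steps 2 and 4) is correct and is essentially the paper's proof. The map sending $\vec z$ to the old-vector witness sum of $\vec z_0+\vec z$ minus that of $\vec z_0$ is well defined and additive on the filled box, because every compared combination has encoding zero and lies in $4|R|\cdot P\subseteq\gamma_3h\cdot P$; comparing the witnesses of $\vec z_0$ and $\vec z_0+\vec u_a$ then gives $F\vec u_a=\vec v_a$. The paper phrases the linearity step through unit squares and vanishing second differences rather than your additivity along a lattice path inside the box, and your Step 1 relation-checking attempt and Step 3 naive coefficient bound are unnecessary and can simply be dropped.
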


\begin{proof}
For $\vec q$ in the filled box, let $f(\vec q)$ be the old-vector sum of its
reserve witness. Two witnesses have the same encoded value and use $O(h)$
items, so injectivity of $\operatorname{enc}$ on $\gamma_3h\cdot P$ makes $f$ well
defined. For a unit square $\vec q,\vec q+\vec e_j,\vec q+\vec e_l,\vec q+\vec e_j+\vec e_l$
in the box the encoded sums of opposite corners agree, so the same
injectivity gives
$f(\vec q)+f(\vec q+\vec e_j+\vec e_l)=f(\vec q+\vec e_j)+f(\vec q+\vec e_l)$;
three collinear points give vanishing second differences. Hence
$f(\vec q)=\vec w+F\vec q$ on the box with $F$ integral. The box has widths
$\Omega(h)$ times those of $C$, so for each live $\vec u_a\in C$ there is
$\vec q$ with $\vec q,\vec q+\vec u_a$ in the box; comparing encodings,
$\operatorname{enc}(f(\vec q+\vec u_a)-f(\vec q))=\rho(\vec u_a)=\operatorname{enc}(\vec v_a)$, and
injectivity gives $F\vec u_a=\vec v_a$. Adjacent points give
$\operatorname{enc}(F\vec e_j)=\rho(\vec e_j)$. Witnesses for $r+1$ points determine $F$,
and one scan verifies all identities.
\end{proof}

The filling in GAP coordinates (not merely a scalar progression inside the
projected reserve sums) is what makes this argument work.

\subsection{From a filled box to a filled zonotope}\label{sec:filled-zonotope}

The rounding step is the two-point case of the Shapley--Folkman
theorem~\cite{ZhouSF}; the correction by a reserve follows Chen, Hu, Mao, and
Zhang~\cite[Section 4, Lemmas 30--31]{CHMZDense}, and
Conlon--Fox--Pham~\cite[Lemmas 4.2 and 4.6]{CFP} use a related passage from
subset sums plus a box to zonotope neighbourhoods.

\begin{lemma}[Filling a zonotope using reserved items]
\label{lem:filled-zonotope}
Let $\Lambda\subset\R^r$ be a full-rank lattice, let $C\subset\R^r$
be a nonempty compact origin-symmetric convex set, and let $U$ be a finite
multiset in $C\cap\Lambda$. Suppose a submultiset $R\subseteq U$ and a point
$\vec{z}_0\in\Lambda$ satisfy
$\vec{z}_0+(r\cdot C\cap\Lambda)\subseteq\mathcal{S}(R)$.
Put $J=\vec{z}_0+Z(U\setminus R)$. Then
\[
 J\cap\Lambda\subseteq\mathcal{S}(U)\subseteq Z(U)\cap\Lambda,
 \qquad J\subseteq Z(U)\subseteq J+|R|\cdot C,
\]
and every $\vec{z}\in\Lambda$ with $\vec{z}+|R|\cdot C\subseteq Z(U)$ lies in
$J\cap\Lambda$. The same holds when $U$ is stored by distinct vectors with
counts, and the witnesses respect the counts.
\end{lemma}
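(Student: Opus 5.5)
The plan is to prove the four assertions in turn. The central one is $J\cap\Lambda\subseteq\mathcal S(U)$, and it uses a vertex/rounding argument for the unreserved items. The other three are nearly formal.

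\emph{Trivial inclusions.} Every subset sum of $U$ is a lattice point, since $U\subseteq\Lambda$, and it is a vertex-type point of $Z(U)$, so $\mathcal S(U)\subseteq Z(U)\cap\Lambda$. Next, $\vec z_0\in\mathcal S(R)\subseteq Z(R)$ (take the zero vector of $r\cdot C$). Since $Z(U)=Z(R)+Z(U\setminus R)$, this gives $J\subseteq Z(U)$.

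\emph{Reverse containment.} For $Z(U)\subseteq J+|R|\cdot C$, write a point of $Z(U)$ as $\vec x_R+\vec x'$ with $\vec x_R\in Z(R)$ and $\vec x'\in Z(U\setminus R)$. Both $Z(R)$ and $\vec z_0$ lie in $|R|\cdot C$, because each element of $R$ lies in the convex symmetric set $C$; hence $\vec x_R-\vec z_0\in 2|R|\cdot C$. To get the sharper $|R|\cdot C$ I would instead use $\vec z_0\in Z(R)$ together with $Z(R)-Z(R)=\sum_{\vec u\in R}[-\vec u,\vec u]\subseteq |R|\cdot C$. Then $\vec x_R-\vec z_0\in |R|\cdot C$, and the point equals $(\vec z_0+\vec x')+(\vec x_R-\vec z_0)\in J+|R|\cdot C$.

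\emph{Main step: $J\cap\Lambda\subseteq\mathcal S(U)$.} Let $\vec z\in J\cap\Lambda$ and write $\vec z-\vec z_0=\sum_{i\in U\setminus R}\lambda_i\vec u_i$ with $\lambda\in[0,1]^{U\setminus R}$. Among all such $\lambda$, choose a vertex of the polytope of feasible coefficient vectors. This polytope is cut out by $r$ equality constraints inside a cube, so a vertex has at most $r$ coordinates strictly between $0$ and $1$; this is the two-point Shapley--Folkman step. Round every fractional coordinate down to $0$, and let $I_1$ be the set of coordinates equal to $1$. The error
\[
\vec e=\vec z-\vec z_0-\sum_{i\in I_1}\vec u_i=\sum_{i\ \mathrm{frac}}\lambda_i\vec u_i
\]
then lies in $r\cdot C$, because it is a combination of at most $r$ vectors of $C$ with coefficients in $[0,1]$ and $C$ is convex and symmetric. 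It also lies in $\Lambda$, since $\vec z$, $\vec z_0$, and the $\vec u_i$ do.

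By hypothesis, $\vec z_0+\vec e\in\mathcal S(R)$. Take a witness $R'\subseteq R$ for this sum. Then $\vec z=\Sigma(R')+\sum_{i\in I_1}\vec u_i$, and since $R'\subseteq R$ and $I_1\subseteq U\setminus R$ are disjoint, this is a genuine subset sum of $U$. With counts, the same argument applies to the polytope $\{0\le\lambda_a\le c_a\}$ over distinct types; a vertex still has at most $r$ non-integral coordinates, so floor-rounding gives an integer selection within the counts.

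\emph{Last claim.} Suppose $\vec z\in\Lambda$ and $\vec z+|R|\cdot C\subseteq Z(U)$. I must show $\vec z\in J$, i.e.\ $\vec z-\vec z_0\in Z(U\setminus R)$. My first attempt: assume not, and separate by a functional $\vec a$ with $\vec a\cdot(\vec z-\vec z_0)>h_{Z(U\setminus R)}(\vec a)$. I would then compare support functions:
\[
h_{Z(U)}(\vec a)=h_{Z(U\setminus R)}(\vec a)+h_{Z(R)}(\vec a)\le h_{Z(U\setminus R)}(\vec a)+\vec a\cdot\vec z_0+h_{|R|\cdot C}(\vec a).
\]
The inequality holds because $Z(R)-\vec z_0\subseteq Z(R)-Z(R)\subseteq|R|\cdot C$. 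On the other hand, the hypothesis gives $h_{Z(U)}(\vec a)\ge\vec a\cdot\vec z+h_{|R|\cdot C}(\vec a)$. Combining the two yields $\vec a\cdot(\vec z-\vec z_0)\le h_{Z(U\setminus R)}(\vec a)$, contradicting the choice of $\vec a$. Hence $\vec z\in J$, and being a lattice point, $\vec z\in J\cap\Lambda$.

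I expect the main obstacle to be the vertex/rounding step. It needs care in showing that the error lies in $r\cdot C$ rather than merely in $r$ copies of $C$ shifted, and in handling multiplicities so that rounding respects the counts.
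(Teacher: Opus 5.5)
Your proposal is correct and follows the paper's proof essentially step for step: an extreme-point fractional representation by the unreserved items with at most $r$ fractional coefficients, rounding so the error lies in $r\cdot C\cap\Lambda$ and is absorbed by a disjoint reserve witness, $Z(R)-\vec z_0\subseteq|R|\cdot C$ for the sandwich, and a support-function comparison for the last claim (your separation-by-contradiction is the same inequality chain). One small precision for the count version: the box constraints must be the unreserved counts $c_a-r_a$ rather than the full counts, so that the floor-rounded selection together with the reserve witness uses at most $c_a$ copies of each type.
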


\begin{proof}
For $\vec{z}\in J\cap\Lambda$, write $\vec{z}-\vec{z}_0$ as a fractional
subset sum of $U\setminus R$ with one variable in $[0,1]$ per item, and take
an extreme point of this polytope. At most $r$ variables are fractional,
since otherwise a linear dependence among their vectors would allow a
perturbation in both directions. Rounding them gives
$V\subseteq U\setminus R$ with
$\vec{z}-\Sigma(V)\in \vec{z}_0+(r\cdot C\cap\Lambda)$, by symmetry and
convexity of $C$. This difference is a subset sum of $R$, disjoint from $V$.
For counts, use one variable in $[0,c_i]$ per distinct vector (with $c_i$ its count in $U\setminus R$) and round the
at most $r$ nonintegral ones to adjacent integers.

Since $\vec{z}_0\in\mathcal{S}(R)\subseteq Z(R)$, $J\subseteq Z(U)$.
Subtracting the incidence vector of a subset of $R$ with sum $\vec z_0$ from
any fractional selection of $R$ leaves coefficients in $[-1,1]$, so
$Z(R)-\vec{z}_0\subseteq |R|\cdot C$ and $Z(U)\subseteq J+|R|\cdot C$. Finally, if
$\vec{z}+|R|\cdot C\subseteq Z(U)$, then for every $\vec a$,
$\vec{a}\cdot \vec{z}+|R| h_C(\vec{a})\le h_{Z(U)}(\vec{a})\le h_J(\vec{a})+|R| h_C(\vec{a})$,
so $\vec a\cdot\vec z\le h_J(\vec a)$ for all $\vec a$ and $\vec z\in J$.
\end{proof}

\section{The boundary reduction}\label{sec:normal}

\paragraph{Setting.}
\label{par:normal-input}\label{par:normal-representation}
Fix $d$. The items $V=\{\vec{v}_1,\ldots,\vec{v}_n\}\subseteq\Z^d$ span $\R^d$
and lie in $P=\prod_{j=1}^d[-p_j,p_j]$, $p_j\in\Z_{\ge1}$, so
$h_P(\vec a)=\sum_jp_j|a_j|$. The scale $\tau$ exceeds a sufficiently large
constant depending on $d$, and the exact target set
$\mathcal T\subseteq\vec{o}+\tau\cdot P$ is stored in an array of allocated
size $w=|\vec o+\tau\cdot P|$. All numbers have polynomially
bounded bit length, and $0<\xi<1$ is a failure allowance. The working
lattice is $\Z^d$; when an operation chooses a basis of a lattice $\Lambda$
we use its coefficients as new integer coordinates, measure volumes in
lattice units $\vol(\cdot)/\det\Lambda$, and count array entries in the basis
actually used. Write
\[
 Z=\sum_{i=1}^n[\vec{0},\vec{v}_i],\qquad
 h_Z(\vec a)=\sum_{i=1}^n\max(0,\vec a\cdot\vec v_i).
\]
For a selection $x\in\{0,1\}^n$ and a direction $\vec a$, the
\emph{deficit} $h_Z(\vec a)-\vec a\cdot\sum_ix_i\vec v_i$ is a sum of the
nonnegative item losses
\begin{equation}\label{eq:item-loss}
 g_{i}^{x_i}(\vec a)=\max(0,\vec a\cdot\vec v_i)-x_i\,(\vec a\cdot\vec v_i),
 \qquad g_i^0(\vec a)+g_i^1(\vec a)=|\vec a\cdot\vec v_i|.
\end{equation}
This identity is the source of all bounds on how a solution can differ from
a reference selection near the boundary.

\paragraph{Classification.}
\label{par:normal-infeasible}\label{par:normal-interior}\label{par:normal-boundary}
Fix small constants $\vartheta>0$ and $\varpi=c_d\vartheta$, chosen in the
proof. Partition $\vec o+\tau\cdot P$ into a constant number of grid cells,
so small that differences of two points of a cell lie in
$c'_d\varpi\tau\cdot P$, where $c'_d\le\lambda_d$ (the constant of Step~1 of
the proof) and $c'_dc_d<1/100$; thus the targets of a cell lie in
$\vec x+\sigma\tau\cdot P$ with $\sigma<\vartheta/100$ for any point $\vec x$ of
the cell. For each cell:
\begin{enumerate}
\item If the cell does not meet $Z$ (a convex feasibility test), it is
\emph{infeasible} and its targets are discarded.
\item Otherwise pick a rational $\vec x\in Z$ in the cell. If all $2^d$
corners of $\vec x+\vartheta\tau\cdot P$ lie in $Z$, then by convexity
$\vec x+\vartheta\tau\cdot P\subseteq Z$; the cell with its targets is a
\emph{buffered interior patch}, handled in Section~\ref{sec:deep}. Each of
its targets $\vec t$ satisfies $\vec t+(\vartheta-c'_d\varpi)\tau\cdot P\subseteq Z$.
\item Otherwise a failed corner yields a support normal $\vec a$ with
\begin{equation}\label{eq:normal-center}
 0\le h_Z(\vec a)-\vec a\cdot\vec x\le\vartheta \tau h_P(\vec a),
\end{equation}
and $\vec x$ is a \emph{normal center}; the cell is a \emph{boundary cell}.
\end{enumerate}
Membership and separation queries for $Z$ need no facet list: the support
function above has an $O(n)$ evaluation, and
Theorem~\ref{thm:import-geometry}(i) applies.

\begin{lemma}[Boundary reduction]\label{lem:normal}
In the setting above, for a sufficiently small fixed choice of
$\vartheta,\varpi>0$, there is an algorithm that partitions $\mathcal T$ into
infeasible cells, buffered interior patches, and normal patches, and replaces
the normal patches by child instances with the following properties. Every
child uses a subset of the original items, with signs possibly changed,
keeps the scale $\tau$, and has dimension equal to the rank of its items.
Every child solution gives a parent solution; with probability at least
$1-\xi$, every parent solution for a target in a normal patch survives in
some child. The children have total allocated size at most $w/2$, and the
classification and reduction cost
$\softO{n+w}$, with polynomial dependence on $\log(1/\xi)$ and on the
bit length.
\end{lemma}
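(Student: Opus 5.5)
The plan is to handle each boundary cell through a local patch around its normal center, reanchor at an endpoint selection adapted to that patch, and bound the size of the resulting children by a Macbeath-region packing.

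\textbf{Step 1: grouping cells into patches.} Classification costs $\softO{n+w}$, since there are $O_d(1)$ cells and each membership or separation query costs $O(n)$ via Theorem~\ref{thm:import-geometry}(i). For the boundary cells, I would take the Macbeath region $M=M_Z^{\lambda}(\vec x)$ of each normal center $\vec x$, with $\lambda$ a small constant. I would then extract a maximal subfamily whose shrunken copies $M_Z^{1/5}$ are pairwise disjoint. By Lemma~\ref{lem:import-macbeath}, every boundary target lies in the fixed enlargement $M_Z^1(\vec x)$ of some chosen center. Each such target is then assigned to one chosen center, which forms the \emph{normal patch} of that center.

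\textbf{Step 2: reference selection and heavy items.} For a patch I would use the $d$ normal directions $\vec a^{(1)},\dots,\vec a^{(d)}$ of the facets of the John-type ellipsoid of $M$, obtained from Theorem~\ref{thm:import-geometry}(ii). Each direction gets a budget $D_j$ of order $\vartheta\tau h_P(\vec a^{(j)})$ plus the width of $M$ in that direction. Item by item, I would choose the reference bit $b_i$ minimizing $\sum_j g_i^{b_i}(\vec a^{(j)})/D_j$ (see \eqref{eq:item-loss}).

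Any solution $x$ hitting a patch target has deficit at most $D_j$ in each direction. Each flipped item contributes $|\vec a^{(j)}\cdot\vec v_i|$ to the deficit in some direction, up to the loss already paid by the reference bit. Call an item heavy if some ratio $|\vec a^{(j)}\cdot\vec v_i|/D_j$ exceeds $1/\tau$. Then only $O_d(\tau)$ heavy items can flip, and the flipped light items have total absolute contribution $O_d(D_j)$ in each direction.

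Next, reanchor by \eqref{eq:reanchor}. Compute the relevant heavy-flip sums $S_H$ by Lemma~\ref{lem:quota} in the coordinates $(\vec a^{(j)}\cdot\,)$, then replace the targets by $(\mathcal T\cap\text{patch}-\vec b)-S_H$ and delete all heavy items. Correctness follows in both directions: every child solution lifts, and every parent solution survives with probability $1-\xi$. The light items have $|\vec a^{(j)}\cdot\vec v_i'|\le D_j/\tau$, so in the basis dual to the $\vec a^{(j)}$, after lattice boxing by Lemma~\ref{lem:carrier-full-lattice-box}, they lie in a box $P'$. The residual targets lie in a translate of $O_d(\tau)\cdot P'$. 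Finally, restrict to the span of the surviving items, which gives a dimension equal to the rank.

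\textbf{Step 3: size bound (the main obstacle).} The child's allocated size is $O_d(\vol(M_Z^1(\vec x)))$ in lattice units, up to boxing losses of $O_d(1)$ and integrality rounding. The boxing rounding is harmless because $M$ contains $\sigma\tau\cdot P$-sized pieces only when it is not too thin; I expect to handle degenerate thin directions by the span restriction. The shrunken regions $M_Z^{1/5}(\vec x)$ are disjoint. By \eqref{eq:normal-center}, they lie in the shell $Z\setminus(\text{inner body})$ of relative thickness $O_d(\vartheta)$ intersected with the window $\vec o+\tau\cdot P$.

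The shell volume inside the window needs the estimate $O_d(\vartheta)\,\vol(\tau\cdot P)$. This holds because the window is a box of size comparable to the boundary buffer and the boundary of a convex body meets a box in surface area $O_d(\vol/\text{width})$. Summing gives $\sum w_{\rm child}\le\gamma_d\vartheta w$, and choosing $\vartheta<1/(2\gamma_d)$ gives $w/2$. Making the constant $\gamma_d$ independent of $\vartheta$, and controlling the enlargement factor $5^d$, the boxing factor, and the lattice-unit conversion simultaneously, is the delicate part. I would first try to bound each child by $\gamma_d\vol(M_Z^{1/5}(\vec x)\cap(\vec o+2\tau\cdot P))$ and then apply the disjointness.
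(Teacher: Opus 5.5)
\textbf{There is a genuine gap.} It sits where you flagged ``the delicate part'': you work with Macbeath regions of the bare zonotope $Z$. These can be arbitrarily thin; a normal center may even lie on $\partial Z$, where $M_Z(\vec x)$ has empty interior. They also need not stay inside the window. Three steps fail as a result.
\begin{enumerate}
\item \emph{Covering.} Lemma~\ref{lem:import-macbeath} only shows that the enlarged selected regions cover the \emph{candidate} regions $M_Z^{1/5}(\vec y)$. A boundary cell of size $\sigma\tau\cdot P$ is not contained in the thin region of its center, so its targets need not be covered.
\item \emph{Boxing.} Lemma~\ref{lem:boxing} needs $d$ independent lattice vectors with $\ell\vec u_i\in M$ and $\ell\ge1$. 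For a region thinner than one lattice layer the enclosing box still has half-widths at least one, so the child size is not $O(\vol(M)/\det\Lambda)$ and the packing sum collapses. Restricting to the span of the items does not help: this thinness comes from the position of $\vec x$ relative to $\partial Z$, in directions where the items span fully.
\item \emph{Window.} Regions extending far beyond $\vec o+\tau\cdot P$ are not controlled by the shell-in-window volume estimate.
\end{enumerate}
The missing idea is to replace $Z$ by
\[
 K=(Z+\varpi\tau\cdot P)\cap\bigl(\vec o+(1+{a_{\mathrm{buf}}})\tau\cdot P\bigr),\qquad \varpi=c_d\vartheta .
\]
Then $M_{\vec x}\supseteq\varpi\tau\cdot P$ for every center, so cells are covered and boxing applies with $\ell=\varpi\tau$. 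The small regions still lie in a shell of relative thickness $O(\vartheta+\varpi)=O(\vartheta)$ inside a bounded window, so the final constant has no inverse power of $\varpi$.

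\textbf{The second gap is the choice of directions.} Axes taken from the John ellipsoid of $M$ are not directions in which the deficit is small. Near a right-angled vertex of $Z$, the region $M_Z(\vec x)$ is a small square and its John ellipsoid is a disc. Along a diagonal $\vec a$, both $h_Z(\vec a)-\vec a\cdot\vec t$ and $h_Z(-\vec a)+\vec a\cdot\vec t$ are comparable to the diameter of $Z$, far above your $D_j$. So the claim that any solution hitting a patch target has deficit at most $D_j$ in each direction is false, and the $O_d(\tau)$ bound on flipped heavy items goes with it.

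The paper builds the parallelepiped $D_{\vec x}$ only from permissible rows. These are found by a row exchange driven by the separation oracle, which terminates after $O_d(\log((1+{a_{\mathrm{buf}}})/\varpi))$ replacements and gives $M_{\vec x}\subseteq D_{\vec x}\subseteq2d\cdot M_{\vec x}$. The rows are of two kinds:
\begin{enumerate}
\item \emph{Zonotope rows:} supporting directions of $Z+\varpi\tau\cdot P$, normalized by their smaller slack. These have deficit at most $\beta$ on the patch by \eqref{eq:normal-target-deficits}.
\item \emph{Artificial rows:} taken from the window. These need no deficit bound, because \eqref{eq:artificial-width} makes every item contribute at most $1/({a_{\mathrm{buf}}}\tau)$ there.
\end{enumerate}
Consequently heaviness is tested on zonotope rows only, and heavy flips contribute only $O(1/{a_{\mathrm{buf}}})$ in artificial rows.
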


\begin{lemma}[Local lattice boxing]\label{lem:boxing}
Fix $d\ge1$. There is a constant $\gamma_d$ such that the
following holds. Let $M\subset\R^d$ be an origin-symmetric convex body and
$\Lambda$ a full-rank rational lattice, and suppose that for some real
$\ell\ge1$ there are linearly independent $\vec u_1,\ldots,\vec u_d\in\Lambda$
with $\ell\vec u_i\in M$. Then there are a basis
$\vec b_1,\ldots,\vec b_d$ of $\Lambda$ and integers $n_j\ge1$ such that
$D=\{\sum_j z_j\vec b_j:|z_j|\le n_j\}$ satisfies
\[
 M\subseteq D\subseteq\gamma_d\cdot M,\qquad
 |D\cap\Lambda|\le\gamma_d\frac{\vol(M)}{\det\Lambda}.
\]
For fixed $a>0$, every translate of $a\cdot M$ has an enclosing box of this
kind with $O_{d,a}(\vol(M)/\det\Lambda)$ entries. Given a separation oracle and
polynomial radius bounds, the basis and box are computable in polynomial
time.
\end{lemma}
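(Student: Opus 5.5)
The plan is to reduce to Lemma~\ref{lem:carrier-full-lattice-box} after replacing $M$ by a polytope of bounded complexity, and to get the volume bound from Minkowski's second theorem.

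First I approximate $M$ by a polytope. By Theorem~\ref{thm:import-geometry}(ii), applied relative to $\Lambda$ with an origin-symmetric body so the centre can be taken at $0$, we obtain an ellipsoid $E$ with $E\subseteq M\subseteq\gamma_d\cdot E$. Inscribe a cross-polytope or parallelepiped $K$ with $K\subseteq E\subseteq\sqrt d\cdot K$, built on the principal axes. Then $K$ is origin-symmetric with $2d$ vertices, and $\gamma_d^{-1}\sqrt d^{-1}\cdot M\subseteq K\subseteq M$. Put $K'=\gamma_d\sqrt d\cdot K\supseteq M$.

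Next, $K'$ contains the $d$ independent lattice vectors $\vec u_i$, because $\vec u_i=\ell^{-1}(\ell\vec u_i)\in\ell^{-1}\cdot M\subseteq M$ (using $\ell\ge1$, symmetry and convexity). Lemma~\ref{lem:carrier-full-lattice-box} then gives a basis $\vec b_j$ of $\Lambda$ and a box $D$ with $K'\subseteq D\subseteq\gamma_d\cdot K'\subseteq\gamma_d'\cdot M$, which gives the containments. The same lemma with $a=1$ gives $|D\cap\Lambda|\le\gamma|K'\cap\Lambda|\le\gamma|\gamma_d'\cdot M\cap\Lambda|$.

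The remaining step, and the main obstacle, is to bound $|c\cdot M\cap\Lambda|$ by $O_d(\vol(M)/\det\Lambda)$. This fails for bodies that are thin relative to $\Lambda$, and that is exactly what the hypothesis rules out. Let $\lambda_1\le\dots\le\lambda_d$ be the successive minima of $M$ with respect to $\Lambda$. The vectors $\ell\vec u_i\in M$ show that $\lambda_d\le1/\ell\le1$. The standard counting bound (Henk's, or Betke--Henk--Wills) gives
\[
|c\cdot M\cap\Lambda|\le\prod_i\Bigl(\frac{2c}{\lambda_i}+1\Bigr)=O_{d,c}\Bigl(\prod_i\lambda_i^{-1}\Bigr).
\]
Since each $\lambda_i\le1$, the additive $+1$ is absorbed into the factor $2c/\lambda_i$. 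Minkowski's second theorem gives $\prod_i\lambda_i\ge 2^d\det\Lambda/(d!\,\vol(M))$, which yields the bound $O_d(\vol(M)/\det\Lambda)$. If I want to stay within the paper's tools instead, I would bound the count by comparing with the box: the box $D$ built from a reduced basis (Theorem~\ref{thm:import-lattice-reduction}) has half-widths $n_j\ge1$, so $|D\cap\Lambda|\le 3^d\prod_j n_j=O_d(\vol(D)/\det\Lambda)$. Here $\vol(D)=2^d\prod_j n_j\det\Lambda$ up to a constant, and $\vol(D)\le\gamma_d^d\vol(M)$. This route is actually direct and avoids successive minima: the count $\prod_j(2n_j+1)\le3^d\prod_j n_j$ holds exactly because $n_j\ge1$. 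The hypothesis is what makes the lemma applicable with all $n_j\ge1$.

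For a translate $\vec y+a\cdot M$ of a scaled body, I apply the result to $M'=(a+1)\cdot M$, which still contains the $\ell\vec u_i$. I then take the box $\vec y'+D$, where $\vec y'$ is a lattice point obtained by rounding the coordinates of $\vec y$ in the basis $\vec b_j$, enlarging $n_j$ by $1$ to absorb the rounding. The entry count grows by a factor of at most $2^d$, giving $O_{d,a}(\vol(M)/\det\Lambda)$. Computability follows from Theorem~\ref{thm:import-geometry}(ii) and the polynomiality of Lemma~\ref{lem:carrier-full-lattice-box}, given the radius bounds.
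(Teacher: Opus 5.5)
Your proof is correct, but it is organized differently from the paper's.

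The paper works with the rounding ellipsoid $E$ directly. It reduces a basis of $\Lambda$ in the quadratic norm of $E$ and sets $n_j=\lceil r_j\rceil$, where $r_j$ is the norm of the $j$th row of $B^{-1}$. The hypothesis is used only to get $r_j\ge\ell\ge1$. The count then comes from the orthogonality defect: $\prod_j(2n_j+1)\le5^d\prod_jr_j\le5^d\omega_d^{d-1}/|\det B|=O(\vol(E)/\det\Lambda)$.

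You instead replace $E$ by an inscribed cross-polytope, so that Lemma~\ref{lem:carrier-full-lattice-box} applies as a black box. Its vertex-determinant normalization followed by LLL is essentially the computation the paper redoes in the $E$-norm. You then get the count from volume. Since $D$ is a coefficient box in a basis of $\Lambda$ with integer half-widths $n_j\ge1$, you have $|D\cap\Lambda|=\prod_j(2n_j+1)\le(3/2)^d\vol(D)/\det\Lambda$, and $D\subseteq\gamma M$ finishes. This volume step is simpler than the cofactor estimate and would also shorten the paper's count. The successive-minima route via Minkowski's second theorem and Henk's bound is valid but unnecessary, so you can drop it.

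Your route leaves two small details to state explicitly:
\begin{itemize}
\item The GLS ellipsoid is centred at some $\vec x\in M$, not at $0$. Symmetry of $M$ gives $-\vec x+E\subseteq M$, hence $E\subseteq M$ by convexity. Also $-\vec x\in M\subseteq\vec x+\gamma E$ gives $\vec x\in\tfrac{\gamma}{2}E$, so $E\subseteq M\subseteq\tfrac32\gamma E$.
\item The principal axes of $E$ may be irrational. For the algorithmic claim, build the cross-polytope on a rational $E$-orthogonal basis (Gram--Schmidt in the form of $E$), rescaled by rational factors; this changes only the constant.
\end{itemize}
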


$D$ is a box in the coefficients of the chosen basis, i.e.\ an oblique
parallelepiped whose integer points index an array. The lower bound $\ell\ge1$
ensures that rounding the half-widths to integers costs only a constant
factor. The proof is in Appendix~\ref{app:local-boxing}.

\begin{proof}[Proof of Lemma~\ref{lem:normal}]
\emph{Step 1: packing.}
Fix ${a_{\mathrm{buf}}}$ large depending on $d$, and put
\[
 K=(Z+\varpi\tau\cdot P)\cap
      \bigl(\vec o+(1+{a_{\mathrm{buf}}})\tau\cdot P\bigr),\qquad
 M_{\vec x}=(K-\vec x)\cap(\vec x-K).
\]
So $\vec u\in M_{\vec x}$ iff $\vec x\pm\vec u\in K$, and
$M_{\vec x}\supseteq\varpi\tau\cdot P$ for $\vec x\in Z\cap(\vec o+\tau\cdot P)$.
The extra window ${a_{\mathrm{buf}}}\tau\cdot P$ leaves room for heavy-item
shifts. Fix a small $\lambda=\lambda_d$. For a normal center $\vec x$ and
$\vec y\in\vec x+\lambda\cdot M_{\vec x}$, \eqref{eq:normal-center} gives
$h_K(\vec a)-\vec a\cdot\vec y\le(1+\lambda)(\vartheta+\varpi)\tau h_P(\vec a)$,
so these small regions lie in the shell
$K\setminus(K\ominus a_0\tau\cdot P)$, $a_0=2(1+\lambda)(\vartheta+\varpi)$.
After scaling $\tau\cdot P$ to $[-1,1]^d$, $K$ lies in a cube of side
$2(1+{a_{\mathrm{buf}}})$; eroding one coordinate at a time removes at most $2a_0$
from each line section, so by Fubini
\begin{equation}\label{eq:normal-shell}
 \vol\bigl(K\setminus(K\ominus a_0\tau\cdot P)\bigr)
 \le d(1+{a_{\mathrm{buf}}})^{d-1}a_0\vol(\tau\cdot P).
\end{equation}
Select normal centers greedily so that the regions
$\vec x+\lambda\cdot M_{\vec x}$ have disjoint interiors. By
Lemma~\ref{lem:import-macbeath}, the fixed dilations
$\vec x+\gamma_d\cdot M_{\vec x}$ of the selected regions cover all candidate
regions, and since a boundary cell lies in a small fixed dilate of
$M_{\vec x}\supseteq\varpi\tau\cdot P$ around its center, they cover every
boundary cell. Assign each boundary target to one covering region containing
it; these are the \emph{normal patches}, and every target is assigned once.
By disjointness and \eqref{eq:normal-shell},
\begin{equation}\label{eq:normal-macbeath-sum}
 \sum_{\vec{x}\text{ selected}}\vol(M_{\vec{x}})
 \le \gamma _{d,{a_{\mathrm{buf}}}}\vartheta\vol(\tau\cdot P).
\end{equation}
Figure~\ref{fig:normal-macbeath} illustrates the construction. Constants in
\eqref{eq:normal-macbeath-sum} do not depend on $\vartheta,\varpi$.

\begin{figure}[t]
\centering
\begin{minipage}[t]{0.485\linewidth}
\centering
\begin{tikzpicture}[x=1.15cm,y=1cm,font=\small]
  \path[use as bounding box] (-1.3,-1.92) rectangle (4.3,3.95);
  \node[font=\small\bfseries] at (1.5,3.70)
    {(a) Symmetry about a center};

  \fill[black!4] (0,0) rectangle (4,3);
  \fill[blue!16] (0,0) rectangle (3,2);
  \draw[black!80,thick] (0,0) rectangle (4,3);
  \draw[orange!85!black,dashed,thick] (-1,-1) rectangle (3,2);
  \draw[blue!65!black,thin] (0,0) rectangle (3,2);
  \node[anchor=west] at (4.05,2.87) {$K$};

  \node[blue!65!black] at (1.50,2.53) {$\vec x+M_{\vec x}$};
  \draw[blue!65!black,->] (1.50,2.35) -- (1.50,2.03);
  \draw[black!70,thin] (0.60,0.55) -- (2.40,1.45);
  \fill[blue!75!black] (0.60,0.55) circle (1.7pt);
  \fill[black] (1.50,1.00) circle (2pt);
  \fill[blue!75!black] (2.40,1.45) circle (1.7pt);
  \node[below right,inner sep=3pt] at (1.50,1.00) {$\vec x$};
  \node[below,inner sep=4pt] at (0.60,0.55) {$\vec x-\vec u$};
  \node[above,inner sep=3pt] at (2.40,1.45) {$\vec x+\vec u$};
  \node[orange!85!black] at (1.5,-1.27)
    {reflected body $2\vec x-K$};
  \node[font=\scriptsize] at (1.5,-1.72)
    {$\vec x+M_{\vec x}=K\cap(2\vec x-K)$};
\end{tikzpicture}
\end{minipage}\hfill
\begin{minipage}[t]{0.485\linewidth}
\centering
\begin{tikzpicture}[x=1.15cm,y=1cm,font=\small]
  \path[use as bounding box] (-0.8,-1.92) rectangle (4.8,3.95);
  \node[font=\small\bfseries] at (2,3.70)
    {(b) Packing and covering};

  \fill[black!3] (0,0) rectangle (4,3);
  \fill[yellow!20] (0,1.95) rectangle (4,3);
  \draw[yellow!50!black,densely dotted] (0,1.95) -- (4,1.95);
  \draw[black!80,thick] (0,0) rectangle (4,3);
  \draw[teal!75!black,dashed,thick] (0.2,0.2) rectangle (3.8,2.8);
  \node[anchor=west] at (4.06,0.13) {$K$};
  \node[teal!75!black,anchor=west] at (4.06,1.12) {$Z$};
  \draw[teal!75!black] (3.80,1.12) -- (4.00,1.12);
  \node[font=\scriptsize] at (2,3.29) {one side of the boundary shell};

  \draw[orange!85!black,thick] (0.4,2.25) rectangle (2,2.85);
  \draw[orange!85!black,thick] ({2/3},2.25) rectangle ({10/3},2.85);
  \draw[orange!85!black,thick] (2,2.25) rectangle (3.6,2.85);
  \filldraw[fill=blue!65,draw=blue!75!black]
     (1,2.475) rectangle (1.4,2.625);
  \filldraw[fill=blue!65,draw=blue!75!black]
     ({5/3},2.475) rectangle ({7/3},2.625);
  \filldraw[fill=blue!65,draw=blue!75!black]
     (2.6,2.475) rectangle (3,2.625);
  \foreach \x/\i in {1.2/1,2/2,2.8/3} {
    \fill[black] (\x,2.55) circle (1.4pt);
    \node[font=\scriptsize] at (\x,2.08) {$\vec x_{\i}$};
  }
  \foreach \x in {1.5,2.5} {
    \draw[black,fill=white] (\x,2.55) circle (1.7pt);
  }

  \filldraw[fill=blue!65,draw=blue!75!black]
    (-0.1,-0.48) rectangle (0.2,-0.34);
  \node[anchor=west,font=\scriptsize] at (0.32,-0.41)
    {selected $\vec x_i+\lambda\cdot M_{\vec x_i}$};
  \draw[orange!85!black,thick] (-0.1,-0.87) rectangle (0.2,-0.70);
  \node[anchor=west,font=\scriptsize] at (0.32,-0.785)
    {covering $\vec x_i+\gamma\cdot M_{\vec x_i}$};
  \draw[black,fill=white] (0.05,-1.14) circle (1.7pt);
  \node[anchor=west,font=\scriptsize] at (0.32,-1.14)
    {an unselected candidate center};
  \node[font=\scriptsize] at (2,-1.72)
    {$\lambda=1/6,\qquad\gamma=2/3=4\lambda$};
\end{tikzpicture}
\end{minipage}
\caption{Macbeath regions and the packing used for the boundary estimate.
The rectangular $K$ is a special case; in general neither $K$ nor its
Macbeath regions are axis parallel.
(a) For $K=[0,4]\times[0,3]$ and $\vec x=(3/2,1)$, the blue rectangle
is exactly $\vec x+M_{\vec x}$. A displacement $\vec u$ is permitted precisely
when both displayed points $\vec x\pm\vec u$ belong to $K$.
(b) A finite example of the packing and cover, using the same $K$.
The solid outline is the buffered body $K$ and the dashed rectangle is
the smaller zonotope $Z$; all five candidate
centers lie in $Z$. The three blue regions have disjoint interiors.
Each omitted small candidate region intersects the middle blue region
and lies in its orange enlargement. The orange covering regions overlap.
Only the top portion of the boundary shell is shaded. In the proof,
containment of the selected small regions in the shell and disjointness
give $\sum_i\operatorname{area}(\lambda\cdot M_{\vec x_i})
\le\operatorname{area}(\text{shell})$.
This charge concerns the small regions; enlarged covering regions are
not required to be disjoint or to remain in the shell.}
\label{fig:normal-macbeath}
\label{fig:normal-packing}
\end{figure}

\emph{Step 2: local parallelepipeds.}
We replace $M_{\vec x}$ by a parallelepiped
$D_{\vec x}=\{\vec u:|\vec a_j\cdot\vec u|\le 1,\ 1\le j\le d\}$ with
independent rows and
\begin{equation}\label{eq:normal-row-box}
 M_{\vec{x}}\subseteq D_{\vec{x}}\subseteq 2d\cdot M_{\vec{x}},
\end{equation}
using only \emph{permissible rows}: supporting directions of one of the two
bodies defining $K$. A row from $Z+\varpi\tau\cdot P$ is a \emph{zonotope row};
a row from the window is an \emph{artificial row}. Normalize a supporting
direction $\vec a$ of such a body $B$ by the smaller of its two slacks
$h_B(\vec a)-\vec a\cdot\vec x$ and $h_B(-\vec a)+\vec a\cdot\vec x$ (both positive),
oriented toward the smaller one, so that the row reads $|\vec a\cdot\vec u|\le1$
on $M_{\vec x}$. Start with the $d$ artificial coordinate
rows, with row matrix $A$, and put $\vec u_j=A^{-1}\vec e_j$. If
$\vec x\pm\vec u_j/2\notin K$ for some $j$, the separation oracle returns a
supporting inequality of one defining body; normalizing it gives a row with
$|\vec a\cdot\vec u_j|>2$, which replaces row $j$ and multiplies $|\det A|$ by
more than $2$. All normalized rows are bounded in the dual norm because
$M_{\vec x}\supseteq\varpi\tau\cdot P$, so every possible $|\det A|$ is at most
$\gamma_d\varpi^{-d}/\vol(\tau\cdot P)$, whereas the initial value is at least a
fixed multiple of $(1+{a_{\mathrm{buf}}})^{-d}/\vol(\tau\cdot P)$; hence only
$O_d(\log((1+{a_{\mathrm{buf}}})/\varpi))$ replacements occur. At termination every
normalized defining row of $M_{\vec x}$ has $|\vec a\cdot\vec u_j|\le2$ for all
$j$, which gives \eqref{eq:normal-row-box}. For the greedy packing we may use
the parallelepipeds $\vec x+(\lambda/2d)\cdot D_{\vec x}$, whose pairwise
intersections are fixed-dimensional linear programs; intersection implies
intersection of the Macbeath regions at scale $\lambda$, so Step~1 is
unchanged.

By this normalization a zonotope row satisfies
$h_Z(\vec{a}_j)+\varpi \tau h_P(\vec{a}_j)-\vec{a}_j\cdot \vec{x}=1$, and for an
artificial row $\vec x\in\vec o+\tau\cdot P$ gives
\begin{equation}\label{eq:artificial-width}
 {a_{\mathrm{buf}}} \tau h_P(\vec{a}_j)\le1.
\end{equation}
Let $q\le d$ be the number of zonotope rows. Every feasible target $\vec y$ of
the patch satisfies, for every zonotope row,
\begin{equation}\label{eq:normal-target-deficits}
 0\le h_Z(\vec{a}_j)-\vec{a}_j\cdot \vec{y}\le \beta,
 \qquad \beta=1+\gamma _d,
\end{equation}
since the patch lies in $\vec x+\gamma_d\cdot M_{\vec x}\subseteq\vec x+\gamma_d\cdot D_{\vec x}$
and $h_Z(\vec a_j)-\vec a_j\cdot\vec x\le 1$,
so we may clear all patch targets violating \eqref{eq:normal-target-deficits}.
Let $\mathcal T_{\vec x}$ be the remaining patch targets.

\emph{Step 3: reference selection and heavy items.}
Write $g_{ij}^e=g_i^e(\vec a_j)$. For each item choose the reference bit
$b_i\in\{0,1\}$ minimizing $\sum_{j\le q}g_{ij}^{b_i}$ (sums over $j\le q$
run over zonotope rows). A feasible selection $x$ for a target in
$\mathcal T_{\vec x}$ has $\sum_ig_{ij}^{x_i}\le\beta$ for each zonotope row,
hence total loss at most $q\beta$; so the reference has total loss at most
$q\beta$, and if the computed minimum exceeds $q\beta$ the patch
has no solution and is discarded. Reanchor at $\vec b=\sum_ib_i\vec v_i$ as in
\eqref{eq:reanchor}, with signed items $\vec v_i'=(1-2b_i)\vec v_i$.

Call item $i$ \emph{heavy} if $|\vec{a}_j\cdot \vec{v}_i|>\beta/\tau$ for some
zonotope row, and let $H$ be the heavy items. Fix a feasible selection $x$
for a target in $\mathcal T_{\vec x}$ (used only in the analysis) and let
$X=\{i\in H:x_i\ne b_i\}$ be its changed heavy items; its heavy part after
reanchoring is $\sum_{i\in X}\vec v_i'$. For $i\in X$,
$g_{ij}^{b_i}+g_{ij}^{x_i}=|\vec a_j\cdot\vec v_i|$ by \eqref{eq:item-loss}, so
summing the losses of the reference and of $x$,
\[
 \sum_{i\in X}\sum_{j\le q}|\vec a_j\cdot\vec v_i|\le 2q\beta .
\]
Each heavy item contributes more than $\beta/\tau$ on the left. In a single
zonotope row $j$, the same identity bounds $\sum_{i\in X}|\vec a_j\cdot\vec v_i|$
by the row-$j$ losses of the reference and of $x$, which are at most
$q\beta$ and $\beta$. Using $\vec v_i\in P$ and
\eqref{eq:artificial-width} for artificial rows, we obtain
\begin{gather}
 |X|\le2q\tau,\label{eq:normal-heavy-cardinality}\\
 \sum_{i\in X}|\vec{a}_j\cdot \vec{v}_i'|\le(q+1)\beta\quad(\text{zonotope rows}),
 \label{eq:normal-heavy-absolute}\\
 \sum_{i\in X}|\vec{a}_j\cdot \vec{v}_i'|\le2q\tau h_P(\vec{a}_j)\le2q/{a_{\mathrm{buf}}}\quad(\text{artificial rows}).
 \label{eq:normal-heavy-artificial}
\end{gather}
These are absolute bounds, so they bound all partial sums.

\emph{Step 4: transferring the heavy choices.}
Apply Lemma~\ref{lem:boxing} to a fixed dilation of $D_{\vec x}$ in the parent
lattice. Each lattice coordinate is a linear combination of the rows, so
\eqref{eq:normal-heavy-absolute}--\eqref{eq:normal-heavy-artificial} give
absolute coordinate budgets $\Delta_k$; they are the coordinate bounds of a
fixed dilation of $D_{\vec x}$, so $\prod_k(1+\Delta_k)=O(w_{\vec x}^{\rm loc})$, and Lemma~\ref{lem:quota} applied to the signed
heavy items computes a set $S_{\vec x}$ of actual heavy-flip sums that, with
the prescribed probability, contains the heavy part of every feasible
selection, in time $\softO{n+w_{\vec x}^{\rm loc}}$ where
\begin{equation}\label{eq:normal-local-allocation}
 w_{\vec{x}}^{\rm loc}\le \gamma _{d,{a_{\mathrm{buf}}}}\frac{\vol(M_{\vec{x}})}{\det\Lambda}.
\end{equation}
The lattice-coordinate budgets may admit extra sums, so we also clear every
$\vec s\in S_{\vec x}$ violating the row bounds
$|\vec a_j\cdot\vec s|\le(q+1)\beta$ (zonotope rows) and
$|\vec a_j\cdot\vec s|\le2q/{a_{\mathrm{buf}}}$ (artificial rows); required sums
pass by \eqref{eq:normal-heavy-absolute}--\eqref{eq:normal-heavy-artificial}.
Then compute, by one Boolean convolution in the parent lattice (both operands
lie in fixed dilations of translates of $D_{\vec x}$, so it costs
$\softO{w_{\vec x}^{\rm loc}}$),
\begin{equation}\label{eq:normal-transfer-overview}
 \mathcal U_{\vec x}=(\mathcal T_{\vec x}-\vec b)-S_{\vec x},
\end{equation}
and delete \emph{all} heavy items: their reference choices are in $\vec b$
and their retained changes in $S_{\vec x}$, so keeping them would allow a
copy to be used twice. Finally, with
$\alpha_j=\sum_{i\notin H}g_{ij}^{b_i}$, keep only the residual targets with
\begin{equation}\label{eq:normal-residual-clip}
 0\le\alpha_j-\vec{a}_j\cdot \vec{r}\le \beta
 \quad\hbox{for all zonotope rows }j.
\end{equation}
This filter loses no relevant solution: if $\vec r=\sum_{i\notin H}y_i\vec v_i'$
for the flips $y$ of a feasible selection $x$, then
$\alpha_j-\vec a_j\cdot\vec r=\sum_{i\notin H}g_{ij}^{x_i}$, the light part of the
deficit of $x$, which lies in $[0,\beta]$.

\emph{Step 5: the child instance.}
After \eqref{eq:normal-residual-clip}, each zonotope-row value of a residual
target lies in $[\alpha_j-\beta,\alpha_j]$, and each artificial-row value
lies in an interval centred at $\vec a_j\cdot(\vec x-\vec b)$ of radius
$\gamma_d+2q/{a_{\mathrm{buf}}}$. Solving for the point $\vec r_0$ with the $d$
midpoints as row values, the residual targets lie in
$\vec r_0+\gamma_{d,{a_{\mathrm{buf}}}}\cdot D_{\vec x}$. Every light item satisfies
$|\vec a_j\cdot\vec v_i'|\le\beta/\tau$ in zonotope rows and
$\le h_P(\vec a_j)\le1/({a_{\mathrm{buf}}}\tau)$ in artificial rows, so $\tau\vec v_i'$ lies in the same
symmetric shape $\gamma_{d,{a_{\mathrm{buf}}}}\cdot D_{\vec x}$. Constants depend on
$d,{a_{\mathrm{buf}}}$ only.

Now, in the lattice basis of Step~4, restrict to the real span $\mathcal H$ of the light items: discard residual
targets outside $\mathcal H$ and, if $d'=\dim\mathcal H<d$, project onto $d'$ integer
coordinates on which projection is injective on $\mathcal H$; this preserves the exact
problem and does not enlarge the coordinate box. If $d'=0$ the child accepts
iff $0$ is a residual target. Otherwise choose a coordinate box
$\prod_j[-\omega_j,\omega_j]$ containing both the translated residual targets
and $\tau$ times every light item. The projected light items have full rank,
so each coordinate is nonzero on some item and $\omega_j\ge\tau$; with an
integer centre within distance one and $p_j'=\lceil(\omega_j+1)/\tau\rceil\le3\omega_j/\tau$,
the child has item box $P'=\prod_j[-p'_j,p'_j]$, targets in a translate of
$\tau\cdot P'$, and allocated size $w_{\vec x}=O_d(w_{\vec x}^{\rm loc})$. Restricting to
the span \emph{before} rounding is essential: rounding a width below one in a
degenerate direction could cost a factor depending on $1/\varpi$. The
convolution \eqref{eq:normal-transfer-overview} was already charged in the
parent lattice.

\emph{Step 6: correctness.}
Soundness: a child solution, after undoing the injective projection, is a light
subset $Y$ with sum $\vec r\in\mathcal U_{\vec x}$, so $\vec r=\vec t-\vec b-\vec s$
for some $\vec t\in\mathcal T_{\vec x}$ and $\vec s\in S_{\vec x}$ with a heavy
witness $X'$. As $X'$ and $Y$ are disjoint,
$\vec b+\sum_{i\in X'}\vec v_i'+\sum_{i\in Y}\vec v_i'=\vec t$, and undoing the
reference bits gives a parent subset. This needs no random event.
Completeness: the heavy part of a parent solution for a patch target obeys
the absolute budgets, so it is retained on the success event of
Lemma~\ref{lem:quota}; then its light part is a residual target that passes
\eqref{eq:normal-residual-clip}, lies in $\mathcal H$, and survives the projection.
A union bound over the constantly many patches gives probability $1-\xi$.

\emph{Step 7: total size and cost.}
Since $M_{\vec x}\supseteq\varpi\tau\cdot P$ and the items span the parent
lattice, Lemma~\ref{lem:boxing} applies with $\ell=\varpi\tau$, and a large
enough lower threshold on $\tau$ absorbs lattice endpoint effects. As
$w=\Theta_d(\vol(\tau\cdot P)/\det\Lambda)$,
\eqref{eq:normal-macbeath-sum} and \eqref{eq:normal-local-allocation} give
\[
 \sum_{\vec{x}} w_{\vec{x}}\le \gamma _{d,{a_{\mathrm{buf}}}}\vartheta w,
\]
with all fixed enlargements and rounding included and no inverse power of
$\vartheta$ or $\varpi$. Choose $\vartheta$ so that this is at most $w/2$, then
$\varpi=c_d\vartheta$; only the number of geometric queries and the lower
threshold on $\tau$ depend on them. The numbers of cells, patches, and
geometric queries are constants, each query costs $\softO{n}$, one scan
assigns the targets, and the local Boolean work is
$\softO{\sum_{\vec x}w_{\vec x}^{\rm loc}}=\softO{w}$.
\end{proof}

\section{The interior reduction}
\label{sec:deep}

This section handles a target patch with a buffer of order $\tau\cdot P$
inside the zonotope. Using the common GAP we either certify a target, or
find a thin direction that becomes one new coordinate.

\paragraph{Hypotheses.}
The items are $\vec v_i\in P\cap\Lambda_0$, $i\in I$, where $\Lambda_0\subset\R^d$
is the current lattice, and $Z=\sum_i[0,\vec v_i]$. For a fixed constant
$\sigma>0$ and a point $\vec o$,
\begin{equation}
 \mathcal T\subseteq\vec{o}+\sigma\tau\cdot P,
 \qquad
 \vec{o}+10\sigma\tau\cdot P\subseteq Z.
 \label{eq:deep-buffer}
\end{equation}
After the exceptional items of Theorem~\ref{thm:carrier} have been
transferred, the remaining items have GAP coordinates
$\vec u_i\in C\cap\Lambda$, where $\Lambda\subset\R^r$ has a supplied basis and
$C$ is a symmetric box, and Lemma~\ref{lem:carrier-old-map} gives a linear map
$F:\Lambda\to\Lambda_0$ with $F\vec u_i=\vec v_i$. The reserve $R\subseteq I$ has
$|R|=O(h)$ items, explicit witnesses, and
\begin{equation}
 \vec{z}_0+((2r+1)\cdot C\cap\Lambda)\subseteq\mathcal{S}(\{\vec{u}_i:i\in R\});
 \label{eq:reserve-core-fill}
\end{equation}
in particular $F\vec z_0\in|R|\cdot P$. Removing $R$ and any $h$ further items
leaves vectors spanning $\R^r$. Here
$h=\lfloor\tau/\log^{c_{\mathrm{log}}}N\rfloor$ and $N$ is sufficiently
large, so $|R|+h=O(h)=o(\tau)$. Let $w$ be the allocated size of the parent
box of shape $\tau\cdot P$.

\begin{lemma}[Interior reduction]\label{lem:deep}
Under these hypotheses, an exact reduction either resolves the instance or
produces one equivalent instance on surviving original items, of vector rank
$d+1$, with an exact target set in a lattice box of $O(w)$ entries whose
symmetric shape divided by $h$ contains every child item. Excluding the GAP
construction, it takes $\softO{n+w}$ time and is deterministic.
\end{lemma}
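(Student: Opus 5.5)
We follow the route sketched in the technical overview. First we fix a target $\vec y_0$ of the patch (say the centre $\vec o$, or any target that passes the lattice filter $\mathcal T\cap F\Lambda$) and run the fiber test. Put $J=\vec z_0+Z(U\setminus R)$. Lemma~\ref{lem:filled-zonotope} applies with the reserve filling \eqref{eq:reserve-core-fill}, which contains $r\cdot C$, and it gives $J\cap\Lambda\subseteq\mathcal S(U)$. Since $F$ is linear and $F\vec u_i=\vec v_i$, every $\vec y\in\mathcal T$ with $J\cap F^{-1}(\vec y)\cap\Lambda\neq\varnothing$ is an actual subset sum. Removing $R$ shrinks the zonotope only by $|R|\cdot P$, and $|R|=o(\tau)$, so the buffer \eqref{eq:deep-buffer} shows that each fiber $K_{\vec y}=J\cap F^{-1}(\vec y)$ over the patch is a nonempty polytope of dimension $r-d$ (after the span restriction). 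Its membership oracle is a fixed-dimensional LP over the support function of $Z(U\setminus R)$, which can be evaluated in $O(n)$ time.

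Next we apply Theorem~\ref{thm:import-geometry}(iii) in the lattice $\Lambda\cap\ker F$ shifted to the fiber. We get either a lattice point, in which case we resolve by reporting the target after recovering the witness through the reserve, or an integer functional $\psi$ with $\operatorname{width}_\psi(K_{\vec y_0})\le f_r$, where $\psi$ is nonconstant on $\ker F$. The key geometric step is to extend this thin slice to the whole patch. The map $\vec y\mapsto\max_{K_{\vec y}}\psi$ is concave and $\vec y\mapsto\min_{K_{\vec y}}\psi$ is convex on $FJ$, so the width $W(\vec y)$ is concave. Because $\vec y_0$ sits $10\sigma\tau\cdot P$ deep while the patch has radius $\sigma\tau\cdot P$, a reflection argument applies: for $\vec y$ in the patch, the point $2\vec y_0-\vec y$ (and its dilate by a factor of about $10$) still lies in $FJ$, so concavity gives $W(\vec y)\le O(1)\,W(\vec y_0)=O_r(1)$. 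The same convexity lets us take an affine minorant $\alpha+\vec\beta\cdot\vec y$ of $\min_{K_{\vec y}}\psi$, namely the supporting hyperplane at $\vec y_0$, computed by LP duality. Concavity of the max then bounds the gap by a constant on the patch. The result is the strip $0\le\psi(\vec u)-\alpha-\vec\beta\cdot F\vec u\le O_r(1)$ for all $\vec u\in J$ with $F\vec u$ in the patch. The reserve shift by at most $|R|\cdot C$ adds an additive $O(|R|\,\|\psi\|_C)$ term. We control it by choosing $\psi$ reduced (Theorem~\ref{thm:import-lattice-reduction}), so the strip remains $O_r(1)$ after rescaling.

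Then we define prices $\pi_i=|\psi(\vec u_i)-\vec\beta\cdot\vec v_i|$. We move the reserve and the $h$ items of largest price into $E$ using Lemma~\ref{lem:transfer}, which costs $\widetilde O(w)$ since $O(h)=o(\tau)$ items are removed. A counting argument over the remaining items bounds every surviving price. Their signed sum stays within the strip width plus the width of $J$ in that direction, and the spanning condition (3) of Theorem~\ref{thm:carrier} rules out cancellation. This should give $\pi_i=O_r(1/h)$. The child has items $(\vec v_i,\psi(\vec u_i)-\vec\beta\cdot\vec v_i)$ scaled to integer form, which is exact since $\psi$ and $\vec\beta$ are rational with bounded denominators. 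Its target set is $\{(\vec y,z)\}$ with $\vec y\in\mathcal T-\mathcal S(E)$ and $z$ ranging over all integers permitted by the strip. This is $O(1)$ values per $\vec y$, so the size is $O(w)$ and the item box scaled by $h$ contains the targets. Exactness holds because every solution's $\psi$-value lies in the strip, and the new coordinate is independent of the old ones because $\psi$ is nonconstant on $\ker F$, which gives rank $d+1$.

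The main obstacle is making the price bound rigorous: converting the strip, which is a statement about sums, into an $O(1/h)$ bound on each individual surviving price, uniformly in the reserve choice. I would first try the pigeonhole argument: if more than $h$ items have price at least $c/h$, then by condition (3) one can choose selections inside $J$ whose $\psi$-deviation exceeds the strip width. This contradiction shows only the removed $h$ items can be expensive.
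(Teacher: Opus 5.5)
Your fiber test and the concavity/reflection strip follow the paper's route. The paper takes $\vec\beta$ from an approximately optimal dual (slack $1$, $\kappa=f_r+1$) rather than an exact supporting hyperplane, which is cosmetic.

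The genuine gap is the price bound you flag as the main obstacle, and the mechanism you sketch for it is not the one that works. The missing identity is the following. Put $q_i=\psi(\vec u_i)-\vec\beta\cdot\vec v_i$ and place the reference bits at the cheap endpoints ($b_i=0$ iff $q_i\ge0$). Then
$\alpha=\min_{\vec u\in J}(\psi(\vec u)-\vec\beta\cdot F\vec u)=\psi(\vec z_0)-\vec\beta\cdot F\vec z_0+\sum_{i\notin R}\min(0,q_i)$,
so the strip quantity at any real point $\vec z_0+\sum_{i\notin R}x_i\vec u_i$ of $J$ equals $\sum_{i\notin R}|q_i|\,|x_i-b_i|$. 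This is a sum of nonnegative terms. Hence there is no cancellation to rule out, and condition~(3) plays no role in this step; the paper uses it only for the rank-$(d+1)$ claim. A bound involving the width of $J$ in direction $\psi$ is also useless: by condition~(3) more than $h$ unreserved items have $\psi(\vec u_i)\ne0$, so that width exceeds $h$.

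What closes the argument is the buffer. Let $D$ be the $h$ items of largest $|q_i|$ and force each to its expensive endpoint. This lowers the support value of $FJ$ in direction $\vec a$ by at most $h\,h_P(\vec a)$, and $h=o(\tau)$, so by \eqref{eq:projected-core-buffer} some real point of $J$ with $D$ forced still lies over $\vec y_0$. Its deficit is at least $\sum_{i\in D}|q_i|$ and at most $g(\vec y_0)\le\kappa$, so every surviving price is at most $\kappa/h$. This uses fractional points of $J$ (the strip is a real inequality), not $0/1$ selections.

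Your treatment of the reserve also fails as written. The reserve of Theorem~\ref{thm:carrier} fills a translate of $\gamma h\cdot C$ inside $\vec z_0+|R|\cdot C$, so $|R|=\Omega(h)$. Also $\max_C|\psi|\ge1$, because $\psi$ is integral and nonzero on some $\vec u_i\in C$. So an additive $O(|R|\max_C|\psi|)$ error is of order $h$, not $O_r(1)$. Lattice reduction cannot remove it, since $\psi$ must stay the thin direction found at $\vec y_0$. Rescaling cannot remove it either, since the new coordinate must stay integral. A strip of width $\Omega(h)$ would give $\Omega(h)$ admissible values of $z$ per target, a child of size $\Omega(hw)$, and only an $O(1)$ price bound.

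No correction term is needed. The reserve is transferred to the target side, so for a solution with reserve sum $\vec s_R$ its unreserved part is a point of $J$ over $\vec y=\vec t-\vec s_R+F\vec z_0$. This point lies within $(2\sigma\tau+2|R|)\cdot P$ of $\vec y_0$. By the $\Omega(\tau)$ buffer both $\vec y$ and $2\vec y_0-\vec y$ lie in $FJ$, so $g(\vec y)\le2g(\vec y_0)$ applies at $\vec y$ directly. You should prove the strip on this whole reflection-admissible neighbourhood, not only on the patch.

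A smaller point: if you clear the denominator $Q$ of $\vec\beta$ in the prices, the strip contains $\Theta(Q)$ integers per $\vec y$. Your count of $O(1)$ values then needs a residue or lattice filter. The paper avoids this by keeping the integer coordinate $\zeta_i=(1-2b_i)\psi(\vec u_i)$ and putting $\vec\beta$ only into the mask $0\le z-\vec\beta\cdot\vec y\le2\kappa$.
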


\begin{proof}
\emph{The certified core.} Put
$J=\vec{z}_0+\sum_{i\in I\setminus R}[0,\vec{u}_i]$.
Lemma~\ref{lem:filled-zonotope} and \eqref{eq:reserve-core-fill} give
\begin{equation}
 F(J\cap\Lambda)\subseteq\mathcal{S}(\{\vec{v}_i:i\in I\}).
 \label{eq:certified-core}
\end{equation}
Discard targets outside $F\Lambda$ (no subset sum lies there); if none remain,
answer \emph{no}. Otherwise fix a remaining target $\vec y_0$. Replacing the
reserve zonotope by $F\vec z_0$ changes support values by at most
$2|R|h_P$, and $\vec y_0-\vec o\in\sigma\tau\cdot P$, so
\begin{equation}
 \vec{y}_0+(9\sigma\tau-2|R|)\cdot P\subseteq FJ,
 \label{eq:projected-core-buffer}
\end{equation}
a buffer of radius $\Omega(\tau)$.

\emph{Lattice point or thin direction.} If $r=d$ then $F$ is injective and the
unique fiber point over $\vec y_0$ is a lattice point, so
\eqref{eq:certified-core} answers \emph{yes}. Let $r>d$. The unreserved
vectors span $\R^r$, so $J$ has interior, and by
\eqref{eq:projected-core-buffer} $\vec y_0$ is interior to $FJ$; hence the
fiber $J\cap F^{-1}(\vec y_0)$ is full-dimensional in its affine kernel space,
whose lattice is nonempty because $\vec y_0\in F\Lambda$.
Theorem~\ref{thm:import-geometry}(iii) returns either a lattice point of the
fiber, answering \emph{yes} by \eqref{eq:certified-core}, or a nonzero
integer functional on $\ker F\cap\Lambda$ of width at most $f_r$ on the fiber;
dividing by the gcd of its coefficients makes it primitive without increasing
the width.
As $\ker F\cap\Lambda$ is saturated in $\Lambda$, it extends to an integer
functional $\psi$ on $\Lambda$.

\emph{A strip over the whole patch.} For $\vec y\in FJ$ let
$\psi_-(\vec y)$ and $\psi_+(\vec y)$ be the minimum and maximum of $\psi$ on
$J\cap F^{-1}(\vec y)$. Choose a rational $\vec\beta$ and put
$\alpha=\min_{\vec u\in J}(\psi(\vec u)-\vec\beta\cdot F\vec u)$ so that
\begin{equation}
 0\le \psi_-(\vec{y}_0)-\alpha-\vec{\beta}\cdot \vec{y}_0\le1.
 \label{eq:dual-slack}
\end{equation}
By LP duality $\psi_-(\vec y_0)=\max_{\vec\beta}G(\vec\beta)$ for the concave function
\[
 G(\vec{\beta})=\psi(\vec{z}_0)+\vec{\beta}\cdot(\vec{y}_0-F\vec{z}_0)
          +\sum_{i\notin R}\min(0,\psi(\vec{u}_i)-\vec{\beta}\cdot \vec{v}_i),
\]
whose hypograph has an $O(n)$ separation oracle. With $r_{\rm buf}$ a Euclidean buffer
radius from \eqref{eq:projected-core-buffer} and
$b_\psi=|\psi(\vec z_0)|+\sum_{i\notin R}|\psi(\vec u_i)|$, evaluating the
minimum at a preimage of $\vec y_0+r_{\rm buf}\vec\beta/\|\vec\beta\|$ gives
$G(\vec\beta)\le b_\psi-r_{\rm buf}\|\vec\beta\|$, while $G(0)\ge-b_\psi$; so a
maximizer lies in a ball of radius $1+2b_\psi/r_{\rm buf}$. Optimizing to accuracy
$1/2$ and rounding as in Lemma~\ref{lem:bits} gives \eqref{eq:dual-slack} in
$\softO{n}$ time.

Put $\kappa=f_r+1$. The function $g(\vec y)=\psi_+(\vec y)-\alpha-\vec\beta\cdot\vec y$
is concave and nonnegative on $FJ$, and $g(\vec y_0)\le\kappa$ by
\eqref{eq:dual-slack} and the width bound $\psi_+(\vec y_0)-\psi_-(\vec y_0)\le f_r$. If $\vec y$ and $2\vec y_0-\vec y$ lie in $FJ$, then
$g(\vec y_0)\ge(g(\vec y)+g(2\vec y_0-\vec y))/2\ge g(\vec y)/2$, so
\begin{equation}
 0\le\psi(\vec{u})-\alpha-\vec{\beta}\cdot F\vec{u}\le2\kappa
 \quad(\vec{u}\in J,\ F\vec{u}=\vec{y}).
 \label{eq:reflection-band}
\end{equation}
For any target $\vec t$ and any reserve subset sum $\vec s_R$ in old
coordinates, the point $\vec y=\vec t-\vec s_R+F\vec z_0$ is within
$(2\sigma\tau+2|R|)\cdot P$ of $\vec y_0$, so by \eqref{eq:projected-core-buffer}
it and its reflection lie in $FJ$; thus \eqref{eq:reflection-band} applies to
every target and every reserve choice.

\emph{Prices and the largest ones.} Put $q_i=\psi(\vec{u}_i)-\vec{\beta}\cdot \vec{v}_i$
for $i\notin R$, and choose the reference bits (the \emph{cheap endpoints})
$b_i=0$ if $q_i\ge0$ and $b_i=1$ if $q_i<0$. Since
$\alpha=\psi(\vec{z}_0)-\vec{\beta}\cdot F\vec{z}_0+\sum_{i\notin R}\min(0,q_i)$,
the quantity in \eqref{eq:reflection-band} at the point
$\vec z_0+\sum_{i\notin R}x_i\vec u_i$ equals $\sum_{i\notin R}|q_i|\,[x_i\ne b_i]$,
a sum of nonnegative deficits. Let $D$ consist of $h$ items with largest $|q_i|$
(all unreserved items if fewer remain). Forcing the expensive endpoint of
every item of $D$ changes each support value of the projected core by at most
$hh_P$, so by \eqref{eq:projected-core-buffer} the forced core still contains
$\vec y_0$, and a fractional completion there has deficit at least
$\sum_{i\in D}|q_i|$ and at most $g(\vec y_0)$. Hence
\begin{equation}
 \sum_{i\in D}|q_i|\le\kappa,
 \qquad\text{so}\qquad
 |q_i|\le\kappa/h\quad(i\notin R\cup D).
 \label{eq:light-new-coordinate}
\end{equation}
If no item survives, an exact support computation resolves the instance.

\emph{The child.} Reanchor as in \eqref{eq:reanchor}: put
$\vec{b}=\sum_{i\notin R}b_i\vec{v}_i$, $\vec{v}_i'=(1-2b_i)\vec{v}_i$, and
$\zeta_i=(1-2b_i)\psi(\vec{u}_i)$, and compute
\begin{equation}
 \mathcal T'
 =\mathcal T-\mathcal{S}(\{\vec{v}_i:i\in R\})-\vec{b}-\mathcal{S}(\{\vec{v}_i':i\in D\})
 \label{eq:deep-old-target-update}
\end{equation}
by Lemma~\ref{lem:transfer}. The child items are
$\widehat{\vec{v}}_i=(\vec{v}_i',\zeta_i)\in\Lambda_0\times\Z$ for $i\notin R\cup D$,
and the child targets are
\begin{equation}
 \widehat{\mathcal T}
 =\{(\vec{y},z):\vec{y}\in\mathcal T',\ z\in\Z,
                 \ 0\le z-\vec{\beta}\cdot\vec{y}\le2\kappa\},
 \label{eq:deep-label-mask}
\end{equation}
filtered to the lattice $\widehat\Lambda$ generated by the child items. Note
$\zeta_i-\vec\beta\cdot\vec v_i'=|q_i|\ge0$.

\emph{Equivalence.} A child subset realizing $(\vec y,z)$ realizes
$\vec y\in\mathcal T'$, and \eqref{eq:deep-old-target-update} supplies
disjoint reserve and $D$ witnesses which, with the reference bits, give a
parent subset. Conversely, take a parent solution with its actual reserve
choice. Its unreserved part gives a point of $J$ to which
\eqref{eq:reflection-band} applies, so its deficits sum to at most $2\kappa$;
dropping the nonnegative deficits of $D$, the surviving part has
$z-\vec\beta\cdot\vec y\in[0,2\kappa]$ for its lifted sum $(\vec y,z)$. So
$(\vec y,z)\in\widehat{\mathcal T}$, and it passes the lattice filter.

\emph{Rank.} $\psi$ is nonzero on $\ker F$, so $(F,\psi)$ has rank $d+1$; the
vectors $\vec u_i$, $i\notin R\cup D$, span $\R^r$ by robustness, and their
images (with signs) span $\R^{d+1}$.

\emph{Size.} Each $\vec y\in\mathcal T'$ has at most $2\kappa+1$ admissible
values of $z$. With $s_{\mathrm{lift}}=\sigma\tau+|R|+h$ and a centre $\vec y_c$,
$\mathcal T'\subseteq\vec y_c+s_{\mathrm{lift}}\cdot P$, and the lifted targets
lie in a translate of
\[
 B=\{(\vec{x},z):\vec{x}\in {s_{\mathrm{lift}}}\cdot P,\ |z-\vec{\beta}\cdot\vec{x}|\le\kappa\}.
\]
The shear has determinant one and $\widehat\Lambda\subseteq\Lambda_0\times\Z$,
so $\vol(B)/\det\widehat\Lambda\le2\kappa\vol(s_{\mathrm{lift}}\cdot P)/\det\Lambda_0=O(w)$.
By \eqref{eq:light-new-coordinate} and $h=o(\tau)$, $h\widehat{\vec v}_i\in B$
for every child item, and the child items contain $d+1$ independent lattice
vectors. Lemma~\ref{lem:boxing} with $\ell=h$ gives an enclosing basis box
with $O(w)$ entries and half-widths $\omega_j\ge h$ (some child vector has a
nonzero $j$th coordinate and its $h$-fold multiple lies in the box); as in Step~5
of the boundary reduction, $p'_j=\lceil(\omega_j+1)/h\rceil\le3\omega_j/h$ gives an
item box $P'$ with targets in a translate of $h\cdot P'$ and size $O(w)$.

\emph{Cost.} Enumerate the old targets and their $O(1)$ values of $z$, filter
into $\widehat\Lambda$, and only then change basis, so that no old-lattice
convolution is charged in a coarser lattice. The $O(h)$ removed items cost
$\softO{n+w}$ by Lemma~\ref{lem:transfer}. Support evaluations for $J$ cost
$O(n)$, and the fiber oracle, the lattice-point-or-direction call, and the
computation of $\vec\beta$ cost $\softO{n}$ by Theorem~\ref{thm:import-geometry}
and Lemma~\ref{lem:bits}, which also bounds all bit lengths; coordinates are
stored afresh rather than composing earlier transformations.
\end{proof}

\section{Bounding dimension changes}\label{sec:invariant}

Interior reductions add coordinates and deletions may lower the dimension.
We bound both using short additive relations among the original integers.
Fix a rational $0<\delta\le1/20$, set $k=\lfloor N^{\delta/3}\rfloor$, and put
\begin{equation}\label{eq:rankconstants}
 d_*=\lfloor1+3/\delta\rfloor+1,\qquad
 q_*=\lceil4(1+3/\delta)\rceil,\qquad
 r_*=d_*+2q_*.
\end{equation}
$N$ is assumed large enough that $k\ge4$ and all inequalities below hold.
For the active original items $I$ let
\[
 S_I=\{0\}\cup\{a_i,-a_i:i\in I\},\qquad \mu(I)=|kS_I|,
 \qquad 1\le\mu(I)\le2kN+1.
\]
$\mu$ is used only in the analysis.

\begin{definition}[Short-relation invariant]
The current representation \emph{preserves signed $k$-sums} if the map
$\varphi$ induces a bijection from
$k(\{0\}\cup\{\vec{v}_i,-\vec{v}_i:i\in I\})$ onto $kS_I$.
\end{definition}

\begin{lemma}\label{lem:freiman}
The recursion preserves signed $k$-sums provided every GAP used in an
interior reduction is proper on a box of scale exceeding $2k$.
\end{lemma}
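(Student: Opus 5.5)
We argue by induction along a branch of the recursion, checking that every transition preserves the invariant. At the root, $d=1$, $\vec v_i=a_i$ and $\varphi$ is the identity, so the bijection is trivial. Since $\varphi$ is linear and $\varphi(\vec v_i)=a_i$, it always maps $k(\{0\}\cup\{\pm\vec v_i\})$ \emph{onto} $kS_I$. Surjectivity is therefore automatic, and the only issue is injectivity: two signed $k$-sums of current vectors with the same original value must coincide. Equivalently, any relation $\sum_i c_i a_i=0$ with $\sum_i|c_i|\le 2k$ must lift to $\sum_i c_i\vec v_i=0$ in the current coordinates. The converse direction follows from linearity of $\varphi$.

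We then go through the transitions one at a time.
\begin{enumerate}
\item \emph{Deleting items} (exceptional transfers, heavy deletions, removal of $R$ and $D$). Signed $k$-sums over the surviving index set are a subset of those before, and $\varphi$ is unchanged, so injectivity is inherited.
\item \emph{Reanchoring}. This replaces $\vec v_i$ by $\pm\vec v_i$ and $a_i$ by $\pm a_i$ consistently, so the symmetric sets $\{0,\pm\vec v_i\}$ and $S_I$ are unchanged.
\item \emph{Span restriction, injective projections and changes of lattice basis} (boundary Step~5, Lemma~\ref{lem:boxing}). These are linear maps injective on the real span of the items. They do not identify distinct sums of items, and $\varphi$ is recomputed on the new basis.
\item \emph{Mixed-radix encoding and reorientation} in Section~\ref{sec:vector-preprocessing}. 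This is a change of variables of the same kind.
\end{enumerate}

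The only transition that changes the vectors genuinely is the interior reduction (Lemma~\ref{lem:deep}). There the new item is $\widehat{\vec v}_i=(\vec v_i',\zeta_i)$ with $\zeta_i=\pm\psi(\vec u_i)$, and $(F,\psi)$ is linear in the GAP coordinates $\vec u_i$. It therefore suffices to show that a short relation among the old vectors lifts to the GAP coordinates: if $\sum c_i\vec v_i=0$ with $\sum|c_i|\le2k$, then $\sum c_i\vec u_i=0$. Applying $\psi$ then shows that the new coordinate also satisfies the relation. By the induction hypothesis the assumption is equivalent to $\sum c_i a_i=0$. Because the vectors were encoded injectively on a box of scale $\gamma_3 h\cdot P$ with $h\gg k$, it is also equivalent to $\sum c_i\operatorname{enc}(\vec v_i)=0$, that is, $\rho(\sum c_i\vec u_i)=0$. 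The point $\sum_i c_i\vec u_i$ lies in $2k\cdot C$, since each $\vec u_i\in C$. By hypothesis $\rho$ is injective on a box of scale exceeding $2k$ (Theorem~\ref{thm:carrier}(2) gives scale $\gamma h$). Writing the relation as an equality of two sums, each in $k\cdot C$, injectivity forces $\sum c_i\vec u_i=0$.

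Hence the lifted representation preserves every relation of length at most $2k$, so distinct lifted signed $k$-sums have distinct values.

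The main obstacle is this interior step, in particular checking that the injectivity scale covers the correct multiple of $C$. A difference of two signed $k$-sums lies in $2k\cdot C$, not $k\cdot C$. The cleanest route is to write the relation as $\sum_{c_i>0}c_i\vec u_i=\sum_{c_i<0}|c_i|\vec u_i$, where each side has total coefficient at most $2k$. Both sides then lie in the box on which $\rho$ is injective, provided that box has scale exceeding $2k$, which is precisely the hypothesis. A second point to check is that the encoding injectivity scale $\gamma_3 h$ also exceeds $2k$, which holds because $h\ge N^{\delta}/\polylog N\gg k=N^{\delta/3}$.
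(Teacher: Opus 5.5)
Your proof is correct and follows the paper's argument. Deletions, reorientation, span restriction and injective coordinate changes preserve the property trivially. For the lift, a short relation among original values passes by induction to the old vectors, then to the scalar encodings, then, by injectivity of $\rho$ on the proper box, to the GAP coordinates and hence to $\psi$; the converse is read off from the retained old coordinates.

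Phrasing the invariant as ``relations of length at most $2k$ lift'' rather than ``equal signed $k$-sums stay equal'' is only cosmetic. The injectivity of $\operatorname{enc}$ that you check at the end is not needed, because the step $\sum_i c_i\vec v_i=0\Rightarrow\sum_i c_i\operatorname{enc}(\vec v_i)=0$ follows from linearity of $\operatorname{enc}$ alone.
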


\begin{proof}
The root representation is the identity, and deletion, reorientation,
restriction to the span, and injective coordinate changes preserve the
property. For an interior reduction, an equality of two signed $k$-sums of
original integers gives, by induction, an equality of old vector sums, hence of
their scalar encodings; both GAP-coordinate sums lie in the box on which the GAP
map is injective, so they agree, and so do their values of $\psi$.
Conversely an equality after the lift holds in the retained old coordinates.
\end{proof}

\begin{lemma}[Dimension and total rank loss]\label{lem:rank}
Under the short-relation invariant the current rank is less than $d_*$, and
along any path the total rank lost by deletions is at most $q_*$. Hence a path
starting in rank one has at most $d_*-1+q_*$ interior reductions that add a
coordinate, and at most $r_*$ such reductions and rank-losing deletion events
together.
\end{lemma}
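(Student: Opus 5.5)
The plan is to turn the short-relation invariant into counting statements about $\mu(I)$, combining the trivial upper bound $\mu(I)\le2kN+1$ with lower bounds forced by linear independence. Write $\Phi_I=\{0\}\cup\{\vec v_i,-\vec v_i:i\in I\}$. Throughout I use that $\varphi$ is a bijection $k\Phi_I\to kS_I$, so every count of distinct vector sums in $k\Phi_I$ is a count of distinct elements of $kS_I$. I also use that $N\le k^{3/\delta}$ up to rounding, so $2kN+1\le3k^{1+3/\delta}$.

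\emph{Rank bound.} Suppose the current rank is $d\ge d_*$, and pick $d_*$ linearly independent current vectors $\vec v_{i_1},\ldots,\vec v_{i_{d_*}}$. The sums $\sum_j c_j\vec v_{i_j}$ with $c_j\in\Z_{\ge0}$ and $\sum_jc_j\le k$ lie in $k\Phi_I$. They are pairwise distinct by independence, so there are $\binom{k+d_*}{d_*}\ge(k/d_*)^{d_*}$ of them. Hence $(k/d_*)^{d_*}\le\mu(I)\le3k^{1+3/\delta}$. Since $d_*>1+3/\delta$ and $d_*$ depends only on $\delta$, this fails once $N$ (hence $k$) is large. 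So the rank is less than $d_*$.

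\emph{Rank loss.} First, $\mu$ is nonincreasing along a path: active sets only shrink, and reorientation does not change $S_I$. Now consider a deletion $I\to I'\subseteq I$ that lowers the rank by $q\ge1$. Let $\mathcal H$ be the span of the surviving vectors, and choose $q$ deleted vectors $\vec w_1,\ldots,\vec w_q$ independent modulo $\mathcal H$. Put $k'=\lfloor k/2\rfloor$ and $T=k'\Phi_{I'}\subseteq\mathcal H$. For each coefficient vector $\vec c\in\Z_{\ge0}^q$ with $\sum c_j\le k'$, the translate $\sum_jc_j\vec w_j+T$ lies in $k\Phi_I$. Distinct $\vec c$ give disjoint translates, because the translates differ modulo $\mathcal H$. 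This gives
\[
\mu(I)\ \ge\ \binom{k'+q}{q}\,|k'S_{I'}|\ \ge\ (k'/q)^q\,|k'S_{I'}|.
\]
To compare $|k'S_{I'}|$ with $\mu(I')=|kS_{I'}|$, I apply \eqref{eq:carrier-doubling} with $b=2$ to $X=S_{I'}+N\subseteq[0,2N]$. This is legitimate since shifting does not change sumset sizes and $0\in S_{I'}$; I take $L=2N$, $s=k'$, and $\alpha=\delta/4$, so that $s\ge L^\alpha$ for large $N$. The result is $\mu(I')\le|2k'S_{I'}|+O(|k'S_{I'}|)=O_\delta(|k'S_{I'}|)$, where I absorb the parity of $k$ by monotonicity in the number of summands. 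Since $q<d_*$ is bounded, $\mu(I')\le O_\delta\bigl((2q)^q\bigr)k^{-q}\mu(I)\le k^{-q/2}\mu(I)$ for large $N$.

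\emph{Conclusion.} Multiplying along a path with total rank loss $Q$ gives $1\le\mu\le k^{-Q/2}(2kN+1)\le3k^{1+3/\delta-Q/2}$, so $Q\le2(1+3/\delta)+o(1)<q_*$. Now count net rank changes along the path. The rank starts at $1$, every coordinate-adding interior reduction raises it by exactly one (Lemma~\ref{lem:deep}), deletions lower it by at most $Q\le q_*$ in total, and it always stays below $d_*$. Hence the number of interior reductions is at most $d_*-1+q_*$. The number of rank-losing deletion events is at most $Q\le q_*$, so the two kinds together number at most $d_*-1+2q_*\le r_*$.

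The main obstacle I expect is the rank-loss step. It requires checking carefully that the disjoint translates really sit inside $k\Phi_I$, so that $\varphi$ transports the count to $kS_I$. It also requires that the polynomial-scale doubling estimate applies at scale $k/2\approx N^{\delta/3}$ with constants depending only on $\delta$, which is exactly why $\alpha$ is chosen below $\delta/3$.
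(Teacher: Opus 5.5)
Your proof is correct and follows the paper's argument essentially step for step. Both proofs bound the rank by counting nonnegative combinations of independent vectors against $2kN+1$. Both bound the rank loss via $\binom{m+q}{q}$ disjoint translates of the surviving signed $\lfloor k/2\rfloor$-sumset together with the doubling estimate \eqref{eq:carrier-doubling}. Both then telescope $\mu$ from $2kN+1$ down to $1$ and count net rank changes.

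Two cosmetic fixes are needed. First, translate $S_{I'}$ by its minimum rather than by $N$, so that the translated set contains $0$ as \eqref{eq:carrier-doubling} requires. Second, for odd $k$ use $b=3$ or $b=4$ instead of $b=2$; the paper does this via $kS_{I'}\subseteq 4\lfloor k/2\rfloor S_{I'}$.
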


\begin{proof}
If $d$ current vectors are independent, their nonnegative combinations with
total coefficient at most $k$ are distinct, so
$\binom{k+d}{d}\le\mu(I)\le2kN+1$; this fails for $d=d_*>1+3/\delta$ and large $N$.

Suppose a deletion leaves $I'$ and lowers the rank by $q>0$. Choose $q$ deleted
vectors independent modulo the surviving span and put $m=\lfloor k/2\rfloor$.
Their nonnegative combinations of total at most $m$ give $\binom{m+q}{q}$
translates of the surviving signed $m$-sumset; they are disjoint because the
chosen vectors are independent modulo the surviving span, and they lie in the
old signed $k$-sumset. By the invariant,
\begin{equation}\label{eq:ranktranslate}
 \mu(I)\ge\binom{m+q}{q}\,|mS_{I'}|.
\end{equation}
Translating $S_{I'}$ into $[0,2N]$ and using $m\ge(2N)^{\delta/4}$,
\eqref{eq:carrier-doubling} gives $|kS_{I'}|\le\gamma_\delta|mS_{I'}|$ (as
$kS_{I'}\subseteq4mS_{I'}$). Since $q<d_*$,
\[
 \mu(I')\le \gamma _\delta\binom{m+q}{q}^{-1}\mu(I)\le k^{-q/2}\mu(I).
\]
Other deletions do not increase $\mu$ and coordinate changes do not affect it,
so telescoping gives $\sum q\le2\log(2kN+1)/\log k\le q_*$, the middle term
tending to $2(1+3/\delta)$. The final rank is one plus the number $u$ of added
coordinates minus the total loss, so $u\le d_*-1+q_*$; there are at most $q_*$
rank-losing events, and $d_*-1+2q_*<r_*$.
\end{proof}

\section{The recursive algorithm}\label{sec:recursion}\label{sec:complexity}

\paragraph{Constants.}
For the given $\varepsilon$, fix a rational
$0<\delta\le\min(\varepsilon/4,1/20)$ and the constants
\eqref{eq:rankconstants}; fix the GAP exponent $\eta=\delta/4$. Choose all
geometric and lattice constants uniformly over dimensions at most $r_*+2$ and
the resulting bounded GAP ranks (this does not presuppose the invariant), then
$\vartheta$ for Lemma~\ref{lem:normal} in all these dimensions, then the grid
constants, and finally a fixed ${c_{\mathrm{log}}}>2$ large enough for all buffers.
At a node of scale $\tau$ set
\[
 h=\left\lfloor \tau/\log^{c_{\mathrm{log}}}(N+2)\right\rfloor,
\]
and at the root $\tau_0=\lfloor N^\delta\rfloor$. The threshold on $N$ is
chosen so that $h\ge1$ on the prefix analysed in Lemma~\ref{lem:bootstrap}
and so that every new scale $\tau'=\Theta_\delta(h)$ satisfies $\tau'\le\tau$.

\paragraph{The algorithm.}
The root has the original integers as one-dimensional items, target set
$\{t\}$, $P_0=[-N,N]$, and allocated size
\begin{equation}\label{eq:rootalloc}
 w_0=O(\tau_0N)=O(N^{1+\delta}).
\end{equation}
At a node:
\begin{enumerate}[label=\arabic*.]
\item Remove zero items, restrict to the real span of the items, reject an empty
target set, and accept an empty item set iff $0$ is a target. Clip targets to
the coordinatewise attainable intervals.
\item If $n\le\tau$, apply Lemma~\ref{lem:terminal}.
\item Classify the targets as in Lemma~\ref{lem:normal}.
\item On each normal patch, apply the boundary reduction and recurse on its
children.
\item On a buffered interior patch, apply the vector preprocessing of
Section~\ref{sec:vector-preprocessing} and Theorem~\ref{thm:carrier} with
parameter $h$, and transfer the exceptional items by
Lemma~\ref{lem:transfer}. If no items remain, the transferred target set
decides the patch. If the old rank drops, restrict to the new span and recurse
(a \emph{rank-loss transition}).
\item Otherwise apply Lemma~\ref{lem:carrier-old-map} and
Lemma~\ref{lem:deep}, and recurse on the lifted child with scale
$\tau'=\Theta_\delta(h)$.
\end{enumerate}
A node accepts iff some child accepts. Steps 5--6 and rank losses in Step~4
are the \emph{exceptional transitions}.

\paragraph{Interior hypotheses.}
By the choice of grid in Section~\ref{sec:normal}, an interior cell has a
centre $\vec x$ with $\vec x+\vartheta\tau\cdot P\subseteq Z$ and targets in
$\vec x+\sigma\tau\cdot P$, $\sigma<\vartheta/100$. Then
transferring $e=O(h\log^2N)=o(\tau)$ exceptional items changes each support
value by at most $eh_P$ and each target by an element of $e\cdot P$. So the
surviving zonotope contains $\vec x+(\vartheta\tau-e)\cdot P$ and the targets lie
in $\vec x+(\sigma\tau+e)\cdot P$. For large $N$, $e\le\vartheta\tau/200$, so
Lemma~\ref{lem:deep} applies with $\vec o=\vec x$ and target radius $\vartheta/50$
in place of $\sigma$ (the buffer $10\cdot\vartheta/50<\vartheta/2$ is available). The filled box of Theorem~\ref{thm:carrier} has
scale $\gamma h\ge2r+1$, and condition~(3) gives the required robustness. Sign
reorientation translates targets, zonotope, and centre together.

\begin{lemma}[Correctness]\label{lem:correct}
Conditional on the success events of the color-coding calls and of the
(verified) GAP constructions, the recursion answers every instance correctly.
Unconditionally it has no false positive.
\end{lemma}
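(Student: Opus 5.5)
Correctness follows by induction on the recursion tree, using the fact that each step is either an exact reduction or sound with completeness on the success events. Termination comes first. Boundary children halve the total allocated size. Interior lifts and rank-loss transitions occur at most $r_*$ times along any path by Lemma~\ref{lem:rank}, and that lemma applies because Lemma~\ref{lem:freiman} holds: the filled box has scale $\gamma h$, which exceeds $2k$ since $h\ge N^\delta/\polylog N\gg N^{\delta/3}$. Every allocated size is a positive integer and $w$ halves at each boundary step, so every path is finite; alternatively, the item count strictly decreases, or the terminal case of Step~2 is eventually reached. The induction therefore runs from leaves to root, and at each node I prove two claims. First, if the node accepts, then $\mathcal S(V)\cap\mathcal T\neq\varnothing$; this must hold unconditionally. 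Second, on the success events, if $\mathcal S(V)\cap\mathcal T\ne\varnothing$ then the node accepts.

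For soundness I go through the steps one at a time. Step~1 and span restriction are exact: removing zero items and clipping to attainable intervals discards no attainable target, and off-span targets are unattainable. Step~2 is exact by Lemma~\ref{lem:terminal}. Discarding infeasible cells in Step~3 is safe because $\mathcal S(V)\subseteq Z$. For Step~4, Lemma~\ref{lem:normal} guarantees deterministically that every child solution lifts to a parent solution. For Step~5, Lemma~\ref{lem:transfer} is an exact equivalence. The only non-exact ingredient is the verified GAP output. Since Theorem~\ref{thm:carrier} verifies containment, properness and the filled box explicitly, a wrong output can only cause a failure that is detected and repeated, never an incorrect certificate. Lemma~\ref{lem:filled-zonotope} and the explicit reserve witnesses then make the \emph{yes} answers of Lemma~\ref{lem:deep} genuine subset sums, and that lemma's child is equivalent in both directions. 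Accepting only when some child accepts therefore gives no false positives, independently of any random event.

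For completeness I fix a parent solution $x$ for a target $\vec t$ and follow the single cell containing $\vec t$. If the cell is interior, I must check that the hypotheses of Lemma~\ref{lem:deep} hold after the exceptional transfer. This is the \emph{Interior hypotheses} paragraph: $e=o(\tau)$ preserves the buffer, the reserve fills at scale $\gamma h\ge 2r+1$, and condition~(3) supplies robustness. Lemma~\ref{lem:deep} then either answers correctly or produces an equivalent child, and induction applies. If the cell is a boundary cell, then on the color-coding success event the solution survives in some child, and induction applies again. Rank-loss transitions are exact restrictions.

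The main obstacle is making sure the recursive invariants of Section~\ref{sec:representation} hold at every node, so that each lemma is applied within its hypotheses. These invariants are $\tau$ above the constant threshold, $h\ge1$, bounded dimension, and the parameter range. I would invoke Lemma~\ref{lem:bootstrap} for these bounds and check that scales only shrink by polylogarithmic factors over the $O_\delta(1)$ exceptional transitions. The conditional statement then follows from a union over the events, with no probability computation needed here.
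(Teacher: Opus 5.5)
Your proposal is correct and is essentially the paper's argument: induction on the terminating recursion, in which every step is sound independently of randomness and complete on the success events. Terminal solves, discards of unattainable targets, and transfers are exact, children of Lemmas~\ref{lem:normal} and~\ref{lem:deep} lift back to parent solutions, and Lemma~\ref{lem:bootstrap} supplies the lemmas' hypotheses along every path. One minor slip: your aside that the item count strictly decreases is false for a boundary step with no heavy items, but halving $w$ at boundary steps together with the bounded number of exceptional transitions already gives termination.
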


\begin{proof}
Terminal computations are exact, discarding cells outside $Z$ and targets
outside the span or the attainable intervals loses no subset sum, transfers are
exact by Lemma~\ref{lem:transfer}, and Lemmas~\ref{lem:normal}
and~\ref{lem:deep} give both directions for their children (the former on its
success event). Induction on the terminating recursion proves correctness;
soundness of every step does not depend on any random event.
\end{proof}

\begin{remark}[Randomness and witnesses]\label{rem:random}
Number the calls of Lemma~\ref{lem:quota} in execution order and give call $j$
fresh randomness and failure allowance $\xi/(2j^2)$; conditional on the past
its input is fixed, so all calls succeed with probability at least $1-\xi$.
Retaining the product trees and, at each union node, one contributing child,
lets us recover an actual subset for any retained sum within the same time.
\end{remark}

\begin{lemma}[Bit size]\label{lem:bits}
At a node of fixed dimension whose item coordinates and allocated size are
$N^{O_\delta(1)}$, the item and target coordinates, the scalar maps, $F$, and
all lattice descriptions can be stored with $O_\delta(\log N)$ bits per entry,
and all geometric and lattice operations take $n\polylog N$ bit time apart
from Boolean array operations. In the count representation, $n$ is replaced
by the number of distinct types, provided the total multiplicity is
$N^{O_\delta(1)}$.
\end{lemma}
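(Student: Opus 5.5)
The plan is an inductive bookkeeping argument along a single branch of the recursion. I first fix what each kind of step does to the data of a node, and then show that every quantity stays polynomial in $N$ with a bounded number of entries. By Lemma~\ref{lem:rank} (the short-relation invariant), a branch starting in rank one has at most $r_*$ exceptional transitions. Every other step is a boundary reduction (Lemma~\ref{lem:normal}) or a terminal computation, so the depth of a branch is $O_\delta(\log N)$. The reason is that boundary children at most halve the allocated size, and the exceptional transitions multiply it by at most $\polylog N$ factors.

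The induction hypothesis is: at every node, item and target coordinates, $\varphi$ and $F$, and the lattice basis have entries of absolute value $N^{O_\delta(1)}$. The key point is that, as the paper stresses, coordinates are \emph{recomputed afresh} from the current items and a reduced basis rather than composed. So a bound must be shown per step, not multiplied along the branch.

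\begin{enumerate}[label=(\roman*)]
\item For a boundary child, Lemma~\ref{lem:boxing} yields a basis whose box $D$ satisfies $M\subseteq D\subseteq\gamma_d M$. The items and targets, expressed in that basis, lie in $D$, whose number of integer points is the child allocated size $\le w\le N^{1+2\delta}$. Hence the coordinates are bounded by the allocated size.
\item The basis vectors themselves, in the parent lattice, are LLL-reduced by Theorem~\ref{thm:import-lattice-reduction}. Their lengths are bounded by $2^{O(d^2)}\det\Lambda$ divided by the shortest basis vector. This is polynomial once $\det\Lambda\le$ (number of lattice points of a box containing $d$ independent items), which is again bounded by $w$.
\item Rational normals $\vec a_j$, the vector $\vec\beta$ and the offset $\alpha$ come from Theorem~\ref{thm:import-geometry}. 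Their encodings are polynomial in the input encoding, which is $O(\log N)$ per entry in fixed dimension.
\item In the interior step, $\psi$ has width $\le f_r$ on a fiber containing a ball of radius $\Omega(h)$. Its coefficients are therefore $O(1)$ in a reduced basis of $\ker F\cap\Lambda$. The GAP data from Theorem~\ref{thm:carrier} live in a box of size $O(hL)$ with $L=N^{O(1)}$ because of the mixed-radix encoding, and $F$ is integral with $F\vec u_a=\vec v_a$.
\end{enumerate}

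For the running-time claim, each geometric call (membership or separation via the support function $h_Z$, optimization of $G$, and the lattice-point-or-thin-direction call) reduces to $\polylog N$ oracle calls by Theorem~\ref{thm:import-geometry}. Each oracle call is a sum over items of $O(\log N)$-bit products, which costs $n\polylog N$ bit time. In the count representation, each type contributes $c_i\max(0,\vec a\cdot\vec v_i)$, so the cost becomes (number of types) times $\polylog N$, provided the multiplicities are $N^{O(1)}$. This settles the final sentence of the lemma.

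For the radius and encoding bounds required by Theorem~\ref{thm:import-geometry}, I use the support-function bound $|h_Z(\vec a)|\le n\,\max_i\|\vec v_i\|\,\|\vec a\|$. This gives radius $N^{O(1)}$. Facet normals of a zonotope are determined by $d-1$ items, so their bit size is $O(d\log N)$, which gives the vertex encoding bound.

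I expect the main obstacle to be step (ii). The difficulty is ruling out blow-up of lattice determinants or of the rational data $\vec\beta$ across the $O(\log N)$ levels, since the paper only argues that no product of transformations is stored. I would try to bound $\det$ of the current lattice by the number of points of the allocated box in some fixed reference coordinates, because the items span it and lie in $P$. I would then use Hadamard's inequality to bound every reduced-basis entry and every rational $\vec\beta$ rounded to denominators $N^{O(1)}$, and check that the rounding in $\vec\beta$ (accuracy $1/2$) keeps \eqref{eq:dual-slack} valid.
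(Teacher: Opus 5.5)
\textbf{There is a genuine gap.} Two of the bounds you give, for target coordinates and for $\psi$, are false as stated. The lemma's $O_\delta(\log N)$ bound also needs explicit magnitude bounds that you do not provide.

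\emph{Target coordinates.} In (i) you assert that the items and targets, written in the boxing basis, lie in $D$. That holds for $\tau$ times the light items. The residual targets, however, lie only in a translate $\vec r_0+\gamma_{d,a_{\mathrm{buf}}}\cdot D_{\vec x}$, whose offset comes from $\vec x-\vec b$ and is re-expressed through the inverse of the new basis. If you bound this offset only through the parent's target bound, each level multiplies it by a polynomial factor. Over a branch of depth up to $\Theta(\log N)$ this gives $N^{\Theta(\log N)}$, so your induction with a fixed $O_\delta(1)$ exponent does not close. You rightly say bounds must be per node, but the only per-node bound you give for targets is this false one.

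The missing reset is the clipping of targets to the coordinatewise attainable intervals $[\sum_i c_i\min(0,v_{ij}),\sum_i c_i\max(0,v_{ij})]$ (Step~1 of the algorithm). It loses no subset sum, and it bounds every retained target coordinate by $O(nw)$, resp.\ $O((\sum_i c_i)w)$, using only that current item coordinates are $O(w)$ after reboxing. One then centres the new array at a retained target. This is also where the total-multiplicity hypothesis enters. With this reset the lemma is purely local, as stated, and your branch induction and depth bound become unnecessary.

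\emph{The thin functional and the $O_\delta(\log N)$ claim.} In (iv) you say the fiber contains a ball of radius $\Omega(h)$, so $\psi$ has $O(1)$ coefficients. Nothing supports this. The $\Omega(\tau)$ buffer in \eqref{eq:projected-core-buffer} concerns the image $FJ$, while the fiber itself can be tiny (Figure~\ref{fig:overview-core-fibers}(b)). In kernel-lattice coordinates, a fiber containing a ball of radius $\rho$ has width at least $2\rho$ in every nonzero integer direction, so the thin outcome forces $\rho\le f_r/2$. Hence $\psi$ can genuinely have polynomially large coefficients.

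What works is a lower bound on the inner radius. The fiber has rational vertices with polynomially bounded numerators and denominators, because facet normals are fixed-size minors and support values are type sums. So it contains a ball of radius $r_{\rm fib}\ge N^{-O_\delta(1)}$, and the width bound gives $\|\psi\|_2\le f_r/(2r_{\rm fib})=N^{O_\delta(1)}$.

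Relatedly, (iii) yields only $\polylog N$ bits per entry, since Theorem~\ref{thm:import-geometry} guarantees encodings polynomial in the $O(\log N)$ input length. To reach $O_\delta(\log N)$ bits you need explicit magnitude bounds:
\begin{itemize}
\item Cramer's rule on independent current vectors, for $\varphi$, $\rho$ and $F$;
\item for $\vec\beta$, the rounding you sketch, made quantitative by the Lipschitz bound $|G(\vec\beta)-G(\vec\beta')|\le\ell_{\rm lip}\|\vec\beta-\vec\beta'\|_\infty$ with $\ell_{\rm lip}=N^{O_\delta(1)}$. A grid of mesh $1/\lceil4\ell_{\rm lip}\rceil$ then costs at most $1/8$ of the slack in \eqref{eq:dual-slack}, and $\alpha$ is recomputed exactly afterwards.
\end{itemize}
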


The proof (Appendix~\ref{app:bit-details}) recomputes all descriptions from
the current vectors, so that coordinate changes are never composed.

\begin{lemma}[Noncircular choice of scales]\label{lem:bootstrap}
Every GAP call satisfies $h\ge L^{\eta}$, every GAP used in a
lift is proper on a scale exceeding $2k$, and every path has at most $r_*$
exceptional transitions.
\end{lemma}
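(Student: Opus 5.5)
We prove the three assertions together by induction along a root-to-node path, since each one feeds the others. The difficulty is circularity. Lemma~\ref{lem:rank} bounds the number of exceptional transitions only under the short-relation invariant, which by Lemma~\ref{lem:freiman} needs every lift to use a GAP proper at scale exceeding $2k$. That scale is $\gamma h$ at the current node, and $h$ shrinks by a polylogarithmic factor at each exceptional transition. So we fix a path and consider its maximal prefix on which at most $r_*$ exceptional transitions have occurred. On that prefix we show $h\ge N^{\delta}/\log^{O_\delta(1)}N$. This gives both the GAP conditions and the invariant, and then Lemma~\ref{lem:rank} shows the prefix cannot be extended by an $(r_*+1)$-st transition.

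\emph{Scale tracking.} Boundary reductions keep $\tau$ (Lemma~\ref{lem:normal}). An interior lift sets $\tau'=\Theta_\delta(h)=\Theta_\delta(\tau/\log^{c_{\mathrm{log}}}(N+2))$. A rank-loss transition keeps $\tau$, or loses at most a constant factor when restricting to a span. Hence after $j\le r_*$ exceptional transitions, $\tau\ge N^\delta/(\gamma_\delta\log^{c_{\mathrm{log}}}(N+2))^{j}$, and so $h\ge N^\delta/\log^{(r_*+1)c_{\mathrm{log}}+1}N$ for large $N$. Before invoking any lemma we also need the dimension bound $d\le r_*+1$ and $w\le N^{1+2\delta}$ on the prefix; these are what make the constants chosen uniformly in dimension at most $r_*+2$ valid. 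The dimension bound holds because the dimension starts at $1$ and increases only at lifts, at most once each. The size bound holds because boundary children shrink, while lifts and transfers enlarge $w$ only by $O(1)$ or polylog factors a bounded number of times. Together these give $w\le w_0\polylog N\le N^{1+2\delta}$.

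\emph{GAP hypotheses.} The scalars fed to Theorem~\ref{thm:carrier} are mixed-radix encodings. Their bound is $L\le\prod_j t_j p_j=O(h^{d}\prod_j p_j)$, which is at most $(w\,h^{d})^{O(1)}\le N^{O_\delta(1)}$, since $\prod_j(\tau p_j)\le w$ and $d\le r_*+1$. The exponent here must be explicit: we get $L\le N^{c(r_*)}$ with $c$ depending only on $\delta$. Then $h\ge L^\eta$ needs $\eta c(r_*)<\delta$ up to logarithms. This is the main obstacle, because $\eta=\delta/4$ is fixed in advance. To get past it, I would first use the sharper estimate $L\le\gamma_3^{d}h^{d-1}\cdot\prod_j(2p_j+1)\le N^{O(1)}\cdot h^{d}$ together with $\prod_j p_j\le w/\tau^{d}$. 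The powers of $h$ and $\tau$ then nearly cancel, giving $L\le w\,\polylog N\le N^{1+3\delta}$. With that bound, $L^{\eta}\le N^{\delta(1+3\delta)/4}<N^{\delta}/\polylog N\le h$. Properness is straightforward: the GAP is proper on $\gamma h\cdot C$, and $\gamma h\ge N^{\delta}/\polylog N>2N^{\delta/3}\ge 2k$ for large $N$.

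\emph{Closing the induction.} On the prefix, every lift uses a GAP proper at scale exceeding $2k$, so by Lemma~\ref{lem:freiman} the short-relation invariant holds at every node of the prefix. Lemma~\ref{lem:rank} then says the path has at most $d_*-1+q_*$ lifts and at most $q_*$ rank-losing events, fewer than $r_*$ in total. A rank loss produced by the exception transfer in Step~5 is itself a deletion, so it is counted by the $q_*$ bound. Therefore the next exceptional transition, if any, is still within the budget of $r_*$. It keeps the same scale estimates, so the prefix extends indefinitely. Since the recursion terminates, the prefix is the whole path.
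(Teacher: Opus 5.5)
Your proof is correct and follows the paper's argument. You restrict to a path prefix with boundedly many exceptional transitions, track $\tau$, $h$, the dimension and $w$ there, and combine $L=O(h^{d-1}|P|)$ with $w\ge c_d\tau^d|P|$ to get $L\le N^{1+O(\delta)}$, hence $h\ge L^{\eta}$ and properness beyond $2k$; you then invoke Lemma~\ref{lem:freiman} and Lemma~\ref{lem:rank}, whose strict count $d_*-1+2q_*<r_*$ rules out an extra transition. The only blemish is your closing appeal to termination, which the paper derives from this very lemma; it is unnecessary, because the strict count already shows that a maximal prefix with at most $r_*$ transitions cannot be followed by an $(r_*+1)$st one.
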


\begin{proof}
Consider a path only up to its $(r_*+1)$st exceptional transition. Each lift
adds one coordinate and deletions never increase the rank, so dimensions on
this prefix are at most $r_*+2$, covered by the uniform choice of constants.
Normal steps keep $\tau$, lifts divide it by $O(\log^{c_{\mathrm{log}}}N)$, and
rank-loss restrictions keep it, so on the prefix
$\tau,h\ge N^\delta/\log^{O_\delta(1)}N$ and $h\le\tau\le\tau_0$.

Normal steps shrink the summed allocated size; a lift has size $O(w)$ by
Lemma~\ref{lem:deep}; a rank-loss branch transfers $e=o(\tau)$ items in the
old lattice, giving targets in a translate of $(\tau+e)\cdot P$, after which
restriction and injective projection do not enlarge the box, so it too has size
$O(w)$. Hence $w\le N^{1+\delta}\log^{O_\delta(1)}N\le N^{1+2\delta}$ on the prefix.
The encoding of
Section~\ref{sec:vector-preprocessing} has
\begin{equation}\label{eq:encoding}
 L=O_\delta(h^{d-1}|P|),\qquad w\ge c_d\tau^d|P|,
\end{equation}
so $L\le\gamma_\delta w/\tau\le N^2$, and $h\ge L^{\delta/4}$
for large $N$. The properness scale $\gamma h$ of Theorem~\ref{thm:carrier} exceeds $2k$. These bounds are
established for each call from its parent before the call is made. Hence
Lemma~\ref{lem:freiman} holds on the prefix, and Lemma~\ref{lem:rank} excludes
the $(r_*+1)$st exceptional transition.
\end{proof}

\begin{proof}[Proof of Theorem~\ref{thm:main}]
Normal children have total size at most $w/2$, and an
exceptional transition multiplies the summed size by at most
$a_N=\log^{O_\delta(1)}N$ (including the constant number of patches). Grouping
nodes by the number of exceptional transitions above them,
\begin{equation}\label{eq:totalalloc}
 \sum_{\text{nodes }\omega}w_\omega
 \le w_0\sum_{j=0}^{r_*}2^{j+1}{a_N}^j
 =\softO{w_0}.
\end{equation}
In particular the recursion terminates and has polynomially many nodes.

Normal processing costs $\softO{n+w}$ and exceptional transfers $\softO{w}$.
By \eqref{eq:nscan}, \eqref{eq:encoding}, and $h\le\tau$, the GAP construction
costs $\softO{nh+h^2L}=\softO{hw+h^{d+1}|P|}=\softO{hw}$,
and other operations are covered by Lemma~\ref{lem:bits}. Since $h\le\tau_0$,
the total time is $\softO{\tau_0w_0}=\softO{N^{1+2\delta}}=O_\varepsilon(N^{1+\varepsilon})$.

For error $N^{-c}$, give summable failure allowances to the color-coding and
Freiman calls (Remark~\ref{rem:random}); GAP outputs are verified within their
budget, and a failed verification or time-out ends the run with answer \emph{no}.
Lemma~\ref{lem:correct} completes the proof.
\end{proof}

\section{Multisets, vectors, and determinism}
\label{sec:deterministic}

We now prove Theorem~\ref{thm:main-vector} by running the recursion of
Section~\ref{sec:recursion} with three changes: items are stored as distinct
types with counts (Sections~\ref{sec:counts}--\ref{sec:count-reductions}),
the recursion starts at the original vectors with the invariant stated for
them (Section~\ref{sec:vector-invariant}), and the two randomized subroutines
are replaced by exact ones (Section~\ref{sec:derandomize}). Every other
ingredient (classification, packing, lattice boxing, the GAP construction,
the filled zonotope, the strip argument, the accounting) is used unchanged,
and we only point out what differs. Throughout, $d$ is the input dimension and
$\bar d$ the dimension of the current node; the local lemmas of the previous
sections are applied with their dimension parameter equal to $\bar d$.

\subsection{The count representation}\label{sec:counts}
For distinct vectors $\vec v_1,\ldots,\vec v_m$ and counts
$\vec c=(c_1,\ldots,c_m)$ let $V_{\vec c}$ contain $c_i$ copies of $\vec v_i$, so
$\mathcal{S}(V_{\vec{c}})=\{\sum_i x_i\vec{v}_i:x_i\in\Z,\ 0\le x_i\le c_i\}$.
Selecting or deleting copies decreases counts, and disjoint selections have
counts adding to at most $\vec c$.

\begin{lemma}[Exact count operations]\label{ms:representation}
Let the distinct nonzero types lie in $P$ and let
$\mathcal T\subset\vec{o}+\tau\cdot P$ have allocated size $w$. Then:
\begin{enumerate}[label=(\roman*)]
\item $m\le|P|$.
\item $\conv\mathcal{S}(V_{\vec c})=Z(V_{\vec{c}})=\sum_i[0,c_i\vec{v}_i]$ and
$h_{Z(V_{\vec{c}})}(\vec{a})=\sum_i c_i\max(0,\vec{a}\cdot \vec{v}_i)$.
\item For reference bits $b_i$, with $\vec{b}=\sum_i c_ib_i\vec{v}_i$ and
$\vec{v}_i'=(1-2b_i)\vec{v}_i$, $\mathcal{S}(V_{\vec{c}})=\vec{b}+\mathcal{S}(V'_{\vec{c}})$
via $x_i=c_ib_i+(1-2b_i)y_i$.
\item For counts $0\le e_i\le c_i$ with $e=\sum_ie_i=O(\tau)$,
$\mathcal{S}(V_{\vec{c}})=\mathcal{S}(V_{\vec{e}})+\mathcal{S}(V_{\vec{c}-\vec{e}})$,
and $\mathcal T-\mathcal{S}(V_{\vec{e}})$ can be computed in $\softO{m+w}$ time.
\end{enumerate}
\end{lemma}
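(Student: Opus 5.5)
The four items are elementary, and I will prove them one at a time. The only part needing care is the time bound in (iv).

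For (i), the types are distinct nonzero integer points of $P$, so $m\le|P|$ immediately.

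For (ii), I write $\mathcal S(V_{\vec c})=\sum_i\{0,\vec v_i,\ldots,c_i\vec v_i\}$ as a Minkowski sum. The convex hull of a Minkowski sum is the Minkowski sum of the convex hulls. The hull of $\{0,\ldots,c_i\vec v_i\}$ is the segment $[0,c_i\vec v_i]$, which is also the sum of $c_i$ copies of $[0,\vec v_i]$. Hence the hull equals $Z(V_{\vec c})$. Support functions are additive over Minkowski sums, and $h_{[0,c_i\vec v_i]}(\vec a)=c_i\max(0,\vec a\cdot\vec v_i)$.

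For (iii), the substitution $x_i=c_ib_i+(1-2b_i)y_i$ is a bijection of $[0,c_i]_{\Z}$ onto itself: it is the identity when $b_i=0$ and $y\mapsto c_i-y$ when $b_i=1$. Summing $x_i\vec v_i$ gives $\vec b+\sum_iy_i\vec v_i'$, which is the count version of \eqref{eq:reanchor}.

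For (iv), the identity holds because a count vector $0\le\vec x\le\vec c$ splits as $\vec x=\vec x'+\vec x''$ with $0\le\vec x'\le\vec e$ and $0\le\vec x''\le\vec c-\vec e$. I take $x'_i=\min(x_i,e_i)$ and $x''_i=x_i-x'_i$; the converse direction is immediate.

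For the algorithm, $\mathcal S(V_{\vec e})\subseteq e\cdot P$. I compute it as a product of the factors $\{0,\vec v_i,\ldots,e_i\vec v_i\}$ over the types with $e_i>0$. Each factor is an arithmetic progression inside $e_i\cdot P$. I combine the factors with the balanced Boolean product tree of Lemma~\ref{lem:boolean-product}, truncating every intermediate array to $e\cdot P$. This is exact, since all partial sums have nonnegative coefficients bounded by $\vec e$ and so lie in $e\cdot P$.

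The point to watch is that expanding each type into $e_i$ separate copies $\{0,\vec v_i\}$ would only cost $\softO{e+|e\cdot P|}$, which is already fine because $e=O(\tau)$. So I may simply expand the at most $e$ selected copies, or at most $m$ types in total, and apply Lemma~\ref{lem:transfer} verbatim. The result is $\softO{m+e+|(\tau+e)\cdot P|}$. Since $e=O(\tau)$, we have $|(\tau+e)\cdot P|=O_d(w)$ and $e=O(w)$, giving $\softO{m+w}$. The scan over the $m$ types to find those with $e_i>0$ accounts for the $m$ term.

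The one thing to verify is that truncation to $e\cdot P$ loses nothing in the count setting. This holds because every sub-selection of a bounded selection is bounded, and I expect no genuine obstacle here.
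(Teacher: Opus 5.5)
Your proof is correct and follows essentially the same approach as the paper. For (ii), your convex-hull-of-a-Minkowski-sum argument is the same fact as the paper's observation that $\prod_i[0,c_i]$ is the hull of its integer points; for (iv), you use the split $x_i=\min(x_i,e_i)+(x_i-\min(x_i,e_i))$, expand only the $e$ designated copies, and apply Lemma~\ref{lem:transfer}, exactly as the paper does. The preliminary detour through progression factors in Lemma~\ref{lem:boolean-product} is unnecessary (its uniform-box cost bound would not directly give $\softO{m+w}$ there), so keep the expand-and-transfer route.
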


\begin{proof}
(ii) holds because $\prod_i[0,c_i]$ is the convex hull of its integer points.
In (iv), a count $x_i$ splits as $\min(x_i,e_i)+(x_i-\min(x_i,e_i))$; expand
only the $e$ designated copies and apply Lemma~\ref{lem:transfer}.
\end{proof}

When types become equal (after reorientation or a coordinate change), we merge
them, adding their counts and recording the constituents $c_1,\ldots,c_s$; an
aggregate count $x\le\sum_jc_j$ is split back as
$x_j=\min(c_j,x-\sum_{i<j}x_i)$, and reanchoring is then undone constituentwise.
This is exact, costs $\softO{m}$ by sorting, and never increases the total
number of copies, which therefore stays at most $n$. Merged types have equal
oriented original vectors, so merging does not change the original support
used in the invariant. Support functions, reference sums, generated lattices,
and attainable intervals are computed by scans over types.

\subsection{The local reductions with counts}\label{sec:count-reductions}

\paragraph{The common GAP.}
Theorem~\ref{thm:carrier} is already stated for counts, and the proof of
Lemma~\ref{lem:carrier-old-map} applies verbatim (witnesses use $O(h)$ reserve
copies regardless of the multiplicities). In the vector preprocessing, equal
types are merged \emph{after} reorientation.

\begin{lemma}[Boundary reduction with counts]\label{ms:normal}
In the setting of Lemma~\ref{lem:normal}, let the items be $m$ distinct nonzero
types $\vec v_i$ with counts $c_i$. The same conclusions hold, with children
consisting of whole surviving types with their counts, and with cost
$\softO{m+w}$. If Lemma~\ref{lem:det-bounded-sums} replaces
Lemma~\ref{lem:quota}, the reduction is deterministic, every parent solution
survives, and the cost is $\softO{m+\tau^3w}$.
\end{lemma}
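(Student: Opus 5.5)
We rerun the proof of Lemma~\ref{lem:normal} step by step on types, and identify the only places where multiplicities matter. Steps~1--2 (classification, Macbeath packing, local parallelepipeds) depend on $Z$ only through membership and support queries. By Lemma~\ref{ms:representation}(ii) these are $h_Z(\vec a)=\sum_ic_i\max(0,\vec a\cdot\vec v_i)$, evaluated in $O(m)$ per query. The packing bound \eqref{eq:normal-macbeath-sum}, the boxing of Lemma~\ref{lem:boxing}, and hence $\sum_{\vec x}w_{\vec x}\le w/2$ are therefore unchanged. In Step~3 the loss identity \eqref{eq:item-loss} becomes, for a count $x_i\in[0,c_i]$, a type loss $c_i\max(0,\vec a_j\cdot\vec v_i)-x_i\,\vec a_j\cdot\vec v_i$. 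We choose the reference $b_i\in\{0,1\}$, meaning the endpoint $c_ib_i$, minimizing $\sum_{j\le q}$ of the endpoint losses. Reanchoring is exact by Lemma~\ref{ms:representation}(iii): with flips $y_i\in[0,c_i]$, the loss of $x$ in row $j$ equals the reference loss plus or minus $y_i|\vec a_j\cdot\vec v_i|$, with sign according to orientation. Summing the reference and $x$ losses again gives $\sum_iy_i\sum_{j\le q}|\vec a_j\cdot\vec v_i|\le2q\beta$ over heavy types.

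The key change is that heavy \emph{types} may carry huge counts, yet a feasible solution flips at most $2q\tau$ heavy copies in total, by the same argument giving \eqref{eq:normal-heavy-cardinality}. The plan is therefore to instantiate, for each heavy type, only $\min(c_i,\lfloor 2q\tau\rfloor)$ signed copies $\vec v_i'$. This truncation is capped more sharply per type by the budget $2q\beta/\max_j|\vec a_j\cdot\vec v_i|<2q\tau$. The total number of instantiated copies is then at most $2q\tau$ times the number of heavy types. We refine this by noting that heavy types lie in a bounded region of $P$ relative to $D_{\vec x}$, so their number is $O(w^{\rm loc}_{\vec x}/\tau)$ in lattice units; if that fails, we simply accept an $\softO{m\tau}$ term and charge it against $w$ via $m\le|P|\le w/\tau^{\bar d}$. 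Then we apply Lemma~\ref{lem:quota} to these copies with the same absolute budgets \eqref{eq:normal-heavy-absolute}--\eqref{eq:normal-heavy-artificial}. Every heavy flip vector $y_H$ of a feasible solution is the sum of a subset of the instantiated copies, because $y_i\le$ the instantiated count. Every reported sum uses at most $c_i$ copies of type $i$, which keeps soundness.

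Steps~4--6 then go through with whole types. The light types keep their full counts $c_i$, and the child consists of these surviving types with counts. The residual filter \eqref{eq:normal-residual-clip} uses $\alpha_j=\sum_{i\notin H}$ of type losses at the reference endpoints, and the light part of a solution's deficit still lies in $[0,\beta]$. Scaling shows $\tau\vec v_i'\in\gamma\cdot D_{\vec x}$ for light types exactly as before, and the counts are irrelevant to the box shape. The cost is $\softO{m}$ per geometric query plus $\softO{w}$ Boolean work, so $\softO{m+w}$ overall.

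For the deterministic variant we replace Lemma~\ref{lem:quota} by Lemma~\ref{lem:det-bounded-sums} on the same truncated copy family, at most $O(\tau)$ flips in every relevant solution. Its exactness makes every parent solution survive with certainty, and its cost on a box of size $w^{\rm loc}_{\vec x}$ is the stated $\softO{\tau^3w^{\rm loc}_{\vec x}}$, presumably from the truncated-ring Newton identities of degree $O(\tau)$. Summing over patches gives $\softO{m+\tau^3w}$. The main obstacle I expect is the truncation bookkeeping: the instantiated copy count must not exceed the budget the lemma's running time allows. If the per-type cap is too crude, I would try first to group heavy types by their dominant row and bound the copies per group by $O(\tau)$ using the total absolute budget.
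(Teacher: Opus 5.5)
Your proposal is correct and follows essentially the paper's proof: you use one reference endpoint $c_ib_i$ per type with the full reference sum, the flip-count bound $\sum_{i\in H}f_i\le 2q\tau$, and you instantiate only $\min(c_i,O(\tau))$ signed copies per heavy type before deleting all heavy copies; your ``fallback'' charge of the $O(\tau m)$ temporary copies to $w$ via $m\le|P|$ and $w\ge c_{\bar d}\tau^{\bar d}|P|$ is exactly the paper's accounting, whereas the claimed $O(w^{\rm loc}_{\vec x}/\tau)$ bound on the number of heavy types is unjustified (relevant heavy types are only confined to a fixed dilate of $D_{\vec x}$) and should simply be dropped. One detail is missing: Lemma~\ref{lem:det-bounded-sums} is stated for distinct types, so before calling it you should merge signed heavy types that coincide ($\vec v_i=-\vec v_{i'}$ with opposite reference bits), adding their capped counts, as the paper does.
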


\begin{proof}
We list the changes to the proof of Lemma~\ref{lem:normal}.

\emph{Geometry.} $Z=\sum_i[0,c_i\vec v_i]$ has the support function of
Lemma~\ref{ms:representation}(ii), evaluated in $O(m)$ time. Steps~1--2 use only
this oracle, $M_{\vec x}\supseteq\varpi\tau\cdot P$, and the window, so they and
\eqref{eq:normal-macbeath-sum} hold with constants independent of the counts.

\emph{Reference and heavy types.} With the per-copy losses $g_{ij}^{x}$ of
\eqref{eq:item-loss}, choose one bit $b_i$ for all copies of type $i$,
minimizing $\sum_{j\le q}g^{b}_{ij}$, and let $\vec b=\sum_ic_ib_i\vec v_i$ be the
\emph{full} reference sum. A count vector $\vec x$ ($0\le x_i\le c_i$) has row-$j$
deficit $g_j(\vec x)=\sum_i\bigl((c_i-x_i)g^0_{ij}+x_ig^1_{ij}\bigr)$, which is at
most $\beta$ if $\vec x$ realizes a patch target. Since
$\sum_{j\le q}g_j(\vec x)\ge\sum_ic_i\sum_{j\le q}g^{b_i}_{ij}$, the reference has
total loss at most $q\beta$ whenever a solution exists; otherwise the patch is
discarded. Let $f_i\in[0,c_i]$ be the flip counts of $\vec x$ ($f_i=x_i$ if $b_i=0$,
else $c_i-x_i$), so that
$g_j(\vec x)=\sum_i\bigl((c_i-f_i)g^{b_i}_{ij}+f_ig^{1-b_i}_{ij}\bigr)$. Then
$\sum_if_ig^{b_i}_{ij}$ is at most the reference row loss $\sum_ic_ig^{b_i}_{ij}$,
and $\sum_if_ig^{1-b_i}_{ij}\le g_j(\vec x)$. Hence Step~3 applies with $|X|$
replaced by $\sum_{i\in H}f_i$, giving
\begin{equation}\label{ms:heavy-count}
 \sum_{i\in H}f_i\le2q\tau
\end{equation}
and \eqref{eq:normal-heavy-absolute}--\eqref{eq:normal-heavy-artificial} with
multiplicities $f_i$.

\emph{Heavy support.} By \eqref{ms:heavy-count}, instantiating
$\min(c_i,k_0)$ copies of $\vec v_i'=(1-2b_i)\vec v_i$ for each heavy type, where
$k_0=\lceil2\bar d\tau\rceil$, represents every relevant heavy change, and every
subset of these copies is available. Lemma~\ref{lem:quota} on them costs
$\softO{k_0m+w_{\vec x}^{\rm loc}}$. All copies of every heavy type are then
deleted, while $\vec b$ keeps the full counts. The residual filter uses
$\alpha_j=\sum_{i\notin H}c_ig^{b_i}_{ij}$: if a residual target $\vec r$ is the sum
of light flips with counts $f_i$, then
$\alpha_j-\vec a_j\cdot\vec r=\sum_{i\notin H}\bigl((c_i-f_i)g^{b_i}_{ij}+f_ig^{1-b_i}_{ij}\bigr)\in[0,\beta]$.
Soundness reconstructs the valid counts $x_i=c_ib_i+(1-2b_i)f_i$. Equal surviving
signed types are merged. The child box bounds concern one copy of each type, so
counts cause no enlargement.

\emph{Cost.} Since $m\le|P|$ and $w\ge c_{\bar d}\tau^{\bar d}|P|$,
the temporary copies number at most $k_0m=O_{\bar d}(w)$, and
$\sum_{\vec x}w^{\rm loc}_{\vec x}=O(w)$. For the deterministic variant, first merge
equal signed heavy types (types $\vec v_i=-\vec v_{i'}$ with opposite bits),
adding their capped counts, so that the types are distinct; then
Lemma~\ref{lem:det-bounded-sums} with this $k_0$ and the coordinate
budgets of Step~4 retains every required heavy sum, at cost
$\widetilde O(m+k_0^3w_{\vec x}^{\rm loc})$ per patch.
\end{proof}

\begin{lemma}[Interior reduction with counts]\label{ms:deep}
In the setting of Lemma~\ref{lem:deep}, let the items be $m$ types
$\vec v_i\in P\cap\Lambda_0$ with counts $c_i$, $Z=\sum_i[0,c_i\vec v_i]$, and let
the GAP data consist of reserve counts $0\le r_i\le c_i$ with
$|R|=\sum_ir_i=O(h)$ satisfying the filling
\eqref{eq:reserve-core-fill} for the reserved multiset, and the robustness that
for every $0\le e'_i\le c_i-r_i$ with $\sum_ie'_i\le h$, the vectors $\vec u_i$ with
$c_i-r_i-e'_i>0$ span $\R^r$. Then the conclusions of Lemma~\ref{lem:deep} hold,
with a child of whole surviving types and counts, and with cost $\softO{m+w}$.
\end{lemma}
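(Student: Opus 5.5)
Following the pattern of Lemma~\ref{ms:normal}, the plan is to go through the proof of Lemma~\ref{lem:deep} step by step and replace each per-item quantity by its count-weighted version. The certified core becomes $J=\vec z_0+\sum_i[0,(c_i-r_i)\vec u_i]$. By the count version of Lemma~\ref{lem:filled-zonotope}, with at most $r$ nonintegral count variables rounded to adjacent integers, every lattice point of $J$ is realized by counts $x_i\le c_i-r_i$ plus a reserve witness using counts $\le r_i$. This gives \eqref{eq:certified-core} with counts. The buffer \eqref{eq:projected-core-buffer} uses only $|R|=\sum_ir_i=O(h)$, so it is unchanged. The lattice-point-or-thin-direction call, the choice of $\psi$, and the computation of $\vec\beta$ need only support evaluations of $J$ and of the fiber. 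These are computed by scans over the $m$ types, e.g.\ $h_J(\vec a)=\vec a\cdot\vec z_0+\sum_i(c_i-r_i)\max(0,\vec a\cdot\vec u_i)$, and the function $G$ becomes $\psi(\vec z_0)+\vec\beta\cdot(\vec y_0-F\vec z_0)+\sum_i(c_i-r_i)\min(0,q_i)$. So these steps cost $\softO{m}$ by Lemma~\ref{lem:bits}. The reflection argument and \eqref{eq:reflection-band} are purely geometric and carry over verbatim.

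The step needing real care is the choice of the set $D$ of largest prices. I would choose one cheap endpoint $b_i$ per type, as in Lemma~\ref{ms:representation}(iii). The deficit of a count vector is then $\sum_i|q_i|f_i$, where $f_i\in[0,c_i-r_i]$ are the flip counts. Sort the types by $|q_i|$ and let $D$ consist of the first $h$ copies in this order, i.e.\ counts $d_i$ with $\sum_id_i=\min(h,\sum_i(c_i-r_i))$. At most one type is split between $D$ and the survivors. The forcing argument is unchanged: forcing the expensive endpoint on these $h$ copies moves support values by at most $hh_P$. This yields $\sum_i d_i|q_i|\le\kappa$, and since every surviving copy has price at most the smallest price in $D$, also $|q_i|\le\kappa/h$ for all surviving types. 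Here the robustness hypothesis is used with $e'_i=d_i$, so the surviving types still span $\R^r$. This gives the rank $d+1$ claim exactly as before.

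The target update \eqref{eq:deep-old-target-update} subtracts $\mathcal S$ of the reserved multiset and of the $D$-copies. It also subtracts the full reference sum $\vec b=\sum_i(c_i-r_i)b_i\vec v_i$ over unreserved copies. With $D$ reanchored, the $D$ copies are accounted for by the flip transfer; I would take care that $\vec b$ counts exactly the copies that either stay in the child or are transferred, so that no copy is used twice. The two transfers involve $O(h)$ copies, so Lemma~\ref{ms:representation}(iv) computes them in $\softO{m+w}$. The child consists of the surviving types $(\vec v_i',\zeta_i)$ with counts $c_i-r_i-d_i$, merged when equal. Equivalence is proved as before: in the converse direction, split the parent's counts into reserve, $D$, and surviving parts by the $\min$ rule of Lemma~\ref{ms:representation}(iv), and note that the deficits of the $D$ part are nonnegative and can be dropped. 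The size bound depends only on the shape of one copy of each type, so it is unchanged.

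I expect the main obstacle to be this bookkeeping around $D$ and the reserve with respect to the reference bits. Splitting a parent solution's counts must give reserve and $D$ choices that are genuinely disjoint from the child's counts, and the lifted sum must land in the strip $[0,2\kappa]$. I would handle this first by fixing, for each type, the order reserve, then $D$, then child among its copies. All splits of a count $x_i$ are then canonical, and the strip inequality reduces to nonnegativity of $|q_i|$ times the $D$ flip counts.
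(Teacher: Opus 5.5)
Your proposal is correct and is essentially the paper's own argument. It uses the weighted core $J=\vec z_0+\sum_i[0,(c_i-r_i)\vec u_i]$ with type-scan oracles, and takes $D$ as the first $h$ copies in nonincreasing price order, so that $h|q_i|\le\kappa$ for every surviving type and robustness gives the span. The target update subtracts $\mathcal S(R)$, the reference sum $\vec b$ over all unreserved copies, and the sums of the signed $D$-copies, and completeness comes from a canonical split of reserve, $D$-flip and child-flip counts. The only cosmetic difference is that the paper applies the $\min$ rule to the flip counts $f_i$ instead of fixing an order on the physical copies; both yield the same strip bound $0\le\sum_i\ell_i|q_i|\le2\kappa$.
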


Condition~(3) of Theorem~\ref{thm:carrier} is exactly this
robustness hypothesis, and the paragraph ``Interior hypotheses'' of
Section~\ref{sec:recursion} applies verbatim to $e$ exceptional copies.

\begin{proof}
We list the changes to the proof of Lemma~\ref{lem:deep}. Put $c'_i=c_i-r_i$ and
$J=\vec z_0+\sum_i[0,c'_i\vec u_i]$; the count version of
Lemma~\ref{lem:filled-zonotope} gives \eqref{eq:certified-core} for
$V_{\vec c}$, with corrections drawn from the reserved copies only. All support
functions are weighted, e.g.\
$h_{FJ}(\vec a)=\vec a\cdot F\vec z_0+\sum_ic'_i\max(0,\vec a\cdot\vec v_i)$, and $|R|$
now means $\sum_ir_i$. $G$ and $\alpha$
carry the weights $c'_i$. For counts $0\le x_i\le c'_i$ the deficit at
$\vec z_0+\sum_ix_i\vec u_i$ equals $\sum_i|q_i|f_i$, with flip counts $f_i=x_i$ if $b_i=0$ and
$f_i=c'_i-x_i$ otherwise.

Instead of $h$ items, sort the types by nonincreasing $|q_i|$ and take the first
$h$ copies in that order, i.e.\ counts $h_i\le c'_i$ with $\sum_ih_i=h$ (robustness
gives $\sum_ic'_i>h$). The forced-core argument gives $\sum_ih_i|q_i|\le\kappa$. If
$s_i=c'_i-h_i>0$, each of the $h$ removed copies has price at least $|q_i|$, so
$h|q_i|\le\kappa$. The child uses $\vec b=\sum_ib_ic'_i\vec v_i$, the target
update $\mathcal T'=\mathcal T-\mathcal S(R)-\vec b-\mathcal S(V'_{\vec h})$
(with $\vec h=(h_i)$ and $V'$ the signed types $\vec v_i'$), and types
$(\vec v_i',\zeta_i)$ with counts $s_i$.

Soundness: child counts $\ell_i\le s_i$, reserve counts $\bar r_i\le r_i$, and removed
flip counts $j_i\le h_i$ give original counts $\bar r_i+j_i+\ell_i$ or
$\bar r_i+c'_i-j_i-\ell_i$, both in $[0,c_i]$. Completeness: given parent counts
$y_i\le c_i$, put $\bar r_i=\min(y_i,r_i)$ and $x_i=y_i-\bar r_i\le c'_i$; then
$\vec z_0+\sum_ix_i\vec u_i\in J$ and its image is within
$(2\sigma\tau+2|R|)\cdot P$ of $\vec y_0$, so \eqref{eq:reflection-band}
applies and $\sum_i|q_i|f_i\le2\kappa$. Split $f_i=j_i+\ell_i$ with
$j_i=\min(f_i,h_i)$ and $\ell_i\le s_i$; copies of a type are interchangeable, so
this realizes the same vector sum, and the surviving part has
$z-\vec\beta\cdot\vec y=\sum_i\ell_i|q_i|\in[0,2\kappa]$. Robustness shows that the
types with $s_i>0$ span $\R^r$. Only $|R|+h$ copies are materialized,
and all oracles scan $m$ types.
\end{proof}

\subsection{Vector input and the dimension bound}\label{sec:vector-invariant}
Fix
\begin{equation}\label{eq:det-vector-parameters}
 0<\delta\le\min\left\{\frac{\varepsilon}{4(d+3)},\frac1{20d}\right\},
 \qquad k=\lfloor \nu ^{\delta/3}\rfloor,
 \qquad \eta=\frac{\delta}{8(d+1)};
\end{equation}
constants and the threshold on $\nu$ may depend on $d,\varepsilon,\delta$.

\begin{lemma}[Comparison of repeated vector sumsets]
\label{lem:vector-sum-comparison}
If $S\subseteq\prod_{j=1}^d[-N_j,N_j]\cap\Z^d$ contains zero, then
$|kS|\le \gamma _{d,\delta}|\lfloor k/2\rfloor S|$ for sufficiently large $\nu$.
\end{lemma}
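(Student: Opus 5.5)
The plan is to reduce to the one-dimensional doubling estimate \eqref{eq:carrier-doubling} through an injective mixed-radix encoding, just as the vector preprocessing of Section~\ref{sec:vector-preprocessing} does. Since the encoding is linear, it commutes with taking repeated sumsets; so if it is injective on the relevant dilate of the box, the sizes of $kS$ and $\lfloor k/2\rfloor S$ can be measured on integers.

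\textbf{Encoding.} I would set
\[
 t_j=2kN_j+1,\qquad \operatorname{enc}(\vec v)=v_1+t_1v_2+\cdots+(t_1\cdots t_{d-1})v_d .
\]
This map is linear. It is injective on $\prod_j[-kN_j,kN_j]\cap\Z^d$, which contains $kS$, by successive reduction modulo the radices. Put $X=\operatorname{enc}(S)\subseteq\Z$. Then $\operatorname{enc}(qS)=qX$ for every $q\le k$. Injectivity on $kS\supseteq qS$ therefore gives $|qS|=|qX|$ for $q=k$ and $q=\lfloor k/2\rfloor$.

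\textbf{Size of the encoded set.} Every element of $X$ has absolute value at most
\[
 L':=\prod_j(2kN_j+1)\le (3k\nu)^d .
\]
Translate $X$ by $L'$ to get $X'=X+L'\subseteq[0,2L']$, and subtract the translate of $0$ to get $X''=X'-L'$. Translation does not change sumset sizes, and $0\in S$ gives $0\in X$. To meet the hypothesis of Theorem~\ref{thm:import-long-sumsets}, it is cleaner to shift by $-\min X$, which puts the set inside $[0,L]$ with $L=2L'$. The shifted set $Y=X-\min X$ contains $0$ and satisfies $|qY|=|qX|$.

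\textbf{Applying the doubling estimate.} I would use \eqref{eq:carrier-doubling} with $s=\lfloor k/2\rfloor$ and $b=2$ or $3$, so that $k\le bs$ for $k\ge4$. This needs $s\ge L^\alpha$ for some fixed $\alpha>0$. Here $L\le 2(3k\nu)^d\le \nu^{2d}$ for large $\nu$, while $s\ge \nu^{\delta/3}/3$. Hence $\alpha=\delta/(7d)$ works for large $\nu$. Then
\[
 |kS|=|kY|\le|bsY|=O_{\delta,d}(|sY|)=O_{\delta,d}(|\lfloor k/2\rfloor S|),
\]
and the constant depends only on $d,\delta$ because $\alpha$ does.

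\textbf{Main obstacle.} The step needing care is the polynomial relation between the encoded range $L$ and the scale $s$. The radices grow with $k$ as well as with $\nu$, which is why the bound $L\le(3k\nu)^d$ must still be polynomial in $\nu$. Otherwise the argument is bookkeeping. One should also check that coordinates with $N_j=0$ are harmless: there $t_j=1$, so they contribute only the trivial coordinate $0$.
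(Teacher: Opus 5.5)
Your proposal is correct and is essentially the paper's proof: encode $S$ by a mixed-radix map that is injective on the box containing $kS$, translate by the minimum into $[0,L]$ with $L$ polynomial in $\nu$, and apply the doubling estimate \eqref{eq:carrier-doubling} at $s=\lfloor k/2\rfloor$ together with $kY\subseteq bsY$. The differences from the paper do not matter. These are radices $2kN_j+1$ instead of $8kN_j+1$, which suffices because $|bsY|$ is only an integer intermediate; a cruder bound on $L$ with exponent $\delta/(7d)$; and $b=3$, which you must take because $b=2$ fails for odd $k$.
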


\begin{proof}
After dropping coordinates with $N_j=0$, the mixed-radix map with radices
$t_j=8kN_j+1$ is injective on $\prod_j[-4kN_j,4kN_j]\cap\Z^d$. Translate the
image of $S$ by its minimum to a set $X\ni0$ of diameter
$L\le O_d(k^{d-1}\prod_j(N_j+1))\le\nu^{d+1}$. With
$s=\lfloor k/2\rfloor\ge L^{\eta}$ (or $X=\{0\}$),
\eqref{eq:carrier-doubling} gives $|4sX|\le\gamma_{d,\delta}|sX|$, and
$kX\subseteq4sX$. The encoding is injective on these sumsets.
\end{proof}

Each active type keeps its oriented original vector $\vec a_i$, and the
linear map $\varphi$ of Section~\ref{sec:representation}, now vector valued,
satisfies $\varphi(\vec v_i)=\vec a_i$. With
$S_I=\{\vec 0\}\cup\{\pm\vec a_i:i\in I\}$ and $\mu(I)=|kS_I|$, the
\emph{vector short-relation invariant} asks that $\varphi$ induce a bijection
\begin{equation}\label{eq:vector-short-invariant}
 k\bigl(\{\vec 0\}\cup\{\vec v_i,-\vec v_i:i\in I\}\bigr)\longrightarrow kS_I,
 \qquad 1\le\mu(I)\le(2k\nu +1)^d.
\end{equation}

\begin{lemma}[Vector rank and total rank loss]
\label{lem:vector-rank}
If every GAP used in a lift is proper on a scale exceeding $2k$, the recursion
preserves \eqref{eq:vector-short-invariant}, also under merging. With
$d_*=\lfloor d(1+3/\delta)\rfloor+2$, the current rank is less than $d_*$, the
total rank loss along a path is less than $4d_*$, and there are fewer than $9d_*$
coordinate introductions and rank-losing events on a path.
\end{lemma}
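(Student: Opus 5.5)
The plan follows the scalar argument of Lemmas~\ref{lem:freiman} and~\ref{lem:rank}, with two substitutions. The counting bound $2kN+1$ becomes $(2k\nu+1)^d$. The doubling estimate \eqref{eq:carrier-doubling} is replaced by Lemma~\ref{lem:vector-sum-comparison}.

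\emph{Preservation.} I would first show the invariant is preserved, including under merging. At the root $\varphi$ is the identity. Deletion, reorientation (which flips both $\vec v_i$ and $\vec a_i$), restriction to the span, and injective coordinate changes preserve a bijection between signed $k$-sumsets. For merging, equal current types have equal oriented original vectors, so the sets $\{\pm\vec v_i\}$ and $S_I$ are unchanged. For a lift I would argue as in Lemma~\ref{lem:freiman}. Suppose two signed $k$-sums of original vectors agree. By the inductive hypothesis the old current vector sums agree, hence so do their mixed-radix encodings. Both GAP-coordinate sums lie in the box of scale $>2k$ on which $\rho$ is injective, so the GAP coordinates agree, and with them the values of $\psi$ and hence the lifted vectors. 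The converse direction holds because the old coordinates are retained and $\varphi$ factors through them. The one point needing care is that the encoding is injective on $k$-fold signed sums. This follows since $k\le\gamma_3 h$ for large $\nu$, so such sums lie in $\gamma_3h\cdot P$, where the encoding is injective.

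\emph{Rank bound.} With $d$ replaced by a current rank $\bar d$, take $\bar d$ independent current vectors. Their nonnegative combinations of total weight at most $k$ are distinct, which gives
\[\binom{k+\bar d}{\bar d}\le\mu(I)\le(2k\nu+1)^d.\]
The left side is at least $(k/\bar d)^{\bar d}$ and $k\approx\nu^{\delta/3}$, so the inequality forces $\bar d\,\delta/3\le d(1+\delta/3)+o(1)$, that is, $\bar d\le d(1+3/\delta)+o(1)$. Hence $\bar d<d_*$ for large $\nu$.

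\emph{Rank loss.} Suppose a deletion drops the rank by $q$. Choose $q$ deleted vectors independent modulo the surviving span and put $m=\lfloor k/2\rfloor$. The argument of \eqref{eq:ranktranslate} gives
\[\mu(I)\ge\binom{m+q}{q}\,|mS_{I'}|.\]
Lemma~\ref{lem:vector-sum-comparison} gives $\mu(I')=|kS_{I'}|\le\gamma_{d,\delta}|mS_{I'}|$. Since $q<d_*$ and $\binom{m+q}{q}\ge (m/q)^q$, these combine to $\mu(I')\le k^{-q/2}\mu(I)$ for large $\nu$. Telescoping from $\mu\le(2k\nu+1)^d$ down to $\mu\ge1$ bounds the total loss by
\[\sum q\le \frac{2d\log(2k\nu+1)}{\log k}\to 2d(1+3/\delta)<4d_*.\]

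\emph{Counting events.} The current rank starts at $\le d$, stays below $d_*$, and each lift adds exactly one coordinate. So the number of introductions is at most $d_*+(\text{total loss})<5d_*$. The number of rank-losing events is at most the total loss, below $4d_*$. Together there are fewer than $9d_*$ events.

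\emph{Main obstacle.} I expect the main obstacle to be the bookkeeping in the lift step for vectors. One must check that the encoding radix covers $k$-fold sums of current vectors and that merging never creates a collision in the signed sumset. Both should follow from the choice of $\gamma_3$ and from the fact that merged types share the same original vector.
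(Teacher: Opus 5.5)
Your proposal is correct and follows the paper's proof essentially step for step. It uses the same preservation argument (merging is harmless because $S_I$ depends only on the distinct oriented original vectors), $(2k\nu+1)^d$ in place of $2kN+1$, Lemma~\ref{lem:vector-sum-comparison} in place of \eqref{eq:carrier-doubling}, and the same count of fewer than $5d_*$ introductions plus fewer than $4d_*$ rank-losing events.

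Two small points. In the rank bound, apply the binomial estimate to $d_*$ chosen independent vectors, using that $\binom{k+\bar d}{\bar d}$ increases with $\bar d$; the bound $(k/\bar d)^{\bar d}$ with an unbounded $\bar d$ gives no uniform $o(1)$. In the lift, only the linearity of $\operatorname{enc}$ is used, not its injectivity on $k$-fold sums.
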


\begin{proof}
The proofs of Lemmas~\ref{lem:freiman} and~\ref{lem:rank} apply with the
following changes. The potential depends only on the distinct original vectors,
so counts and merging play no role (a rank loss must exhaust types). The upper
bound $2kN+1$ becomes $(2k\nu+1)^d$, and
Lemma~\ref{lem:vector-sum-comparison} replaces \eqref{eq:carrier-doubling}.
Then $\binom{k+d_*}{d_*}\gtrsim\nu^{\delta d_*/3}$ exceeds
$(2k\nu+1)^d=O_d(\nu^{d+d\delta/3})$, so the rank is less than $d_*$; and
$\sum q\le2d\log(2k\nu+1)/\log k<4d_*$. A path starting in rank at most $d$ thus
has fewer than $5d_*$ introductions and $4d_*$ rank-loss events.
\end{proof}

\subsection{Derandomization}\label{sec:derandomize}

\begin{lemma}[Deterministic bounded corrections]
\label{lem:det-bounded-sums}
Fix $d$. Let $\vec w_1,\ldots,\vec w_m\in\Z^d$ be distinct types with counts
$c_i\ge1$, and let $k_0\ge1$ and $\Delta_j\ge0$ be integers. One can
deterministically compute a set
$S\subseteq\mathcal{S}(W_{\vec c})\cap B$, where $B=\prod_{j=1}^d[-\Delta_j,\Delta_j]$,
containing every sum with a witness $0\le x_i\le c_i$ satisfying
$\sum_ix_i\le k_0$ and $\sum_ix_i|w_{ij}|\le\Delta_j$ for all $j$, in
$\widetilde O_d(m+k_0^{3}|B|)$ time.
\end{lemma}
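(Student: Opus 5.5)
The plan is to follow the overview's hint: since every relevant witness uses at most $k_0$ copies in total, compute the \emph{exact} truncated generating function of such witnesses and read off its support. The trick is to take the product over types without expanding copies, using Newton's identities in a truncated ring. Introduce a marker variable $u$ for cardinality and work in the ring $\mathcal R=\mathbb Q[\vec z][u]/(u^{k_0+1})$, with the $\vec z$-exponents truncated to $B$ after every multiplication. Truncating in $\vec z$ is legitimate because all partial sums of an absolute-bounded witness stay in $B$, just as in Lemma~\ref{lem:quota}. The target polynomial is
\[
 \Phi(u,\vec z)=\prod_{i=1}^m\sum_{x=0}^{\min(c_i,k_0)}u^{x}\vec z^{\,x\vec w_i}
 \bmod u^{k_0+1}.
\]
Any sum attained with $\sum_ix_i\le k_0$ appears in $\Phi$ with a positive coefficient, since all coefficients are nonnegative and no cancellation can occur. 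Conversely, every monomial of $\Phi$ is an actual subset sum respecting the counts. So I would output the exponents in $B$ with positive coefficient; counting coefficients are nonnegative integers, so positivity is an exact test.

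To avoid a product over $m$ factors of size $|B|$ (which would cost $m|B|$), take logarithms. The factor for type $i$ is a truncated geometric series $(1-(u\vec z^{\vec w_i})^{c_i+1})/(1-u\vec z^{\vec w_i})$, and $1-X$ has the explicit logarithm $-\sum_{p\ge1}X^p/p$. Therefore
\[
 \log\Phi=\sum_i\sum_{p=1}^{k_0}\frac{1}{p}\bigl(u^p\vec z^{\,p\vec w_i}-[p\ge c_i+1]\,(c_i+1)\,u^{p}\vec z^{\,p\vec w_i}\,[\,(c_i+1)\mid p\,]\cdots\bigr),
\]
that is, a power-sum series $\sum_p \frac{u^p}{p}\sum_i\bigl(\vec z^{p\vec w_i}-(c_i+1)\text{-terms}\bigr)$ whose coefficients are accumulated by a scan over types and $p\le k_0$. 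Only monomials with $p\vec w_i$ inside a dilate of $B$ are needed, so each power sum $P_p(\vec z)$ is an array of size $O(|B|)$, and building all of them costs $O(m k_0)$ additions, plus $O(k_0|B|)$ for zeroing. Then recover $\Phi=\sum_s e_s u^s$ by Newton's identities $s\,e_s=\sum_{p=1}^{s}(-1)^{p-1}\,e_{s-p}P_p$. Here the signs are absorbed into the logarithm formulation, and equivalently one can use $\Phi'=\Phi\cdot(\log\Phi)'$ in $u$. This involves $O(k_0^2)$ multiplications of $\vec z$-polynomials truncated to $B$, each done by FFT in $\widetilde O(|B|)$ time, for a total of $\widetilde O(k_0^2|B|)$ arithmetic operations.

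The main obstacle is bit complexity and exactness. The coefficients of $e_s$ count witnesses and can be as large as $\binom{n}{k_0}$. Division by $s$ also appears, and exact rational FFT arithmetic is unavailable in word-RAM. I would handle this by computing everything modulo a prime $\pi>k_0$ to legalize the divisions. But positivity over $\Z$ is not preserved mod $\pi$, so a single prime could give false zeros. The fix I would try first is to replace counting by \emph{bounded} information. The counts are at most $(mk_0)^{k_0}\le 2^{O(k_0\log n)}$, so they can be recovered exactly via CRT over $O(k_0)$ word-size primes larger than $k_0$. Each prime costs $\widetilde O(k_0^2|B|)$, which gives exactly the claimed $\widetilde O_d(m+k_0^3|B|)$. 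Primes are found deterministically by sieving in $\widetilde O(k_0)$. Truncation to $B$ commutes with the Newton recursion only if every factor is truncated consistently, i.e.\ we compute $\Phi$ restricted to monomials whose witness has all partial sums in $B$. I would verify this by noting that truncation is a ring homomorphism once exponents are taken over the absolute-value lift, which requires tracking absolute contributions as additional coordinates. If that fails, I would accept extra sums in $B$, since the statement only requires $S\subseteq\mathcal S(W_{\vec c})\cap B$. Any positive monomial found after truncating outputs alone still has a genuine witness by nonnegativity, because truncation only deletes terms.
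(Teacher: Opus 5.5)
\textbf{There is a genuine gap.} Your outline has the right shape: a truncated generating function, Newton's identities, $O(k_0^2)$ truncated products, $O(k_0)$ CRT primes in place of the paper's exact big-integer convolution, and a positivity test. The cost count would also be fine. The failure is the point you flag yourself: you keep \emph{signed} exponents and truncate to the symmetric box $B$ \emph{inside} the Newton recursion.

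Truncation to $B$ is not a ring homomorphism on Laurent polynomials. Newton's identities rely on cancellation between a term of $e_{s-p}P_p$ and a term of $e_{s-p-1}P_{p+1}$. The second term's factors (a partial sum, or $(p+1)\vec w_i$) can leave $B$ even though the common final exponent lies in $B$. Your justification, that all partial sums of an absolute-bounded witness stay in $B$, is valid for cancellation-free products but not here. The truncated arrays are then no longer truncations of the true $e_s$; they need not be integral or nonnegative, so positivity certifies nothing.

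\emph{A concrete failure.} Take $d=1$, types $3$ and $-2$ with counts $1$, $\Delta_1=3$, $k_0=3$, and truncate everything to $[-3,3]$. You get $\tilde e_1=z^3+z^{-2}$, $\tilde e_2=z$, and $\tilde P_2=\tilde P_3=0$. Hence $3\tilde e_3=z^{-1}$, because the cancelling term $-z^{3}\cdot z^{-4}$ was dropped. You would therefore output $-1\notin\mathcal S(W_{\vec c})=\{-2,0,1,3\}$.

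Keeping the power sums exact does not help. With types $3$ (count $1$) and $-2$ (count $2$), $\Delta_1=3$, $k_0=4$, truncating only the $e_s$ already gives coefficient $\tfrac16$ at $z^{2}$ in $\tilde e_4$, although $2\notin\mathcal S(W_{\vec c})$.

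\emph{Your fallbacks do not rescue this.} The claim that truncation only deletes terms is false inside the recursion. Truncating only at the end needs arrays of size about $|k_0\cdot B|$, losing a $k_0^{d}$-type factor. The absolute-value lift is correct, since it makes truncation a quotient by a monomial ideal. But its exponents $(\vec y,\vec a)$ with $|y_j|\le a_j\le\Delta_j$ occupy on the order of $|B|^2$ cells, far above the budget.

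\textbf{The missing idea is the paper's orthant split}, which is exactly the lift without the blow-up.
\begin{enumerate}
\item Discard types with some $|w_{ij}|>\Delta_j$, cap counts at $k_0$, and group the types by sign pattern into $2^d$ orthants (zero counted as nonnegative).
\item Within one orthant, replace $\vec w_i$ by $\vec u_i=(|w_{i1}|,\ldots,|w_{id}|)\in B_+=\prod_j[0,\Delta_j]$, which still determines $\vec w_i$. Now all exponents are nonnegative, and truncation is reduction modulo $(X_1^{\Delta_1+1},\ldots,X_d^{\Delta_d+1})$.
\item Newton's identities then hold exactly in that quotient ring. The positive coefficients of $e_0+\cdots+e_{k_0}$ are exactly the orthant's sums of at most $k_0$ copies with absolute sum in $B_+$.
\item Restore signs and combine the $2^d$ orthant supports by \emph{Boolean} convolution clipped to $B$. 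This step has no cancellation. Every partial combination of a required witness obeys the absolute budgets, so clipping keeps it. Soundness holds because different orthants use disjoint types.
\end{enumerate}
With this change, the rest of your argument, including CRT in place of exact big-integer arithmetic, fits in $\widetilde O_d(m+k_0^3|B|)$.
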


\begin{proof}
Discard types with some $|w_{ij}|>\Delta_j$, cap counts at $k_0$, and split
the types into the $2^d$ coordinate orthants (zero counts as nonnegative). In
one orthant put $\vec u_i=(|w_{i1}|,\ldots,|w_{id}|)\in B_+=\prod_j[0,\Delta_j]$;
there are at most $|B_+|$ types. Working modulo
$(X_1^{\Delta_1+1},\ldots,X_d^{\Delta_d+1})$, let $E_a(\vec X)$ be the coefficient of
$Y^a$ in $\prod_i(1+Y\vec X^{\vec u_i})^{c_i}$ and
$\Pi_\ell(\vec X)=\sum_i c_i\vec X^{\ell\vec u_i}$, and compute
$E_0,\ldots,E_{k_0}$ by Newton's identities
\begin{equation}\label{eq:det-newton}
 E_0=1,\qquad
 aE_a=\sum_{\ell=1}^a(-1)^{\ell-1}\Pi_\ell E_{a-\ell}
 \quad(1\le a\le k_0).
\end{equation}
They hold for the list of individual copies, divisions by $a$ are exact over
$\Z$, and since all exponents are nonnegative, truncation never affects a
retained exponent. The coefficients of $E_a$ count selections of $a$ distinct
copies; they are at most $(k_0|B_+|)^{k_0}$, so each of the
$O(k_0^2)$ products costs $\widetilde O_d(k_0|B_+|)$ by exact
convolution. The positive coefficients of $E_0+\cdots+E_{k_0}$ give the
orthant's support (zero tests are over $\Z$, not modulo a prime). Restore signs
and add the $2^d$ supports by Boolean convolution, clipping to
$B$; every partial combination of a required witness obeys
the absolute budgets, so clipping keeps it, and every output uses available
copies.
\end{proof}

\begin{lemma}[Deterministic Freiman construction in a bounded interval]
\label{lem:det-freiman}
Fix $K\ge1$. Given $X\subseteq[0,L]\cap\Z$ with
$|X+X|\le K|X|$, one can deterministically construct a containing
GAP of rank $O_K(1)$ and coefficient volume $O_K(|X|)$, with
$O_K(\log(L+2))$-bit entries, in
$\widetilde O_K(L+1)$ time.
\end{lemma}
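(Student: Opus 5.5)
The plan is to derandomize the constructive Freiman theorem of Randolph--W\k{e}grzycki (Theorem~\ref{thm:import-freiman}) by replacing its one random choice, a modelling multiplier that maps $X$ Freiman-isomorphically into a small cyclic group, with an exhaustive deterministic sieve. The sieve is affordable because $X$ lives in an interval of length $L$: all relevant additive data fit in arrays of size $O(L)$. The final GAP is then reconstructed exactly as in the randomized argument.

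\emph{Step 1 (small doubling data).} Compute $X+X$, $X-X$, $2X-2X$ and a bounded number of further iterated sum/difference sets by Boolean FFT on $[-cL,cL]$, in $\widetilde O_K(L)$ time. By Pl\"unnecke--Ruzsa these have size $O_K(|X|)$, and each one is an explicit set of integers in an interval of length $O_K(L)$.

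\emph{Step 2 (modelling, deterministically).} Choose a prime $p$ of size $\Theta_K(|X|)$. We need a multiplier $\lambda\in[1,p-1]$ and a set $X'\subseteq X$ with $|X'|\ge|X|/O_K(1)$ such that $x\mapsto\lambda x \bmod p$, followed by the standard lift to an interval of length $p/O_K(1)$, is a Freiman $8$-isomorphism on $X'$. Ruzsa's averaging argument shows that a constant fraction of multipliers are good. To certify a fixed $\lambda$ deterministically, we must check that no nonzero element of $8X'-8X'$ becomes $0 \bmod p$ after the multiplier and the interval restriction. This set lies in $[-16L,16L]$, so the test is a sieve over an array of length $O(L)$: for each candidate $\lambda$, count the elements of $8X-8X$ that fall in the bad residue region by one pass, using the array of $8X-8X$ computed by FFT.

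This step is the main obstacle. Naively trying $O_K(|X|)$ multipliers at $O(L)$ each is too slow. I would first restrict attention to primes $p\le L$ and candidate multipliers tied to the structure of the interval, processing all candidates simultaneously. A bad event for $\lambda$ is that some $y\in 8X-8X\setminus\{0\}$ satisfies $\lambda y\equiv0$, or lands near $0$, modulo $p$. Fixing $y$, the bad multipliers form a set computable from $y \bmod p$ through a harmonic-type enumeration. Summing over $y$ in an interval of length $O(L)$, with divisor-type counting, gives total time $O(L\log L)$. Taking an argmin then yields a good multiplier, whose existence is guaranteed by the averaging bound.

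\emph{Step 3 (Bogolyubov--Chang in $\Z_p$ and pull-back).} The image $A\subseteq\Z_p$ of $X'$ has density $\Omega_K(1)$. Compute its Fourier transform by FFT in $\widetilde O(p)$ time and take the large spectrum deterministically. The Bohr set on a dissociated spanning subset of the spectrum lies in $2A-2A$ and contains a proper GAP of rank $O_K(1)$ and size $\Omega_K(p)$, found by lattice reduction in fixed dimension (Theorem~\ref{thm:import-lattice-reduction}). Pulling back through the Freiman isomorphism gives a proper GAP $P\subseteq 2X'-2X'$ of size $\Omega_K(|X|)$.

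\emph{Step 4 (covering).} Apply Ruzsa covering: greedily pick a maximal set $T\subseteq X$ such that the translates $t+P$ are disjoint. Then $|T|=O_K(1)$, the pick costs $O(L)$ with arrays, and $X\subseteq T+P-P$. Adding the $O_K(1)$ elements of $T$ as extra generators of length one gives a containing GAP of rank $O_K(1)$ and volume $O_K(|X|)$. Its entries lie in $[-O(L),O(L)]$, hence have $O_K(\log(L+2))$ bits. Containment is verified by a final scan.
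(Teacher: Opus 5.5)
Your Steps 1, 3 and 4 follow the paper's route: Bogolyubov in $\Z/p\Z$ as in Randolph--W\k{e}grzycki, pulling a proper progression back into $2X'-2X'$, and Ruzsa covering with one $\{0,1\}$ generator per covering element. Step 2, however, is the only genuinely new ingredient of this lemma, and it is set up incorrectly; its derandomization is also not actually supplied.

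With a single prime $p=\Theta_K(|X|)$ and $\lambda\in[1,p-1]$, the map $x\mapsto\lambda x\bmod p$ depends only on $x\bmod p$. It is a Freiman $8$-isomorphism on $X'$ (with or without your interval lift) precisely when $p$ divides no nonzero element of $8X'-8X'$. That condition does not involve $\lambda$, so no averaging over multipliers can guarantee it. For example, if $X$ is an arithmetic progression with common difference $p$ (doubling below $2$, and inside $[0,L]$ once $L\ge p|X|$), every $\lambda$ collapses $X$ to a point. Switching to ``primes $p\le L$'' does not help, since such a $p$ may divide nonzero elements of $8X-8X\subseteq[-8L,8L]$.

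The missing idea is Ruzsa's two-modulus modelling, which the paper derandomizes by an explicit sieve. Put $Y=8X-8X$, so $|Y|\le K^{16}|X|$.
\begin{itemize}
\item Take a prime $q>16L$ with $q=O(L+1)$, so that no nonzero element of $Y$ vanishes modulo $q$.
\item Take a prime $p$ with $8(|Y|+1)<p<16(|Y|+1)$.
\item Let the multiplier act modulo $q$: write $\bar a\in[0,q-1]$ for $\lambda a\bmod q$, and keep the elements $X'$ whose $\bar a$ fall in a most popular one of eight intervals of diameter $<q/8$.
\item Only then set $\pi(a)=\bar a\bmod p$.
\end{itemize}
For two $8$-term lists with difference $e\in Y$, the difference $z$ of the $\bar a$-sums has $|z|<q$ and $z\equiv\lambda e\pmod q$. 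So $e=0$ forces $z=0$. For $e\ne0$ the only bad event is $z=\pm tp$, i.e.\ $\lambda\equiv tp\,e^{-1}\pmod q$ with $1\le t\le\lfloor(q-1)/p\rfloor$; the sign is covered because $Y=-Y$.

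Marking these residues for all $e\in Y\setminus\{0\}$ costs one modular inverse per $e$. It makes fewer than $|Y|(q-1)/p<(q-1)/8$ marks, so an unmarked $\lambda$ exists and is found in $\widetilde O(L+1)$ time. The lower bound $p>8|Y|$ is exactly what makes ``a constant fraction of multipliers are good'' true, and your proposal never imposes it. Your ``harmonic-type enumeration'' and ``divisor-type counting'' over $y$ in an interval of length $O(L)$ say neither which residues are marked nor why fewer than all of them are.

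A minor further point: taking the large spectrum deterministically requires comparing Fourier thresholds with a fixed slack, so that $O(\log p)$-bit approximations suffice.
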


\begin{proof}
Assume $|X|\ge2$. Compute $Y=8X-8X\subseteq[-8L,8L]$ by
convolution; $|Y|\le K^{16}|X|$ by Pl\"unnecke--Ruzsa. Sieve primes
$8(|Y|+1)<p<16(|Y|+1)$ and $q>\max\{16L,2p\}$ with
$q=O(L+1)$. For each $e\in Y\setminus\{0\}$ mark the multipliers
\begin{equation}\label{eq:det-forbidden-multipliers}
 \lambda\equiv tp\,e^{-1}\pmod q,
 \qquad 1\le t\le\lfloor(q-1)/p\rfloor.
\end{equation}
Fewer than $(q-1)/8$ marks are made, so some $\lambda\in[1,q-1]$ is unmarked.
Let $\bar a\in[0,q-1]$ represent $\lambda a\bmod q$, split $[0,q-1]$ into eight
intervals of diameter less than $q/8$, let $X'\subseteq X$ be the values in a most
popular interval ($|X'|\ge|X|/8$), and put $\pi(a)=\bar a\bmod p$. Then $\pi$ is a
Freiman $8$-isomorphism on $X'$: for two $8$-term lists with original difference
$e$ and difference $z$ of the sums of the $\bar a$, we have $|z|<q$ and $z\equiv\lambda e\pmod q$; $e=0$
forces $z=0$, while $e\ne0$ with equal $\pi$-sums would make $|z|$ a nonzero
multiple of $p$ representing $\pm\lambda e$, so $\lambda$ would be marked.
Padding lists with a fixed element of $X'$ shows that $\pi$ is injective.

Apply the dense Fourier construction of~\cite[Appendix A.2]{RW} to
$\pi(X')\subseteq\Z/p\Z$, of density $\theta=|X'|/p=\Omega_K(1)$.
Fourier thresholds are compared with fixed slack (include all frequencies of
normalized amplitude at least $\theta^{3/2}/2$ and only ones of at least
$\theta^{3/2}/4$, using approximations with error $\theta^{3/2}/8$); Parseval bounds
their number by $16/\theta^2$ and the excluded mass in the Bogolyubov argument by
$\theta^4/4$, so $O(\log p)$-bit precision suffices. If no frequency is selected,
$\Z/p\Z\subseteq2\pi(X')-2\pi(X')$ and we take the whole group; otherwise apply
the lattice construction of~\cite[Appendix A.3]{RW}. Either way we obtain a
proper bounded-rank progression of size $\Omega(|X|)$ in $2\pi(X')-2\pi(X')$.
Lift it to a GAP $Q_0\subseteq2X'-2X'$ as in~\cite[Appendix A.5]{RW}, tabulating
the inverse of the modelling map on $2X'-2X'$. Finally choose a maximal family
of disjoint translates $Q_0+t$, $t\in T\subseteq X$; it has $O(1)$ members since
they lie in $Q_0+X$ of size $O(|X|)$, and $X\subseteq Q_0-Q_0+T$. The difference
GAP with one $\{0,1\}$ coefficient per element of $T$ is the output. Each step is
a dense convolution or a scan within $\widetilde O(L+1)$.
\end{proof}

\begin{corollary}[Deterministic common GAP and filling reserve]
\label{cor:det-carrier}
Theorem~\ref{thm:carrier} holds with a deterministic algorithm and the same
time bound.
\end{corollary}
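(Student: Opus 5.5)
The plan is to rerun the proof of Theorem~\ref{thm:carrier} step by step and replace its only randomized ingredient, the calls of Theorem~\ref{thm:import-freiman}, by Lemma~\ref{lem:det-freiman}. Those calls occur once per phase, when a containing GAP is built for the repeated sumset $10sX$. The remaining work is then to check that this substitution preserves the time bound and every property the proof uses.

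\emph{Hypotheses and output properties.} In each phase, $10sX$ is a subset of $[0,10sL]\cap\Z$. Its doubling is $O_\eta(1)$ by \eqref{ms:growth}: indeed $|20sX|=O_\eta(|sX|)$ and $|10sX|\ge|sX|$. So Lemma~\ref{lem:det-freiman} applies with a constant $K=K(\eta)$ and an interval of length $L'=10sL=O(hL)$. This is the bounded-interval situation already flagged in the overview. The lemma returns a containing GAP of rank $O_\eta(1)$ and coefficient volume $O_\eta(|10sX|)=O_\eta(m_{\mathrm{ph}})$, with $O_\eta(\log(hL+2))$-bit entries. These are exactly the properties the proof of Theorem~\ref{thm:carrier} draws from Theorem~\ref{thm:import-freiman}: the rank bound fixes the finitely many ranks over which $\ell$, $w_*$, $\gamma_r$, $c_{\rm dens}$ are maximized, and the volume bound enters $\gamma_1$ in \eqref{ms:phase-volume}. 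The later steps are already deterministic and unchanged: symmetrization, Lemma~\ref{lem:carrier-properization}, coordinate lookup, basis packing, greedy blocks via Lemma~\ref{lem:import-greedy}, lattice chains, Lemma~\ref{lem:carrier-full-lattice-box}, filling by Lemma~\ref{lem:import-filling}, and witness extraction. Only the constant $\gamma_1$ changes, and it still depends only on $\eta$. Since the constants $a_\eta$, $b_*$, $\gamma_{\mathrm{pack}}$ and so on are chosen after $\gamma_1$, the ordering of constants is not affected.

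\emph{Time.} Each deterministic call costs $\widetilde O_\eta(L'+1)=\widetilde O(hL)$. The proof shows there are $O(\log(sL))$ phases, hence $O(\log(hL))$ calls, for a total of $\widetilde O(hL)\subseteq\widetilde O(mh+h^2L)$. The output is now correct with certainty, so the repeat-on-failure loop and the probability clause are unnecessary. The explicit verification step can be kept or dropped.

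\emph{Main obstacle.} The point I expect to need care is that Lemma~\ref{lem:det-freiman} is stated with the set size $|X|$, and its time bound is in terms of the interval length. One must confirm that the interval length $O(hL)$, rather than $|10sX|$, never exceeds the budget. It does not, because $hL\le h^2L$. One must also check that the lemma's $|X|\ge2$ case holds, which it does since $0$ and a live value lie in $X$. No other step of Theorem~\ref{thm:carrier} draws randomness, so the corollary follows.
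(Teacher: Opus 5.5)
Your proposal is correct and follows the paper's proof. The only randomized steps of Theorem~\ref{thm:carrier} are the Freiman calls on the bounded-doubling sets $10sX\subseteq[0,10sL]$, and Lemma~\ref{lem:det-freiman} replaces each of them at cost $\widetilde O(hL)$; with $O(\log(sL))$ calls in total this fits within $\widetilde O(mh+h^2L)$, and your extra checks (doubling via \eqref{ms:growth}, the rank and volume bounds feeding $\gamma_1$, and the ordering of constants) are accurate elaborations of the same argument.
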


\begin{proof}
Every Freiman call in the proof of Theorem~\ref{thm:carrier} is on a
bounded-doubling subset of $[0,10sL]$; Lemma~\ref{lem:det-freiman}
handles it in $\widetilde O(hL)$ time, and there are $O(\log(sL))$ calls.
All other steps are deterministic.
\end{proof}

\subsection{The recursion and its cost}

\paragraph{Preprocessing.}
Aggregate the input into types by an array indexed by the input box, in
$O_d(n+\prod_j(2N_j+1))$ time; delete zero types, drop coordinates with
$N_j=0$ after checking the target, and reject targets outside the attainable
intervals. For the finitely many $\nu$ below the threshold there are $O(1)$
types, and fixed-dimensional integer feasibility on their counts takes
$O(n+1)$ time. Counts are at most $n$ throughout.

\paragraph{The recursion.}
Start with $P_0=\prod_j[-N_j,N_j]$, target set $\{\vec t\}$, and
$\tau_0=\lfloor\nu^\delta\rfloor$, restricted to the span, so
\begin{equation}\label{eq:vector-root-size}
 w_0=O_d\Bigl(\tau_0^d\prod_{j=1}^d(2N_j+1)\Bigr).
\end{equation}
At a node of dimension $\bar d$ with item box $P$ use
$h=\lfloor\tau/\log^{c_{\mathrm{log}}}(\nu+2)\rfloor$ and the algorithm of
Section~\ref{sec:recursion}, with the count operations of
Section~\ref{sec:counts}, Step~2 applied when $\sum_ic_i\le\tau$, the
deterministic boundary reduction of Lemma~\ref{ms:normal} (cost
$\widetilde O(m+\tau^3w)$), Corollary~\ref{cor:det-carrier} and
Lemma~\ref{ms:deep} for interior patches, and the vector preprocessing applied
to the whole current vector. The map $F$ recovers the entire old vector and all
permitted values of the new coordinate are retained, so all coordinates refer to
one selection of original copies. As in \eqref{eq:encoding},
\begin{equation}\label{eq:det-local-size}
 m\le|P|,\qquad L=O(h^{\bar d-1}|P|),\qquad
 w\ge c_{\bar d}\tau^{\bar d}|P|,\qquad L=O(w/\tau),
\end{equation}
so the GAP construction costs $\widetilde O(mh+h^2L)=\widetilde O(hw)$, and the
other interior operations $\widetilde O(m+w)$.

\begin{lemma}[Validity of vector calls and termination]
\label{lem:det-vector-bootstrap}
With constants chosen uniformly over dimensions at most $d+9d_*+2$, every GAP
call satisfies $h\ge L^{\eta}$, every lifted GAP is proper on a scale
exceeding $2k$, scales never increase, and the recursion is exact, terminates,
and satisfies $\sum_{\text{nodes }u}w_u=\widetilde O(w_0)$.
\end{lemma}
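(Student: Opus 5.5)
We will mirror the proof of Lemma~\ref{lem:bootstrap}: truncate every path just after its $(9d_*)$th exceptional transition, establish all quantitative bounds on that prefix by induction from parent to child, and then invoke Lemma~\ref{lem:vector-rank} to show that the truncation point is never reached.

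\emph{Dimensions and scales.} On the prefix each lift adds one coordinate, and rank-loss and merging steps never raise the rank. The starting rank is at most $d$, so every current dimension $\bar d$ is at most $d+9d_*+2$, which is covered by the uniform choice of constants. Normal steps and rank-loss restrictions keep $\tau$, and a lift sets $\tau'=\Theta(h)\le\tau$, so scales never increase. Since at most $9d_*$ lifts occur, we have $\tau,h\ge\nu^\delta/\log^{O(1)}\nu$ and $h\le\tau\le\tau_0$.

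\emph{Allocated sizes.} The deterministic boundary reduction (Lemma~\ref{ms:normal}) retains every parent solution, and its children have total size at most $w/2$. A lift has size $O(w)$ by Lemma~\ref{ms:deep}. A rank-loss transfer of $e=o(\tau)$ copies keeps a translate of $(\tau+e)\cdot P$, so its size is $O(w)$. Each exceptional transition therefore multiplies the summed size by at most $a_\nu=\log^{O(1)}\nu$, counting the constant number of patches. Grouping nodes by the number $j<9d_*$ of exceptional transitions above them gives the bound
\[
\sum_u w_u\le w_0\sum_j 2^{j+1}a_\nu^j=\widetilde O(w_0),
\]
exactly as in \eqref{eq:totalalloc}. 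In particular every node on the prefix has $w\le w_0\log^{O(1)}\nu\le\nu^{O_d(1)}$; by \eqref{eq:vector-root-size} the exponent is at most about $d+1+d\delta$.

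\emph{GAP hypotheses.} By \eqref{eq:det-local-size}, $L=O(w/\tau)\le\nu^{d+2}$. With $\eta=\delta/(8(d+1))$ and $h\ge\nu^{\delta}/\log^{O(1)}\nu$, we get $L^\eta\le\nu^{\delta(d+2)/(8(d+1))}<h$ for large $\nu$. This exponent comparison is the one place where the choice \eqref{eq:det-vector-parameters} is used. The properness scale $\gamma h$ of Corollary~\ref{cor:det-carrier} is $\nu^{\delta-o(1)}$, which exceeds $2k=2\lfloor\nu^{\delta/3}\rfloor$. Each of these bounds is derived for a call from its parent's bounds before the call is made, so the induction is noncircular.

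\emph{Termination and exactness.} With properness established on the prefix, Lemma~\ref{lem:vector-rank} applies. It preserves \eqref{eq:vector-short-invariant} and allows fewer than $9d_*$ introductions and rank-losing events, so no path reaches the truncation and the prefix is the whole recursion. The allocation bound then implies that the node count is finite and polynomial, and hence the recursion terminates.

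Exactness follows by induction, as in Lemma~\ref{lem:correct}. Every step is exact and deterministic: count transfers by Lemma~\ref{ms:representation}, the boundary reduction with Lemma~\ref{lem:det-bounded-sums}, the interior reduction by Lemma~\ref{ms:deep}, span restriction, merging with constituent splitting, and terminal computations.

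The main obstacle is keeping the exponent bookkeeping noncircular. The dimension bound needs properness, properness needs $h\ge L^\eta$, and that in turn needs a bound on $w$, which depends on the number of transitions. The truncated-prefix induction breaks this cycle. One further check is needed: the $\tau^3$ factor in the deterministic boundary cost does not enter $w$, because it affects time only and not allocated size.
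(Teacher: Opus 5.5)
Your proposal is correct and follows essentially the same route as the paper. Both use a truncated-prefix argument as in Lemma~\ref{lem:bootstrap}: derive $h\ge L^{\eta}$ and properness at scale $\Omega(h)>2k$ from the prefix bounds on $\tau$ and $w$, exclude the excess transition by Lemma~\ref{lem:vector-rank}, and sum allocations as in \eqref{eq:totalalloc}. The differences from the paper are harmless: you truncate at the $9d_*$th rather than the $(9d_*+1)$st transition, and you use the cruder bound $L\le\nu^{d+2}$ instead of $\nu^{d+1}$. You should also state explicitly, as the paper does, that the $O(h\log^2\nu)=o(\tau)$ exceptional copies leave the interior buffer intact, since that is what lets you invoke Lemma~\ref{ms:deep}.
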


\begin{proof}
Argue as in Lemma~\ref{lem:bootstrap} on a prefix ending at a putative
$(9d_*+1)$st exceptional transition, where ranks are at most $d+9d_*+1$ and
$\nu^\delta/\log^{O(1)}\nu\le h\le\tau\le\tau_0$ and
$w\le w_0\log^{O(1)}\nu$. By \eqref{eq:vector-root-size},
\eqref{eq:det-local-size}, and $d\delta\le1/20$, $L\le\nu^{d+1}$, so
$L^{\eta}\le\nu^{\delta/8}\le h$ and the properness scale $\Omega(h)$
exceeds $2k$. Exceptional sets have $O(h\log^2\nu)=o(\tau)$ copies, so interior
buffers survive their transfer, and rank-loss branches enlarge allocations by
$O(1)$ factors. Lemma~\ref{lem:vector-rank} then excludes the excess transition.
The summation \eqref{eq:totalalloc}, with $r_*$ replaced by $9d_*$, gives the
total allocation and termination. Exactness follows as in
Lemma~\ref{lem:correct}, now without any probabilistic event.
\end{proof}

\begin{proof}[Proof of Theorem~\ref{thm:main-vector}]
After preprocessing, every scan is over types and $m=O(w)$ at each node, so a
node costs $\widetilde O(\tau_0^3w)$ apart from bit-length factors, and by
Lemma~\ref{lem:det-vector-bootstrap} and \eqref{eq:vector-root-size} the
recursion costs
\begin{equation}\label{eq:det-vector-total}
 \widetilde O\left(\tau_0^3w_0\right)
 =\widetilde O\left(\nu ^{(d+3)\delta}\prod_j(2N_j+1)\right).
\end{equation}
Counts are at most $n$, so Lemma~\ref{lem:bits} applies with $\log(n\nu+2)$ in
place of $\log N$; $\varphi$ is recomputed by Cramer's rule from independent
current types, and the Newton coefficients were already charged in
Lemma~\ref{lem:det-bounded-sums}. Thus for a fixed $b$ the recursion costs
$O(\nu^{(d+3)\delta}\prod_j(2N_j+1)\log^b(n\nu+2))$. If $n\le\nu^{2(d+\varepsilon+1)}$
the logarithm is a power of $\log\nu$, absorbed by the slack in
\eqref{eq:det-vector-parameters}; otherwise
$\nu^{(d+3)\delta}\prod_j(2N_j+1)=O(\sqrt n)$ and $\log^b(n\nu+2)=O(\sqrt n)$, so
the work is $O(n)$. Adding the preprocessing proves the bound; $N_j=N$ gives
the box bound and $d=1$ gives Corollary~\ref{thm:main-multiset}.
\end{proof}

\section{A constructive John-type theorem}\label{sec:john}

For a multiset $U$ of vectors, $Z(U)=\sum_{\vec u\in U}[\vec 0,\vec u]$ is its
zonotope; zero items may be discarded throughout.

\begin{theorem}[John-type characterization]\label{thm:john}
Fix $0<\eta<1$. Let $N$ be sufficiently large in terms of $\eta$,
let $N^\eta\le h\le N$ be an integer, and let $A$ be a multiset of
$n$ nonzero integers in $[-N,N]$, with $m$ distinct values.
There is a partition $A=E\uplus B$ with
$|E|=O_\eta(h\log^2(N+2))$
such that either $B$ is empty or the following holds. There are
a rank $r=O_\eta(1)$, a symmetric box
\begin{equation}\label{eq:john-box-size}
 C=\prod_{j=1}^r[-b_j,b_j],\qquad b_j\in\Z_{\ge1},\qquad
 \prod_{j=1}^r b_j=O_\eta\!\left(\frac{N}{h^{r-1}}\right),
\end{equation}
a multiset $U\subset C\cap\Z^r$, and an integer linear map $\rho:\Z^r\to\Z$
mapping $U$ to $B$ with multiplicities preserved. There are also a
submultiset $R\subseteq U$ of size $|R|=O_\eta(h)$ and a point
$\vec{z}_0\in\mathcal{S}(R)$ such that $J=\vec{z}_0+Z(U\setminus R)$ satisfies
\begin{equation}\label{eq:john-sandwich}
 J\cap\Z^r\ \subseteq\ \mathcal{S}(U)
 \ \subseteq\ Z(U)\cap\Z^r,
 \qquad J\subseteq Z(U)\subseteq J+|R|\cdot C.
\end{equation}
Consequently,
\begin{equation}\label{eq:john-original}
 \mathcal{S}(E)+\rho(J\cap\Z^r)
 \ \subseteq\ \mathcal{S}(A)
 \ \subseteq\ \mathcal{S}(E)+\rho(Z(U)\cap\Z^r).
\end{equation}
The partition and the descriptions of $C,U,\rho,R,\vec{z}_0$ can be constructed
deterministically in $O(n)+\widetilde O_\eta(mh+h^2N)$ time,
storing repeated vectors by their multiplicities.
\end{theorem}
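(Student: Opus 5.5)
The plan is to reduce to Theorem~\ref{thm:carrier}, made deterministic by Corollary~\ref{cor:det-carrier}, and then read off the sandwich from Lemma~\ref{lem:filled-zonotope}.

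\emph{Reorientation.} First aggregate $A$ into $m$ distinct values with counts, in $O(n)$ time. Since Theorem~\ref{thm:carrier} takes positive integers, I would work with $|a|$. Each negative value $a$ is recorded as the item $|a|$ with its selection complemented, as in the vector preprocessing of Section~\ref{sec:vector-preprocessing}. Then $\mathcal S(A)=-\Sigma(A^-)+\mathcal S(|A|)$, where $A^-$ denotes the negative items.

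The statement, however, asks for $\rho$ mapping $U$ to $B$ itself. The cleaner route is to apply Theorem~\ref{thm:carrier} to $|A|$ with $L=N$ and the given $h\ge N^\eta$, and then negate coordinates. Concretely, for a negative value $a$, set $\vec u_a=-\vec u_{|a|}$. Then $\rho(\vec u_a)=a$ and $\vec u_a\in C$ because $C$ is symmetric.

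Condition (2) of Theorem~\ref{thm:carrier} gives the filling
\[
 \vec z_0+(\gamma h\cdot C\cap\Z^r)\subseteq\mathcal S(\text{reserve})
\]
for the positive orientation. After negating some reserve types, the reserve sums become $\vec z_0'+(\text{same set})$. Here $\vec z_0'$ subtracts the full negated reserve contributions, and the box is symmetric under the corresponding sign flips because complementation maps $\mathcal S$ to $\Sigma-\mathcal S$ and $C$ is symmetric. So the filled box still holds with a new centre in $\mathcal S(R)$.

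To keep the reserve a single clean copy set, I would choose the reserve and exceptions per value of $|a|$ and split the counts arbitrarily among the signs. The statement uses $E$ and $R$ only through their sizes, which are preserved: $|E|=O_\eta(h\log^2 N)$ by \eqref{ms:exceptions} with $\log(hL)=O(\log N)$, and $|R|=O_\eta(h)$.

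\emph{Sandwich.} With $\gamma h\ge r$ for large $N$, apply Lemma~\ref{lem:filled-zonotope} with $\Lambda=\Z^r$, the box $C$, the multiset $U$, the reserve $R$, and the centre $\vec z_0$. This gives \eqref{eq:john-sandwich} directly, in its count version.

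For \eqref{eq:john-original}, split $\mathcal S(A)=\mathcal S(E)+\mathcal S(B)$ and use $\mathcal S(B)=\rho(\mathcal S(U))$, which holds because $\rho$ is linear and maps $U$ onto $B$ copy by copy. Then apply $\rho$ to both containments in \eqref{eq:john-sandwich}.

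\emph{Box size.} Bounding $\prod_j b_j$ in \eqref{eq:john-box-size} is where the proof of Theorem~\ref{thm:carrier} must be reopened. That proof gives
\[
 \prod_j b^{(0)}_j\le s^{-r}|s\cdot C_0\cap\Z^r|\le s^{-r}\gamma_1 m_{\mathrm{ph}}.
\]
Here $m_{\mathrm{ph}}=|sX|\le sN+1$. The final box satisfies $C\subseteq\gamma_r\cdot C_0$, where $C$ is measured in a sublattice basis of index at least one, which only helps counting. Hence
\[
 \prod_j b_j=O(\prod_j b^{(0)}_j)=O\bigl(s^{1-r}N\bigr)=O_\eta(N/h^{r-1}),
\]
since $s=a_\eta h$.

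One care point: coordinates dropped for zero width and the change to the lattice basis must not increase the product beyond constants. For the former, $r$ only decreases. For the latter, use the counting inequality $|a\cdot C\cap\Lambda|\le\gamma_r|a\cdot C_0\cap\Z^r|$ of Lemma~\ref{lem:carrier-full-lattice-box} with $a=s$, together with $b_j\ge1$. This gives $s^r\prod b_j\lesssim|s\cdot C\cap\Lambda|$.

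\emph{Time and main obstacle.} The running time is $O(n)$ for aggregation plus $\widetilde O(mh+h^2N)$ from Corollary~\ref{cor:det-carrier}. The main obstacle I expect is the sign bookkeeping. Theorem~\ref{thm:carrier} certifies a filling only for sums of positive-orientation reserve copies, and one must check that complementing reserve copies preserves a filled translate of the same symmetric box with centre in $\mathcal S(R)$. The argument I would give: for complemented types, the map $x\mapsto r_a-x$ on counts is a bijection of count ranges, and it sends $\sum j_a\vec u_{|a|}$ to $\Sigma(\text{negated part})-(\ldots)$. So $\mathcal S(R)$ equals a translate of the reflection of the original reserve sum set under the linear sign changes on those types. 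Since these sign changes are not a global reflection, I would instead avoid the issue by using only positive-orientation copies as $R$ whenever they suffice. Failing that, I would note that the reserve sums in coordinates are unchanged as a set up to the translation by $-\sum(\text{full negated contributions})$, which holds because $\sum_a\pm j_a\vec u$ with complemented counts is an affine image fixing the set, and then centre accordingly.
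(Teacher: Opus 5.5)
Your proposal is correct, and its skeleton is the paper's. You run Corollary~\ref{cor:det-carrier} on absolute values with $L=N$, assign exceptional and reserve copies to the two signs arbitrarily, and replace the coefficient vector of each remaining negative copy by its negative (it stays in $C$ by symmetry). You then invoke Lemma~\ref{lem:filled-zonotope} and the identity $\mathcal S(A)=\mathcal S(E)+\rho(\mathcal S(U))$.

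The one genuine difference is the box-size bound \eqref{eq:john-box-size}. You reopen the proof of Theorem~\ref{thm:carrier} and combine three facts:
\begin{itemize}
\item the phase-volume bound \eqref{ms:phase-volume};
\item the estimate $m_{\mathrm{ph}}\le sN+1$;
\item the counting bound of Lemma~\ref{lem:carrier-full-lattice-box} together with $\Lambda\subseteq\Z^r$.
\end{itemize}
This gives $s^r\prod_j b_j\le|s\cdot C\cap\Lambda|\le\gamma_r\gamma_1(sN+1)$. The argument is valid, but it depends on internal quantities of the construction. You therefore also need to check that the deterministic Freiman step of Lemma~\ref{lem:det-freiman} preserves \eqref{ms:phase-volume}; it does, since that lemma still yields coefficient volume $O_K(|X|)$.

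The paper instead reads the bound off the output alone. The map $\rho$ is injective on $\gamma h\cdot C\cap\Z^r$, and the filled translate lies in $\mathcal S(R)$. The $\rho$-image of $\mathcal S(R)$ consists of subset sums of $|R|$ items of absolute value at most $N$, so it lies in an interval of length $|R|N$. Hence $(\gamma h)^r\prod_j b_j\le|\gamma h\cdot C\cap\Z^r|\le|R|N+1=O_\eta(hN)$. This is shorter, needs no inspection of the proof, and holds for any output with the guarantees of Theorem~\ref{thm:carrier}.

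On the sign bookkeeping you flag as the main obstacle, the clean statement is $\mathcal S(R')=\mathcal S(R)-\Sigma(R_-)$. Here $R_-$ is the part of the positive-orientation reserve assigned to negative items and $R'$ is the reflected reserve. Complementing the selection on $R_-$ is a bijection of subsets, so the filled box is simply translated to centre $\vec z_0-\Sigma(R_-)\in\mathcal S(R')$. Symmetry of $C$ plays no role in this step, and the detour through positive-only reserves is unnecessary. Also, grouping the input costs $O(n+N)$ rather than $O(n)$, which is harmlessly absorbed in $h^2N$.
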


By \eqref{eq:john-box-size}, $|C\cap\Z^r|=\prod_j(2b_j+1)=O_\eta(N/h^{r-1})$, and
the correction box $|R|\cdot C$ has $O_\eta(hN)$ lattice points; the whole boundary
region $Z(U)\setminus J$ may still depend on the multiplicities. Taking
$h=\lceil N^\eta\rceil$ gives construction time $O(n)+\widetilde O_\eta(N^{1+2\eta})$
and exceptional shifts of size $\widetilde O_\eta(N^{1+\eta})$. The exact identity
is $\mathcal{S}(A)=\mathcal{S}(E)+\rho(\mathcal{S}(U))$; the lattice condition in
\eqref{eq:john-original} is essential, since projection creates holes even in
the middle of the range.

\begin{corollary}[Interior points are subset sums]\label{cor:john-interior}
In the nonempty outcome of Theorem~\ref{thm:john}, every $\vec{z}\in\Z^r$ with
$\vec{z}+|R|\cdot C\subseteq Z(U)$ belongs to $J$ and hence to $\mathcal{S}(U)$, so
$e+\rho(\vec{z})\in\mathcal{S}(A)$ for every $e\in\mathcal{S}(E)$. After the
construction, membership in $J$ and this containment can each be tested
deterministically in $O_\eta(\ell_{\vec{z}}+m\log^{O_\eta(1)}(nN+2))$ bit
operations per query, where $\ell_{\vec z}$ is the encoding length of $\vec z$.
\end{corollary}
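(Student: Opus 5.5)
The membership claim follows directly from the last assertion of Lemma~\ref{lem:filled-zonotope}, applied with $\Lambda=\Z^r$, the box $C$, the multiset $U$, the reserve $R$ and the point $\vec z_0$ of Theorem~\ref{thm:john}. Its hypothesis $\vec z_0+(r\cdot C\cap\Z^r)\subseteq\mathcal S(R)$ is the filling \eqref{ms:filling} of Theorem~\ref{thm:carrier}, once $\gamma h\ge r$ (this is where $N$ is taken sufficiently large and $h\ge N^\eta$ is used). So $\vec z+|R|\cdot C\subseteq Z(U)$ gives $\vec z\in J\cap\Z^r\subseteq\mathcal S(U)$ by \eqref{eq:john-sandwich}. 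Applying $\rho$ and adding any $e\in\mathcal S(E)$, the exact identity $\mathcal S(A)=\mathcal S(E)+\rho(\mathcal S(U))$ (disjointness of $E$ and $B$, with $\rho$ mapping $U$ onto $B$ with multiplicities) yields $e+\rho(\vec z)\in\mathcal S(A)$.

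For the algorithmic part, I would reduce both tests to evaluating support functions and to fixed-dimensional linear programming.

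\emph{Containment.} $\vec z+|R|\cdot C\subseteq Z(U)$ holds iff all $2^r$ corners $\vec z+|R|\vec\epsilon\circ\vec b$, $\vec\epsilon\in\{\pm1\}^r$, lie in $Z(U)$, since $Z(U)$ is convex. So it suffices to test membership of a rational point $\vec p$ in a zonotope stored by $m$ types with counts, $Z=\sum_i[0,c_i\vec u_i]$. This is the LP feasibility question $\vec p=\sum_i\lambda_i\vec u_i$, $0\le\lambda_i\le c_i$, with $r=O_\eta(1)$ equality constraints. Membership in $J=\vec z_0+\sum_i[0,(c_i-r_i)\vec u_i]$ is the same question for $\vec p=\vec z-\vec z_0$.

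\emph{LP solution.} I would solve it via the separation oracle given by the support function $h_Z(\vec a)=\sum_ic_i\max(0,\vec a\cdot\vec u_i)$, which is computable in $O(m)$ arithmetic operations, using Theorem~\ref{thm:import-geometry}(i) in fixed dimension $r$. Alternatively, Megiddo-type fixed-dimension LP on the dual, $\max_{\vec a}\vec a\cdot\vec p-h_Z(\vec a)$, runs in linear time in $m$ for fixed $r$.

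\emph{Bit complexity (main obstacle).} The remaining work is the bit-complexity accounting needed to obtain $O_\eta(\ell_{\vec z}+m\log^{O_\eta(1)}(nN+2))$ rather than a polynomial in $\ell_{\vec z}$. First, entries of $\vec u_i$, $\vec b$ and $\vec z_0$ have $O_\eta(\log(nN+2))$ bits, and counts are at most $n$. Second, if $\vec z$ is huge (say $\|\vec z\|_\infty>n\max_jb_j\cdot O(1)$), it lies outside the bounding box of $Z(U)$, which is detected in $O(\ell_{\vec z})$ time by reading the encoding and comparing with precomputed box bounds. Otherwise $\vec z$ itself has $O_\eta(\log(nN+2))$ bits. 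All LP data then have polylogarithmic bit length, so an exact rational fixed-dimensional LP (e.g.\ Clarkson/Megiddo with exact arithmetic on numbers of $O_\eta(\log(nN+2))$ bits) costs $m\log^{O_\eta(1)}(nN+2)$ bit operations per call, times the constant $2^r$ corners. I expect this truncation argument, together with verifying that the fixed-dimension LP runs in time linear in $m$ with only polylog-bit numbers (as in Lemma~\ref{lem:bits}), to be the delicate step; the structural claim itself is immediate.
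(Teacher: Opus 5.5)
Your proposal is correct and follows essentially the paper's proof. The membership assertion comes from the last clause of Lemma~\ref{lem:filled-zonotope} with $\Lambda=\Z^r$ (valid once $\gamma h\ge r$), combined with $\mathcal S(A)=\mathcal S(E)+\rho(\mathcal S(U))$; the algorithmic part is the same: $2^r$ vertex-membership tests, an $O(\ell_{\vec z})$ rejection of queries outside the coordinate bounds of $Z(U)$, then the type-scanning support oracle with Theorem~\ref{thm:import-geometry}, accounted for as in Lemma~\ref{lem:bits}.
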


\begin{proof}[Proof of Theorem~\ref{thm:john} and Corollary~\ref{cor:john-interior}]
First let all inputs be positive. Apply Corollary~\ref{cor:det-carrier} with
$L=N$; its exceptions form $E$, the coefficient vectors of
the other items form $U$, and the reserve forms $R$, with $|R|=O_\eta(h)$ and
$\vec{z}_0+(\gamma h\cdot C\cap\Z^r)\subseteq\mathcal{S}(R)$. Since $\rho$ is
injective on $\gamma h\cdot C\cap\Z^r$, the filled box gives distinct subset sums
of $|R|$ items, all in an interval of length $|R|N$, so
\[
 (\gamma h)^r\prod_{j=1}^r b_j
 \le |\gamma h\cdot C\cap\Z^r|\le |R|N+1=O_\eta(hN),
\]
which is \eqref{eq:john-box-size}. For large $N$, $\gamma h\ge r$, and
Lemma~\ref{lem:filled-zonotope} with $\Lambda=\Z^r$ gives
\eqref{eq:john-sandwich} and the first assertion of the corollary; the identity
$\mathcal{S}(A)=\mathcal{S}(E)+\rho(\mathcal{S}(U))$ gives \eqref{eq:john-original}.

For signed input, run the construction on absolute values, keeping the counts of
both signs, and assign selected exceptional and reserve copies to these counts.
Replace the coefficient vector of each remaining negative item by its negative;
this preserves $C$ and makes $\rho$ return the signed value. If $R$ is the reserve
before reflection and $R_-\subseteq R$ its part coming from negative items, the
reflected reserve has subset sums $\mathcal{S}(R)-\Sigma(R_-)$, so its filled box is centred at
$\vec{z}_0-\Sigma(R_-)$ and Lemma~\ref{lem:filled-zonotope} applies again.

The time bound is that of Theorem~\ref{thm:carrier} plus $O(n+N)$ to group the
input. $J$ is stored by its centre and its types with counts, which suffices for
its support oracle. Its generators have coordinates polynomial in $N$ and
multiplicities at most $n$, so the oracle scans $m$ types with
$O_\eta(\log(nN+2))$-bit data; after rejecting queries outside the coordinate
bounds of $Z(U)$, Theorem~\ref{thm:import-geometry} and the oracle construction
of Lemma~\ref{lem:bits} give the stated membership cost. Finally,
$\vec z+|R|\cdot C\subseteq Z(U)$ iff all $2^r$ vertices of this box lie in $Z(U)$.
\end{proof}

\section{Integer programming and further consequences}
\label{sec:ip-consequences}

Throughout this section $d$ counts the rows of the matrix; the variable bounds
are not counted.

\begin{importedtheorem}[Proximity for bounded integer programming;
  Eisenbrand--Weismantel {\cite[Theorem 7]{EW}}]
\label{thm:import-ip-proximity}
Let $\mathsf A\in\Z^{d\times n}$ have entries of absolute value at most
$N$, and let $\vec t\in\Z^d$ and $\vec\ell,\vec u\in\Z^n$ with
$\vec\ell\le\vec u$. If
$\mathsf A\vec x=\vec t$, $\vec\ell\le\vec x\le\vec u$, $\vec x\in\Z^n$
is feasible, then for every vertex $\vec x^*$ of its LP relaxation there is a
feasible integer $\vec z$ with $\|\vec z-\vec x^*\|_1\le d(2dN+1)^d$.
\end{importedtheorem}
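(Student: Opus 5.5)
We prove this by the classical Steinitz exchange argument, which is the argument of~\cite{EW}. Fix a vertex $\vec x^*$ of the LP relaxation. Among all feasible integer points, choose $\vec z$ minimizing $\|\vec z-\vec x^*\|_1$, and put $\vec y=\vec z-\vec x^*$, so that $\mathsf A\vec y=\vec 0$. We will show that if $\|\vec y\|_1> d(2dN+1)^d$, then some feasible integer point is strictly closer to $\vec x^*$, which contradicts the choice of $\vec z$.

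\emph{Decomposing $\vec y$ into pieces.} Since $\vec x^*$ is a vertex of $\{\mathsf A\vec x=\vec t,\ \vec\ell\le\vec x\le\vec u\}$, at most $d$ of its coordinates lie strictly between their bounds. Every other coordinate is an integer bound, so $y_i$ is an integer there. For each $i$ we write $y_i$ as $\lfloor|y_i|\rfloor$ unit steps $\operatorname{sign}(y_i)\vec e_i$, plus, for the at most $d$ basic coordinates, one fractional piece $(|y_i|-\lfloor|y_i|\rfloor)\operatorname{sign}(y_i)\vec e_i$. The $\mathsf A$-images of all pieces lie in $[-N,N]^d$ in the $\ell_\infty$ norm, and they sum to $\vec 0$. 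The steinitz lemma in $\ell_\infty$ then gives an ordering of the pieces in which every partial sum of images lies in $[-dN,dN]^d$.

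\emph{Pigeonhole on partial sums.} The at most $d$ fractional pieces cut the ordered sequence into at most $d+1$ maximal runs of unit steps. Within one run, the partial sums all differ from one another by integer vectors, so they lie in a single translate of $\Z^d$ intersected with $[-dN,dN]^d$. That set has at most $(2dN+1)^d$ points. The number of unit steps is at least $\|\vec y\|_1-d$. If $\|\vec y\|_1$ exceeds $d(2dN+1)^d$, then, after a routine adjustment of the count to handle the $d$ fractional pieces and the boundary terms, some run contains two equal partial sums. This step is the main obstacle: the argument needs fractional offsets to be confined to segment boundaries, so that repeated partial sums can be found using unit steps only. The consecutive unit steps between the two equal partial sums form a nonzero integer vector $\vec c$ with $\mathsf A\vec c=\vec 0$. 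Moreover, $\vec c$ is sign-compatible with $\vec y$ and satisfies $|c_i|\le|y_i|$ coordinatewise.

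\emph{Contradiction.} Set $\vec z'=\vec z-\vec c$. This is an integer point with $\mathsf A\vec z'=\vec t$. Each coordinate of $\vec z'$ lies between $x^*_i$ and $z_i$, so it stays within $[\ell_i,u_i]$, and $\vec z'$ is feasible. By sign-compatibility, $\|\vec z'-\vec x^*\|_1=\|\vec y\|_1-\|\vec c\|_1<\|\vec y\|_1$. This contradicts the minimality of $\vec z$, which proves the bound $\|\vec z-\vec x^*\|_1\le d(2dN+1)^d$.
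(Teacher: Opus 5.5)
Your route is the right one: it is the Eisenbrand--Weismantel Steinitz exchange argument specialized to zero objective. That is also how the paper gets the statement, since it only cites \cite{EW} and remarks that the feasibility version follows with objective zero. The setup, the Steinitz ordering and the final exchange step are all fine. The gap is in the pigeonhole count, and it is not a routine adjustment. Let $f\le d$ be the number of fractional pieces, $M$ the number of unit steps and $T=M+f$. The $f+1$ linear runs (grouping $S_j$ by how many fractional pieces precede it) contain $T+1=M+f+1$ partial sums, and each run holds at most $(2dN+1)^d$ distinct values. So the count only excludes $M>(f+1)\bigl((2dN+1)^d-1\bigr)$. When $f=d$, the hypothesis $\|\vec y\|_1>d(2dN+1)^d$ gives only $M>d(2dN+1)^d-d$, which is not enough. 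Your claim that some run contains two equal partial sums therefore does not follow. As written, the argument proves the bound with $(d+1)(2dN+1)^d$ in place of $d(2dN+1)^d$; for $f\le d-1$ your linear count already suffices, so only $f=d$ is affected.

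The missing idea is that $S_0=S_T=\vec 0$ are both integral. Hence the run before the first fractional piece and the run after the last one lie in the same coset $\Z^d$ and must be merged. If $f=0$, then $\vec x^*$ is integral and $\vec z=\vec x^*$. If $f\ge1$, index the partial sums cyclically as $S_0,\dots,S_{T-1}$ and cut the cycle after each fractional piece. This gives exactly $f$ arcs, each inside one translate of $\Z^d$ within $[-dN,dN]^d$, and together they contain all $T$ partial sums. If no arc has a repetition, then $\|\vec y\|_1<M+f=T\le f(2dN+1)^d\le d(2dN+1)^d$. A repetition inside an arc gives a nonempty cyclically contiguous block of unit steps with zero image, possibly a suffix of the ordering together with a prefix. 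Its sum $\vec c$ is still sign-compatible with $\vec y$ and satisfies $|c_i|\le\lfloor|y_i|\rfloor$. Your exchange step never used contiguity, so the contradiction goes through unchanged. For the paper's use in Corollary~\ref{cor:bounded-ip} the weaker constant would also do, but the statement claims $d(2dN+1)^d$.
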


The cited theorem compares optimal fractional and integral solutions; the
feasibility version follows with objective zero after translating the lower
bounds.

\begin{proof}[Proof of Corollary~\ref{cor:bounded-ip}]
Solve the LP relaxation; if it is empty, reject, and otherwise compute a vertex
$\vec x^*$ (it exists since the bounds are finite). Put $r=d(2dN+1)^d$,
$\ell'_i=\max\{\ell_i,\lceil x_i^*-r\rceil\}$, and
$u'_i=\min\{u_i,\lfloor x_i^*+r\rfloor\}$, rejecting if some $\ell'_i>u'_i$.
By Theorem~\ref{thm:import-ip-proximity} the smaller box preserves
feasibility, and $0\le u'_i-\ell'_i\le2r$. Let $B$ contain $u'_i-\ell'_i$ copies of
column $\vec a_i$ (with labels), and $\vec t'=\vec t-\mathsf A\vec\ell'$; then
$\vec t'\in\mathcal S(B)$ iff some integer $\vec x$ with
$\vec\ell'\le\vec x\le\vec u'$ satisfies $\mathsf A\vec x=\vec t$, via
$x_i=\ell'_i+y_i$. $B$ has at most $2nr$ copies with entries bounded by $N$;
after rejecting targets outside the attainable intervals,
Theorem~\ref{thm:main-vector} decides feasibility with
$O_{d,\varepsilon}(nN^d+N^{d+\varepsilon})$ operations on
$O_d(\log(nN+2))$-bit words, i.e.\
$O_{d,\varepsilon}(N^{d+\varepsilon}\ell_{\mathrm{in}}^{O(1)})$ bit operations, as
$n\le\ell_{\mathrm{in}}$. A feasible vector is found by bisecting the range of one
variable at a time with the decision procedure, using $O(n\log(r+2))$ calls.
\end{proof}

\begin{corollary}[Prescribed cardinalities]
\label{cor:vector-cardinality}
Fix $d,q\ge1$ and $\varepsilon>0$. Let
$A=\{\vec a_1,\ldots,\vec a_n\}$ be an explicitly listed multiset
with $|a_{ij}|\le N_j$, and let $G_1,\ldots,G_q\subseteq[n]$. Given
$\vec t\in\Z^d$ and integers $k_1,\ldots,k_q$, one can decide whether some
$I\subseteq[n]$ has $\sum_{i\in I}\vec a_i=\vec t$ and $|I\cap G_s|=k_s$ for all
$s$, in deterministic time
$O_{d,q,\varepsilon}(n+\nu ^\varepsilon\prod_{j=1}^d(2N_j+1))$,
$\nu =\max\{2,N_1,\ldots,N_d\}$.
\end{corollary}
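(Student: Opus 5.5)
The plan is to reduce to Theorem~\ref{thm:main-vector} by adding $q$ new coordinates that count group memberships. For each item $i$ form the lifted vector
\[
 \widehat{\vec a}_i=\bigl(\vec a_i,\ [i\in G_1],\ldots,[i\in G_q]\bigr)\in\Z^{d+q},
\]
and the lifted target $\widehat{\vec t}=(\vec t,k_1,\ldots,k_q)$. A subset $I$ has $\sum_{i\in I}\widehat{\vec a}_i=\widehat{\vec t}$ exactly when $\sum_{i\in I}\vec a_i=\vec t$ and $|I\cap G_s|=k_s$ for every $s$. The new coordinates lie in $[0,1]$, so the lifted multiset sits in the box $\prod_j[-N_j,N_j]\times[-1,1]^q$. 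A direct application of Theorem~\ref{thm:main-vector} then costs $O_{d+q,\varepsilon}(n+\nu^{\varepsilon}3^q\prod_j(2N_j+1))$. Here $\nu$ is unchanged because $\max\{2,N_1,\ldots,N_d,1,\ldots,1\}=\nu$, and $3^q$ is a constant depending on $q$. This already gives the claimed bound, and the argument is deterministic.

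The bookkeeping steps, in order, are as follows. First, compute the $n$ lifted vectors in $O(qn)$ time. Second, reject immediately if some $k_s\notin[0,|G_s|]$. Note that $k_s$ may be a huge integer, but this is handled by the input-reading remark after Theorem~\ref{thm:main-vector}. Third, call the decision procedure of Theorem~\ref{thm:main-vector} in dimension $d+q$. Since Theorem~\ref{thm:main-vector} is stated with per-coordinate bounds $N_j$, the lifted coordinates contribute only the factor $(2\cdot1+1)^q$, and no $n$-dependence enters the box volume.

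The one point to check is the degenerate case in which some lifted coordinate has bound $N_j=0$, or some original $N_j=0$. Theorem~\ref{thm:main-vector} allows $N_j\in\Z_{\ge0}$ and $\nu\ge2$ by definition, and its preprocessing drops zero-width coordinates after checking the target, so this case is covered. I expect no real obstacle. The only subtlety is that the implied constant now depends on $d+q$ through Theorem~\ref{thm:main-vector}, which is permitted by the subscript $O_{d,q,\varepsilon}$.
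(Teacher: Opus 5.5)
Your proposal is correct and follows the paper's proof: append the $q$ membership indicators to each vector and $(k_1,\ldots,k_q)$ to the target, then apply Theorem~\ref{thm:main-vector} in dimension $d+q$. This multiplies the box volume by $3^q$ and leaves $\nu$ unchanged.
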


\begin{proof}
Append the coordinates $\mathbf1_{i\in G_1},\ldots,\mathbf1_{i\in G_q}$ to each
$\vec a_i$ and $(k_1,\ldots,k_q)$ to the target; this multiplies the box volume
in Theorem~\ref{thm:main-vector} by $3^q$.
\end{proof}

\begin{corollary}[Inequalities and discrepancy]
\label{cor:ip-inequalities}
Corollary~\ref{cor:bounded-ip} also holds for $d$ constraints each of which is
an equality, a one-sided inequality, or a two-sided inequality, with finite
variable bounds and coefficient bound $N\ge2$. Consequently, for
$\mathsf A\in\Z^{d\times n}$ with $|\mathsf A_{ji}|\le N$, a vector
$\vec x\in\{0,1\}^n$ minimizing $\|2\mathsf A\vec x-\mathsf A\vec1\|_\infty$ can be
found in deterministic $O_{d,\varepsilon}(N^{d+\varepsilon}\ell_{\mathrm{in}}^{O(1)})$
bit time.
\end{corollary}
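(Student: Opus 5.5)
The plan is to reduce both assertions of Corollary~\ref{cor:ip-inequalities} to Corollary~\ref{cor:bounded-ip}: first handle inequality constraints by slack variables, then solve discrepancy by searching over the threshold.

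\textbf{Inequalities.} For each row $j$ that is an inequality, I introduce one integer slack variable $s_j$ and rewrite the row as an equality. A two-sided row $\beta_j\le\vec a_j\cdot\vec x\le\gamma_j$ becomes $\vec a_j\cdot\vec x-s_j=\beta_j$ with $0\le s_j\le\gamma_j-\beta_j$. A one-sided row $\vec a_j\cdot\vec x\le\gamma_j$ becomes $\vec a_j\cdot\vec x+s_j=\gamma_j$ with $s_j\ge0$. Here an upper bound is needed, since Corollary~\ref{cor:bounded-ip} requires finite bounds. The finite variable bounds give it: I take $s_j\le\gamma_j-\sum_i\min(a_{ji}\ell_i,a_{ji}u_i)$, and reject if this is negative. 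The $\ge$ case is symmetric. The new bounds have bit length $O(\ell_{\rm in})$. Each slack column is a signed unit vector, so the coefficient bound stays $N\ge2$ and the number of rows stays $d$. The number of variables grows by at most $d$, and the input length by a polynomial factor. Corollary~\ref{cor:bounded-ip} then gives the bound, and a feasible $\vec x$ is read off from the returned vector.

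\textbf{Discrepancy.} Write $D(\vec x)=\|2\mathsf A\vec x-\mathsf A\vec 1\|_\infty$. This is an integer in $[0,nN]$. For a threshold $\Delta$, the condition $D(\vec x)\le\Delta$ is the system of $d$ two-sided rows
\[
 -\Delta\le(2\mathsf A\vec x-\mathsf A\vec1)_j\le\Delta,
 \qquad\text{i.e.}\qquad
 \lceil((\mathsf A\vec1)_j-\Delta)/2\rceil\le(\mathsf A\vec x)_j\le\lfloor((\mathsf A\vec1)_j+\Delta)/2\rfloor.
\]
I write it in the second form on purpose. Using $2\mathsf A$ directly would double the coefficient bound, costing a factor $2^d$, which is harmless but avoidable. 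Integrality of $\mathsf A\vec x$ justifies the rounding of the bounds. Together with $\vec 0\le\vec x\le\vec1$, the first part decides feasibility.

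I then binary-search $\Delta$ over $[0,nN]$. This takes $O(\log(nN))=O(\ell_{\rm in})$ calls, each on input of polynomial length. The last feasible call returns a minimizer. The total cost is $O_{d,\varepsilon}(N^{d+\varepsilon}\ell_{\rm in}^{O(1)})$, with an absolute exponent.

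The only delicate step is making the one-sided slacks finite with polynomial encoding. The bound from the variable bounds given above settles it, so I expect no real obstacle.
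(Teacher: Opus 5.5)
Your proposal is correct and follows the paper's proof. Each inequality row gets a unit slack column, with the one-sided slack bounded via $\sum_i\min\{a_{ji}\ell_i,a_{ji}u_i\}$ and rejection when this exceeds the right-hand side, and the minimizer is then found by binary search over the discrepancy threshold in $[0,nN]$. The only deviation is that you halve and round the discrepancy rows to keep coefficient bound $N$. The paper instead applies the result to $2\mathsf A$ with coefficient bound $2N$, which costs only a $d$-dependent factor, so both versions are valid.
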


\begin{proof}
For $\vec a^{\,T}\vec x\le b$, let $m=\sum_i\min\{a_i\ell_i,a_iu_i\}$; reject if
$b<m$, and otherwise use $\vec a^{\,T}\vec x+s=b$ with $0\le s\le b-m$. Lower
bounds are negated, and $a\le\vec a^{\,T}\vec x\le b$ becomes an equality with a
slack in $[0,b-a]$. This adds unit columns but no rows. For exact vector
partition, query $\Sigma(A)/2$ in Theorem~\ref{thm:main-vector} directly. For the
discrepancy threshold $k$, apply the result to
$-k\vec1\le2\mathsf A\vec x-\mathsf A\vec1\le k\vec1$ (coefficient bound $2N$) and
binary search over $k\in[0,nN]$.
\end{proof}

\begin{corollary}[Bounded-coefficient objectives]
\label{cor:ip-bounded-objective}
In the setting of Corollary~\ref{cor:bounded-ip}, let $\vec c\in\Z^n$ satisfy
$|c_i|\le p$, and put $\nu =\max\{2,N,p\}$. One can optimize
$\vec c^{\,T}\vec x$ over the feasible integer vectors, or report infeasibility,
in deterministic $O_{d,\varepsilon}(\nu ^{d+1+\varepsilon}\ell_{\mathrm{in}}^{O(1)})$
bit time.
\end{corollary}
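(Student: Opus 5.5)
The plan is to reduce optimization to a feasibility question with one extra row, and then to binary-search the objective value using Corollary~\ref{cor:ip-inequalities}. First run the proximity reduction from the proof of Corollary~\ref{cor:bounded-ip}; if the LP relaxation is empty, or an interval is empty, report infeasibility. For optimization the proximity argument has to be applied to the optimization problem itself: solve the LP relaxation of $\max\vec c^{\,T}\vec x$ and take an optimal vertex $\vec x^*$. Theorem~\ref{thm:import-ip-proximity}, in the optimization form of Eisenbrand--Weismantel, where the extended matrix $\binom{\vec c^{\,T}}{\mathsf A}$ has $d+1$ rows and entries bounded by $\nu$, then gives an optimal integer solution $\vec z$ with $\|\vec z-\vec x^*\|_1\le (d+1)(2(d+1)\nu+1)^{d+1}$. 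If the integer program is feasible, it also has an optimum, because the bounds are finite. Intersecting $[\vec\ell,\vec u]$ with the corresponding box of radius $r'=O_d(\nu^{d+1})$ around $\vec x^*$ preserves some integer optimum.

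Next, bound the range of the objective. In the shrunken box, $\vec c^{\,T}\vec x$ lies in an interval of length at most $2r'np\le\ell_{\rm in}^{O(1)}\nu^{O(d)}$. Its endpoints have polynomially many bits.

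For each threshold $\theta$, add the row $\vec c^{\,T}\vec x\ge\theta$. This is a one-sided inequality with coefficients bounded by $p\le\nu$. Decide feasibility of the resulting system with $d+1$ constraints by Corollary~\ref{cor:ip-inequalities}, applied in dimension $d+1$ with coefficient bound $\nu$. Each call costs $O_{d,\varepsilon}(\nu^{d+1+\varepsilon}\ell_{\rm in}^{O(1)})$ bit time; note that the slack column introduced there has bounded entries. Binary search over $\theta$ takes $O(\log(\text{range}))=O_d(\log\nu+\log\ell_{\rm in})\cdot\ell_{\rm in}^{O(1)}$ calls, which is absorbed into the polynomial factor. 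An optimal vector is then recovered as in the proof of Corollary~\ref{cor:bounded-ip}: bisect one variable at a time while the row $\vec c^{\,T}\vec x\ge\theta_{\rm opt}$ is fixed.

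I expect the main obstacle to be justifying the proximity step for optimization rather than feasibility, since the imported statement is phrased for feasibility. The first thing to try is to avoid it altogether. The extra constraint $\vec c^{\,T}\vec x\ge\theta$ can be added \emph{before} the proximity reduction, because Corollary~\ref{cor:ip-inequalities} already performs its own reduction on the augmented system with $d+1$ rows. So each binary-search query is simply a bounded feasibility problem with $d+1$ rows and coefficient bound $\nu$, and no optimization-form proximity is needed. The only remaining point is to bound the search range, and the trivial bound $|\vec c^{\,T}\vec x|\le p\|\vec u\|_\infty n+p\|\vec\ell\|_\infty n$ already has polynomial bit length in $\ell_{\rm in}$. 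Hence the number of queries is $\ell_{\rm in}^{O(1)}$, which yields the stated bound.
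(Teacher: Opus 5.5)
Your final route is correct and is the paper's proof: add $\vec c^{\,T}\vec x\ge\theta$ (or $\le\theta$) as a $(d+1)$st row with coefficient bound $\nu$, and binary search $\theta$ between the extreme objective values on the variable box, answering each query with Corollary~\ref{cor:ip-inequalities} in dimension $d+1$. The optimization-form proximity step in your first paragraph is unnecessary, as you yourself conclude, because each query is a plain bounded feasibility problem that Corollary~\ref{cor:ip-inequalities} reduces on its own.
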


\begin{proof}
Binary search on the threshold $\vec c^{\,T}\vec x\le k$ (or $\ge k$), an extra
row with coefficient bound $\nu$, between the extreme values of the objective on
the variable box, using Corollary~\ref{cor:ip-inequalities} in dimension $d+1$.
\end{proof}

\paragraph{Partition and two-machine scheduling.}
Equal-sum Partition is the single query $t=\Sigma(A)/2$, so it takes
$O_\varepsilon(n+N^{1+\varepsilon})$ deterministic time. For makespan minimization
on two identical machines, $P2\Vert C_{\max}$ (equivalently minimum-discrepancy
number partitioning~\cite{KorfSchreiber}), let $w=\Sigma(A)$; a makespan bound
$b_{\max}\ge w/2$ is feasible iff some subset sum lies in $[w-b_{\max},b_{\max}]$.
Put $r=2b_{\max}-w$. If $r\ge N$, greedy assignment to the less loaded machine
certifies feasibility. If $0\le r<N$, append $O(\log(N+2))$ items at most $N$
whose subset sums are exactly $[0,r]\cap\Z$ and query $b_{\max}$. Binary search
gives the optimal makespan and minimum discrepancy in
$\widetilde O_\varepsilon(n+N^{1+\varepsilon})$ deterministic time. Dense subset-sum
structure was used earlier for approximating Partition and Knapsack by Deng,
Jin, and Mao~\cite{DJM}.

\paragraph{Binary-encoded multiplicities.}
For scalar input with multiplicities in binary, a proximity argument (compare
\cite[Lemma 2.1]{PRW}) reduces the instance before the recursion.

\begin{lemma}[Normalization and a bounded residual instance]\label{lem:multiset-cap}
A signed scalar multiset instance can be rejected or transformed into an
equivalent instance with target $r\in[0,N-1]$, at most $2N$ nonzero signed types
in $[-N,N]$, multiplicity at most $2N$ for each type, and hence at most $4N^2$
copies in total, together with a map from
residual solutions back to original counts. For explicitly listed input this
takes $O(n+N)$ word operations.
\end{lemma}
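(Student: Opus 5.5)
The plan is to normalize the instance in three passes: first the signs, then the target, then the multiplicities, where the last pass uses a proximity/exchange argument in the style of \cite[Lemma 2.1]{PRW}.

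\emph{Signs.} Write the input as signed types $a$ with counts $c_a$. Every negative type will be reanchored by \eqref{eq:reanchor} and Lemma~\ref{ms:representation}(iii): select all its copies in the reference, and let a flip remove a copy, so the flipped item becomes the positive value $-a$. The target moves to $t'=t-\sum_{a<0}c_aa$. After merging, only positive types in $[1,N]$ remain, and there are at most $N$ of them. If $t'<0$ or $t'>\Sigma$ we reject; if $t'>\Sigma/2$ we complement, replacing $t'$ by $\Sigma-t'$. None of this is the real reduction yet.

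\emph{Multiplicities.} The main step caps multiplicities by greedy prefilling. Sort the positive types and take a greedy fractional-free selection: scan types in some fixed order, such as decreasing value, and add whole copies while the running sum stays at most $t'$. This produces a reference count vector $g$ with
\[
0\le t'-\Sigma(g)<N,
\]
or hits $\Sigma(g)=t'$. The claim to prove is this: if $t'\in\mathcal S$, then some solution $x$ satisfies $|x_a-g_a|\le N$ for every type. The proof would be an exchange argument via the Erdős–Ginzburg–Ziv / pigeonhole lemma. Take a solution $x$ minimizing $\|x-g\|_1$. The copies added by $x$ relative to $g$ form a multiset $A^+$, and the copies removed form $A^-$. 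Their sums differ by $t'-\Sigma(g)\in[0,N)$. If one side has more than $N$ copies, a standard prefix-sum pigeonhole argument on an interleaved ordering finds nonempty subsets $S^+\subseteq A^+$ and $S^-\subseteq A^-$ with equal sums. Undoing both brings $x$ closer to $g$, which is a contradiction. Hence $|A^\pm|\le 2N$ or so, and in particular each coordinate deviates by at most $2N$.

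With that bound, the reference will absorb $\max(0,g_a-2N)$ copies of each type. The target becomes $t'-\sum_a\max(0,g_a-2N)\,a$, and counts shrink to at most $\min(c_a,g_a+2N)-\max(0,g_a-2N)$. To reach the stated form (counts at most $2N$, up to $2N$ signed types), I would instead anchor at $g$ itself. That gives signed types: additions $+a$ with count $\min(c_a-g_a,2N)$ and removals $-a$ with count $\min(g_a,2N)$. This yields at most $2N$ nonzero signed types with multiplicity at most $2N$, hence at most $4N^2$ copies, and residual target $r=t'-\Sigma(g)\in[0,N-1]$. The map back to original counts composes the reanchorings.

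\emph{Running time.} Grouping by value in an array of size $2N+1$ costs $O(n+N)$. The greedy step runs over at most $N$ types using arithmetic on counts, so it costs $O(N)$.

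\emph{Main obstacle.} I expect the exchange lemma to be the hardest part. Its bound must be per type (or in total) at most $2N$, and the argument must handle the case where $A^+$ and $A^-$ have very different sizes. I would interleave: starting from $0$, repeatedly add the next element of $A^+$ while the partial sum is at most $N-1$, and otherwise subtract the next element of $A^-$. This keeps the partial sums in $[-N,N-1]$, a window of $2N$ values. If more than $2N$ steps occur, two partial sums coincide, and the steps between them give a zero-sum exchange. Finally I would check that this ordering does not get stuck when one side is exhausted, using the fact that the total difference lies in $[0,N)$.
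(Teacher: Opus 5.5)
Your proposal is correct and follows essentially the paper's proof. You reanchor the negative types, take a greedy/prefix anchor with residual target in $[0,N)$, and observe that a minimal-change residual solution has no nonempty zero-sum submultiset. An interleaved partial-sum pigeonhole argument on a window of $2N$ values then bounds the number of changes, which justifies capping each residual multiplicity at $2N$; your complement step is harmless but unnecessary. One small slip: with your rule ``add while the partial sum is at most $N-1$'' the partial sums lie in $[0,2N-1]$, not $[-N,N-1]$ (that window corresponds to adding while the sum is at most $-1$; the paper adds while it is at most $0$ and gets $[1-N,N]$). Either window has $2N$ values, so the bound of at most $2N$ changes, and hence the cap, is unaffected.
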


\begin{proof}
Let $c_a^\pm$ be the counts of $\pm a$, $1\le a\le N$. Replacing a selected
negative count $x_a^-$ by $c_a^--x_a^-$ turns the instance into positive types
$a$ with counts $c_a=c_a^++c_a^-$ and target $t^+=t+\sum_aac_a^-$ (a count
$y_a\le c_a$ splits back as $x_a^+=\min(y_a,c_a^+)$). Reject unless
$0\le t^+\le\sum_aac_a$. Take a prefix anchor $\vec b$: whole multiplicities of
the types in some order until the first type that does not fit entirely, as many
copies of it as fit, and none of the rest. Then $r=t^+-\sum_aab_a\in[0,N)$, and
solutions correspond to residual changes using at most $c_a-b_a$ copies of $+a$
and $b_a$ copies of $-a$.

A residual solution with the fewest changes has no nonempty zero-sum
submultiset. Order its changes by taking a positive term when the partial sum is
at most $0$ and a negative term otherwise (while both remain); all partial sums
lie in $[1-N,N]$ and are distinct, so there are at most $2N-1$ changes. Hence
capping every residual multiplicity at $2N$ preserves feasibility.
\end{proof}

\begin{corollary}[Binary-encoded multiplicities]\label{ms:binary}
If weights, multiplicities, and target of a scalar instance are given in binary
with total length ${\ell_{\mathrm{in}}}$, subset sum can be decided deterministically
in $\softO{{\ell_{\mathrm{in}}}}+O_{\varepsilon}(N^{1+\varepsilon})$ bit operations.
\end{corollary}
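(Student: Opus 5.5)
The plan is to reduce the binary-encoded instance to an explicitly listed one of size $O(N^2)$ using Lemma~\ref{lem:multiset-cap}, and then to run the recursion of Section~\ref{sec:deterministic} directly on types with counts. Expanding the $4N^2$ copies explicitly would cost $N^2$, so the copies are never expanded.

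\emph{Reading the input.} First read the binary input in $\widetilde O(\ell_{\rm in})$ bit operations and aggregate the weights into an array indexed by $[-N,N]$, summing the multiplicities of equal values. Then apply the normalization of Lemma~\ref{lem:multiset-cap} to this count representation. The signed-to-positive conversion and the target $t^+$ are sums of at most $O(N)$ products of $O(\ell_{\rm in})$-bit numbers, so they cost $\widetilde O(\ell_{\rm in}+N)$ bit operations; note that $N\le 2^{\ell_{\rm in}}$ and $m\le\ell_{\rm in}$ hold. The prefix anchor is likewise one scan with big-integer comparisons. After this step there are at most $2N$ signed types with counts at most $2N$, a target $r\in[0,N-1]$, and all numbers have $O(\log N)$ bits. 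Every later operation is therefore on word-size data.

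\emph{Running the recursion.} Run the deterministic recursion behind Theorem~\ref{thm:main-vector} with $d=1$ in its count representation (Sections~\ref{sec:counts}--\ref{sec:derandomize}). Its cost analysis depends on $m\le|P|$ and the allocated sizes, not on the total number of copies. Where the proof of Theorem~\ref{thm:main-vector} pays $O(n)$, this is only for aggregating explicit input, which has already been done here. The remaining dependence on $n$ enters only through bit lengths of counts, namely $\log(n\nu+2)=O(\log N)$ since $n\le 4N^2$. Hence the node costs of \eqref{eq:det-vector-total} give $O_\varepsilon(N^{1+\varepsilon})$ word operations on $O(\log N)$-bit words, that is, $O_\varepsilon(N^{1+\varepsilon})$ bit operations after shrinking $\varepsilon$.

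\emph{Main obstacle.} The main point to verify is that no step of the recursion touches individual copies in a way costing $\Omega(\sum c_i)$. Lemma~\ref{ms:normal} instantiates only $\min(c_i,k_0)$ copies per heavy type. Lemma~\ref{ms:deep} materializes only $|R|+h$ copies. Corollary~\ref{cor:det-carrier} is stated for counts with time $\widetilde O(mh+h^2L)$. Lemma~\ref{lem:det-bounded-sums} caps counts at $k_0$ and uses Newton identities whose coefficients are at most $(k_0|B_+|)^{k_0}$, independent of the original counts. Lemma~\ref{lem:bits} explicitly allows the count representation when the total multiplicity is $N^{O(1)}$, which holds here. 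A correct answer for the residual instance is then, by the equivalence in Lemma~\ref{lem:multiset-cap}, the answer to the original instance. Adding the two phases gives the bound $\softO{\ell_{\rm in}}+O_\varepsilon(N^{1+\varepsilon})$.
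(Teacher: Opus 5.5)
Your proposal is correct and follows the paper's proof: normalize via Lemma~\ref{lem:multiset-cap}, then run the deterministic count recursion of Section~\ref{sec:deterministic} directly on the $O(N)$ residual types, using that the $n$ term of Theorem~\ref{thm:main-vector} comes only from aggregating explicit copies and from $\log(n\nu+2)$, which is $O(\log N)$ here. The one soft spot is your bit-cost justification for the normalization, since ``$O(N)$ products of $O(\ell_{\rm in})$-bit numbers'' and a linear scan of big-integer comparisons do not by themselves give $\widetilde O(\ell_{\rm in})$; instead use that the multiplicity lengths sum to at most $\ell_{\rm in}$, and, as the paper does, build a balanced addition tree whose levels have total length $\widetilde O(\ell_{\rm in})$ and descend it to find the crossing type.
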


\begin{proof}
Merge equal signed values and compute the weighted totals with a balanced
addition tree, whose levels have total operand length $\softO{\ell_{\mathrm{in}}}$;
locate the crossing type of the prefix anchor by descending the tree, with
$O(\log(m+2))$ large-integer operations and one division. This performs the
reduction of Lemma~\ref{lem:multiset-cap} in $\softO{\ell_{\mathrm{in}}}$ bit
operations. We do not invoke the stated bound of Theorem~\ref{thm:main-vector},
whose term $n$ would here be $\Theta(N^2)$. Instead we run the count recursion of
Section~\ref{sec:deterministic} directly on the $m=O(N)$ residual types with their
counts. In that proof the term $n$ arises only from aggregating explicitly listed
copies and from the case split on $\log(n\nu+2)$; both are absent here, since the
total multiplicity is at most $4N^2$ and Lemma~\ref{lem:bits} applies with
$O(\log N)$-bit words. The recursion therefore takes
$\widetilde O(N^{1+(d+3)\delta})=O_\varepsilon(N^{1+\varepsilon})$ time with $d=1$.
\end{proof}

\section{Scope}
\label{sec:scope}
Theorem~\ref{thm:main} is the randomized scalar set case;
Section~\ref{sec:deterministic} adds multiplicities, original vector input, and
deterministic subroutines, proving Theorem~\ref{thm:main-vector} and
Corollary~\ref{thm:main-multiset}; Section~\ref{sec:ip-consequences} derives
bounded integer programming via LP proximity, with coefficient exponent
$d+\varepsilon$ and no claim about the dependence on $d$ alone.

The proof constructs one common GAP after excluding few items, and one reserve
whose sums fill a box; it does not need every random sample to fill the same
GAP, nor Fourier uniformity. Interior always refers to the current zonotope and
its buffer, integer points in GAP coordinates are accepted only through the
certified core inclusion, and a noninjective lift keeps the old coordinates and
every permitted value of the new coordinate.

\paragraph{Lean verification scope.}
An accompanying Lean~4 development machine-checks the paper-specific
scalar and multidimensional subset-sum proof chain, conditional on the
correctness of the cited lemmas and theorems and on the stated correctness
and cost contracts for external algorithmic primitives. For these
conditional endpoints, the checked complexity bounds use the
formalization's abstract storage and primitive-cost model. They do not
verify implementations of every cited primitive or native Lean execution
time.

\clearpage
\appendix
\section{Boolean products and signed color coding}
\label{app:boolean-operations}

This appendix proves Lemma~\ref{lem:quota} and supplies its two
auxiliary lemmas. The ingredients are standard balanced Boolean
convolution and two-stage color coding; the details below account for
signed coordinates, narrow array dimensions, and simultaneous retention
of all sums satisfying the absolute budgets.

For a finite integer set, its Boolean support is its indicator array.
We compute the support of $A+B$ by ordinary integer convolution followed
by replacing every positive coefficient by one. Signed coordinates are
handled by array offsets. In fixed dimension, these operations take time
nearly linear in the enclosing array size.

\begin{lemma}[Balanced Boolean products]\label{lem:boolean-product}
Fix $d$ and $q\ge1$, and let $A_1,\ldots,A_q$ be explicitly listed finite sets
contained in $\prod_{j=1}^d[-a_j,a_j]\cap\Z^d$, where $a_j\ge0$.
Writing $p=\sum_i|A_i|$, the support of $A_1+\cdots+A_q$ can be
computed exactly in
\[
 \widetilde O_d\left(p+q+\prod_{j=1}^d(1+qa_j)\right)
\]
time. The same bound suffices to retain all intermediate supports.
\end{lemma}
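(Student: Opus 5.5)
We will compute the support by a balanced binary product tree over the $q$ factors. Pad $q$ up to a power of two by adding singleton factors $\{0\}$; this at most doubles $q$ and leaves the sumset unchanged. Place the sets at the leaves. Each internal node stores the Boolean support of the sum of the sets at the leaves below it.

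A node at height $\ell$ covers $2^\ell$ leaves, so its support lies in $\prod_j[-2^\ell a_j,2^\ell a_j]$. We compute it from its two children, whose supports lie in the boxes $\prod_j[-2^{\ell-1}a_j,2^{\ell-1}a_j]$, by one $d$-dimensional integer convolution followed by thresholding. With array offsets for the signed coordinates and a mixed-radix flattening to a one-dimensional FFT, this costs $\widetilde O_d\bigl(\prod_j(1+2^\ell a_j)\bigr)$. Exactness needs two facts. All coefficients are counts bounded by the product of the array sizes, so $O(\log)$-bit words suffice. Thresholding positive coefficients to one gives exactly the support of the sumset.

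At the leaves we write each $A_i$ into an array of size $\prod_j(1+2a_j)$. Summing over leaves gives $O(p+q\prod_j(1+2a_j))$. This exceeds the target bound when the $a_j$ are large and $q$ is large. The fix is to avoid allocating full arrays at the low levels.

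Level $\ell$ has $q/2^\ell$ nodes, so its total cost is
\[
 \frac{q}{2^\ell}\prod_j(1+2^\ell a_j).
\]
Split coordinates by whether $2^\ell a_j\ge1$. A coordinate with $2^\ell a_j<1$ forces $a_j=0$, and then it contributes a factor $1$. Each coordinate with $a_j\ge1$ contributes a factor $2^\ell a_j\le(1+qa_j)\cdot2^{\ell}/q$. Hence, whenever at least one coordinate is nonzero, the level cost is at most
\[
 \frac{q}{2^\ell}\Bigl(\frac{2^{\ell}}{q}\Bigr)\prod_j(1+qa_j)=\prod_j(1+qa_j),
\]
using $2^\ell/q\le1$ for every nonzero coordinate beyond the first. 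If all $a_j=0$, every level costs $O(q/2^\ell)$. Summing over the $O(\log q)$ levels gives $\widetilde O_d\bigl(q+\prod_j(1+qa_j)\bigr)$. Levels with small $\ell$ cost at most their top-level counterpart, so no separate low-level treatment is needed beyond this bound; the $+1$ terms are absorbed the same way.

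Leaves are handled sparsely. The first level instead forms $A_{2i-1}+A_{2i}$ either by pairwise enumeration, at cost $|A_{2i-1}||A_{2i}|$, or by a dense convolution, at cost $\prod_j(1+2a_j)$, whichever is cheaper. The dense option is covered by the level bound above, and reading the inputs costs $O(p)$.

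Retaining all intermediate supports costs no more than computing them, since each is written once. The main obstacle is the careful accounting of the level sums when some $a_j$ are tiny relative to $1/2^\ell$, that is, the narrow array dimensions. This is handled by the coordinatewise case split above.
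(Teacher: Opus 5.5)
Your proof is correct and is essentially the paper's argument: a padded balanced product tree whose level-$\ell$ array total $\frac{q}{2^\ell}\prod_j(1+2^\ell a_j)$ is bounded by $O_d\bigl(q/2^\ell+\prod_j(1+qa_j)\bigr)$, which the paper obtains in one line by expanding $b\prod_j(1+e_j/b)=b+\sum_{\varnothing\ne J\subseteq[d]}b^{1-|J|}\prod_{j\in J}e_j$ with $e_j=qa_j$, instead of your split into zero and nonzero coordinates. One correction: your remark that dense leaf arrays of total size $q\prod_j(1+2a_j)$ exceed the target bound is false, because your own level estimate at $\ell=0$ already covers them, so the sparse pairwise handling of the first level is unnecessary (though harmless).
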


\begin{proof}
Pad the list by copies of $\{0\}$ until its length is a power of two;
this changes $q$ by at most a factor of two. Multiply the factors in a
balanced binary tree. At a level with $b$ nodes, each node represents
$q/b$ factors, and its sums lie in the box with coordinate bounds
$(q/b)a_j$. Put $e_j=qa_j$. The total array size at this level is
bounded, up to a dimension-dependent factor, by
\begin{equation}\label{eq:productbound}
 b\prod_{j=1}^d(1+e_j/b)
 =b+\sum_{\varnothing\ne J\subseteq[d]}b^{1-|J|}\prod_{j\in J}e_j
 \le b+\prod_{j=1}^d(1+e_j)-1.
\end{equation}
This estimate includes the constant terms from narrow or zero-width
coordinates. Sum it over the logarithmically many levels and add the
cost of reading the factors. Booleanizing after every convolution keeps
its coefficients bounded by an array size, so their bit lengths are
logarithmic. Exact integer convolution therefore has the stated cost.
Keeping the arrays at every level changes the space bound by only a
logarithmic factor.
\end{proof}

\begin{lemma}[Color coding within one size class]
\label{lem:one-class-color-coding}
Fix $d$, positive integer budgets $\Delta_1,\ldots,\Delta_d$, and
$B=\prod_j[-\Delta_j,\Delta_j]$. Let $\vec{w}_1,\ldots,\vec{w}_m\in\Z^d$ satisfy
$|w_{ij}|\le\alpha \Delta_j$ for some $0<\alpha\le1$. Let
$1\le k\le m$ be an integer with $k\alpha\le2d$.
For $0<\xi<1$, one can compute a set
$S\subseteq\mathcal{S}(\{\vec{w}_1,\ldots,\vec{w}_m\})\cap B$ in
\[
 \widetilde O_d(m+|B|)
\]
time, with polylogarithmic dependence on $1/\xi$, such that for every
fixed $X\subseteq[m]$ with $|X|\le k$ and $\sum_{i\in X}\vec{w}_i\in B$,
\[
 \Pr\left[\sum_{i\in X}\vec{w}_i\in S\right]\ge1-\xi.
\]
Every output has a witness using distinct input items.
\end{lemma}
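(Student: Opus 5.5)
The plan is to adapt Bringmann's first-stage color coding to signed vectors, with a scaled budget and a fixed box of size $|B|$. Colour the items uniformly at random with $c=\Theta(k^2)$ colours; we may take $c\ge k^2/\xi'$ for a constant $\xi'$, or use $c=\Theta(k)$ together with the balls-in-bins variant. With probability at least $1/2$ a fixed $X$ with $|X|\le k$ is colourful, meaning its items receive distinct colours. In that event $\sum_{i\in X}\vec w_i$ is a sum of at most one item per colour class. For each colour class $A_\kappa$, form $\{0\}\cup\{\vec w_i:i\text{ coloured }\kappa\}$ and compute the Boolean sum of these $c$ sets by Lemma~\ref{lem:boolean-product}, clipping every intermediate array to the box $B$. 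Every output is an actual subset sum with distinct items, because the colour classes are disjoint. Repeating $O(\log(1/\xi))$ times and taking the union of the outputs gives the failure bound $\xi$.

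The cost has to be checked, and this is where the hypothesis $k\alpha\le 2d$ is used. In the balanced product tree, a node at the level with $b$ nodes combines $c/b$ classes. Each class contributes at most one vector with $|w_{ij}|\le\alpha\Delta_j$, so that node's sums lie in $\prod_j[-(c/b)\alpha\Delta_j,(c/b)\alpha\Delta_j]$. We additionally clip to $B$, giving per-coordinate width $\min\{(c/b)\alpha\Delta_j,\Delta_j\}$. Summing over nodes as in \eqref{eq:productbound}, we use $e_j=\min(c\alpha,b)\Delta_j$ effectively. Since $c=\Theta(k^2)$ could exceed $1/\alpha$ by a factor $k$, I would instead use $c=\Theta(k)$ colours with $\Theta(\log(1/\xi))$ items allowed per colour, which is Bringmann's two-stage scheme. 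The first stage splits into $k$ groups so that each group contains $O(\log(k/\xi))$ items of $X$ with high probability. The second stage colour-codes inside each group with $O(\log^2)$ colours. With $c\alpha=O(k\alpha)=O(d)$ up to polylog factors, every level costs $\widetilde O_d(b+|B|)$. Reading the items costs $O(m)$, so the total is $\widetilde O_d(m+|B|)$.

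The key steps, in order, are as follows. First, partition $[m]$ randomly into $k$ groups. Second, inside each group, run colour coding with $\ell=O(\log(k/\xi))$ colours and $O(\log(1/\xi))$ repetitions. The per-group sums then cover all subsets of at most $\ell$ items, and lie in the box scaled by $\ell^2\alpha$ intersected with $B$. This is computed by Lemma~\ref{lem:boolean-product} at cost $\widetilde O(m_g+\prod_j(1+\ell^2\alpha\Delta_j))$. Third, combine the $k$ group outputs by a balanced tree clipped to $B$. Fourth, take a union bound over the event that some group receives more than $\ell$ items of $X$, and over the colourfulness failures.

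The main obstacle is step three's cost accounting when $\alpha\Delta_j$ is small. A group array has size $\prod_j(1+\ell^2\alpha\Delta_j)$, and $k$ of them cost about $k\prod_j(1+\ell^2\alpha\Delta_j)$. Using $k\alpha\le 2d$ and the expansion \eqref{eq:productbound}, this is $\widetilde O_d(k+|B|)$. The bound $k\le m$ turns the $k$ term into part of $m$. Clipping never loses a required sum: the partial sums of $X$ need not lie in $B$, but since all partial sums are bounded by $|X|\alpha\Delta_j\le 2d\Delta_j$, I would clip to $2d\cdot B$ rather than $B$ and intersect with $B$ only at the end. This changes $|B|$ only by a factor $O_d(1)$.
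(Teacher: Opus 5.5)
Your final plan is essentially the paper's proof. Both hash the items randomly into groups, colour-code inside each group with $\Theta(\ell^2)$ colours and logarithmically many repetitions, and take the Boolean sum of the group supports, with the cost controlled by $k\alpha\le 2d$ through \eqref{eq:productbound}. The paper uses $\lceil k/\lambda\rceil$ buckets of expected load $\lambda$ where you use $k$ groups of load at most one, which is immaterial.

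Three parameters need fixing:
\begin{itemize}
\item Inside a group use $\Theta(\ell^2)$ colours, not $\ell$. With only $\ell$ colours a fixed set of $\ell$ items is colourful with probability only about $e^{-\ell}$.
\item Take $O(\log(k/\xi))$ repetitions rather than $O(\log(1/\xi))$, so that the union bound over up to $k$ groups gives failure probability at most $\xi$.
\item Clip to $2d\cdot B$ rather than $B$ before the final intersection with $B$. Alternatively, do not clip at all, since every intermediate sum already lies in an $O_d(\ell^2)$ dilate of $B$.
\end{itemize}
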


\begin{proof}
Put
$\lambda=\lceil \gamma _d\log((m+|B|+2)/\xi)\rceil$, with $\gamma _d$ a
sufficiently large constant. Hash the items independently into
$b=\max\{1,\lceil k/\lambda\rceil\}$ buckets. Within each bucket,
color its items independently with $q=128\lambda^2$
colors. For each color form the factor consisting of $0$ and the
vectors carrying that color, and multiply these factors using
Lemma~\ref{lem:boolean-product}. Repeat the coloring
$\lceil\log_2(2b/\xi)\rceil$ times and take the union of the resulting
supports for that bucket. Finally add the supports of the different
buckets and intersect the result with $B$.

Fix the witness $X$ from the statement. Its expected load in each bucket
is at most $\lambda$. A Chernoff bound and a union bound over the
buckets show that all these loads are at most $8\lambda$ with probability
at least $1-\xi/2$. Conditional on these loads, the probability of
a collision among its items in one bucket is at most
$\binom{8\lambda}{2}/q\le1/4$. Independent repetitions and a union
bound over the buckets imply, with probability at least $1-\xi/2$,
that each bucket has a repetition in which all its witness items
receive distinct colors. Its selected sum is then in that bucket's
union. Adding the bucket supports retains $\sum_{i\in X}\vec{w}_i$, and
the final intersection retains it because it lies in $B$.

For the cost, every color factor lies in $\alpha\cdot B$, and every bucket
support lies in $q\alpha\cdot B$. We have
\[
 bq=O(k\lambda+\lambda^2),\qquad
 bq\alpha=O_d(\lambda^2).
\]
Equation~\eqref{eq:productbound}, applied at each level of the bucket
trees and then to their roots, bounds their total array cost by
\[
 \widetilde O_d\left(m+bq+
       \prod_j(1+bq\alpha \Delta_j)\right)
 =\widetilde O_d(m+|B|).
\]
Repetitions, unions, and the final product of bucket supports introduce
only logarithmic factors. The input items are scanned once per
coloring, within the same bound. Each product selects at most one item
per color, and products between buckets use disjoint items. Taking
unions and intersecting with $B$ preserve soundness. This is the
two-stage color-coding construction of \cite{Bringmann}, applied to
signed vectors with the indicated coordinate bounds.
\end{proof}

\paragraph{Proof of the signed-budget lemma.}
\begin{proof}[Proof of Lemma~\ref{lem:quota}]
If $\Delta_j=0$, every admissible witness omits items nonzero in coordinate
$j$; discard those items and suppress that coordinate. Discard also any
item with $|w_{ij}|>\Delta_j$ in a remaining coordinate, and ignore zero
vectors. These operations preserve all required sums. If no nonzero
item remains, return $\{0\}$. Henceforth all budgets are positive and
all remaining items are nonzero. Write
\[
 B=\prod_j[-\Delta_j,\Delta_j],\qquad |B|=\prod_j(2\Delta_j+1).
\]

\emph{Partition by normalized size.}
For each item put $\theta_i=\max_j |w_{ij}|/\Delta_j$ and place it in the class
$I_s$ determined by $2^{-s-1}<\theta_i\le2^{-s}$, where $s\ge0$.
Since a nonzero integer vector has $\theta_i\ge1/\max_j\Delta_j$, the number
$g$ of nonempty classes is $O(\log(2+|B|))$. A witness satisfying
\eqref{eq:absolutequota} has
\[
 \sum_{i\in X}\theta_i
 \le\sum_{j=1}^d\frac{\sum_{i\in X}|w_{ij}|}{\Delta_j}\le d.
\]
Thus $|X\cap I_s|\le k_s$, where
$k_s=\min\{|I_s|,\lceil d2^{s+1}\rceil\}$.
With $\alpha_s=2^{-s}$, we have $k_s\alpha_s\le2d$ and
$|w_{ij}|\le\alpha_s\Delta_j$ for $i\in I_s$.

\emph{Compute and combine the class supports.}
Apply Lemma~\ref{lem:one-class-color-coding} to each nonempty class,
with quota $k_s$ and failure allowance $\xi/(g|B|)$, obtaining
$S_s\subseteq\mathcal{S}(\{\vec{w}_i:i\in I_s\})\cap B$. Combine the $S_s$ in a
balanced Boolean product tree, intersecting every intermediate support
with $B$. A convolution of two supports in $B$ needs an array in $2\cdot B$,
of size $O_d(|B|)$, before this intersection. The class computations and
their combination therefore cost $\widetilde O_d(n+|B|)$ in total,
with polylogarithmic dependence on $1/\xi$.

\emph{Soundness and simultaneous completeness.}
Every class support uses distinct items, and different classes have
disjoint items. Consequently every computed sum is an actual subset
sum; the intersections only remove sums. For completeness, choose one
admissible witness $X_{\vec{z}}$ for every required output sum $\vec{z}$ before
drawing any randomness. There are at most $|B|$ such sums. For every
class, $X_{\vec{z}}\cap I_s$ has at most $k_s$ items and its sum lies in
$B$, so Lemma~\ref{lem:one-class-color-coding} retains that sum with
probability at least $1-\xi/(g|B|)$. A union bound over at most $g|B|$ pairs
$(\vec{z},s)$ shows that all these class sums are retained with probability
at least $1-\xi$. On this event, every intermediate sum belonging to
any $X_{\vec{z}}$ also lies in $B$: its items form a subset of $X_{\vec{z}}$, so its
absolute contribution in every coordinate is at most $\Delta_j$. None is
lost when intermediate supports are intersected with $B$, and every
required $\vec{z}$ reaches the final support. Reinsert any suppressed zero
coordinates to obtain the asserted output in the original space.
\end{proof}

\section{Lattice boxing and properization}
\label{app:lattice-tools}

The statements and algorithmic roles of these tools appear in
Sections~\ref{sec:carrier} and~\ref{sec:normal}. We collect their
constructive proofs here. They use the standard discrete John,
lattice-reduction, and quotient arguments attributed in the main text;
the proofs make explicit the bases, lattice-point counts, and bit costs
required by our recursion.

\subsection{Boxing in a basis of the full lattice}
\label{app:full-lattice-boxing}

\begin{proof}[Proof of Lemma~\ref{lem:carrier-full-lattice-box}]
Choose vertices $\vec{v}_1,\ldots,\vec{v}_r$ of $K$ maximizing the absolute
determinant of their column matrix $A$.  The maximum is nonzero, and
the enumeration has constant size.  Put $K'=A^{-1}K$ and
$\Lambda'=A^{-1}\Lambda$.  Replacing a column of $A$ by any vertex and
using Cramer's rule bounds every coordinate of that vertex in the new
coordinates by one in absolute value.  Symmetry and convexity give
\[
 [-1/r,1/r]^r\subseteq\conv\{\pm \vec{e}_1,\ldots,\pm \vec{e}_r\}
 \subseteq K'\subseteq[-1,1]^r.
\]
Compute an LLL-reduced basis $\vec{d}_1,\ldots,\vec{d}_r$ of the rational lattice
$\Lambda'$, with matrix $D$.  Its orthogonality defect obeys
\[
 \omega=\frac{\prod_j\|\vec{d}_j\|_2}{|\det D|}\le\omega_r
\]
for a fixed dimension-dependent constant; the basis is computable in
polynomial bit time by Theorem~\ref{thm:import-lattice-reduction}.
If $\vec{\ell}_j$ is row $j$ of $D^{-1}$, the cofactor estimate yields
\[
 \|\vec{\ell}_j\|_2
 \le\frac{\prod_{i\ne j}\|\vec{d}_i\|_2}{|\det D|}
 \le\frac{\omega_r}{\|\vec{d}_j\|_2}.
\]
Compute the rational coordinate maxima
\[
 r_j=\max_{\vec{x}\in K'}|\vec{\ell}_j\vec{x}|,
 \qquad n_j=\lceil r_j\rceil.
\]
In each basis coordinate, at least one of the independent lattice
points in $K'$ has a nonzero coefficient.  That coefficient is an
integer, so $r_j\ge1$ and $n_j\le2r_j$.  Since
$K'\subseteq[-1,1]^r$, the cofactor estimate implies
\[
 n_j\|\vec{d}_j\|_2\le2\sqrt r\,\omega_r.
\]
The full real coefficient box contains $K'$ by construction and has
Euclidean radius at most $2r\sqrt r\,\omega_r$.  Because $K'$ contains
$[-1/r,1/r]^r$, this box is contained in $\gamma_r\cdot K'$ for a constant $\gamma_r$.
Transforming back and setting $\vec{b}_j=A\vec{d}_j$ gives a basis of the entire
original lattice and the two real containments.

The following residue-class argument is the usual doubling estimate for
convex lattice sets; compare~\cite[Lemma 3.3]{TV}.
For the cardinality bound choose an integer $p\ge2\gamma_r$.  Partition
$a\cdot C\cap\Lambda$ into its at most $p^r$ residue classes modulo
$p\cdot\Lambda$.  In a nonempty class choose $\vec{x}_0$.  For every $\vec{x}$ in that
class, symmetry and convexity give
\[
 (\vec{x}-\vec{x}_0)/p\in\Lambda\cap(2\gamma_r/p)a\cdot K
 \subseteq\Lambda\cap a\cdot K.
\]
Within a class this map is injective.  Thus the constant $p^r$ works uniformly
for all $a>0$.  The same argument shows that a fixed dilation of any
origin-symmetric convex body has at most a factor depending only on the
dimension and the dilation more lattice points than that body.

All operations use rational linear algebra and fixed-complexity
optimization in fixed dimension, together with polynomial-time lattice
reduction after clearing denominators.  These operations prove the
claimed bit bounds.  If $K$ is a coordinate box, normalize by its
diagonal matrix of its coordinate bounds instead of enumerating vertices; then each
$r_j$ is the sum of the absolute values of the entries of $\vec{\ell}_j$.
\end{proof}

\subsection{Constructive properization}
\label{app:properization}

\begin{proof}[Proof of Lemma~\ref{lem:carrier-properization}]
We give the algorithmic version of Theorem~\ref{thm:import-properization}
needed here.
Represent the progression by an integer homomorphism $a:\Z^r\to\Z$
and a symmetric box $B=\prod_j[-\beta _j,\beta _j]$, $\beta _j\in\Z_{\ge0}$; discard zero-width
coordinates.  Fixed-dimensional integer programming detects a nonzero
$\vec{z}\in 2t\cdot B\cap\ker a$, or certifies there is none; the nonzero condition
is a union of $2r$ fixed-dimensional integer programs.  The latter
certificate implies $t$-properness.  In the former case replace $\vec{z}$ by its
primitive divisor and quotient $\Z^r$ by $\Z \vec{z}$.  A unimodular
coordinate change computes this torsion-free quotient and its induced
integer homomorphism in polynomial bit time.  The image $K$ of $B$ is a
symmetric rational polytope in the quotient space, with constantly many
vertices.  A rank-zero quotient is handled directly by the progression
$\{0\}$.  Otherwise the images of the vectors $\pm \vec{e}_i$ lie in $K$
and generate the whole quotient lattice, so $K$ contains a full
independent lattice set.

Every quotient lattice point in $K$ has an integer representative in
$B+[-1/2,1/2]\vec{z}\subseteq(1+t)\cdot B$: start with a real representative in
$B$ and round its displacement along the primitive kernel direction.
Thus $|K\cap\Lambda|\le |(1+t)\cdot B\cap\Z^r|=O_{r,t}(|B|)$.
Apply Lemma~\ref{lem:carrier-full-lattice-box} to $K$ and the quotient
lattice.  It returns a containing coefficient box in a basis of the
entire quotient lattice, with cardinality
$O_r(|K\cap\Lambda|)=O_{r,t}(|B|)$.  Composing this basis with the induced
integer scalar homomorphism gives an integer GAP containing the
original one.  This gives an explicit polynomial-bit implementation
of the quotient step.

Each detected relation reduces the rank.  Repeating the operation at
most $r$ times gives the asserted proper progression; the finitely many
cardinality factors depend only on $r,t$.  All lattice bases, quotient
maps, and rational polytopes have polynomial bit length throughout.
\end{proof}

\subsection{Local lattice boxing with an oracle}
\label{app:local-boxing}

\begin{proof}[Proof of Lemma~\ref{lem:boxing}]
Fixed-dimensional convex rounding computes an origin-centered rational
ellipsoid $E$ such that $M\subseteq E\subseteq a_d\cdot M$, for a fixed $a_d$;
see \cite[Section 1 and Theorems 2.13--2.14]{GLS}.  Apply lattice basis
reduction in the quadratic norm defining $E$.  For the analysis only,
normalize $E$ to the Euclidean unit ball and write $\vec{b}_1,\ldots,\vec{b}_d$ for the
resulting basis of the transformed lattice, with column matrix $B$.
The reduced-basis guarantee gives the orthogonality-defect bound
\[
 \omega:=\frac{\prod_j\|\vec{b}_j\|_2}{|\det B|}\le\omega_d,
\]
where $\omega_d$ depends only on $d$; the construction and guarantee are
\cite[Theorems 2.9--2.10]{GLS}.
Let $r_j$ be the Euclidean norm of row $j$ of $B^{-1}$.
Cofactor estimates give
\[
 r_j\le\frac{\prod_{i\ne j}\|\vec{b}_i\|_2}{|\det B|}
       =\frac{\omega}{\|\vec{b}_j\|_2},\qquad
 \prod_jr_j\le\frac{\omega^{d-1}}{|\det B|}.
\]
In basis coordinates the unit ball satisfies $|z_j|\le r_j$,
with equality attainable in each coordinate.  For each basis coordinate $j$, at least one of the $d$ independent
lattice vectors in the hypothesis has a nonzero integer coefficient in
that coordinate. Since its $\ell$-multiple belongs to $M\subseteq E$,
the corresponding coordinate has absolute value at least $\ell$ inside
$E$. Hence $r_j\ge\ell\ge1$ for every $j$.
Set $n_j=\lceil r_j\rceil$, so $n_j\le2r_j$, and define $D$
by $|z_j|\le n_j$ in the corresponding original lattice basis.  The box contains
$E$, and in the normalized coordinates every one of its points has norm
at most
\[
 \sum_j n_j\|\vec{b}_j\|_2\le2d\omega_d.
\]
Thus $M\subseteq D\subseteq2d\omega_d\cdot E\subseteq \gamma _d\cdot M$.
Also $2n_j+1\le5r_j$, whence
\[
 \begin{aligned}
 |D\cap\Lambda|=\prod_j(2n_j+1)
 &\le\frac{5^d\omega_d^{d-1}}{|\det B|}\\
 &=O_d\!\left(\frac{\vol(E)}{\det\Lambda}\right)
  =O_d\!\left(\frac{\vol(M)}{\det\Lambda}\right).
 \end{aligned}
\]
Translation does not change widths.  Outward rounding translated interval
endpoints adds at most two entries in each coordinate, absorbed by
$r_j\ge1$.  Fixed dilations are treated in the same way, with their fixed
scale included in the constant.

The normalization is conceptual, so no irrational matrix is required in
the implementation.  Basis reduction uses the rational quadratic form of
$E$ directly.  In the resulting rational basis each $r_j^2$ is rational;
its integer square-root ceiling is computed by rational comparisons.
The oracle rounding, lattice reduction, and these fixed-dimensional
operations have polynomial bit cost and polynomial output bit length.
This also proves the asserted constructive bounds.
\end{proof}

\section{Bit representations and oracle costs}
\label{app:bit-details}

The objects and the bit-cost guarantee are stated before
Lemma~\ref{lem:bits}. The proof below applies fixed-dimensional
rational geometry and integer linear algebra to those objects, then
checks that coordinate descriptions can be recomputed from current
bounded vectors. These representation details are used by both the
set and multiset recursions.

\begin{proof}[Proof of Lemma~\ref{lem:bits}]
We first bound the oracle and GAP data, then the rational computations
for an interior reduction, and finally the coordinates stored at the next node.

\paragraph{Zonotope and lattice oracles.}
For a zonotope $Z=\vec{z}_0+\sum_i[0,\vec{u}_i]$, separation at rational $\vec{x}$ reduces to maximizing
\[
 \vec{a}\cdot(\vec{x}-\vec{z}_0)-\sum_i\max(0,\vec{a}\cdot \vec{u}_i)
 \quad\text{over }\|\vec{a}\|_\infty\le1.
\]
This is a fixed-dimensional concave piecewise-linear problem with an $O(n)$-time hypograph separator. Its maximum is zero precisely for $\vec{x}\in Z$. If positive, it is bounded below by an inverse number with polynomial bit length: an optimum is attained at a vertex of the arrangement of the hyperplanes $\vec{a}\cdot \vec{u}_i=0$ and the cube faces, so fixed-size determinant bounds apply. Approximate optimization to below this rational gap supplies exact membership or a valid separating normal without enumerating the arrangement. The rational-polytope oracle algorithms in Theorem~\ref{thm:import-geometry} supply feasibility, affine sections, and the integer-point-or-thin-direction dichotomy in fixed dimension with polynomial dependence on bit length. All their oracle calls cost $n\polylog N$ in total. Fixed-dimensional HNF, lattice basis reduction, and constant-complexity convex-body boxing have the same bit bounds. A weighted zonotope replaces each summand by $c_i\max(0,\vec{a}\cdot \vec{u}_i)$, so its oracle scans types rather than individual copies.

\paragraph{GAP data and the old-vector map.}
In the finite prefix of Lemma~\ref{lem:bootstrap}, $h$ and the encoded scalar bound $L$ are polynomially bounded in $N$. Let $s=a_\eta h$ be the repeated-sumset scale used in the proof of Theorem~\ref{thm:carrier}. Its phase measure obeys ${m_{\mathrm{ph}}}\le sL+1=N^{O_\delta(1)}$, and its constructive Freiman input lies in $[0,10sL]$. The construction in \cite[Appendix A.5]{RW} first obtains a progression $\mathcal Q\subseteq2A-2A$ and a bounded covering set $X\subseteq A$, with $A\subseteq\mathcal Q-\mathcal Q+X$. An active progression step is the difference of two neighboring progression points, so its magnitude is $O(sL)$. Incorporating $X$ as constantly many binary directions retains that bound and only changes the fixed rank and volume constants. Thus this construction starts with $O_\delta(\log N)$-bit scalar descriptions, rather than an unspecified unit-cost output on arbitrarily long integers.

Properization uses only constantly many fixed-dimensional quotient and lattice operations, with polynomial logarithmic temporary bit cost. The returned coefficient box has polynomial cardinality and positive integer coordinate bounds, so every GAP coordinate of an item is polynomially bounded in $N$. The coefficient vectors span their full real rank. Recompute their scalar homomorphism from a fixed set of independent coefficient vectors and their scalar encodings. Cramer's rule bounds its numerator and denominator sizes by $O_\delta(\log N)$; the map is integral on the supplied lattice in GAP coordinates. The same computation bounds the integer map $F$ from independent coefficient vectors and their old vectors, using the existence and integrality proved in Lemma~\ref{lem:carrier-old-map}. The reserve translation $\vec{z}_0$ has a witness using $O_\delta(h)$ coefficient vectors of the reserved items and therefore has the same bit bound. No historical coordinate matrices need to be multiplied together.

\paragraph{Fiber geometry and the thin functional.}
A zonotope with polynomially bounded integer generators, multiplicities, and center has rational facets of polynomially bounded coefficient magnitude in fixed dimension. A facet normal is given by fixed-size minors of generator vectors, and its support value is a weighted type sum. The number of facets can be large; these are coefficient bounds, not a requirement to enumerate them. Fixed-dimensional integer linear algebra gives a polynomially bounded particular point and basis for the affine kernel lattice of $F$. Substituting that parametrization into the facet inequalities retains polynomial coefficient bounds. Thus a full-dimensional fiber in its kernel coordinates has outer radius $N^{O_\delta(1)}$ and contains a ball of radius $N^{-O_\delta(1)}$: a full-dimensional simplex of its rational vertices has inverse-polynomial volume and polynomial diameter, by fixed-size determinant bounds. These provide the quantitative radius bounds required by the oracle algorithms.

Suppose the point-or-thin-direction algorithm returns an integer functional of width at most $f_r$ on such a fiber, and let $r_{\rm fib}$ be the supplied inverse-polynomial inner-radius bound in integer kernel coordinates. Its Euclidean coefficient norm is at most $f_r/(2r_{\rm fib})$, because the fiber contains a radius-$r_{\rm fib}$ ball. Its coefficients therefore have $O_\delta(\log N)$ bits. The kernel lattice is saturated in the lattice in GAP coordinates. Extending the functional by fixed-dimensional HNF or Smith normal form uses polynomially bounded integer coefficients, so the ambient integer functional $\psi$ has the same bit bound.

\paragraph{A bounded rational grid for the interior dual.}
In the notation of the interior reduction, write $c'_i=1$ for nonreserve set items; in the multiplicity reduction, $c'_i=c_i-r_i$. The dual objective is
\[
 G(\vec{\beta})=\psi(\vec{z}_0)+\vec{\beta}(\vec{y}_0-F\vec{z}_0)
       +\sum_i c'_i\min(0,\psi(\vec{u}_i)-\vec{\beta} \vec{v}_i).
\]
Set $\ell_{\mathrm{lip}}=\|\vec{y}_0-F\vec{z}_0\|_1+\sum_i c'_i\|\vec{v}_i\|_1$. It is polynomially bounded in $N$, and
\[
 |G(\vec{\beta})-G(\vec{\beta}')|\le \ell_{\mathrm{lip}}\|\vec{\beta}-\vec{\beta}'\|_\infty.
\]
The interior buffer bounds a maximizing vector within a ball of polynomial radius, as in Lemma~\ref{lem:deep}. Optimize on that ball to additive accuracy $1/2$ with the type-scan hypograph oracle. Put ${q_{\mathrm{grid}}}=\lceil\max(1,4\ell_{\mathrm{lip}})\rceil$ and round every coordinate to the nearest multiple of $1/{q_{\mathrm{grid}}}$. The objective changes by at most $\ell_{\mathrm{lip}}/(2{q_{\mathrm{grid}}})\le1/8$. Thus the rounded vector has dual slack less than one, and its numerator and denominator are polynomially bounded in $N$. Define $\alpha$ afterwards by the exact displayed minimum in the interior reduction. This supplies the required rational $\vec{\beta}$ with $O_\delta(\log N)$-bit entries and $\softO{n}$ bit cost, or $\softO{m}$ in the multiplicity version.

\paragraph{Storing the next instance.}
After reboxing, each coordinate of a current item vector is bounded by
the side length of the allocated array, hence by $O(w)$. For each
coordinate $j$, retain only targets whose $j$th coordinate lies in
\[
 \left[\sum_i\min(0,v_{ij}),\ \sum_i\max(0,v_{ij})\right].
\]
Every subset sum satisfies these bounds. Thus this filtering preserves
all solutions and bounds every retained target coordinate by $O(nw)$.
In the count representation, use
\[
 \left[\sum_i c_i\min(0,v_{ij}),\
       \sum_i c_i\max(0,v_{ij})\right],
\]
which gives the bound $O((\sum_i c_i)w)$. The set recursion has
$n=O_\delta(w)$ by \eqref{eq:nscan}; in the count representation,
total multiplicity is polynomially bounded by hypothesis. These item
and target coordinates therefore require $O_\delta(\log N)$ bits.

Choose a retained target as the new array center, using the fixed-factor
box enlargement already included in the normal-reduction constants.
Recompute a basis of the lattice generated by the current items by
fixed-dimensional Hermite normal form. Its entries have the same bit
bound. To recover the original scalar functional, solve a fixed-size
rational system using independent current vectors and their oriented
original values $a_i$. Cramer's rule bounds its entries by
$O_\delta(\log N)$ bits. The earlier parts of this proof provide the
same bound for the GAP homomorphism, $F$, the reserve translation,
and the rounded dual vector. Only the coordinate changes needed for the current reduction
are retained; their historical matrix products are unnecessary.
This proves the storage and bit-cost claims.
\end{proof}

\end{document}